\definecolor{red}{rgb}{1,0,0}
\definecolor{green}{rgb}{0,1,0}
\definecolor{SeaGreen}{RGB}{46,139,87}
\definecolor{Maroon}{RGB}{128,0,0}
\newcommand{\Supp}{\textrm{Supp~}}
\newcommand{\N}{\mathbb{N}}
\newcommand{\Z}{\mathbb{Z}}
\newcommand{\C}{{\mathbb{C}}}
\newcommand{\R}{{\mathbb{R}}}
\newcommand{\A}{{\mathcal A}}
\newcommand{\B}{\mathcal B}
\newcommand{\CC}{\mathcal C}
\def\Dg {{\mathcal D}}
\newcommand{\F}{\mathcal F}
\def\Jg {{\mathcal J}}
\newcommand{\K}{\mathcal K}
\def\Kg {{\mathcal K}}
\def\Eg {{\mathcal E}}
\newcommand{\LL}{\mathcal L}
\def\PP{\mathcal P}
\def\Rg {{\mathcal R}}
\def\Sg {{\mathcal S}}
\def\Ug {{\mathcal U}}
\def\Tg {{\mathcal T}}
\def\jf {{\mathfrak j}}
\newcommand{\sca}[2]{\langle#1,#2\rangle}
\def\sign{\text{\rm sign\,}}
\newcommand{\supp}{\rm Supp\,}
\renewcommand {\Re}{{\rm Re\,}}
\renewcommand{\Im}{{\rm Im\,}}
\newcommand {\pa}{\partial}
\newcommand {\ar}{\to}
\def\Ai{\text{\rm Ai\,}}
\def\0{\mathbf  0}
\def\XXint#1#2#3{{\setbox0=\hbox{$#1{#2#3}{\int}$ }
\vcenter{\hbox{$#2#3$ }}\kern-.6\wd0}}
\newcommand{\Union}{\mathop{\bigcup}\limits}
\numberwithin{equation}{section}
\theoremstyle{plain}
\newtheorem{theorem}{Theorem}[section]
\newtheorem{lemma}[theorem]{Lemma}
\newtheorem{proposition}[theorem]{Proposition}
\newtheorem{assumption}[theorem]{Assumption}
\newtheorem{definition}[theorem]{Definition}
\newtheorem{remark}[theorem]{Remark}
\newtheorem{corollary}[theorem]{Corollary}
\newtheorem{example}[theorem]{Example}
\newtheorem{conjecture}[theorem]{Conjecture}
\title{On
  a Schr\"odinger operator with a purely imaginary potential in the 
  semiclassical limit} 
\author{ Y. Almog, Department of
  Mathematics, Louisiana State University,\\ 
    Baton Rouge, LA 70803, USA,\\~\\
 D. S. Grebenkov, 
 Laboratoire de Physique de la Mati\`ere Condens\'ee, \\ 
CNRS -- Ecole Polytechnique, University Paris-Saclay, \\ F-91128 Palaiseau, France,
\\
  and \\
  B. Helffer, Laboratoire de Math\'ematiques Jean Leray, \\CNRS and Universit\'e de Nantes, \\
  2 rue de la Houssini\`ere, 44322 Nantes Cedex France.}
\date{}
\begin{document}
\bibliographystyle{siam}

\maketitle
\begin{abstract}
We consider the operator $\A_h=-h^2\Delta+iV$ in the semi-classical
limit $h\to0$, where $V$ is a smooth real potential with no critical
points. We obtain both the left margin of the spectrum, as well as
resolvent estimates on the left side of this margin. We extend here
previous results obtained for the Dirichlet realization of $\A_h$ by
removing significant limitations that were formerly imposed on $V$.
In addition, we apply our techniques to the more general Robin
boundary condition and to a transmission problem which is of
significant interest in physical applications.  
\end{abstract}

\section{Introduction}
\label{sec:1}
Consider the Schr\"odinger operator with a purely imaginary potential 
\begin{subequations}  \label{eq:1}
\begin{equation}
\A_h = -h^2\Delta + i\, V \,,
\end{equation}
in which $V$ is a $C^3$-potential in $\overline{\Omega}$, for an open
bounded set $\Omega\subset \mathbb R^n$ with smooth boundary
(Fig. \ref{fig:domain}). The Dirichlet and Neumann realizations of
$\A_h$, which we respectively denote by $\A_h^D$ and $\A_h^N$, have
already been considered in
\cite{Alm,Hen2,AlHe}.  Their respective domains are given by
\begin{align*}
  &   D(\A_h^D)=H_0^1(\Omega,\C)\cap H^2(\Omega,\C)\,, \\
  &   D(\A_h^N)= \{u\in H^2(\Omega,\C)\,\Big|\, \partial_\nu u=0 \mbox{ on } \partial
  \Omega\} 
\end{align*}
where $\nu$ is a unit normal vector pointing outwards on
$\partial\Omega$. In the present contribution we consider two
different realizations of $\A_h$.  The first of them is the Robin
boundary condition for which the realization of $\A_h$ is denoted by
$\A_h^R$. The domain of $\A_h^R$ is given by
\begin{equation}
\label{eq:71}
  D(\A_h^R)= \{ u \in H^2 (\Omega) \,,\, h^2 \partial_\nu u =  - \mathcal K\, u \mbox{ on } \partial \Omega\}
\end{equation}
\end{subequations}
where $\mathcal K$ denotes the Robin coefficient.  We note that
$\A_h^R(\Kg)$ is a generalization both $\A_h^N$, corresponding to
$\Kg=0$, and $\A_h^D$ which is obtained in the limit $\Kg\to\infty$. The
form domain of $\A_h^R$ is $H^1(\Omega)$ and the associated quadratic form
reads
\begin{equation}
\label{eq:formR}
u \mapsto  q_V^R(u) := h^2 \,\| \nabla u\|^2_{ \Omega} 
  + i \int_{ \Omega} V(x)  | u(x) |^2 \, dx + \mathcal K  \int_{\partial \Omega} |u|^2 ds \,.
\end{equation}
We shall consider the semiclassical limit $h\to0$ when
\begin{equation}
\label{eq:94}
\mathcal K = h^{\frac 43}\, \kappa\,,
\end{equation}
for a fixed value of $\kappa$.  The motivation for considering
this scaling is provided in \cite{GH}.

A second problem we address in this work is the so called transmission
boundary condition that is described for a one-dimensional setting in
\cite{GHH} and for a more general setup in \cite{GH}.  A typical case
is that of non empty open connected sets $\Omega_- $, $\Omega_+$ and
$\Omega$ such that
\begin{equation}\label{caseT}
 \Omega_- \subset  \Omega \,,\, \Omega \setminus \Omega_- = \overline{\Omega_+}\,,
\end{equation}
where $\Omega_-$ must be simply connected (Fig. \ref{fig:domain}). In
this case, the transmission boundary condition is prescribed on
$\partial \Omega_-$ and a Neumann condition is prescribed on $\partial
\Omega$. In this case we introduce
\begin{equation*}
\Omega^T:= \Omega_- \cup \Omega_+
\end{equation*}
and observe that $\partial \Omega^T= \partial \Omega_- \cup \partial
\Omega = \partial \Omega_+$.  The quadratic form associated with this specific
realization of $\A_h$, defined on $H^1(\Omega_-) \times
H^1(\Omega_+)$, reads
\begin{multline}
\label{eq:formT}
u =(u_-,u_+) \mapsto  q_V^T(u) := h^2 \,\| \nabla u_-\|^2_{ \Omega_-} + h^2 \,\| \nabla u_+\|^2_{ \Omega_+} \\
 + i \int_{ \Omega_-} V(x)  | u_-(x) |^2 \, dx + i \int_{ \Omega_+} V(x)  | u_+(x) |^2 \, dx \\
 + \mathcal K  \int_{\partial \Omega_-} |u_+-u_-|^2  ds \,.
\end{multline}
The domain of the associated operator $\A_h^{TN}$ is  
\begin{multline}
\label{eq:7}
  D(\A_h^{TN})=  \{ u \in H^2 (\Omega_-)\times H^2(\Omega_+) \,,\,\\
  h^2 \partial_\nu u_-=  h^2
  \partial_\nu u_+=   \mathcal K (u_+-u_-)   \mbox{ on } \partial \Omega_- \,,\,  \partial_\nu
  u_+=0 \text{ on } \partial\Omega \}\,,
\end{multline}
where $\nu$ is pointing outwards of $\Omega_-$ at the points of
$\partial \Omega_-$ and outwards of $\Omega$ at the points of
$\partial \Omega$ .  For $\mathcal K=0\,$, the transmission problem is
reduced to two independent Neumann problems in $\Omega_-$ and
$\Omega_+$.

\begin{figure}
\begin{center}
\includegraphics[width=140mm]{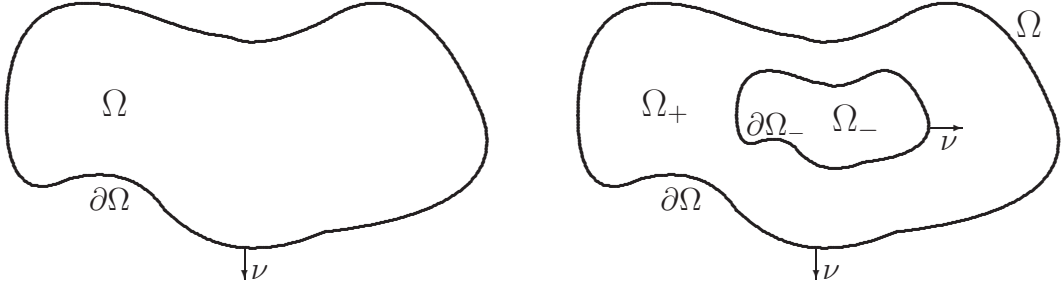}
\end{center}
\caption{
Geometric illustration of the problem: (left) a bounded open set
$\Omega\subset \R^n$ with a smooth boundary $\partial\Omega $, on
which Dirichlet, Neumann or Robin boundary condition is imposed;
(right) non empty open connected sets $\Omega_- $, $\Omega_+$, $\Omega
\subset \R^n$ satisfying (\ref{caseT}), with the transmission boundary
condition imposed on $\partial \Omega_-$, and Neumann or Dirichlet
condition imposed on $\partial \Omega$.  The unit normal vector $\nu$
pointing outwards is also shown. }
\label{fig:domain}
\end{figure}

In addition, we shall address, as in \cite{GH}, a Dirichlet condition
on $\partial \Omega$ instead of a Neumann condition. To distinguish
between these two situations, we write $ \A_h^{TN}$ and $\A_h^{TD}$.
Sometimes, we use the notation $\A_{h,\mathcal K} ^{TD}$ to keep in
mind the reference to the transmission parameter $\mathcal K$, which
could be $h$-dependent.  In the Dirichlet case, the form domain
$H^1(\Omega_-) \times H^1(\Omega_+)$ should be replaced by
$H^1(\Omega_-) \times \tilde H^1(\Omega_+)$, where $\tilde
H^1(\Omega_+) =\{ u\in H^{1} (\Omega_+)\,,\, u=0\mbox{ on } \partial
\Omega\}$. The domain of the operator is then 
\begin{multline}
\label{eq:7D}
  D(\A_h^{TD})=
\{ u=(u_-,u_+) \in H^2 (\Omega_-)\times H^2(\Omega_+) \\ \hspace*{4em} h^2 \partial_\nu u_-=  h^2 \partial_\nu
  u_+=   \mathcal K (u_+-u_-)  \mbox{ on } \partial \Omega_-\,,\, u_+ =0 \mbox{
    on } \partial \Omega \}\,, 
  \end{multline}
where $\nu$ is pointing outwards of $\Omega_-$ at the points of
$\partial \Omega_- \,$.
 
The spectral analysis of the various realizations of $\A_h$ has
several applications in mathematical physics, among them are the
Orr-Sommerfeld equations in fluid dynamics \cite{sh03}, the
Ginzburg-Landau equation in the presence of electric current (when
magnetic field effects are neglected)
\cite{Alm,aletal10,aletal13,aletal12}, the null controllability of
Kolmogorov type equations \cite{BHHR}, and the diffusion nuclear
magnetic resonance \cite{Stoller91,deSwiet94,Grebenkov07}. In
particular, the transmission problem naturally arises in diffusion or
heat exchange between two sets separated by a partially
permeable/isolating interface (see \cite{Grebenkov10,GH} and
references therein).  In this setting, the first relation in
(\ref{eq:7}) or (\ref{eq:7D}) ensures the continuity of flux between
two sets, whereas the second relation accounts for the drop of the
transverse magnetization $u$ across the interface, with the
transmission coefficient $\mathcal K$.  As in \cite{Alm,Hen2,AlHe} for
the Dirichlet or Neumann case, we seek an approximation for $\inf
\Re\sigma(\A_h^R)$, $\inf \Re\sigma(\A_h^{TN})$ or $\inf
\Re\sigma(\A_h^{TD})$ as $h\to 0$ ($h>0$).  In \cite{Hen2,AlHe,GH},
various constructions of quasimodes give some idea for the location of
the leftmost eigenvalue (i.e. with smallest real part).

In the following, we formulate the assumptions, the notation and the
main statements of the paper.

\begin{assumption} 
\label{notdeg}
The potential $V$ satisfies
\begin{equation*}
\nabla V(x)\neq0\,,\, \forall x \in \overline{\Omega}\,.
\end{equation*}
\end{assumption}
Since, aside from minor differences, the treatment of all boundary
conditions is similar, we use a general notation that can describe all
problems. Thus, we define $\Omega^\#$ by $\Omega^\#= \Omega$ if
$\#=D,N, R$ and by $\Omega^\#=\Omega_-\cup \Omega_+$ in the
transmission case: $\#= TD$ or $\#=TN$.  Let $\partial\Omega_\perp^\#$
denote the subset of $\partial\Omega^\#$ where $\nabla V$ is
orthogonal to $\partial\Omega^\#$:
\begin{equation}\label{defPaPerp}
 \partial\Omega_\perp^\# = \{x\in\partial\Omega^\#: \nabla V(x) = (\nabla V(x) \cdot \vec \nu(x))\,\vec \nu(x) \}\,,
\end{equation}
where $\vec \nu(x)$ denotes the outward normal on $\partial\Omega^\#$ at
$x$\,.

Let $\#\in\{D,N,R,T\} $ and $ \mathfrak D^\#$ be defined in the
following manner
\begin{equation}
\label{eq:98}
  \begin{cases}
    \mathfrak D^\#=\{u\in H^2_{loc}(\overline{\R_+}) \,| \, u(0)=0 \} & \#=D \\
    \mathfrak D^\# = \{u\in H^2_{loc}(\overline{\R_+}) \,| \, u^\prime(0)=0 \} & \#=N \\
     \mathfrak D^\# = \{u\in H^2_{loc}(\overline{\R_+}) \,| \, u^\prime(0)= \kappa \, u (0) \} & \#=R \\
    \mathfrak D^\# = \{ u\in H^2_{loc}(\overline{\R_-}) \times H^2_{loc}
    (\overline{\R_+}) \,| \,  u^\prime_+(0)= & \\ \kern 10em u^\prime_-(0)=\kappa\, [u_+(0)-u_-(0)] \}
    & \#=T \,.
  \end{cases}
\end{equation}
In the two last cases we occasionally write $\mathfrak D^\#(\kappa)$
in order to emphasize the dependence on the Robin or transmission
parameter $\kappa$.  In the above $u_\pm=u|_{\R_\pm}$ if we identify
$L^2(\mathbb R_-\cup\mathbb R_+)$ and $ L^2(\R_-)\times L^2(\R_+)$. We
further set $\R_\#=\R_+$ when $\#\in\{D,N,R\}$ and
$\R_\#=\R_+\cup\R_-$ for $\#=T$. Then, we define the operator
\begin{displaymath}
   \LL^\#(\jf ) = -\frac{d^2}{dx^2}+i\, \jf \,x \,,
\end{displaymath}
whose domain is given  by 
\begin{equation}
\label{eq:101}
  D(\LL^\# (\jf ))= H^2(\R^\#)\cap L^2(\R^\#;|x|^2dx)\cap \mathfrak D^\#\,,
\end{equation}
(where $H^2(\mathbb R_-\cup \mathbb R_+)=H^2(\mathbb R_-) \times H^2(\mathbb
R_+)$) and set
\begin{equation} 
\label{eq:75}
  \lambda^\#(\jf ) = \inf \Re\sigma(\LL^\#(\jf )) \,.
\end{equation}
Again, when $\kappa$ is involved, we occasionally write $\LL^\# (\jf
,\kappa)$, $\lambda^\#(\jf ,\kappa)$.

Next, let 
\begin{equation}
\label{eq:97} 
\Lambda_{m}^\#=\inf_{x\in\partial\Omega_\perp^\#}\lambda^\#(|\nabla V(x)|) \,,
\end{equation}

In the transmission case, for $\#= T\flat$ with $\flat \in \{D,N\}$
the above formula should be interpreted as
\begin{equation}\label{newdef}
\Lambda_{m}^{T\flat}(\kappa)  = \min \left( \inf_{x\in\partial\Omega_\perp^\# \cap \partial \Omega_-} 
\lambda^T(|\nabla V(x)|,\kappa) \,, \inf_{x\in\partial\Omega_\perp^\# \cap \partial \Omega}  \lambda^\flat(|\nabla V(x)|) \right)\,.
\end{equation}

In all cases we denote by $\Sg^\#$ the set 
\begin{equation}\label{eq:2}
  \Sg^\#=\{x\in\partial\Omega_\perp^\# \,: \,\lambda^\# (|\nabla V(x)|,\kappa) = \Lambda_{m}^\#(\kappa) \,\}\,.
\end{equation}

When $\#\in\{D,N\}$ it can be verified by a dilation argument that,
when $\jf >0\,$,
\begin{equation}
   \lambda^\#(\jf) =\lambda^\#(1)\, \jf ^{2/3}\,,
\end{equation}
and when $\#\in \{R,T\}$ with parameter $\kappa\geq0$ that (see
Sections \ref{s2} and \ref{s3}),
\begin{equation}
   \lambda^\#(\jf ,\kappa) =\lambda^\#(1, \kappa \,\jf ^{-\frac 13})\, \jf ^{2/3}\,.
\end{equation}
For the Robin case ($\#=R$) we establish in Appendix \ref{appC} that $
\lambda^R(\jf,\kappa)$ is monotonously increasing with $\jf $. Hence, for $\#\in\{D,N,R\}$ we
have for 
\begin{equation}\label{defJm}
  \jf_m = \min_{x\in\partial\Omega_\perp}|\nabla V(x)|\,,
\end{equation}
the property 
\begin{equation*}
\Lambda_{m}^\# = \lambda^\# (\jf_m)\,, \mbox{ for } \#\in\{D,N\}\,,\, \Lambda_{m}^R(\kappa)=  \lambda^R (\jf _m,\kappa)\,.
\end{equation*}
For the transmission case $\#= T\flat$ (with $\flat =D$ or $\flat
=N$), as the monotonicity of $\jf \mapsto \lambda^{T} (\jf ,\kappa)$
has not been established, it is more difficult to define $\jf_m $ and
we shall refrain from using it.

We next make the following additional assumption:
\begin{assumption}\label{nondeg2}
At each point $x$ of $\Sg^\#$, 
\begin{equation}
\label{eq:105}
 \alpha (x)=\text{\rm det} D^2V_\partial(x) \neq 0\,,
\end{equation}
where $V_\partial$ denotes the restriction of $V$ to
$\partial\Omega^\#$, and $D^2V_\partial$ denotes its Hessian matrix.
\end{assumption}
It can be easily verified that \eqref{eq:105} implies that $\mathcal
S^\#$ is finite.  Equivalently we may write
\begin{subequations}
   \label{eq:95}
\begin{equation}
\alpha(x) =\Pi_{i=1}^{n-1}\alpha_i(x) \neq0\,,
\end{equation}
where
\begin{equation}
  \{\alpha_i\}_{i=1}^{N-1} = \sigma(D^2V_\partial) \,,
\end{equation}
where each eigenvalue is counted according to its multiplicity. 
\end{subequations}

The following has been established by R. Henry in \cite{Hen2}
\begin{theorem}
\label{thmNSAschro1} 
Under Assumptions \ref{notdeg} and \ref{nondeg2}, we have
\begin{equation}\label{limSpect1}  
 \varliminf\limits_{h\to0}\frac{1}{h^{2/3}}\inf \bigl\{\Re\, \sigma(\mathcal A _h^D) \bigr\} \geq  \Lambda_{m}^{D}\,,
\qquad  \Lambda_{m}^{D} = \frac{|a_1|}{2}\jf_m ^{2/3}\,,
\end{equation}  
where $a_1<0$ is the rightmost zero of the Airy function $\Ai$\,.
Moreover, for every $\varepsilon>0\,$, there exist $h_\varepsilon>0$
and $C_\varepsilon>0$ such that
\begin{equation}\label{estRes1}  
 \forall h\in(0,h_\varepsilon),~~~
 \sup_{
\begin{subarray}{c} 
\gamma\leq \Lambda_{m}^{D}  \\[0.5ex]
     \nu \in\mathbb{R}
\end{subarray}
}
 \|(\mathcal A _h^D-(\gamma -\varepsilon)h^{2/3}-i\nu)^{-1}\|\leq\frac{C_\varepsilon}{h^{2/3}}\,.
\end{equation}
\end{theorem}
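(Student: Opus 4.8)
The plan is to establish the resolvent estimate \eqref{estRes1}, from which \eqref{limSpect1} follows softly: if \eqref{estRes1} holds, then for $h<h_\varepsilon$ and all $\gamma\le\Lambda_m^D$, $\nu\in\R$, the point $(\gamma-\varepsilon)h^{2/3}+i\nu$ lies in the resolvent set of $\mathcal A_h^D$, so $\sigma(\mathcal A_h^D)\cap\{\Re z\le(\Lambda_m^D-\varepsilon)h^{2/3}\}=\emptyset$, whence $h^{-2/3}\inf\{\Re\sigma(\mathcal A_h^D)\}\ge\Lambda_m^D-\varepsilon$, and letting $h\to0$ then $\varepsilon\to0$ gives \eqref{limSpect1}; the value $\Lambda_m^D=\tfrac{|a_1|}{2}\jf_m^{2/3}$ comes from the dilation identity $\lambda^D(\jf)=\lambda^D(1)\jf^{2/3}$, the monotonicity of $\jf\mapsto\jf^{2/3}$, and the fact that the $L^2(\R_+)$ solutions of $\LL^D(1)u=zu$ vanishing at $0$ are $x\mapsto\Ai(e^{i\pi/6}x+a_k)$ with $\Ai(a_k)=0$, which forces $z=|a_k|e^{i\pi/3}$ and $\lambda^D(1)=|a_1|/2$. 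To prove \eqref{estRes1}, fix $\varepsilon>0$. Since $V$ is bounded, for $\nu$ outside a fixed compact interval $I$ one has $\inf_{\overline\Omega}|V-\nu|\ge\delta>0$, and the imaginary part of $\langle(\mathcal A_h^D-(\gamma-\varepsilon)h^{2/3}-i\nu)u,u\rangle$ gives $\|u\|\le\delta^{-1}\|(\mathcal A_h^D-(\gamma-\varepsilon)h^{2/3}-i\nu)u\|$, which is the desired bound for $h\le1$. Thus I may assume $\nu\in I$ and, absorbing $i\nu$ into the potential, work with $\mathcal B_h:=-h^2\Delta+iW$, $W:=V-\nu$, under Dirichlet conditions; since $\nabla W=\nabla V\ne0$ on $\overline\Omega$ (Assumption \ref{notdeg}), the set $\partial\Omega_\perp^D$, the number $\Lambda_m^D$, and the finite set $\mathcal S^D$ (finite by Assumption \ref{nondeg2}) are unchanged, and all constants below will depend only on $\varepsilon$, $\|V\|_{C^3}$, $\min_{\overline\Omega}|\nabla V|$, and the data attached to $\mathcal S^D$, all of which vary continuously with $\nu\in I$ — this yields the uniformity in $\nu$.

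Cover $\overline\Omega$ by finitely many balls $\mathcal V_j$ of four types: (a) $\overline{\mathcal V_j}\cap\{W=0\}=\emptyset$; (b) centred at a point of $\{W=0\}\cap\Omega$; (c) centred at a point $x_j\in\{W=0\}\cap\partial\Omega$ at which $\nabla W(x_j)$ is \emph{not} normal to $\partial\Omega$; (d) centred at a point $x_j\in\{W=0\}\cap\partial\Omega_\perp^D$. Let $\{\chi_j\}$ be a subordinate partition of unity with $\sum_j\chi_j^2=1$. On a type (a) ball, $\mathcal B_h$ is boundedly invertible for $\Re z\le Ch^{2/3}$, with $O(1)$ norm. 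On a type (b) or (c) ball one flattens $\partial\Omega$ (in case (c)), translates $x_j$ to the origin, and rescales $x=h^{2/3}y$; then $h^{-2/3}(\mathcal B_h-z)$ is, modulo errors treated below, $-\Delta_y+i\,\nabla W(x_j)\cdot y-(\gamma-\varepsilon)$ on $\R^n$, resp.\ on the half-space $\{y_n>0\}$ with Dirichlet condition, and in case (c) the vector $\nabla W(x_j)$ has a nonzero component tangent to $\partial\Omega$. Rotating coordinates exhibits a \emph{whole-line} complex Airy factor $-\partial_t^2+i\beta t$ with $\beta>0$; this factor has empty spectrum and a resolvent bounded uniformly on every half-plane $\{\Re\zeta\le C_0\}$, while the remaining factor (the Laplacian in the other variables in case (b); together with a Dirichlet complex Airy operator in the normal variable in case (c)) has spectrum with nonnegative real part. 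Hence the model operator has empty spectrum with resolvent uniformly bounded on $\{\Re\zeta\le C_0\}$, and rescaling back gives $\|\chi_j u\|\le Ch^{-2/3}\|(\mathcal B_h-z)\chi_j u\|+(\text{error})$.

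The essential case is a type (d) ball: there $\nabla W(x_j)$ is normal to $\partial\Omega$, so after flattening, translating, and rescaling \emph{anisotropically} — $x_n=h^{2/3}y_n$ in the normal variable (balancing $-h^2\partial_n^2$ against the linear part $\jf_j x_n$ of $W$, $\jf_j:=|\nabla W(x_j)|=|\nabla V(x_j)|$) and $x'=h^{1/2}y'$ tangentially (balancing $-h^2\Delta'$ against the quadratic part) — the operator $\mathcal B_h-z$ becomes, modulo lower-order perturbations,
\[
h^{2/3}\Big(\LL^D(\jf_j)-(\gamma-\varepsilon)\Big)\otimes I\ +\ h\,\big(I\otimes(-\Delta_{y'}+i\,q_j(y'))\big),
\]
where $\LL^D(\jf_j)$ acts on $L^2(\R_+)$ with Dirichlet condition and $q_j(y')=\tfrac12\langle D^2W_\partial(x_j)y',y'\rangle$ is nondegenerate (Assumption \ref{nondeg2}). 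By the dilation identity $\LL^D(\jf_j)$ is unitarily equivalent to $\jf_j^{2/3}\LL^D(1)$, so $\inf\Re\sigma(\LL^D(\jf_j))=\lambda^D(\jf_j)$; diagonalizing $q_j=\sum_k\mu_k s_k^2$ with all $\mu_k\ne0$, each $-\partial_{s_k}^2+i\mu_k s_k^2$ has spectrum in $\{\Re\zeta\ge|\mu_k|^{1/2}/\sqrt2>0\}$, so $-\Delta_{y'}+iq_j$ has spectrum with positive real part. Since $x_j\in\partial\Omega_\perp^D$, $\lambda^D(\jf_j)\ge\Lambda_m^D\ge\gamma$, hence $\gamma-\varepsilon\le\lambda^D(\jf_j)-\varepsilon$; splitting off a tangential eigenmode with eigenvalue $h\theta_k$ ($\Re\theta_k\ge0$), the effective one-dimensional operator $h^{2/3}(\LL^D(\jf_j)-(\gamma-\varepsilon))+h\theta_k$ has spectrum with real part $\ge h^{2/3}\varepsilon$. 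The genuinely non-self-adjoint input is that we need $(\LL^D(1)-\zeta)^{-1}$ — hence $(\LL^D(\jf_j)-\zeta)^{-1}$ by dilation, uniformly for $\jf_j$ in the compact set $\{|\nabla V(x)|:x\in\overline\Omega\}\subset(0,\infty)$ — to be \emph{bounded on the half-plane} $\{\Re\zeta\le\lambda^D(1)-\varepsilon\}$, not merely off the discrete spectrum $\{|a_k|e^{i\pi/3}\}$; this is obtained from the explicit Airy-kernel representation of the resolvent together with the coercivity of $-\partial_t^2+i\jf_jt-\zeta$ outside a neighbourhood of its complex turning point. One concludes $\|\chi_j u\|\le C_\varepsilon h^{-2/3}\|(\mathcal B_h-(\gamma-\varepsilon)h^{2/3})\chi_j u\|+(\text{error})$.

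It remains to control the errors and to glue. The real part of $\langle(\mathcal B_h-z)u,u\rangle$ gives the a priori bound $h\|\nabla u\|\le C(\|f\|^{1/2}\|u\|^{1/2}+h^{1/3}\|u\|)$ with $f:=(\mathcal B_h-z)u$, while a standard Agmon-type weighted estimate (using $\nabla W\ne0$, so $|W|$ grows linearly off $\{W=0\}$) shows that $u$ is, up to $O(h^\infty)\|u\|$, supported within $O(h^{2/3}|\log h|^{2/3})$ of $\{W=0\}$ and, near each type (d) centre, concentrated at the anisotropic scale $(h^{2/3},h^{1/2})$ — exactly what makes the metric and lower-order errors in the coefficient freezing of size $o_h(1)\,h^{2/3}\|\chi_j u\|$. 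The commutators $[\mathcal B_h,\chi_j]=-h^2(2\nabla\chi_j\cdot\nabla+\Delta\chi_j)$ contribute $O(h^2\|\nabla u\|+h^2\|u\|)=o_h(1)\,h^{2/3}\|u\|+Ch\|f\|^{1/2}\|u\|^{1/2}$. Inserting these into the local estimates, squaring, summing over the finitely many $j$ via $\sum_j\chi_j^2=1$, absorbing the $o_h(1)\|u\|$ terms for $h$ small and the term $h\|f\|^{1/2}\|u\|^{1/2}$ by Young's inequality, one gets $\|u\|\le C_\varepsilon h^{-2/3}\|f\|$, i.e.\ \eqref{estRes1}, uniformly in $\nu$ by the continuous dependence of the constants on $\nu$. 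The main obstacle is the half-plane resolvent bound for the Dirichlet complex Airy operator $\LL^D(\jf)$, uniformly in $\jf$: since the pseudospectrum of the complex Airy operator is large, knowing its spectrum does not by itself yield such a bound, and it must be proved by hand from the Airy-function representation. A secondary difficulty is the matching of $\mathcal B_h$ with its frozen anisotropic model near $\mathcal S^D$, where the tangential quadratic part must be retained while boundary curvature and cubic remainders are discarded, and where Assumption \ref{nondeg2} is used.
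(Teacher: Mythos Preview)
Your reduction of \eqref{limSpect1} to \eqref{estRes1}, the computation $\Lambda_m^D=\tfrac{|a_1|}{2}\jf_m^{2/3}$, and the identification of the half-plane resolvent bound for $\LL^D(\jf)$ as the essential non-self-adjoint input are all correct. The localization scheme, however, diverges from the paper's and contains a gap.

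The paper attributes this theorem to Henry \cite{Hen2} and reproduces the method in Section~\ref{s6} for the Robin and transmission realizations. That method uses an $h$-\emph{dependent} partition of unity into balls of radius $h^\varrho$ with $1/3<\varrho<2/3$, and on every ball freezes $V$ only to \emph{linear} order. The ball size alone makes the quadratic Taylor remainder $O(h^{2\varrho})$ and the commutator terms $O(h^{2/3-\varrho}+h^{2(\varrho-1/3)})=o(1)$; one then assembles an approximate right inverse $\mathcal R$ with $(\mathcal A_h^D-\lambda)\mathcal R=I+\mathcal E$, $\|\mathcal E\|\to0$, and inverts by Neumann series. No Agmon estimate enters. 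At a perpendicular boundary point the half-space linear model $-h^2\Delta+iJ_n\rho$ (with $J'=0$) already satisfies $\inf\Re\sigma=h^{2/3}\lambda^D(J_n)\ge h^{2/3}\Lambda_m^D$; the tangential quadratic $q_j(s)$ and the anisotropic $(h^{2/3},h^{1/2})$ scaling are not used for the lower bound --- the paper reserves them for the \emph{upper} bound in Section~\ref{s7}.

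With your fixed partition the coefficient-freezing error on any ball meeting $\{W=0\}$ is $O(1)$, so you are forced to invoke Agmon decay, and this is where the argument breaks. The Agmon weight built from $W$ localizes $u$ to an $O(h^{2/3})$-tube around the hypersurface $\{W=0\}$, but gives no concentration \emph{along} it. Near a type~(d) point that hypersurface is the paraboloid $\rho\approx -q(s)/J_n$, along which $|s|$ ranges up to the fixed ball radius, and the remainder $W-(J_n\rho+q(s))=O(|s|^3+\rho^2+|\rho s|)$ stays $O(1)$ there. Your assertion that $u$ is ``concentrated at the anisotropic scale $(h^{2/3},h^{1/2})$'' is precisely what would close the argument, but it does not follow from the Agmon estimate you describe --- and in a resolvent estimate with arbitrary right-hand side $f\in L^2$ there is no a~priori tangential localization of $u$ at all. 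The same defect afflicts your type~(b) and (c) balls. The remedy is the paper's: shrink the balls to radius $h^\varrho$, keep only the linear model, and drop the Agmon step.
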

In its first part, this result is essentially a reformulation of the
result stated by the first author in \cite{Alm}.  Note that the second part
provides, with the aid of the Gearhart-Pr\"uss theorem, an effective
bound (with respect to both $t$ and $h$) of the decay of the
associated semi-group as $t\ar +\infty\,$.  The theorem holds in
particular in the case $V(x)=x_1$ where $\Omega$ is a disk (and hence
$S^T$ consists of two points) and in the case of an annulus (four
points).  Note that $\jf_m =1$ in this case.

A similar result can be proved for the Neumann case where
\eqref{limSpect1} is replaced by
\begin{equation}\label{limSpect1N} 
\varliminf\limits_{h\to0}\frac{1}{h^{2/3}}\inf \bigl\{\Re\, \sigma(\mathcal A _h^N) \bigr\} \geq  \Lambda_{m}^{N}\,,
\qquad  \Lambda_{m}^{N} = \frac{|a'_1|}{2}\, \jf_m ^{2/3}\,,
\end{equation}
where $a'_1<0$ is the rightmost zero of $\Ai^\prime$, and
\eqref{estRes1} is replaced by
\begin{equation}\label{estRes1N}  
 \forall h\in(0,h_\varepsilon),~~~
 \sup_{
\begin{subarray}{c} 
\gamma\leq \Lambda_{m}^{N}  \\[0.5ex]
     \nu \in\mathbb{R}
\end{subarray}
}
 \|(\mathcal A _h^N-(\gamma -\varepsilon)h^{2/3}-i\nu )^{-1}\|\leq\frac{C_\varepsilon}{h^{2/3}}\,.
\end{equation}

We establish here the corresponding results, for both the Robin
boundary condition and the various Transmission problems.
\begin{theorem}\label{thmNSAschro1R} 
Under Assumptions \ref{notdeg} and \ref{nondeg2}, 
\begin{equation}\label{limSpect1R} 
\varliminf\limits_{h\to0}\frac{1}{h^{2/3}}\inf \bigl\{\Re\, \sigma(\mathcal A _{h,\mathcal K(h)} ^R) \bigr\} \geq  \Lambda_{m}^{R}(\kappa), 
\qquad  \Lambda_{m}^{R}(\kappa) = \lambda^R (\jf_m,\kappa)\,, 
\end{equation}
where $\mathcal A _{h,\mathcal K}^R$ is the  Robin realization (with
parameter $\mathcal K\geq 0$) of $\mathcal A_h$\,, and $\mathcal
K=\mathcal K (h) $ satisfies \eqref{eq:94}.\\ 
Moreover, for every $\varepsilon>0\,$, there exist $h_\varepsilon >0$
and $C_\varepsilon>0$ such that
\begin{equation}\label{estRes1R} 
 \forall h\in(0,h_\varepsilon),~~~
 \sup_{
\begin{subarray}{c}
\gamma\leq \Lambda_{m}^{R}(\kappa), \\[0.5ex]
\nu \in\mathbb{R}
\end{subarray}
}
 \|(\mathcal A _{h,\mathcal K(h)}^R-(\gamma -\varepsilon)h^{2/3}-i\nu)^{-1}\|\leq\frac{C_\varepsilon}{h^{2/3}}\,.
\end{equation}
\end{theorem}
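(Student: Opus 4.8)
The plan is to follow the general semiclassical strategy already used for the Dirichlet and Neumann cases in \cite{Alm,Hen2,AlHe}, adapted to the Robin boundary condition. The lower bound \eqref{limSpect1R} and the resolvent estimate \eqref{estRes1R} will be obtained together by proving that, for $\gamma\leq\Lambda_m^R(\kappa)$ and any $\nu\in\R$, one has an a priori estimate of the form $\|(\A_{h,\Kg(h)}^R-(\gamma-\var)h^{2/3}-i\nu)u\|\geq c_\var h^{2/3}\|u\|$ for all $u\in D(\A_h^R)$ and $h$ small. Such an estimate immediately gives invertibility with the stated norm bound, hence both parts of the theorem. The proof is localized: cover $\overline\Omega$ by balls and use a partition of unity $\sum\chi_j^2=1$; on each ball the commutator of $\A_h^R$ with $\chi_j$ is $O(h^2)$ in the appropriate sense and can be absorbed after using the IMS-type localization formula, at the cost of an error $O(h^2)\|u\|^2\cdot h^{-2/3}$ relative to the form, which is $o(h^{2/3})$ and therefore harmless.

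The local analysis splits into three regimes. (i) Interior balls, where $\nabla V\neq0$: here one reduces to the model operator $-h^2\Delta+iV$ with $V$ approximated by its linearization, which after rescaling is a Davies-type complex Airy operator on $\R^n$; its resolvent is entire in the spectral parameter, so the local contribution is $O(h^\infty)$ (or at worst much smaller than $h^{2/3}$) and causes no trouble. (ii) Boundary balls centered at points of $\partial\Omega\setminus\partial\Omega_\perp^R$, where $\nabla V$ has a nonzero tangential component: after flattening the boundary and rescaling, the tangential derivative of $V$ produces an effective linear potential in a tangential direction with no boundary obstruction, again giving a Davies operator on a half-space whose relevant spectral data lie far to the right of $\Lambda_m^R(\kappa)h^{2/3}$; the Robin condition contributes a boundary term of size $\Kg\|u\|_{\partial\Omega}^2 = h^{4/3}\kappa\|u\|_{\partial\Omega}^2$ which, by a trace estimate, is lower-order and does not affect the leading behavior. (iii) Boundary balls centered at points $x_0\in\partial\Omega_\perp^R$: this is the decisive case. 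After straightening the boundary near $x_0$, rescaling $x=x_0+h^{2/3}y$, and expanding $V$ to first order, the operator converges (in the norm-resolvent sense on suitable spaces) to $\LL^R(|\nabla V(x_0)|,\kappa)$ acting in the normal variable, tensored with a harmless tangential part controlled by Assumption \ref{nondeg2} (the nondegenerate Hessian of $V_\partial$ makes the tangential directions contribute a positive semidefinite quadratic form after a further rescaling, as in \cite{Hen2}). By \eqref{eq:75} and \eqref{eq:97}, $\inf\Re\sigma$ of this model is $\lambda^R(|\nabla V(x_0)|,\kappa)\geq\Lambda_m^R(\kappa)$, and the monotonicity of $\jf\mapsto\lambda^R(\jf,\kappa)$ established in Appendix \ref{appC} identifies the infimum over $x_0$ with $\lambda^R(\jf_m,\kappa)$.

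The main obstacle is case (iii): proving a quantitative, uniform-in-$h$ lower bound for the rescaled boundary operator that degenerates to $\LL^R$, including control of the tangential variables and of the curvature terms produced by straightening the boundary, and ensuring all error terms are genuinely $o(h^{2/3})$ relative to the spectral window. Two auxiliary facts are needed here and should be recorded as lemmas: first, that $\Re\sigma(\LL^R(\jf,\kappa))$ is bounded below by $\lambda^R(\jf,\kappa)>0$ with a resolvent bound that is locally uniform in $(\jf,\kappa)$ — this follows from the one-dimensional analysis of $\LL^R$ carried out in Section \ref{s2}, using that $\LL^R$ has compact resolvent (the potential $i\jf x$ confines via the $L^2(\R_+;x^2dx)$ part of the domain) together with the fact that its numerical range, modulo the bounded boundary perturbation $\kappa|u(0)|^2$, lies in a right half-plane; second, a trace-and-interpolation estimate showing $\Kg\int_{\partial\Omega}|u|^2\,ds\leq C h^{4/3}(h^{-2/3}\|u\|^2 + h^{-2}\cdot h^2\|\nabla u\|^2)^{1/2}\cdots$, i.e. that the Robin boundary term is controlled by $\var h^{2/3}\|u\|^2$ plus a small multiple of the kinetic form, so that for $\kappa\geq0$ it only helps and for the uniformity one never loses the leading constant. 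Granting these, the patching of the three local estimates via the partition of unity yields the global a priori inequality and hence \eqref{limSpect1R} and \eqref{estRes1R}.
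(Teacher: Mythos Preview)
Your localization framework is right, and indeed the paper follows essentially the same architecture (partition of unity, interior model operators, boundary model operators after straightening). But the execution you sketch departs from the paper in ways that introduce genuine gaps.

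First, the paper does \emph{not} use an IMS/a priori-estimate approach; it constructs an approximate resolvent $\mathcal R(h,\nu)=\sum_j\chi_j(\A_{j,h}-\lambda)^{-1}\chi_j+\sum_k\eta_k R_{k,h}\eta_k$ and shows $(\A_h^R-\lambda)\mathcal R=I+\Eg$ with $\|\Eg\|\to0$. For non-self-adjoint operators this is the robust route: the IMS identity only controls $\Re\langle u,(\A_h-\lambda)u\rangle$, which does not directly yield the two-sided bound $\|(\A_h-\lambda)u\|\geq c\,h^{2/3}\|u\|$ you claim.

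Second, and more seriously, your case (iii) is both unnecessary and incorrect as stated. For the \emph{lower} bound the paper uses only the \emph{linear} approximation of $V$ at every boundary point, including perpendicular ones; the Hessian $D^2V_\partial$ enters only in the upper-bound argument of Section~\ref{s7}. Your assertion that the tangential Hessian produces a ``positive semidefinite quadratic form'' is false under Assumption~\ref{nondeg2}, which requires only $\det D^2V_\partial\neq0$: the eigenvalues $\alpha_j$ may have either sign (this is precisely the generalization over \cite{AlHe} that the paper advertises).

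Third, your case (ii) has a gap at points close to, but not in, $\partial\Omega_\perp^R$. There the tangential component $|J'|$ is of order $h^{\varrho_\perp}$ or smaller, and the Davies-type bound you invoke, $\|(\A^R(\vec J)-z)^{-1}\|\leq C_\omega\exp(C_\omega/(J|\sin\theta|))$, blows up. The paper's key ingredient here---which you are missing---is Proposition~\ref{lemAngle}, specifically the uniform-in-$\theta$ estimate \eqref{resAngle1}: for $\Re z\leq\lambda^R(J\cos\theta,\kappa)-\varepsilon$ the half-space resolvent is bounded by a constant independent of $\theta$. This is what allows a single linear model to work uniformly across all boundary balls, whether $b_j$ lies in $\partial\Omega_s^\#$ (near the spectral threshold) or $\partial\Omega_p^\#$ (away from it). The input behind \eqref{resAngle1} is the semigroup estimate of Proposition~\ref{lemmaunifbis} for $\LL^R(\jf,\kappa)$, which is where the one-dimensional Robin analysis actually enters.

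Finally, the interior contribution is not $O(h^\infty)$: the model resolvent on $\R^n$ satisfies $\|(\A_{j,h}-z)^{-1}\|\leq C_\omega h^{-2/3}$ (Lemma~\ref{lemModelRn} after rescaling), and it is the \emph{error terms} $\B_j$ in \eqref{eq:6} that are shown to be $o(1)$, not the resolvent itself.
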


\begin{theorem}\label{thmNSAschro1T}
Under Assumptions \ref{notdeg} and \ref{nondeg2},
\begin{equation}\label{limSpect1T}  
\varliminf\limits_{h\to0}\frac{1}{h^{2/3}}\inf \bigl\{\Re\, \sigma(\mathcal
A _{h,\mathcal K(h)}^{T\flat }) \bigr\} \geq \Lambda_{m}^{T\flat}(\kappa)\,, 
\end{equation}
where $\mathcal A _{h,\mathcal K}^{T\flat}$ is the
$T\flat$-realization (with parameter $\mathcal K\geq 0$) of $\mathcal
A_h$ with $\flat $-condition on $\partial \Omega$ (\,$\flat \in
\{D,N\}$) and $T$-condition along $\partial \Omega_-$ and $\mathcal
K=\mathcal K(h)$ satisfies \eqref{eq:94}.\\
Moreover, for every $\varepsilon>0\,$, there exist $h_\varepsilon >0$
and $C_\varepsilon>0$ such that
\begin{equation}\label{estRes1T}
 \forall h\in(0,h_\varepsilon),~~~
 \sup_{
\begin{subarray}{c}
\gamma\leq \Lambda_{m} ^{T\flat}(\kappa) ,\\[0.5ex]
\nu \in\mathbb{R}
\end{subarray}
}
 \|(\mathcal A _{h,\mathcal K (h)}^{T\flat}-(\gamma -\varepsilon)h^{2/3}-i\nu)^{-1}\|\leq\frac{C_\varepsilon}{h^{2/3}}\,.
\end{equation}
\end{theorem}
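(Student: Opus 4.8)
The plan is to mimic the strategy used by Henry for the Dirichlet and Neumann cases (Theorem \ref{thmNSAschro1}), adapting it to the transmission geometry. As in those proofs, the lower bound \eqref{limSpect1T} and the resolvent estimate \eqref{estRes1T} are two facets of the same localization argument, so I would prove the resolvent estimate and read off the spectral bound as a consequence. Fix $\varepsilon>0$ and set $\mu_h=(\gamma-\varepsilon)h^{2/3}+i\nu$ with $\gamma\le\Lambda_m^{T\flat}(\kappa)$, $\nu\in\R$; the goal is a bound $\|(\A_{h,\mathcal K(h)}^{T\flat}-\mu_h)^{-1}\|\le C_\varepsilon h^{-2/3}$ uniform in $\nu$. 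Since a global a priori estimate gives control whenever $|\nu|$ is large (the imaginary part $iV$ together with $-h^2\Delta$ dominates), one reduces to $\nu$ in a bounded interval, and then to constructing an approximate inverse by a partition of unity.

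The core is a microlocal/spatial partition of $\Omega^T=\Omega_-\cup\Omega_+$ into three regimes, exactly as in the Dirichlet/Neumann analysis. \textbf{(i) Interior balls} away from $\partial\Omega^T$: here the model is $-h^2\Delta+iV$ on $\R^n$ with $V$ approximated by its linear part $\nabla V(x_0)\cdot x$, whose complex translation / Fourier analysis gives a resolvent bounded by $Ch^{-2/3}$ with no spectral obstruction on the relevant half-plane (the operator $-h^2\frac{d^2}{dx^2}+i\nabla V(x_0)\cdot x$ on the line is invertible everywhere). \textbf{(ii) Boundary balls where $\nabla V$ is not orthogonal to the boundary}, i.e. near $\partial\Omega^T\setminus\partial\Omega_\perp^T$: after flattening the boundary, the tangential component of $\nabla V$ survives and the same one-dimensional line-model (not half-line) applies, again with no obstruction — the boundary condition becomes irrelevant to leading order because the potential's oscillation dominates. \textbf{(iii) Boundary balls centered at points of $\partial\Omega_\perp^T$}: this is where the transmission (or $\flat$) boundary condition enters. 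After flattening and rescaling $x\mapsto h^{2/3}x$, the model operator is precisely $\LL^T(|\nabla V(x_0)|,\kappa)$ on $\R_-\cup\R_+$ if $x_0\in\partial\Omega_\perp^T\cap\partial\Omega_-$ (with the transmission interface becoming $\{x_1=0\}$) or $\LL^\flat(|\nabla V(x_0)|)$ on $\R_+$ if $x_0\in\partial\Omega_\perp^T\cap\partial\Omega$; by definition \eqref{newdef} of $\Lambda_m^{T\flat}(\kappa)$ and \eqref{eq:75}, the real part of the spectrum of this model lies at or above $\Lambda_m^{T\flat}(\kappa)$, so for $\gamma<\Lambda_m^{T\flat}(\kappa)$ (absorbing the $\varepsilon$-margin) the rescaled resolvent is bounded, giving an $O(h^{-2/3})$ bound after undoing the scaling. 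Assumption \ref{nondeg2} guarantees $\Sg^{T\flat}$ is finite and, more importantly, provides a quantitative nondegeneracy of $V_\partial$ at those points so that the quadratic remainder in the Taylor expansion of $V$ along the boundary is controlled and the model error terms are $o(1)$ relative to $h^{2/3}$.

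Gluing: with $\chi_j$ a partition of unity subordinate to the above cover (balls of radius $\sim h^{2/3-\delta}$ for some small $\delta>0$ near $\partial\Omega_\perp^T$, and of fixed size elsewhere, or a dyadic scheme as in \cite{Hen2}), one forms $R_h=\sum_j \chi_j^L (\text{model resolvent})_j \chi_j^R$, computes $(\A_h-\mu_h)R_h=\mathrm{Id}+E_h$ where $E_h$ collects commutators $[\A_h,\chi_j]\sim h^2(\nabla\chi_j\cdot\nabla+\Delta\chi_j)$ and the potential-remainder terms; the commutator terms cost a power of $h^{2/3}$ against the loss $h^{-2/3}$ from each model resolvent, and by choosing $\delta$ and the geometry of the cover appropriately one arranges $\|E_h\|\le 1/2$ for $h$ small, whence $R_h(\mathrm{Id}+E_h)^{-1}$ is a genuine inverse with norm $O(h^{-2/3})$. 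A symmetric construction on the left $(\mathrm{Id}+\tilde E_h)^{-1}\tilde R_h$ handles surjectivity/injectivity. Finally, letting $\gamma\uparrow\Lambda_m^{T\flat}(\kappa)$ and using that the resolvent exists on $\{\Re z\le(\Lambda_m^{T\flat}(\kappa)-\varepsilon)h^{2/3}\}$ forces $\inf\Re\sigma(\A_{h,\mathcal K(h)}^{T\flat})\ge(\Lambda_m^{T\flat}(\kappa)-\varepsilon)h^{2/3}$, which upon dividing by $h^{2/3}$ and sending $h\to0$, $\varepsilon\to0$ yields \eqref{limSpect1T}.

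\textbf{Main obstacle.} The delicate point is regime (iii) at a transmission point $x_0\in\partial\Omega_\perp^T\cap\partial\Omega_-$: one must show that the flattening change of variables, which turns $\partial\Omega_-$ into a hyperplane, correctly produces the one-dimensional transmission operator $\LL^T$ in the normal variable while the tangential variables (and the curvature of $\partial\Omega_-$, and the quadratic part $D^2V_\partial$) contribute only lower-order corrections — and crucially that this holds \emph{on both sides} $\Omega_-$ and $\Omega_+$ simultaneously with the matching conditions $h^2\partial_\nu u_-=h^2\partial_\nu u_+=\mathcal K(u_+-u_-)$ surviving the rescaling $x\mapsto h^{2/3}x$ under the scaling $\mathcal K=h^{4/3}\kappa$. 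This requires establishing that $\LL^T(\jf,\kappa)$ has the scaling property $\lambda^T(\jf,\kappa)=\lambda^T(1,\kappa\jf^{-1/3})\jf^{2/3}$ and that its resolvent is bounded on $\{\Re z<\lambda^T(\jf,\kappa)\}$ with \emph{locally uniform} bounds in $(\jf,\kappa)$ — the content of Section \ref{s3} — since the $h$-dependence of $\mathcal K$ means the effective transmission parameter in the model drifts as $\kappa\jf^{-1/3}$ and one needs continuity/uniformity of $\lambda^T$ to keep the margin $\Lambda_m^{T\flat}(\kappa)$ honest; plus a separate verification, for the $\flat=D$ case, that points of $\partial\Omega_\perp^T\cap\partial\Omega$ that might sit inside $\Omega_+$-side balls of transmission points do not interfere, handled by shrinking the cover.
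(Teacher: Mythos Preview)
Your overall strategy---localize via a partition of unity, approximate by linearized model operators, glue, and invert---is exactly the paper's approach, and you correctly identify the transmission-specific issues (scaling of $\mathcal K=h^{4/3}\kappa$, uniformity in $(\jf,\kappa)$ of the resolvent of $\LL^T$, two-sided flattening). Two points deserve correction.

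First, your treatment of regime (ii) has a gap. You claim that at non-perpendicular boundary points ``the one-dimensional line-model (not half-line) applies'' via the tangential component $J'$. But regime (ii) as you describe it includes boundary balls \emph{arbitrarily close} to $\partial\Omega_\perp^\#$, where $|J'|$ is arbitrarily small; the tangential-decay bound \eqref{ResA+a} carries a factor $\exp(C_\omega/(J|\sin\theta|))$ which blows up there, so you do not get a uniform estimate over all regime-(ii) balls. The paper avoids this by a different split of the boundary: not perpendicular vs.\ non-perpendicular, but $\partial\Omega_s^\#=\{x:\lambda^\#(\vec J(x)\cdot\vec\nu(x))\ge\Lambda_m^\#-\epsilon/2\}$ vs.\ its complement $\partial\Omega_p^\#$. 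On $\partial\Omega_s^\#$ (a \emph{fixed}, $h$-independent neighbourhood of $\partial\Omega_\perp^\#$) one uses the half-space model $\tilde\A_{j,h}^\#$ and the key uniform-in-angle estimate \eqref{resAngle1}, which gives $\|(\A^\#(J,\theta,\vec v)-z)^{-1}\|\le K$ for \emph{all} $\theta$ whenever $\Re z\le\lambda^\#(J\cos\theta,\kappa)-\epsilon$; since on $\partial\Omega_s^\#$ one has $\lambda^\#(J\cos\theta)\ge\Lambda_m^\#-\epsilon/2$, this yields \eqref{eq:15}. On $\partial\Omega_p^\#$, which lies at fixed positive distance from $\partial\Omega_\perp^\#$, the tangential component is bounded below by a fixed constant and your tangential argument works. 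So the missing ingredient is Proposition~\ref{lemAngle}, part~2, used on a fixed boundary neighbourhood rather than only at the perpendicular points themselves.

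Second, the paper uses a \emph{single} scale $h^\varrho$ with $1/3<\varrho<2/3$ throughout; your two-scale scheme (small balls at $\partial\Omega_\perp^\#$, large elsewhere) is closer to what the paper does for the \emph{upper} bound in Section~\ref{s7} and is unnecessary here. Also, the reduction to bounded $\nu$ is not needed: the model estimates are already uniform in $\nu$ because the shift $iV(b_j)$ absorbs it.
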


We now look at upper bounds for the left margin of the spectrum. Our
main theorem is:
\begin{theorem}
\label{upperbound}
Under Assumptions \ref{notdeg}  and \ref{nondeg2},
for $\#\in \{D,N,R,T\flat\}$ with  $\flat \in \{D,N\}$, one has
\begin{equation}\label{limSpect1Nat} 
  \varlimsup\limits_{h\to0}\frac{1}{h^{2/3}}\inf \bigl\{ \Re\, \sigma(\mathcal
  A _{h,\mathcal K (h)}^{\#}) \bigr\} =   \Lambda_{m}^{\#} (\kappa)  \,,  
\end{equation}
where $\mathcal K(h)$ satisfies \eqref{eq:94}.
\end{theorem}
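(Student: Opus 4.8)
The lower bound $\varliminf_{h\to0}h^{-2/3}\inf\{\Re\sigma(\A^\#_{h,\mathcal K(h)})\}\ge\Lambda_m^\#(\kappa)$ for $\#\in\{D,N,R,T\flat\}$ is already contained in Theorems \ref{thmNSAschro1}, \ref{thmNSAschro1R} and \ref{thmNSAschro1T} (and, for $\#=N$, in \eqref{limSpect1N}), so it remains to prove the reverse inequality for the $\varlimsup$. The plan is to produce, for every small $h$, a genuine eigenvalue $z_h\in\sigma(\A^\#_{h,\mathcal K(h)})$ with $\Re z_h=h^{2/3}\Lambda_m^\#(\kappa)+O(h)$, which forces $\inf\{\Re\sigma(\A^\#_{h,\mathcal K(h)})\}\le h^{2/3}\Lambda_m^\#(\kappa)+O(h)$ and hence the equality (in fact the existence of the limit). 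In the transmission case one fixes the reference point either on $\partial\Omega_-$, where both components of the eigenfunction are active and the one-dimensional model is $\LL^T$, or on $\partial\Omega$, where the interface is invisible to a locally supported function and the construction reduces to the $D$- or $N$-construction inside $\Omega_+$; the two subcases run in parallel.

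The first ingredient is a quasimode. Fix $x_0\in\Sg^\#$ and set $\jf_0:=|\nabla V(x_0)|>0$ (Assumption \ref{notdeg}). In boundary normal coordinates $(t,y)$ centred at $x_0$, with $t$ the signed distance to the relevant boundary component and $y\in\R^{n-1}$ tangential, one has $\nabla_yV(x_0)=0$, hence $V(0,y)=V(x_0)+\tfrac12\sca{D^2V_\partial(x_0)y}{y}+O(|y|^3)$ with $D^2V_\partial(x_0)$ non-degenerate by Assumption \ref{nondeg2}. After rescaling $t$ by $h^{2/3}$ and $y$ by $h^{1/2}$ — the scaling $\mathcal K=h^{4/3}\kappa$ being exactly what turns the Robin or transmission condition into the $h$-independent one appearing in $\mathfrak D^\#(\kappa)$ — the operator equals, modulo a remainder negligible on the quasimode,
\begin{equation*}
h^{2/3}\,\LL^\#(\jf_0,\kappa)\otimes I_y\;+\;h\Big(-\Delta_y+\tfrac{i}{2}\sca{D^2V_\partial(x_0)y}{y}\Big)\;+\;iV(x_0)\,.
\end{equation*}
I take $u_h=\chi\,\phi_0(t/h^{2/3})\,g_0(y/h^{1/2})$, where $\phi_0$ is an $L^2$-normalized eigenfunction of $\LL^\#(\jf_0,\kappa)$ for an eigenvalue $\mu_0$ of smallest real part (so $\Re\mu_0=\lambda^\#(\jf_0,\kappa)$), $g_0$ is the ground state of the non-self-adjoint oscillator $-\Delta_y+\tfrac{i}{2}\sca{D^2V_\partial(x_0)y}{y}$ (whose ground-state energy $\rho_0$ has $\Re\rho_0>0$, since $D^2V_\partial(x_0)$ is non-degenerate), and $\chi$ is a fixed cut-off supported near $x_0$. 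A routine expansion, in which the metric curvature of the Laplacian contributes $O(h^{4/3})+O(h^{5/3})$, the mixed and quadratic normal terms of $V$ give $O(h^{7/6})+O(h^{4/3})$, its cubic and higher terms $O(h^{3/2})$, and the cut-off commutators are exponentially small, yields
\begin{equation*}
\big\|\big(\A^\#_{h,\mathcal K(h)}-z_h\big)u_h\big\|\le o(h)\,\|u_h\|\,,\qquad z_h:=iV(x_0)+h^{2/3}\mu_0+h\,\rho_0\,,
\end{equation*}
so $\Re z_h=h^{2/3}\Lambda_m^\#(\kappa)+O(h)$ and $\Im z_h=V(x_0)+O(h)$.

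Since $\A^\#_{h,\mathcal K(h)}$ is not self-adjoint, a quasimode does not by itself locate the spectrum; I would upgrade it to a genuine eigenvalue by a Grushin reduction. The eigenvalue $\mu_0$ of $\LL^\#(\jf_0,\kappa)$ is isolated and algebraically simple — for $\#=T$ this uses compactness of the resolvent of $\LL^T(\jf_0,\kappa)$ and the attainment of the infimum in \eqref{eq:75}, cf.\ Sections \ref{s2} and \ref{s3} — so, with $\Pi_0$ its rank-one spectral projection, one also constructs a dual quasimode $v_h$ from the eigenfunction of $\LL^\#(\jf_0,\kappa)^*$ and the adjoint tangential oscillator, normalized so that $\sca{u_h}{v_h}$ is bounded away from $0$. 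One then considers
\begin{equation*}
\mathcal P_h(z)=\begin{pmatrix}\A^\#_{h,\mathcal K(h)}-z & u_h\\[0.4ex]\sca{\,\cdot\,}{v_h} & 0\end{pmatrix}\colon\;D\big(\A^\#_{h,\mathcal K(h)}\big)\times\C\;\longrightarrow\;L^2\times\C
\end{equation*}
and shows it is boundedly invertible for $z\in D(z_h,\delta h)$, $\delta>0$ small: near $x_0$ this comes from the $O(h^{-2/3})$ bound on the model resolvent restricted to the range of $I-\Pi_0$; away from $x_0$ — at boundary points outside $\partial\Omega_\perp^\#$, or on $\partial\Omega_\perp^\#$ but with $\lambda^\#(|\nabla V|,\kappa)>\Lambda_m^\#(\kappa)$, and in the interior, where the pertinent one-dimensional model is the complex Airy operator on $\R$ with empty spectrum — the resolvent is $O(h^{-2/3})$ for $\Re z\le(\Lambda_m^\#(\kappa)+\delta')h^{2/3}$, by the resolvent estimates underlying Theorems \ref{thmNSAschro1R} and \ref{thmNSAschro1T} and their $D$- and $N$-analogues; the local parametrices are patched by a quadratic partition of unity. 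The entry $E^h_{-+}(z)$ of $\mathcal P_h(z)^{-1}$ is then a scalar holomorphic function with $E^h_{-+}(z)=(z_h-z)(1+o(1))$ on $\partial D(z_h,\delta h)$, so by Rouché it has a unique zero $z_h^*$, with $|z_h^*-z_h|=o(h)$, and $z_h^*\in\sigma(\A^\#_{h,\mathcal K(h)})$ because $\mathcal P_h(z_h^*)$ fails to be injective. Since $\Re z_h^*=h^{2/3}\Lambda_m^\#(\kappa)+O(h)$, the $\varlimsup$ bound — hence the equality — follows.

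The principal obstacle is the uniform-in-$h$ invertibility of $\mathcal P_h(z)$ down to spectral scale $O(h)$: one must glue the local model analysis at the finitely many (Assumption \ref{nondeg2}) points of $\Sg^\#$ with the off-$\Sg^\#$ and interior resolvent bounds supplied by the lower-bound machinery, uniformly for $z$ on circles of radius $\sim h$ about $z_h$. A secondary, essentially routine, difficulty is to carry the quasimode expansion through in the boundary normal coordinates while respecting the Robin and transmission conditions, and — in the transmission case with reference point on $\partial\Omega_-$ — to establish the needed spectral properties (discreteness, simplicity of the extremal eigenvalue) of the two-component model $\LL^T(\jf_0,\kappa)$.
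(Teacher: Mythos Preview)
Your quasimode construction and the overall strategy are correct and match the paper's Section~\ref{s7} in spirit, but the route to the eigenvalue is different. The paper does \emph{not} use a Grushin reduction. Instead it builds a global approximate resolvent via a two-scale partition of unity ($h^\varrho$ away from $\partial\Omega_\perp^\#$, $h^{\varrho_\perp}$ near it), and near each point of $\Sg^\#$ replaces the linear model by the \emph{quadratic} model $\B_\varepsilon^\# = \LL_\tau^\# + \varepsilon^{1/2}\LL_\sigma$ of \eqref{eq:123}; Lemma~\ref{lemma7.1} gives $\|(\B_\varepsilon^\#-\lambda)^{-1}\|\le C/(r\varepsilon^{1/2})$ on $\partial B(\Lambda(\varepsilon),r\varepsilon^{1/2})$ by a semigroup/tensor-product argument. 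Most of the section is spent showing the gluing error $\Eg(h,\lambda)$ is $o(1)$ on a circle of radius $h^{1+q}$, after which a contour integral of $\langle U^{\#,1},(\A_h^\#-\lambda)^{-1}U^{\#,1}\rangle$ locates the eigenvalue. Your Grushin alternative is a perfectly legitimate packaging of the same analysis, and arguably cleaner conceptually; the paper's approach has the advantage that the model resolvent bounds it needs (Lemma~\ref{lemma7.1} and the three unlabelled lemmas following it) are stated and proved once, independently of any dual-quasimode construction.

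One imprecision in your sketch deserves flagging. You write that near $x_0$ the invertibility of the Grushin problem ``comes from the $O(h^{-2/3})$ bound on the model resolvent restricted to the range of $I-\Pi_0$''. But $\Pi_0$ projects only in the normal variable; its range $\phi_0\otimes L^2(\R^{n-1})$ is infinite-dimensional, and on it the model is $h^{2/3}\lambda_1^\# + h\,\LL_\sigma$, whose spectrum $\{h^{2/3}\lambda_1^\#+h\mu_k\}_{k\ge1}$ lies at distance $O(h)$ from $z_h$. Removing the single mode $\phi_0\otimes g_0$ still leaves eigenvalues $O(h)$ away, so the relevant bound is $O(h^{-1})$, not $O(h^{-2/3})$ --- adequate for Rouch\'e since your remainder is $O(h^{7/6})$, but not what you wrote. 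This is precisely the content of the paper's Lemma~\ref{lemma7.1}. Similarly, the off-$\Sg^\#$ bound ``$O(h^{-2/3})$ for $\Re z\le(\Lambda_m^\#+\delta')h^{2/3}$'' does \emph{not} follow from the lower-bound theorems, which only reach $\Re z\le(\Lambda_m^\#-\epsilon)h^{2/3}$; pushing above the threshold requires Proposition~\ref{prop:semigroup-upper} and Lemma~\ref{lem:oblique}, and is where the paper spends its effort.
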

\begin{remark}
  An immediate conclusion which follows from the previous statements is
  that under Assumptions \ref{notdeg} and \ref{nondeg2}, for $\#\in
  \{D,N,R,T\flat\}$ with $\flat \in \{D,N\}$, we have
\begin{equation}\label{limitpect1Nat} 
 \lim_{h\to0}\frac{1}{h^{2/3}}\inf \bigl\{ \Re\, \sigma(\mathcal
  A _{h,\mathcal K (h)}^{\#}) \bigr\} =   \Lambda_{m}^{\#} (\kappa)  \,,  
\end{equation}
where $\mathcal K(h)$ satisfies \eqref{eq:94}\,.
\end{remark}

In the case of the Dirichlet problem, this theorem was obtained in
\cite[Theorem 1.1]{AlHe} under the stronger assumption that, at each
point $x$ of $ \Sg^D$, the Hessian of $ V_\partial:= V_{/\partial \Omega^\#}$ is
positive definite if $\partial_\nu V (x) < 0\,$ or negative definite if
$\partial_\nu V (x) > 0\,$, with $\partial_\nu V:=\nu\cdot \nabla V$.  This additional
assumption reflects some technical difficulties in the proof, that we
overcome in Section \ref{s7} by using tensor products of semigroups, a
point of view that is missing \cite{AlHe}. This generalization allows
us to obtain the asymptotics of the left margin of $\sigma(\A_h^\#)$, for
instance, when $V(x_1,x_2)=x_1$ and $\Omega$ is either an annulus or the
exterior of a disk, where the above assumption is not satisfied.  For
this particular potential, an extension to the case when $\Omega$ is
unbounded is of significant interest in the physics literature
\cite{Grebenkov17}. We may assume in this case that $\partial \Omega^\#$ is
bounded and add for the potential $V$ the assumption (having in mind
the case $V=x_1$) that there exist a compact set $K$ and positive
constants $c$, $C$ such that, $\forall x \notin K\,$, $c \leq |\nabla V (x)| $ and
$\sum_{1\leq |\alpha|\leq 2} | \partial_x^\alpha V (x)| \leq C$\,. We leave this problem to
future research.

The rest of this paper is organized as follows:\\ In the next section
we briefly review properties of the Robin realization of the complex
Airy operator in $\R_+$ that were established in \cite{GHH}, and
extend them slightly further to accommodate our needs in the sequel.
We do the same in Section \ref{s3} for the transmission problem. In
Section \ref{sModels} we consider the operator $-\Delta+iJ\cdot x$
(for some $J\in\R^n$) in $\R^n$ and in the half-space, where the
boundary set on the hyperplane $x_n=0$. Most of the results in this
section have been obtained in \cite{Hen2,Alm}, but some refined
semigroup and resolvent estimates that are necessary in the last
section are provided as well. In Section \ref{s5} we characterize the
domain of operators with quadratic potential both in $\R^n$ (in fact,
we address there a much more general class of operators) and in the
presence of a boundary or an interface (the half-space). In
Section \ref{s6} we prove Theorems \ref{thmNSAschro1R} and
\ref{thmNSAschro1T}. In the last section \ref{s7} we prove Theorem
\ref{upperbound}.  Finally, in Appendix \ref{appB} we prove a simple
inequality to assist the reader, and in Appendix \ref{appC}, we
provide more information on the monotonicity of the real part of the
eigenvalue of the one-dimensional complex Airy operator with respect
to a parameter, which is less crucial for the sake of proving lower
and upper bounds for $\inf \Re\sigma(\A_h)$, than what is covered in
Sections \ref{s2} and \ref{s3} but allows for a simpler formulation of
some of the results.

\section{The complex Airy operator on the half-line: Robin case}\label{s2}  

For $\jf \neq 0$ and $\kappa \geq 0$, we consider
\begin{equation}
\label{eq:99}
  \LL^R(\mathfrak{j},\kappa)= -\frac{d^2}{dx^2} + i\, \mathfrak{j} \, x 
\end{equation}
defined on (cf. \cite{GHH})
\begin{equation}
\label{eq:107}
  D( \LL^R(\mathfrak j,\kappa))= \{ u\in H^2(\R_+,\C) \cap L^2(\R_+,\C\,;\,x^2dx)\,| \,
  u^\prime(0)= \kappa \, u(0)\,\} \,.
\end{equation}
The operator is associated with the sesquilinear form defined on
$H^1(\R_+,\C)\times H^1(\R_+,\C)$ by
\begin{displaymath}
a^{R} (u,v) = \int_0^{+\infty} u'(x)\, \bar v '(x) \,dx + i \, \jf \int_0^{+\infty} x
\, u(x)\,  \bar v (x)\, dx + \kappa \, u(0)\, \bar v (0)\,.
\end{displaymath}
We begin by recalling some of the results of \cite{GHH} with the Robin
boundary condition which naturally extends from both Dirichlet and
Neumann cases. One should be more careful with the dilation
argument. \\
{\bf Dilation argument.\\} 
For $\jf >0$, if $U_\mathfrak{j} \in \mathcal L(L^2(\R_+,\C))$ denotes
the following unitary dilation operator
\begin{displaymath}
   ( U_\mathfrak{j} u)(x) = \mathfrak{j}^{1/3} u(\mathfrak{j}^{1/3}\,x) \,,
\end{displaymath}
we observe that 
\begin{equation}
\label{eq:69a}
  \LL^R(1,\mathfrak{j}^{-1/3}\kappa)= \mathfrak{j}^{2/3}U_\mathfrak{j}^{-1}\LL^R (\mathfrak{j},\kappa)U_\mathfrak{j}\,,
\end{equation}
It is then enough by dilation to consider the case $\jf =\pm 1$, but
with a new Robin parameter and by using the complex conjugation $\jf
=1$. \\
 
In the Robin case, the distribution kernel (or the Green's function)
of the resolvent is given by
\begin{equation*}
\mathcal G^{R} (x,y\,;\lambda) =  \mathcal G_0(x,y\,;\lambda) + \mathcal G
_1^{R}(x,y\,;\kappa, \lambda)\,  \quad \textrm{for}~ (x,y)\in \mathbb R_+^2\,,
\end{equation*}
where 
\begin{equation}
\label{eq:63}
\begin{split}
  \mathcal G^{R}_1(x,y\,;\kappa, \lambda) & = - 2\pi \frac{i e^{-i2\pi/3}
    \Ai'(e^{-i2\pi/3} \lambda) + \kappa\, \Ai(e^{-i2\pi/3}\lambda)}
  {i e^{i2\pi/3} \Ai'(e^{i2\pi/3} \lambda) + \kappa\, \Ai(e^{i2\pi/3} \lambda)}  \\
  & \times  \Ai\bigl(e^{i2\pi/3} (-ix+\lambda)\bigr)~ \Ai\bigl(e^{i2\pi/3} (-iy+\lambda)\bigr)\,,  \\
\end{split}
\end{equation}
and $\mathcal G _0 (x,y\,;\lambda) $ is the resolvent of $D_x^2+i x$
on $\R$, which is an entire function of $\lambda$.

Setting $\kappa = 0\,$, one retrieves the Neumann case, while the
limit $\kappa\to +\infty$ yields the Dirichlet case.  As in the
Dirichlet case \cite{Alm,Hen2}, the resolvent is compact and in the
Schatten class $\mathcal C^p$ for any $p>\frac 32$.  Its
(complex-valued) poles are determined by solving the equation
\begin{equation}
\label{eigenvalueR}
f^R(\kappa, \lambda):=i e^{-i2\pi/3} \Ai '(e^{-i2\pi/3} \lambda) - \kappa \Ai (e^{-i2\pi/3} \lambda) = 0\,.
\end{equation}
Denote by $\lambda_j^R(\kappa)$ ($j\in \mathbb N^*$) the sequence of
eigenvalues that we order by their non decreasing real part. Except
for the case of small (respectively. large) $\kappa$, in which the
eigenvalues can be shown to be close to the eigenvalues of the Neumann
(respectively Dirichlet) problem, it does not seem easy to obtain the
precise value of $\lambda_j^R$ for any $j\in\N$.  Nevertheless, one
can prove that the zeros of $f^R(\kappa, \cdot)$ are simple. If indeed
$\lambda$ is a common zero of $f^R$ and $(f^R)'$, then either $\lambda
+\kappa^2=0$, or $e^{-i 2\pi/3} \lambda$ is a common zero of $\Ai$ and
$\Ai'$.  The second option is excluded by uniqueness of the trivial
solution for the initial value problem $-u^{\prime\prime}+zu=0$,
$u(z_0)=u^\prime(z_0)=0$, whereas the first option is excluded for
$\kappa \geq 0$ because the spectrum is contained in the positive
half-plane.

Since the numerical range of the Robin realization of $D_x^2+ix$
is contained in the first quadrant of the complex plane we have
\begin{proposition}
\begin{subequations} \label{eq:64}
\begin{equation}
  \| \mathcal G^R(\kappa, \lambda)\| \leq \frac{1}{|\Re \lambda|}\,,\, \mbox{ if } \Re \lambda < 0\,,
\end{equation}
and
\begin{equation}
  \| \mathcal G^R(\kappa,\lambda)\| \leq \frac{1}{|\Im \lambda|}\,,\, \mbox{ if }  \Im \lambda < 0\,.
\end{equation}
\end{subequations}
\end{proposition}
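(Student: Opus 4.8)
The plan is to read off both bounds from the location of the numerical range of $\LL^R(1,\kappa)$ together with its m-sectoriality, keeping in mind that $\mathcal G^R(\cdot,\cdot;\kappa,\lambda)$ is by definition the integral kernel of $(\LL^R(1,\kappa)-\lambda)^{-1}$ and that $\|\mathcal G^R(\kappa,\lambda)\|$ denotes the corresponding operator norm on $L^2(\R_+,\C)$. First I would note that the sesquilinear form $a^R$ is densely defined, closed and sectorial on its form domain (the space of $u\in H^1(\R_+,\C)$ with $x^{1/2}u\in L^2(\R_+)$), so that by Kato's first representation theorem it is exactly the form of $\LL^R(1,\kappa)$ and the latter is m-sectorial; in particular $\sigma(\LL^R(1,\kappa))\subset\overline W$ with $W=W(\LL^R(1,\kappa))$ the numerical range. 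A one-line computation on the form gives, for $u$ in the form domain,
\[
\Re a^R(u,u)=\int_0^{+\infty}|u'(x)|^2\,dx+\kappa\,|u(0)|^2\ge 0,\qquad \Im a^R(u,u)=\int_0^{+\infty}x\,|u(x)|^2\,dx\ge 0,
\]
using $\kappa\ge0$ and $x>0$ on $\R_+$. Hence $W$ lies in the closed first quadrant $\{z:\Re z\ge0,\ \Im z\ge0\}$, so every $\lambda$ with $\Re\lambda<0$ or with $\Im\lambda<0$ belongs to the resolvent set of $\LL^R(1,\kappa)$.

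For the quantitative estimates I would argue directly. Fix $u\in D(\LL^R(1,\kappa))$, so that $\sca{\LL^R(1,\kappa)u}{u}=a^R(u,u)$. If $\Re\lambda<0$, then
\[
|\Re\lambda|\,\|u\|^2=-\Re\lambda\,\|u\|^2\le \Re a^R(u,u)-\Re\lambda\,\|u\|^2=\Re\sca{(\LL^R(1,\kappa)-\lambda)u}{u}\le\|(\LL^R(1,\kappa)-\lambda)u\|\,\|u\|,
\]
whence $\|(\LL^R(1,\kappa)-\lambda)u\|\ge|\Re\lambda|\,\|u\|$; since $\lambda$ lies in the resolvent set this says precisely that $\|\mathcal G^R(\kappa,\lambda)\|\le 1/|\Re\lambda|$. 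The second inequality follows verbatim, replacing $\Re$ by $\Im$ throughout and using $\Im a^R(u,u)\ge0$ together with $\Im\lambda<0$.

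The argument is otherwise routine; the only point that requires a little care is the claim that $\lambda$ with $\Re\lambda<0$ (resp. $\Im\lambda<0$) genuinely lies in the resolvent set, i.e. that the a priori lower bound on $\|(\LL^R(1,\kappa)-\lambda)u\|$ is not vacuous. This is where the closedness and sectoriality of $a^R$ (equivalently, the m-sectoriality of $\LL^R(1,\kappa)$) enter; an alternative, relying only on what has already been recorded, is to invoke the compactness of the resolvent, which reduces the spectrum to the discrete set of simple zeros of $f^R(\kappa,\cdot)$ from \eqref{eigenvalueR}, none of which lie off the closed first quadrant. Once invertibility is secured, the coercivity estimates displayed above transfer immediately to the stated operator-norm bounds on $\mathcal G^R$.
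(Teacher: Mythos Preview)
Your argument is correct and follows exactly the paper's approach: the paper simply states that the numerical range of the Robin realization is contained in the first quadrant and deduces the two resolvent bounds from this, without writing out the details. You have supplied those details carefully, including the verification that $\lambda$ with $\Re\lambda<0$ or $\Im\lambda<0$ actually lies in the resolvent set via m-sectoriality (or alternatively compactness of the resolvent), which the paper leaves implicit.
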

The above, together with the Phragm\'en-Lindel\"of principle (see
\cite{Agm}) and the fact that the resolvent is in $\mathcal C^p$, for
any $p>\frac 32$, implies (after a dilation to treat general $\jf$)
the proposition:
\begin{proposition}\label{CompRa}
For any $\kappa \geq 0$ and $\jf \neq 0$, the space generated by the
eigenfunctions of $\LL^R(\mathfrak{j},\kappa)$ is dense in
$L^2(\R_+,\C)$.
\end{proposition}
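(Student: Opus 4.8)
The idea is to reduce to the normalized operator and then to run the classical completeness argument for operators with Schatten-class resolvent, the bounds \eqref{eq:64} supplying exactly the input needed for a Phragm\'en--Lindel\"of step. First the reduction: by the dilation identity \eqref{eq:69a}, $\LL^R(\jf,\kappa)$ with $\jf>0$ is unitarily equivalent, up to the positive factor $\jf^{2/3}$, to $\LL^R(1,\jf^{-1/3}\kappa)$; and for $\jf<0$, complex conjugation maps the eigenfunctions of $\LL^R(\jf,\kappa)$ bijectively onto those of $\LL^R(-\jf,\kappa)$ (the parameter $\kappa$ being real). Density of the span of the eigenfunctions is preserved under unitary equivalence, under multiplication of the operator by a nonzero scalar, and under complex conjugation, so it suffices to treat $\LL:=\LL^R(1,\kappa)$ for a fixed $\kappa\ge0$. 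Recall also that the zeros of $f^R(\kappa,\cdot)$ are simple (established above); since moreover the eigenspaces are one-dimensional by uniqueness for the ODE $-u''+ixu=\lambda u$, every eigenvalue $\lambda_j:=\lambda_j^R(\kappa)$ is algebraically simple, so the root vectors of $\LL$ are its eigenfunctions $\phi_j$, and it is enough to prove that $\{\phi_j\}_j$ spans a dense subspace of $L^2(\R_+,\C)$.

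So let $g\in L^2(\R_+,\C)$ be orthogonal to every $\phi_j$, fix $f\in L^2(\R_+,\C)$, and set $F(\lambda):=\langle R(\lambda)f,g\rangle$, where $R(\lambda)=(\LL-\lambda)^{-1}$. By the remarks following \eqref{eq:63}, $R(\cdot)$ is meromorphic on $\C$, with poles exactly at the $\lambda_j$, and lies in $\mathcal C^p$ for every $p>\tfrac32$ at any point of the resolvent set. Near a pole $\lambda_j$ one has $R(\lambda)=(\lambda-\lambda_j)^{-1}A_j+(\text{holomorphic})$ with $A_j$ a rank-one operator whose range is $\C\phi_j$; hence the residue of $F$ at $\lambda_j$ is $\langle A_jf,g\rangle$, a scalar multiple of $\langle\phi_j,g\rangle=0$. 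Thus all singularities of $F$ are removable and $F$ extends to an entire function.

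It remains to show $F\equiv0$. By \eqref{eq:64}, $|F(\lambda)|\le\|f\|\,\|g\|/|\Re\lambda|$ on $\{\Re\lambda<0\}$ and $|F(\lambda)|\le\|f\|\,\|g\|/|\Im\lambda|$ on $\{\Im\lambda<0\}$; the union of these half-planes covers $\{\arg\lambda\in(\pi/2,2\pi)\}$, so $F$ is bounded off the closed first quadrant and tends to $0$ at infinity there. Fix a small $\varepsilon>0$ and let $S_\varepsilon=\{\,-\varepsilon\le\arg\lambda\le\pi/2+\varepsilon\,\}$: the two bounding rays of $S_\varepsilon$ lie in $\{\Im\lambda<0\}$ and $\{\Re\lambda<0\}$, so $F$ is bounded on $\partial S_\varepsilon$. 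Inside $S_\varepsilon$ one has the a priori bound $|F(\lambda)|\le\exp(C|\lambda|^{p'})$ for any fixed $p'\in(\tfrac32,2)$: writing $R(\lambda)=(I-(\lambda-\lambda_0)K)^{-1}K$ with $K=R(\lambda_0)\in\mathcal C^{p'}$ for a fixed $\lambda_0$ with $\Re\lambda_0<0$, the standard Schatten-class estimate for $\|(I-zK)^{-1}\|$ via the regularized determinant of $I-zK$ (an entire function of $z$ of order $\le p'$; see \cite{Agm}) yields $\log\|R(\lambda)\|=O(|\lambda|^{p'})$ uniformly outside fixed small disks about the $\lambda_j$, and since $F$ is entire the maximum principle absorbs those disks. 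Choosing $\varepsilon$ so small that $p'<\pi/(\pi/2+2\varepsilon)$ (possible since $p'<2$), the Phragm\'en--Lindel\"of principle on $S_\varepsilon$ forces $F$ to be bounded on $S_\varepsilon$, hence on all of $\C$; being entire, $F$ is then constant, and as $F(\lambda)\to0$ along the negative real axis we get $F\equiv0$. Therefore $\langle R(\lambda)f,g\rangle=0$ for all $f$ and all $\lambda$ in the resolvent set, and since $R(\lambda)$ has dense range $D(\LL)$ this gives $g=0$. Undoing the reduction finishes the argument. The only genuinely delicate point is the order $<2$ growth bound for $R(\lambda)$ inside the first quadrant away from the spectrum; equivalently, one may simply invoke the completeness theorem for operators whose resolvent lies in $\mathcal C^p$ and is $O(|\lambda|^{-1})$ along rays with consecutive angular gaps all $<\pi/p$ (\cite{Agm}), whose hypotheses hold here because $p$ may be chosen in $(\tfrac32,2)$ and the single angular gap to bridge, the first quadrant, has opening $\pi/2<\pi/p$.
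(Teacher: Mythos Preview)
Your argument is correct and is precisely the approach the paper has in mind: the paper's proof is a single sentence invoking the Phragm\'en--Lindel\"of principle together with the Schatten class membership of the resolvent and the numerical-range bounds \eqref{eq:64}, citing \cite{Agm} for the resulting completeness theorem, and you have simply unpacked that argument in full (reduction by dilation, the entire function $F(\lambda)=\langle R(\lambda)f,g\rangle$, order-$p'$ growth from regularized determinants, and Phragm\'en--Lindel\"of across the first quadrant). Your final sentence, invoking the abstract completeness criterion directly, is literally what the paper does.
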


We conclude this section with some semigroup estimates.\\

\begin{proposition}\label{lemmaunifbis}
Let $\lambda^R(\mathfrak{j},\kappa)$ denote the real value of the
leftmost eigenvalue of $\LL^R(\mathfrak{j},\kappa)$. Then for any
positive $\mathfrak{j}_0\,$, $\mathfrak{j}_1$, $\kappa_0$ and
$\epsilon$ there exists
$C(\mathfrak{j}_0,\mathfrak{j}_1,\kappa_0,\epsilon)>0$ such that, for
$0 < \mathfrak{j}_0 \leq \mathfrak{j} \leq \mathfrak{j}_1$ and $\kappa
\in [0,\kappa_0]$,
\begin{equation} 
\label{eq:104}
\|e^{ -t\LL^R (\mathfrak{j},\kappa) }\|\leq {C (
    \mathfrak{j}_0,\mathfrak{j}_1,\kappa_0, \epsilon) }\, e^{-t(\lambda^R(\mathfrak{j},\kappa)-
    \epsilon)}\,. 
\end{equation}
\end{proposition}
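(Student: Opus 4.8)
The plan is to derive the semigroup bound \eqref{eq:104} from the spectral information already collected in this section, namely the completeness of the eigenfunctions (Proposition \ref{CompRa}), the simplicity of the eigenvalues of $f^R(\kappa,\cdot)$, and the resolvent bounds of Proposition \ref{eq:64}. The natural tool is the Gearhart--Pr\"uss type quantitative argument: one shows that the resolvent $(\LL^R(\jf,\kappa)-\lambda)^{-1}$ is uniformly bounded on a vertical line $\Re\lambda=\lambda^R(\jf,\kappa)-\epsilon$ (and on the half-plane to its left, together with appropriate decay as $|\Im\lambda|\to\infty$), with a bound that is uniform for $\jf\in[\jf_0,\jf_1]$ and $\kappa\in[0,\kappa_0]$; the semigroup estimate then follows by inverse Laplace transform (Bromwich contour) and the Hille--Yosida/Gearhart--Pr\"uss machinery.

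Concretely, first I would use the dilation identity \eqref{eq:69a} to reduce to $\jf=1$ at the cost of replacing $\kappa$ by $\jf^{-1/3}\kappa$, which still ranges over a compact subset of $[0,\infty)$ once $\jf\in[\jf_0,\jf_1]$ and $\kappa\in[0,\kappa_0]$; the prefactor $\jf^{2/3}$ and the rescaled time only affect constants in a way controlled by $\jf_0,\jf_1$. Next, fix $\delta>0$ small (to be chosen $<\epsilon$) and consider the region $\Re\lambda\le \lambda^R(1,\kappa)-\delta$. On the part of this region with $\Re\lambda<0$ or $\Im\lambda<0$, Proposition \ref{eq:64} gives $\|\mathcal G^R\|\le 1/\delta'$ directly once we are a fixed distance from the axes; for the remaining compact-in-$\lambda$ part one uses that $\lambda\mapsto(\LL^R(1,\kappa)-\lambda)^{-1}$ is holomorphic there (no poles, since all eigenvalues have real part $\ge\lambda^R(1,\kappa)>\lambda^R(1,\kappa)-\delta$) and jointly continuous in $(\lambda,\kappa)$, hence bounded on the compact set $\{\Re\lambda\le\lambda^R(1,\kappa)-\delta,\ |\Im\lambda|\le R\}\times[0,\kappa_0']$. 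For large $|\Im\lambda|$ with $\Re\lambda$ bounded, one needs decay $\|\mathcal G^R(\kappa,\lambda)\|\to0$: this follows from the explicit Green's function \eqref{eq:63}, using the known asymptotics of $\Ai$ and $\Ai'$ in the relevant sectors to control $\mathcal G_1^R$, together with the standard bound on $\mathcal G_0$, the full-line resolvent.

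With a uniform resolvent bound $\sup_{\Re\lambda=\lambda^R(1,\kappa)-\delta}\|(\LL^R(1,\kappa)-\lambda)^{-1}\|\le M(\kappa_0',\delta)$ in hand, and the sectoriality of $\LL^R$ (its numerical range lies in the first quadrant, so $-\LL^R$ generates a bounded holomorphic semigroup), the Gearhart--Pr\"uss theorem in its quantitative form yields $\|e^{-t\LL^R(1,\kappa)}\|\le C(M)\,e^{-t(\lambda^R(1,\kappa)-\delta)}$ for all $t\ge0$, with $C$ depending only on $M$ and on the sector. Undoing the dilation and absorbing $\delta<\epsilon$ gives \eqref{eq:104} with the stated uniformity.

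The main obstacle is the uniformity of all estimates with respect to $\kappa\in[0,\kappa_0]$ \emph{up to and including} the Neumann endpoint $\kappa=0$ and the degeneration toward Dirichlet as $\jf^{-1/3}\kappa$ varies, and simultaneously the large-$|\Im\lambda|$ decay of the resolvent: one must check that the zeros of $f^R(\kappa,\cdot)$ stay uniformly away from the line $\Re\lambda=\lambda^R(1,\kappa)-\delta$ as $\kappa$ varies (this is where continuity of $\kappa\mapsto\lambda^R(1,\kappa)$, e.g. the monotonicity recorded in Appendix \ref{appC}, is used), and that the $\Ai$/$\Ai'$ asymptotics in \eqref{eq:63} give a bound on $\mathcal G_1^R$ that is uniform in $\kappa$ near $0$ — the denominator $i e^{i2\pi/3}\Ai'(e^{i2\pi/3}\lambda)+\kappa\,\Ai(e^{i2\pi/3}\lambda)$ must be bounded below uniformly on the contour, which again reduces to the separation of the poles from the contour. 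Once these uniform lower bounds are secured, the rest is the standard contour-integration estimate.
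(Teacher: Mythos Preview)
Your proposal is correct and follows essentially the same route as the paper: reduce to $\jf=1$ by the dilation \eqref{eq:69a}, establish a resolvent bound on the half-plane $\{\Re\lambda\leq\lambda^R(1,\kappa^*)-\epsilon\}$ that is uniform in $\kappa^*$ over a compact interval, invoke Gearhart--Pr\"uss, and undo the dilation. The only cosmetic difference is in how the resolvent region is carved up: the paper splits into $\{|\lambda|\geq\rho\}$ (handled by the techniques of \cite{Hel2,GHH}) and the compact disc $\{|\lambda|<\rho\}$ (handled by bounding the Hilbert--Schmidt norm of the explicit kernel \eqref{eq:63}), whereas you split into the region covered by the numerical-range bounds \eqref{eq:64}, a compact leftover treated by joint continuity, and a large-$|\Im\lambda|$ strip treated by Airy asymptotics; these decompositions are interchangeable and rely on the same ingredients.
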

\begin{proof}
As already observed, it suffices to consider the dependence of $\|e^{
-t\LL^R (1,\kappa^*)} \|$ on
\begin{equation}\label{defkappa*}
 \kappa^*=\mathfrak{j}^{-1/3}\kappa\,.
\end{equation}

Recall that $\lambda_1^R(\kappa^*)=\lambda_1^R(1,\kappa^*)$ denotes
the leftmost eigenvalue of $\LL^R(1,\kappa^*)$. Since $\lambda^R_1
(\kappa^*)$ is a simple zero of solution of $f^R(\kappa^*, \lambda)$
it must be a $C^1$ function of $\kappa^*$ on $|0,+\infty)$ and since
$\lambda^R(1,\kappa^*)=\Re \lambda_1^R(\kappa^*)$ we readily obtain,
for any bounded interval $[0,\kappa_0^*]$, that
\begin{equation}  \label{eq:66}
\sup_{\kappa^* \in [0,\kappa_0^*]} \lambda^R(1,\kappa^*)< + \infty\,.
\end{equation}

Let then $\epsilon>0$ and 
\begin{displaymath}
  D_\rho= \{z\in\C\,|\, |z| \geq \rho,\; \Re z\leq\lambda^R(1,\kappa^*) -\epsilon\,\} \,.
\end{displaymath}
By applying the same technique as in \cite{Hel2,GHH} we can prove that
there exist $\rho_0>0$ and $C>0$ such that for all $\rho > \rho_0\,$,
\begin{equation}  \label{eq:68}
  \sup_{\kappa^*\in [0,\kappa_0^*]}\sup_{\lambda\in D_\rho(\kappa^*)}\| (\LL^R(1,\kappa^*)-\lambda)^{-1}\|\leq C \,.
\end{equation}

Next, let $|\lambda|<\rho $, and
$\Re\lambda\leq\lambda^R(1,\kappa^*)-\epsilon$. Here we can bound the
resolvent norm by its Hilbert-Schmidt norm and then use \eqref{eq:63}
to obtain
\begin{displaymath} 
  \| (\LL^R(1,\kappa^*)-\lambda)^{-1}\|\leq C \, \Big|\frac{i e^{-i 2\pi/3 }
    \Ai'(e^{-i 2\pi/3 } \lambda) +\kappa^* \Ai(e^{-2i\pi/3 }\lambda)}
{i e^{i 2\pi/3 } \Ai'(e^{i 2\pi/3 } \lambda) + \kappa^* \Ai(e^{i2\pi/3 } \lambda)}\Big|\,,
\end{displaymath}
where $C$ is independent of $\kappa^*$ in $[0,\kappa_0^*]$. Hence, we may
infer from the above and \eqref{eq:66} that
\begin{displaymath}
  \sup_{\kappa^*\in  [0,\kappa_0^*]} \sup_{
    \begin{subarray}{r}
   \quad  \qquad   | \lambda|< \rho  \\
   \quad  \qquad   \Re\lambda\leq \lambda^R(1,\kappa^*)-\epsilon
    \end{subarray}}
\| (\LL^R(1,\kappa^*)-\lambda)^{-1}\|\leq C_\epsilon \,.
\end{displaymath}
Combining the above with \eqref{eq:68} yields that for some $M_\epsilon>0$
\begin{displaymath}
  \sup_{\kappa^*\in  [0,\kappa_0^*]} \sup_{\Re\lambda\leq\lambda^R(1,\kappa^*) -\epsilon}
\| (\LL^R(1,\kappa^*)-\lambda)^{-1}\|\leq M_\epsilon \,.
\end{displaymath}
Since $M_\epsilon$ is independent of $\kappa^*\in [0,\kappa_0^*]$ we can deduce from
the Gearhart-Pr\"uss Theorem (cf. \cite{H} or \cite{Sj}) that for some
$C_\epsilon>0$, independent of $\kappa^*\in [0,\kappa_0^*]$,
\begin{displaymath}
   \|e^{ -t\LL^R(1,\kappa^*)}\|\leq C_\epsilon e^{-t(\lambda^R(1,\kappa^*)-\epsilon)}\,.
\end{displaymath}
The proposition can now be proved by applying the inverse of
(\ref{eq:69a}).  
\end{proof}

In Section \ref{s7} we will need a stronger estimate than
(\ref{eq:104}).
\begin{proposition}
\label{lemmaunifter}
Let $\lambda^R(\mathfrak{j},\kappa)$ denote the real value of the
leftmost eigenvalue of $\LL^R(\mathfrak{j},\kappa)$. Then for any
positive $\kappa_0$, $\mathfrak{j}_0\,$ and $\mathfrak{j}_1$ there
exists $C(\mathfrak{j}_0,\mathfrak{j}_1,\kappa_0)>0$ such that, for $0
< \mathfrak{j}_0 \leq \mathfrak{j} \leq \mathfrak{j}_1$ and $\kappa\in
[0,\kappa_0]$, \\
\begin{equation}
\label{eq:68R}
\|e^{ -t\LL^R (\mathfrak{j},\kappa) }\|\leq C (\mathfrak{j}_0,\mathfrak{j}_1,\kappa_0) \, e^{-t\lambda^R(\mathfrak{j},\kappa)}\,.
\end{equation}
\end{proposition}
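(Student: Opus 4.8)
The plan is to upgrade Proposition~\ref{lemmaunifbis} by peeling off the contribution of the leftmost eigenvalue, for which Gearhart--Pr\"uss produces no $\epsilon$--loss, so that $\epsilon$ is needed only for the exponentially smaller remainder. As in the proof of Proposition~\ref{lemmaunifbis}, the dilation identity \eqref{eq:69a} together with the unitarity of $U_{\mathfrak j}$ gives $\|e^{-t\LL^R(\mathfrak j,\kappa)}\|=\|e^{-s\LL^R(1,\kappa^*)}\|$ and $t\,\lambda^R(\mathfrak j,\kappa)=s\,\lambda^R(1,\kappa^*)$ with $s=\mathfrak j^{2/3}t$ and $\kappa^*=\mathfrak j^{-1/3}\kappa$; hence it suffices to prove that for every $\kappa_0^*>0$ there is $C$ such that $\|e^{-s\LL^R(1,\kappa^*)}\|\le C\,e^{-s\lambda^R(1,\kappa^*)}$ for all $s\ge0$ and all $\kappa^*\in[0,\kappa_0^*]$, from which one recovers $C=C(\mathfrak j_0,\mathfrak j_1,\kappa_0)$ by taking $\kappa_0^*=\mathfrak j_0^{-1/3}\kappa_0$.

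For $s\in[0,1]$ this is immediate: since the numerical range of $\LL^R(1,\kappa^*)$ lies in the closed first quadrant (cf.\ \eqref{eq:64}), $-\LL^R(1,\kappa^*)$ generates a contraction semigroup, so $\|e^{-s\LL^R(1,\kappa^*)}\|\le1$, while $0\le\lambda^R(1,\kappa^*)\le\Lambda^*:=\sup_{[0,\kappa_0^*]}\lambda^R(1,\cdot)<\infty$ by \eqref{eq:66} and the inclusion $\sigma(\LL^R)\subset\{\Re\lambda>0\}$, so $e^{-s\lambda^R(1,\kappa^*)}\ge e^{-\Lambda^*}$ on $[0,1]$. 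For $s\ge1$ I would write
\[
 e^{-s\LL^R(1,\kappa^*)}=e^{-s\lambda_1^R(\kappa^*)}\,\Pi_1(\kappa^*)+e^{-s\LL^R(1,\kappa^*)}\bigl(I-\Pi_1(\kappa^*)\bigr),
\]
where $\Pi_1(\kappa^*)$ is the Riesz projection attached to the leftmost eigenvalue $\lambda_1^R(\kappa^*)$. Because the zeros of $f^R(\kappa^*,\cdot)$ are simple, $\Pi_1(\kappa^*)$ has rank one and $\LL^R(1,\kappa^*)\Pi_1(\kappa^*)=\lambda_1^R(\kappa^*)\Pi_1(\kappa^*)$, so the first term has norm $e^{-s\lambda^R(1,\kappa^*)}\,\|\Pi_1(\kappa^*)\|$; since $\kappa^*\mapsto\lambda_1^R(\kappa^*)$ is $C^1$ (as used already in the proof of Proposition~\ref{lemmaunifbis}), $\kappa^*\mapsto\Pi_1(\kappa^*)$ is continuous and $\sup_{[0,\kappa_0^*]}\|\Pi_1(\kappa^*)\|<\infty$.

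For the remainder, let $\widetilde{\LL}(\kappa^*)$ denote the part of $\LL^R(1,\kappa^*)$ in the closed invariant subspace $\mathrm{Ran}\,(I-\Pi_1(\kappa^*))$; its spectrum is $\{\lambda_j^R(\kappa^*):j\ge2\}$, so $\inf\Re\sigma(\widetilde{\LL}(\kappa^*))=\lambda^R(1,\kappa^*)+\delta(\kappa^*)$ with $\delta(\kappa^*)>0$ (the leftmost eigenvalue being simple and isolated), and $\delta_0:=\inf_{[0,\kappa_0^*]}\delta(\kappa^*)>0$ by continuity of $\lambda_1^R,\lambda_2^R$. One then applies the Gearhart--Pr\"uss theorem (\cite{H,Sj}) to $\widetilde{\LL}(\kappa^*)$ exactly as in Proposition~\ref{lemmaunifbis}, but with threshold $\lambda^R(1,\kappa^*)+\frac{\delta_0}{2}$ in place of $\lambda^R(1,\kappa^*)-\epsilon$: from $(\widetilde{\LL}(\kappa^*)-\lambda)^{-1}(I-\Pi_1(\kappa^*))=(\LL^R(1,\kappa^*)-\lambda)^{-1}-(\lambda_1^R(\kappa^*)-\lambda)^{-1}\Pi_1(\kappa^*)$ one bounds the right-hand side uniformly in $\kappa^*$ on $\{\Re\lambda\le\lambda^R(1,\kappa^*)+\frac{\delta_0}{2}\}$, splitting that region into: $\{\Re\lambda\le\lambda^R(1,\kappa^*)-\epsilon_1\}$ for a small fixed $\epsilon_1<\frac{\delta_0}{2}$, handled by the uniform resolvent estimate already obtained in the proof of Proposition~\ref{lemmaunifbis}; a compact slab $\{\lambda^R(1,\kappa^*)-\epsilon_1\le\Re\lambda\le\lambda^R(1,\kappa^*)+\frac{\delta_0}{2},\ |\Im\lambda|\le\rho_1\}$, on which $(\LL^R(1,\kappa^*)-\lambda)^{-1}-(\lambda_1^R(\kappa^*)-\lambda)^{-1}\Pi_1(\kappa^*)$ is holomorphic (the pole at $\lambda_1^R(\kappa^*)$ has been removed) and hence bounded, uniformly in $\kappa^*$ by continuity; and $\{|\Im\lambda|>\rho_1\}$, where one uses the Hilbert--Schmidt bound stemming from the explicit kernel \eqref{eq:63} and the asymptotics of $\Ai,\Ai'$, as in \cite{Hel2,GHH}. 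Gearhart--Pr\"uss then gives $\|e^{-s\LL^R(1,\kappa^*)}(I-\Pi_1(\kappa^*))\|\le C'e^{-s(\lambda^R(1,\kappa^*)+\delta_0/2)}\le C'e^{-s\lambda^R(1,\kappa^*)}$ uniformly in $\kappa^*$; adding the two contributions settles $s\ge1$, and combined with the case $s\le1$ and the dilation this yields \eqref{eq:68R}.

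I expect the main obstacle to be precisely the uniform-in-$\kappa^*$ resolvent estimate on the slab straddling the critical line $\Re\lambda=\lambda^R(1,\kappa^*)$ after subtraction of the pole---above all the regime $|\Im\lambda|\to\infty$, which forces one to use the precise structure of $\mathcal G^R_1$ in \eqref{eq:63} and Airy asymptotics---together with checking that the spectral gap $\delta(\kappa^*)$ stays bounded away from $0$ on $[0,\kappa_0^*]$; everything else is essentially bookkeeping around the proof of Proposition~\ref{lemmaunifbis}.
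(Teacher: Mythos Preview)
Your proposal is correct and follows essentially the same route as the paper: reduce to $\mathfrak j=1$ by dilation, split off the rank-one Riesz projection $\Pi_1^R(\kappa^*)$ onto the leftmost eigenfunction (which is uniformly bounded on $[0,\kappa_0^*]$), and then apply Gearhart--Pr\"uss to the restriction $\LL^{R,1}:=\LL^R(I-\Pi_1^R)$ at a threshold strictly between $\lambda^R$ and $\lambda^{R,2}=\Re\lambda_2^R$, using the same uniform resolvent bounds as in Proposition~\ref{lemmaunifbis}. The paper's write-up is slightly terser (it packages your slab argument into the Riesz--Schauder decomposition $(\LL^R-\lambda)^{-1}=\frac{\Pi_1^R}{\lambda-\lambda_1^R}+T_1(\lambda,\kappa)$ with $T_1$ holomorphic and uniformly bounded on $\{\Re\lambda\le\lambda^{R,2}-\epsilon\}$), but the substance is the same.
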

\begin{proof} 
As in the previous proof, we can reduce after dilation the proof to
the case $\mathfrak{j}=1$. We then need to control the uniformity of
the various estimates with respect to $\kappa$ in $[0,
\kappa_0/\jf_0^\frac 13]$ Denote by
$(\lambda_1^R(\kappa),v_1^R(\cdot,\kappa))$ the eigenpair of
$\LL^R(1,\kappa)$ for which
$\Re\lambda_1^R(\kappa)=\lambda^R(1,\kappa)$ and $\|v_1\|=1$.  In
\cite{GHH} (see also Appendix \ref{appC}) we show that for any $\kappa
\geq 0$, $\lambda_1^R(\kappa)$ is simple and unique.  Let then
$\Pi_1^R(\kappa)$ denote the projection on ${\rm span}
(v_1^R(\cdot,\kappa))$, i.e,
\begin{equation}   \label{eq:70}
   u\mapsto  \Pi_1^R (\kappa) u =  \frac{ \langle u, \bar{v}_1^R(\cdot,\kappa) \rangle}{|\langle
     v_1^R(\cdot,\kappa), \bar{v}_1^R(\cdot,\kappa) \rangle|} \, v_1^R(\cdot,\kappa)\rangle  \,. 
\end{equation}
Clearly,
\begin{equation}
  \| \Pi_1^R (\kappa)  \|_{\mathcal L (L^2_+)} = |\langle v_1^R(\cdot,\kappa), \bar{v}_1^R(\cdot,\kappa) \rangle|^{-1} \,.
\end{equation}
We refer the reader to \cite[Section 6 ]{GHH} for the derivation of
the above relation (where an explicit expression of $v_1^R$ in terms
of Airy function is provided). It can be verified \cite{GH} that
$\|\Pi_1^R\| $ is uniformly bounded when $\kappa$ belongs to any
bounded interval in $\overline{\R_+}$.
  
Let $E_1= (I-\Pi_1^R)L^2(\R_+)$, and $\LL^{R,1}=\LL^R(I-\Pi_1^R)$.  We
may define $\LL^{R,1}$ on $(I-\Pi_1^R)D(\LL^R)$, which is clearly a
dense set, in $L^2$ sense, in $E_1$.  By Riesz-Schauder theory we have
that
\begin{equation}   \label{eq:65}
  (\LL^R-\lambda)^{-1}= \frac{\Pi_1^R}{\lambda-\nu_1} +  T_1(\lambda,\kappa)\,,
\end{equation}
where $T_1$ is holomorphic for all $\lambda$ satisfying
$\Re\lambda<\lambda^{R,2}$, in which
\begin{equation}   \label{eq:91}
\lambda^{R,2}= \Re\lambda^R_2>\lambda^R\,.
\end{equation}

By applying the same techniques as in the previous proposition we can
prove that, for any $\kappa_0 >0$ and $\epsilon>0$ there exists
$C_\epsilon>0$ such that
\begin{equation}   \label{eq:66a}
  \sup_{\kappa\in[0,\kappa_0]} \sup_{\Re\lambda\leq\lambda^{R,2}-\epsilon}\| T_1(\lambda) \|\leq C_\epsilon \,.
\end{equation}
Restricting $T_1(\lambda) $ to $E_1$ (onto $D(\LL^{R,1})$) we may
write $T_1|_{E_1}=(\LL^{R,1}-\lambda)^{-1}$. By \eqref{eq:66} and the
Gearhardt-Pr\"uss Theorem we then obtain that for every $\epsilon>0$
there exists $C_\epsilon>0$ such that
\begin{equation}   \label{eq:14}
  \sup_{\kappa^*\in[0,\kappa_0^*]}\|e^{-t\LL^{R,1}}\|\leq C_\epsilon e^{-(\lambda^{R,2}-\epsilon)t} \,.
\end{equation}
We complete the proof of \eqref{eq:68R} by observing that
\begin{displaymath}
  e^{-t\LL^R}=e^{-t\LL^R}\Pi_1^R+e^{-t\LL^{R,1}}
\end{displaymath}
and setting $\epsilon<\lambda^{R,2}-\lambda^R$.
\end{proof}
\begin{remark}
The estimate \eqref{eq:68R} remains valid at $\kappa=0$, i.e., for
Neumann boundary condition.  For Dirichlet boundary conditions it is
an immediate result of \cite[Lemma 4.2]{AlHe}.
\end{remark}

We conclude this section by making the following simple observation
\begin{lemma}
\label{lem:deriv-R}
Under the previous assumptions, there exists $C(\jf_0,\jf_1,\kappa_0)$
such that if $\jf\in[\jf_0,\jf_1]$ and $\kappa\in [0,\kappa_0]$ then
\begin{equation}     \label{eq:76}
\Big|\frac{\partial \lambda^R_1}{\partial \jf}(\jf,\kappa)\Big|\leq  C(\jf_0,\jf_1,\kappa_0) \,.
\end{equation}
\end{lemma}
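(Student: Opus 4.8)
The plan is to use the dilation argument of this section to remove the $\jf$-dependence and then differentiate explicitly. Recall (cf.\ \eqref{eq:69a} and the lines that follow it) that $\lambda_1^R(\jf,\kappa)=\jf^{2/3}\,\lambda_1^R(1,\kappa^*)$ with $\kappa^*:=\jf^{-1/3}\kappa$; when $(\jf,\kappa)\in[\jf_0,\jf_1]\times[0,\kappa_0]$ one has $\kappa^*\in[0,\kappa_0^*]$ with $\kappa_0^*:=\jf_0^{-1/3}\kappa_0$, and $\jf^{-1/3}\le\jf_0^{-1/3}$. Since $\kappa^*\mapsto\lambda_1^R(1,\kappa^*)$ is $C^1$ (as recalled in the proof of Proposition~\ref{lemmaunifbis}), differentiating the scaling identity in $\jf$ at fixed $\kappa$ and using $\partial_\jf\kappa^*=-\tfrac13\jf^{-1}\kappa^*$ gives
\begin{equation*}
\frac{\partial\lambda_1^R}{\partial\jf}(\jf,\kappa)
 = \jf^{-1/3}\Bigl(\tfrac23\,\lambda_1^R(1,\kappa^*)-\tfrac13\,\kappa^*\,\frac{\partial\lambda_1^R}{\partial\kappa^*}(1,\kappa^*)\Bigr)\,,
\end{equation*}
so it suffices to bound $|\lambda_1^R(1,\kappa^*)|$ and $|\partial_{\kappa^*}\lambda_1^R(1,\kappa^*)|$ uniformly for $\kappa^*\in[0,\kappa_0^*]$.

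For the first quantity, $\kappa^*\mapsto\lambda_1^R(1,\kappa^*)$ is continuous on $[0,+\infty)$ and hence bounded on the compact interval $[0,\kappa_0^*]$ (its real part is moreover controlled by \eqref{eq:66}, and its imaginary part is nonnegative because the numerical range lies in the first quadrant). For the second I would apply the implicit function theorem to the jointly analytic function $f^R(\kappa^*,\lambda)$ of \eqref{eigenvalueR}: since all zeros of $f^R(\kappa^*,\cdot)$ are simple for $\kappa^*\ge0$ (no $\lambda$ with $\Re\lambda\ge0$ is a common zero of $f^R$ and $\partial_\lambda f^R$, as recalled above), one has
\begin{equation*}
\frac{\partial\lambda_1^R}{\partial\kappa^*}(1,\kappa^*)
 = -\,\frac{\partial_{\kappa^*}f^R}{\partial_\lambda f^R}\Bigl(\kappa^*,\lambda_1^R(1,\kappa^*)\Bigr)
 = \frac{\Ai\!\bigl(e^{-i2\pi/3}\lambda_1^R(1,\kappa^*)\bigr)}{\partial_\lambda f^R\!\bigl(\kappa^*,\lambda_1^R(1,\kappa^*)\bigr)}\,.
\end{equation*}
The map $\kappa^*\mapsto\bigl(\kappa^*,\lambda_1^R(1,\kappa^*)\bigr)$ is continuous, so its image over $[0,\kappa_0^*]$ is compact; on it $\partial_\lambda f^R$ is continuous and nowhere zero, hence bounded below in modulus, while the numerator is continuous, hence bounded. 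This gives the uniform bound on $\partial_{\kappa^*}\lambda_1^R(1,\kappa^*)$, and substituting the two bounds into the displayed formula for $\partial_\jf\lambda_1^R$ proves \eqref{eq:76}.

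An alternative route, avoiding the dilation, is first-order perturbation theory applied directly at $(\jf,\kappa)$: the domain of $\LL^R(\jf,\kappa)$ does not depend on $\jf$, and $\lambda_1^R(\jf,\kappa)$ is simple with normalized eigenfunction $v_1^R$, while $\overline{v_1^R}$ is the corresponding adjoint eigenfunction (here $\kappa\ge0$), so that $\partial_\jf\lambda_1^R(\jf,\kappa)=\langle i x v_1^R,\overline{v_1^R}\rangle/\langle v_1^R,\overline{v_1^R}\rangle$; the denominator has modulus $\|\Pi_1^R\|^{-1}$, bounded below uniformly for $\kappa$ in a bounded interval (Proposition~\ref{lemmaunifter} and the references there), and the numerator is dominated by $\int_0^{+\infty}x\,|v_1^R(x)|^2\,dx=\jf^{-1}\Im\lambda_1^R(\jf,\kappa)$, which is bounded since $\jf\ge\jf_0$ and $\Im\lambda_1^R(\jf,\kappa)=\jf^{2/3}\Im\lambda_1^R(1,\kappa^*)$ is bounded on the relevant range. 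In both versions the only delicate point is the \emph{uniformity} — namely that $\partial_\lambda f^R$ (equivalently $\langle v_1^R,\overline{v_1^R}\rangle$) stays bounded away from $0$ along the curve of parameters — and this is precisely the simplicity and continuous dependence of the leftmost eigenvalue that have already been established in this section.
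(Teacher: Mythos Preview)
Your proposal is correct. Your \emph{alternative} route via the Feynman--Hellmann formula is exactly the argument the paper gives (in a single line): the paper simply writes $\partial_\jf\lambda_1^R = i\langle x\,\bar u_1^R,v_1^R\rangle/\langle\bar u_1^R,v_1^R\rangle$ and invokes simplicity so the denominator is nonzero. Your version is more careful in that you explicitly address the uniformity in $(\jf,\kappa)$, bounding the numerator by $\jf^{-1}\Im\lambda_1^R$ and relating the denominator to $\|\Pi_1^R\|^{-1}$, both continuous on the compact parameter range.

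Your \emph{primary} route, via the dilation identity $\lambda_1^R(\jf,\kappa)=\jf^{2/3}\lambda_1^R(1,\kappa^*)$ and the implicit function theorem applied to $f^R$, is a genuinely different organization of the same ingredients. It reduces everything to the single real parameter $\kappa^*\in[0,\kappa_0^*]$ and makes the compactness argument completely transparent: $\partial_{\kappa^*}\lambda_1^R(1,\kappa^*)$ is a ratio of continuous functions with nonvanishing denominator along a compact curve, hence uniformly bounded. This avoids estimating $\int x|v_1^R|^2$ and stays at the level of the characteristic equation \eqref{eigenvalueR}. The paper's approach is shorter but leaves the uniformity to the reader; yours makes it explicit.
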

\begin{proof}
The proof is immediate from the Feynman-Hellman formula
\begin{displaymath}
    \Big|\frac{\partial \lambda^R_1}{\partial \jf}(\jf,\kappa)\Big| =i\frac{\langle x\, \bar{u}_1^R,v_1^R\rangle}{\langle\bar{u}_1^R,v_1^R\rangle}\,,
\end{displaymath}
and from the fact that $\lambda^R_1$ is simple and hence
$\langle\bar{u}_1^R,v_1^R\rangle\neq0\,$.
\end{proof}

\section{The complex Airy operator with a semi-permeable barrier: definition and properties}\label{s3} 

For $\kappa \geq 0\,$, $\jf \neq 0$ and $\nu \geq0\,$, we consider the
sesquilinear form $a_\nu$ defined for $u=(u_-,u_+)$ and $v=(v_-,v_+)$
by
\begin{eqnarray}
a_\nu ^{T}(u,v) &=& \int_{-\infty}^0\Big(u_-'(x)\,\bar v_-'(x) +i \,\jf\, xu_-(x)\, \bar v_-(x)+\nu\, u_-(x)\,\bar v_-(x)\Big)\,dx \nonumber \\
& & + \int_0^{+\infty}\Big(u_+'(x)\, \bar v_+'(x) +i \,\jf\, xu_+(x)\, \bar v_+(x)+\nu\, u_+(x)\, \bar v_+(x)\Big)\,dx \nonumber \\
& &  + \kappa \,  \big(u_+(0)-u_-(0)\big)
 \big(\overline{v_+(0)-v_-(0)}\big)\,, \label{defForm1d}
\end{eqnarray}
where the form domain $\mathcal V^T$ is
\begin{equation*}
 \mathcal V^T := \Big\{u=(u_-,u_+)\in H_-^1\times H_+^1 : |x|^\frac 12\, u\in L_-^2\times L_+^2\Big\}\,,
\end{equation*}
with $ L_\pm^2= L^2(\mathbb R_\pm)$, $H_-^s= H^{s}(\mathbb R_\pm)$.\\

The space $\mathcal V$ is endowed with the  Hilbertian norm
\begin{equation*}
 \|u \|_{\mathcal V} := \sqrt{ { \|u_-\|_{H_-^1}^2+\|u_+\|_{H_+^1}^2 } +\||x|^{1/2}u\|_{L^2}^2}\,.
\end{equation*}
To give a precise mathematical definition of the associated closed
operator, we cannot, due to the lack of coercivity, use the standard
version of the Lax-Milgram theorem.  In \cite{GHH} a generalization of
the Lax-Milgram theorem, introduced in \cite{AH}, is used to obtain
that
\begin{proposition}
The operator $\mathcal L^{T} (\jf,\kappa)$ acting as
\begin{equation*}
 u\mapsto   \mathcal L^{T} (\jf ,\kappa) u  = \left(-\frac{d^2}{dx^2}u_-+i\,\jf\, x\, u_-\,,\, -\frac{d^2}{dx^2}u_+ +i \, \jf \, x\, u_+\right)
\end{equation*}
on the domain
\begin{equation}   \label{defdomA1} 
 \mathcal D(\mathcal L^{T}(\jf,\kappa) ) = \big\{u\in H_-^2\times H_+^2 : x\,u\, \in
 L_-^2\times L_+^2 \text{and } u\in  \mathfrak D^T  \big\}\,,
\end{equation}
where $ \mathfrak D^T(\kappa) $ is given by \eqref{eq:98}, is a closed
operator with compact resolvent.\\ There exists some $\nu \in
[0,+\infty)$ such that the operator $\mathcal L^{T}(\jf,\kappa) +\nu $
is maximal accretive.
\end{proposition}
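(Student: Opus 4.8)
The plan is to build the operator $\mathcal L^T(\jf,\kappa)$ from the sesquilinear form $a_\nu^T$ via the generalized Lax--Milgram theorem of \cite{AH}, exactly as in \cite{GHH}, and then to read off the stated properties (closedness, compact resolvent, maximal accretivity of $\mathcal L^T+\nu$) as consequences. First I would fix the functional-analytic triple: the form Hilbert space $\mathcal V^T$ with its norm $\|\cdot\|_{\mathcal V}$, and the pivot space $L^2_-\times L^2_+$, with $\mathcal V^T$ densely and continuously embedded. The key analytic input is a \emph{coercivity-type} estimate for $a_\nu^T$ on $\mathcal V^T$: I would show that for $\nu$ large enough there is $c>0$ with
\begin{equation*}
 |a_\nu^T(u,u)| \geq c\,\|u\|_{\mathcal V}^2 \qquad \text{for all } u\in\mathcal V^T,
\end{equation*}
or rather the slightly weaker pair of inf--sup conditions required by the AH-version of Lax--Milgram. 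Taking real and imaginary parts, $\Re a_\nu^T(u,u)=\|u_-'\|^2+\|u_+'\|^2+\nu(\|u_-\|^2+\|u_+\|^2)+\kappa|u_+(0)-u_-(0)|^2$ and $\Im a_\nu^T(u,u)=\jf\int x|u|^2\,dx$; combining a suitable convex combination of these (as in \cite{GHH}) controls $\|u_-'\|^2+\|u_+'\|^2$, $\|u\|^2$ and $\||x|^{1/2}u\|^2$ simultaneously, which is precisely $\|u\|_{\mathcal V}^2$. The boundary term $\kappa|u_+(0)-u_-(0)|^2\geq0$ only helps here and the trace $u_\pm(0)$ is controlled by $\|u\|_{\mathcal V}$ by the one-dimensional Sobolev trace inequality, so the form is bounded on $\mathcal V^T\times\mathcal V^T$.

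Granting the estimate, the AH-generalization of Lax--Milgram produces a closed, densely defined operator $\mathcal L^T(\jf,\kappa)+\nu$ associated with $a_\nu^T$, with $0$ in its resolvent set, hence $\mathcal L^T(\jf,\kappa)$ itself is closed. I would then identify the domain: $u\in D(\mathcal L^T(\jf,\kappa))$ iff $u\in\mathcal V^T$ and $v\mapsto a_\nu^T(u,v)-\nu\langle u,v\rangle$ extends to a bounded functional on $L^2$; integrating by parts on each half-line shows this forces $u_\pm\in H^2_\pm$, $x\,u_\pm\in L^2_\pm$, and the transmission conditions $u_+'(0)=u_-'(0)=\kappa(u_+(0)-u_-(0))$ encoded in $\mathfrak D^T(\kappa)$ — the boundary terms from the two integrations by parts must cancel against the $\kappa$-term, which yields exactly these two conditions. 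This gives \eqref{defdomA1}. Maximal accretivity of $\mathcal L^T(\jf,\kappa)+\nu$: accretivity is $\Re\langle(\mathcal L^T+\nu)u,u\rangle=\Re a_\nu^T(u,u)\geq\nu\|u\|^2-\,(\text{lower order})\geq0$ for $\nu$ large, and maximality follows since $-\mathcal L^T-\nu-s$ is invertible for $s>0$ by the same Lax--Milgram argument (or equivalently from closedness plus accretivity plus a single point of the resolvent set in the open left half-plane). Finally, compactness of the resolvent follows because $\mathcal V^T\hookrightarrow L^2_-\times L^2_+$ is \emph{compact}: the weight $|x|^{1/2}u\in L^2$ rules out escape of mass to infinity while the $H^1$ bound rules out oscillation, so a Rellich-type argument combined with a tail estimate $\int_{|x|>R}|u|^2\leq R^{-1}\||x|^{1/2}u\|^2$ gives precompactness in $L^2$.

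The main obstacle is the coercivity estimate: the form is genuinely non-coercive because of the purely imaginary, unbounded potential $i\jf x$, so one cannot simply invoke standard Lax--Milgram, and the combination of real and imaginary parts must be done with enough care to recover \emph{all three} pieces of $\|u\|_{\mathcal V}^2$ uniformly — in particular the $\||x|^{1/2}u\|_{L^2}^2$ term, which is the delicate one since $\Im a_\nu^T$ only gives $\jf\int x|u|^2$ (a signed quantity, not $\int|x||u|^2$). The resolution, following \cite{GHH}, is to split the $x$-integral into $x>0$ and $x<0$ and exploit that on one side the imaginary part already has the right sign while on the other side $x|u|^2$ is controlled by $\int_{x<0}|x||u|^2$, which in turn is dominated using the $H^1$-norm on any fixed compact interval plus the far-field contribution where the sign is again favorable after choosing $\nu$ large. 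I expect all remaining steps — trace estimate, integration by parts, Rellich — to be routine, and I would simply cite \cite{GHH,AH} for the parts that are verbatim repetitions of the barrier-case analysis there, since the statement asserts nothing more than that the construction carried out in \cite{GHH} goes through.
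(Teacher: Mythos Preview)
Your overall strategy is exactly the paper's: the proposition is stated there as a direct consequence of the construction in \cite{GHH} via the generalized Lax--Milgram theorem of \cite{AH}, and the paper offers essentially no further proof beyond that citation (plus a separate short argument, after the proposition, upgrading maximal accretivity to all $\lambda\in\R_+$ by identifying $(\mathcal L^T(\jf,\kappa))^*=\mathcal L^T(-\jf,\kappa)$ and invoking \cite[Theorem~II.3.17]{enna00}). Your treatment of the domain identification, maximal accretivity, and compactness of $\mathcal V^T\hookrightarrow L^2_-\times L^2_+$ is fine and matches the intended picture.

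There is, however, a genuine gap in your sketch of the key estimate. The displayed bound $|a_\nu^T(u,u)|\geq c\,\|u\|_{\mathcal V}^2$ is \emph{false} for every choice of $\nu$, and the ``resolution'' you propose (split $\Im a_\nu^T(u,u)=\jf\int x|u|^2$ into $x>0$ and $x<0$ and absorb the bad half using the $H^1$-norm) cannot work. Take a fixed bump $\phi\in C_c^\infty(-1,1)$ and set $u_\pm(x)=\phi(x\mp R)$ for $R$ large: then $\Im a_\nu^T(u,u)=\jf\int x|u|^2\approx 0$ by symmetry, $\Re a_\nu^T(u,u)$ is bounded independently of $R$, but $\||x|^{1/2}u\|_{L^2}^2\sim 2R\|\phi\|^2\to\infty$. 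So no convex combination of the real and imaginary parts of $a_\nu^T(u,u)$ recovers $\||x|^{1/2}u\|^2$. The paper is explicit on this point: it says one ``cannot, due to the lack of coercivity, use the standard version of the Lax--Milgram theorem''. What the \cite{AH}/\cite{GHH} argument actually does is verify the \emph{inf--sup} hypotheses of \cite[Theorem~2.1]{AH} by testing not against $u$ but against $v=\Phi_t u$ for a bounded multiplier $\Phi_t$ (morally $1+it\,\mathrm{sign}(\jf x)$ or a smooth variant) chosen so that $\Im a_\nu^T(u,\Phi_t u)$ picks up $\int |x|\,|u|^2$ with the correct sign on \emph{both} half-lines. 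You acknowledge the need for ``the slightly weaker pair of inf--sup conditions'', but your subsequent discussion is still entirely about the diagonal quantity $a_\nu^T(u,u)$; that is the step you need to redo.
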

Maximal accretiveness of $\mathcal L^{T}(\jf,\kappa) +\lambda$ for all
$\lambda\in\R_+$ can be proved in the following manner.  Denote by
$(\mathcal L^{T}(\jf,\kappa))^*$ the adjoint of $\mathcal
L^{T}(\jf,\kappa)$.  By the above construction it is simply $\mathcal
L^{T}(-\jf,\kappa)$.  Since $\mathcal L^{T}(\pm \jf,\kappa) +\lambda$
is accretive whenever $\Re\lambda>0$, it follows by \cite[Theorem
II.3.17]{enna00} that $\mathcal L ^{T}(\jf,\kappa)+\lambda$ is maximal
accretive, and hence generates a contraction semigroup.

As in the previous section we have  
\begin{proposition}
\label{propSchatten}
For any $\lambda\in\rho(\mathcal L ^{T}(\jf,\kappa)) $,
$(\LL^T(\jf,\kappa) -\lambda)^{-1}$ belongs to the Schatten class
$\mathcal C^p$ for any $p > \frac 32$.
\end{proposition}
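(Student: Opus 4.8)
The plan is to mimic the standard argument used for the Dirichlet complex Airy operator in \cite{Alm,Hen2}, adapted to the transmission setting. First I would fix some $\lambda_0\in\rho(\LL^T(\jf,\kappa))$ (which exists by the previous proposition, taking $\lambda_0$ far in the right half-plane where $\LL^T(\jf,\kappa)+\nu$ is maximal accretive) and, by the resolvent identity, reduce the statement to a single value of $\lambda$: since $(\LL^T(\jf,\kappa)-\lambda)^{-1}$ and $(\LL^T(\jf,\kappa)-\lambda_0)^{-1}$ differ by multiplication by $(\LL^T-\lambda_0)^{-1}$, which is bounded, membership in $\CC^p$ for one spectral value transfers to all of them. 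So it suffices to produce one $\lambda$ with $(\LL^T(\jf,\kappa)-\lambda)^{-1}\in\CC^p$ for all $p>\tfrac32$.

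Next I would compare with the whole-line operator. Let $M=\LL^T(\jf,\kappa)+\nu$ with $\nu$ chosen so that $M$ is maximal accretive, and similarly let $M_0$ denote the operator $-\frac{d^2}{dx^2}+i\jf x+\nu$ on $L^2(\R)$ (no interface). The point is that $D(M)\hookrightarrow L^2(\R_-)\times L^2(\R_+)$ factors through the domain characterization \eqref{defdomA1}: $u\in D(M)$ has $u_\pm\in H^2(\R_\pm)$ with $x\,u_\pm\in L^2(\R_\pm)$, and the graph norm controls $\|u''_\pm\|+\|x u_\pm\|+\|u_\pm\|$. The Schatten-class property for $M_0^{-1}$ on $L^2(\R)$ is classical: the spectrum of $-d^2/dx^2+i\jf x$ is empty and its resolvent kernel, built from Airy functions (the function $\mathcal G_0$ recalled in Section \ref{s2}), decays fast enough that the singular values $s_k$ of $M_0^{-1}$ satisfy $s_k\sim c\,k^{-2/3}$, hence $M_0^{-1}\in\CC^p$ exactly for $p>\tfrac32$. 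Equivalently one invokes that $D(M_0)=H^2(\R)\cap L^2(\R;x^2dx)$ embeds in $L^2(\R)$ with the embedding in $\CC^p$, $p>\tfrac32$, a fact already used for the Dirichlet realization.

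Then I would transfer this to the transmission operator. The cleanest route is to observe that the form domain $\mathcal V^T=\{u=(u_-,u_+)\in H^1_-\times H^1_+ : |x|^{1/2}u\in L^2\}$ is independent of $\kappa$ and of the transmission condition (it is just $H^1(\R_-)\times H^1(\R_+)$ with the weight), and it is continuously embedded in the form domain of $M_0$ split across $0$, namely $H^1(\R)\cap L^2(\R;|x|dx)$ — indeed the only difference is the possible jump at $0$, and restriction $H^1(\R)\to H^1(\R_-)\times H^1(\R_+)$ is bounded while the reverse inclusion of $\mathcal V^T$ into the ``broken'' Sobolev space is an identity. Using the interpolation/real-part bound $\|u\|_{\mathcal V^T}^2\le C\,\Re\bigl(a^T_\nu(u,u)\bigr)+C\|u\|^2$ coming from maximal accretiveness, one gets that $M^{-1/2}$ maps $L^2$ boundedly into $\mathcal V^T$, and $\mathcal V^T$ embeds into $L^2$ in the Schatten class $\CC^{q}$ for $q>3$ (the form-domain analogue; one half the operator-domain exponent). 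Composing, $M^{-1}=M^{-1/2}\cdot(\text{embedding})\cdot M^{-1/2}$ lands in $\CC^p$ for $p>\tfrac32$ by Hölder's inequality for Schatten norms. The one technical point — and the step I expect to be the main obstacle — is establishing that the form-domain embedding $\mathcal V^T\hookrightarrow L^2(\R_-)\times L^2(\R_+)$ genuinely lies in $\CC^q$ for $q>3$ uniformly enough: this requires a quantitative eigenvalue count for the model selfadjoint operator $-d^2/dx^2+|x|$ on $L^2(\R_-\cup\R_+)$ with the given interface condition, for which I would use a Dirichlet–Neumann bracketing argument at $0$ to sandwich it between decoupled half-line operators (whose Weyl asymptotics $N(\mu)\sim c\,\mu^{3/4}$ are explicit via Airy zeros), yielding singular values $O(k^{-3/4})$ and hence the claimed $\CC^q$, $q>3$. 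With that in hand, the composition argument finishes the proof, and a final dilation in $\jf$ as in \eqref{eq:69a} (with the parameter $\kappa$ rescaled) shows the conclusion is uniform in $\jf$ and extends from $\jf=1$ to general $\jf\neq0$.
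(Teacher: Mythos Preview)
Your reduction to a single $\lambda$ via the resolvent identity is fine, and comparing with a self-adjoint model is indeed the standard route (the paper gives no proof here, writing only ``As in the previous section'', which in turn points to \cite{Alm,Hen2,GHH}). But your implementation through the \emph{form} domain has a genuine gap. The bound you assert, $\|u\|_{\mathcal V^T}^2 \le C\,\Re\bigl(a^T_\nu(u,u)\bigr) + C\|u\|^2$, is false: from \eqref{defForm1d} the real part of $a^T_\nu$ equals $\|u'_-\|^2+\|u'_+\|^2+\nu\|u\|^2+\kappa|u_+(0)-u_-(0)|^2$ and carries no weight in $x$, so it cannot control $\||x|^{1/2}u\|^2$. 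This is exactly the lack of coercivity the paper flags just before defining $\mathcal L^T(\jf,\kappa)$. Without that bound you cannot conclude that $M^{-1/2}$ maps $L^2$ into $\mathcal V^T$; and for a non-sectorial form there is in any case no automatic identification of $D(M^{1/2})$ with the form domain. Your H\"older step $M^{-1}=(M^{-1/2})^2$ then collapses, and routing through the form-domain embedding yields at best $\CC^q$ for $q>3$, not the claimed $p>\tfrac32$.

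The fix is to stay at the \emph{operator}-domain level, which you in fact already invoke for $M_0$. By \eqref{defdomA1} and the closed graph theorem, the graph norm on $D(\LL^T(\jf,\kappa))$ is equivalent to $\|u''_-\|+\|u''_+\|+\|xu\|+\|u\|$. Take the self-adjoint comparison $A=-d^2/dx^2+|x|$ on $L^2_-\times L^2_+$ with the \emph{same} transmission condition in $\mathfrak D^T(\kappa)$; then $D(A)=D(\LL^T)$ with equivalent norms, so $A(\LL^T-\lambda)^{-1}$ is bounded. Your Dirichlet--Neumann bracketing at $0$ now applies directly to $A$ and gives eigenvalues $\mu_n\sim c\,n^{2/3}$, hence $A^{-1}\in\CC^p$ for $p>\tfrac32$, and $(\LL^T-\lambda)^{-1}=A^{-1}\bigl(A(\LL^T-\lambda)^{-1}\bigr)$ inherits the same Schatten class. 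This is essentially what the references carry out; alternatively one can estimate the explicit Airy-function resolvent kernel as in \cite{GHH}.
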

In contrast with the previous section, however, the numerical range of
$\LL^T(\jf,\kappa) $ is not embedded in the first quadrant of the
complex plane, but instead covers its right half. Hence, we only have
\begin{equation}  \label{eq:79}
  \|\LL^T(\jf,\kappa)  -\lambda)^{-1}\| \leq \frac{1}{|\Re \lambda|}\,,\, \mbox{ if } \Re \lambda < 0\,,
\end{equation}
Since the above bound is not enough to establish completeness of the
system of the eigenfunctions of $\LL^T(\jf,\kappa) $ in $ L_-^2\times
L_+^2$ an additional estimate is necessary. It has been established in
\cite{GHH} that there exists $M>0$ such that for all $\lambda\in\R_+$
we have
\begin{displaymath}
  \|(\LL^T(\jf,\kappa) -\lambda)^{-1}\| \leq M(1+|\lambda|)^{-\frac 14} (\log \lambda)^\frac 12 \,.
\end{displaymath}
The above, together with \eqref{eq:79}, the Phragm\'en-Lindel\"of
principle, and the fact that the resolvent is in $\mathcal C^p$, for
any $p>\frac 32$, implies, modulo the proof that all the eigenvalues
are simple,
\begin{proposition}\label{CompR}
For any $\kappa \geq 0$, the space generated by the eigenfunctions of
$\LL^T(\jf,\kappa) $ is dense in $L^2_-\times L^2_+$.
\end{proposition}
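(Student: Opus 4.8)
The statement to prove is Proposition~\ref{CompR}: for any $\kappa \geq 0$, the eigenfunctions of $\LL^T(\jf,\kappa)$ span a dense subspace of $L^2_-\times L^2_+$. The plan is to follow the classical Agmon-style completeness argument, which combines a resolvent bound that is polynomially controlled along the positive real axis, a bound that decays in the left half-plane, the Phragm\'en--Lindel\"of principle, and membership of the resolvent in a Schatten class. Concretely, let $u$ be orthogonal to all eigenfunctions, and for fixed $v\in L^2_-\times L^2_+$ consider the scalar function $F(\lambda) = \langle (\LL^T(\jf,\kappa)-\lambda)^{-1} v, u\rangle$. Since all eigenvalues are simple (the missing ingredient flagged in the statement, which I address below), each pole of the resolvent is simple with residue a rank-one operator built from the corresponding eigenfunction; orthogonality of $u$ to every eigenfunction kills every residue, so $F$ extends to an entire function of $\lambda$.

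The next step is to control the growth of $F$. On the left half-plane $\Re\lambda<0$ we have the accretivity bound \eqref{eq:79}, $\|(\LL^T-\lambda)^{-1}\|\le 1/|\Re\lambda|$, so $F$ is bounded there away from the imaginary axis and in fact tends to $0$ as $\Re\lambda\to-\infty$. Along the positive real axis we use the bound quoted from \cite{GHH}, $\|(\LL^T-\lambda)^{-1}\|\le M(1+|\lambda|)^{-1/4}(\log\lambda)^{1/2}$, which again forces $F$ to decay. The Schatten-class property (Proposition~\ref{propSchatten}), via a standard estimate on the resolvent in terms of a regularized determinant, gives an a priori growth bound of finite order on $F$ throughout $\C$. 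Then the Phragm\'en--Lindel\"of principle, applied in the sectors between the negative real directions and the positive real axis (where we have decay on the two boundary rays and finite-order growth inside), yields that $F$ is bounded on all of $\C$, hence constant by Liouville, hence identically $0$ since it vanishes at infinity along $\R_-$. As $v$ and the point $\lambda$ are arbitrary this forces $u=0$, proving density.

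The one genuine gap, which the statement itself acknowledges with the phrase ``modulo the proof that all the eigenvalues are simple,'' is simplicity of the eigenvalues of $\LL^T(\jf,\kappa)$. I would establish this exactly as was done for the Robin operator in Section~\ref{s2}: an eigenvalue $\lambda$ fails to be simple only if $\lambda$ is a common zero of the relevant transmission characteristic function and its derivative; unwinding the explicit formulas (the transmission Green's function is built, as in \eqref{eq:63}, from Airy functions $\Ai(e^{\pm i2\pi/3}(\cdot))$ evaluated at $\lambda$ on each half-line), such a coincidence would force either $e^{\mp i2\pi/3}\lambda$ to be a common zero of $\Ai$ and $\Ai'$ --- impossible by uniqueness for the ODE $-u''+zu=0$ with $u(z_0)=u'(z_0)=0$ --- or an algebraic relation incompatible with $\kappa\ge0$ and with the numerical range lying in $\{\Re\lambda\ge0\}$. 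Since the detailed derivation of the transmission characteristic equation and the structure of its zeros is carried out in \cite{GHH} (and revisited in Appendix~\ref{appC}), I would cite it for the simplicity statement rather than reproduce it.

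I expect the main obstacle to be not any single step in isolation but the careful bookkeeping in the Phragm\'en--Lindel\"of application: one must check that the finite order of growth coming from the Schatten bound is small enough relative to the opening angles of the sectors involved, and that the resolvent bounds on the two bounding rays of each sector are genuinely decay bounds (not merely boundedness) so that Liouville can be upgraded to $F\equiv 0$. The other place requiring attention is the passage from ``resolvent in $\mathcal C^p$'' to a pointwise order-of-growth estimate on $F(\lambda)$; this is standard (one uses that $\|(\LL^T-\lambda)^{-1}\|$ is dominated by $|\det{}_p(I - \lambda^{-1}\cdots)|^{-1}$ up to controlled factors), but it is the technical heart that makes the contour argument go through, and it is exactly the ingredient that the cited Schatten property is there to supply.
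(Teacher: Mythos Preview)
Your outline of the completeness argument---accretivity bound \eqref{eq:79} on $\{\Re\lambda<0\}$, the $(1+|\lambda|)^{-1/4}(\log\lambda)^{1/2}$ bound from \cite{GHH} on $\R_+$, Schatten membership (Proposition~\ref{propSchatten}) for finite-order growth, then Phragm\'en--Lindel\"of and Liouville---is correct and is exactly what the paper does.

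The substantive gap is your treatment of simplicity. You propose to mimic the Robin argument of Section~\ref{s2}, expecting that a double zero of the transmission characteristic function forces either a common zero of $\Ai$ and $\Ai'$ or an algebraic relation ruled out by $\kappa\ge0$. But the transmission characteristic function is $f_\kappa(\lambda)=\Ai'(e^{i2\pi/3}\lambda)\,\Ai'(e^{-i2\pi/3}\lambda)+\kappa/(2\pi)$, and differentiating it does \emph{not} yield that dichotomy; the algebra is different from the Robin case. Nor can you simply defer to \cite{GHH}: the paper uses an identity from \cite{GHH} (equation~\eqref{eq:122}) as an intermediate step, but the simplicity proof is new and given here as Lemma~\ref{lem:jordan}.

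The actual argument is a reduction to the \emph{Robin} spectrum. Assuming $f_\kappa(\mu)=f'(\mu)=0$, the relations \eqref{eq:122} combined with the explicit eigenfunction formulas \eqref{eq:84} become the boundary conditions $u_\pm(0)=\pm(2\kappa)^{-1}u_\pm'(0)$. Each of $u_+$ and (after reflection) $u_-$ is then an eigenfunction of a half-line Robin operator with parameter $2\kappa$, and one concludes that both $\mu$ and $\bar\mu$ lie in $\sigma(\bar{\LL}^R(1,2\kappa))$. Since that spectrum sits entirely in the fourth quadrant, this forces $\mu\in\R$; but it is known from \cite{GHH} that transmission eigenvalues are never real, giving the contradiction. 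This Robin-reduction idea is the piece your sketch does not anticipate.
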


We have hence to prove the simplicity. We can reduce the proof to $\jf
=-1$.  We recall from \cite{GHH} that the eigenvalues of
$\LL^T(-1,\kappa) $ are determined by
\begin{displaymath}
  f_\kappa \overset{def}{=} f(\lambda)+\frac{\kappa}{2\pi}=0 \,, 
\end{displaymath}
where
\begin{displaymath}
  f(\lambda) = \Ai^\prime(e^{i2\pi/3} \lambda) \, \Ai^\prime(e^{-i2\pi/3} \lambda)\,,
\end{displaymath}
is entire.\\

\begin{lemma}
\label{lem:jordan}
All eigenvalues of  $\LL^T(-1,\kappa) $ are simple.
\end{lemma}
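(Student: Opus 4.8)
\textbf{Proof proposal for Lemma \ref{lem:jordan}.}

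The plan is to argue by contradiction: suppose $\lambda_0$ is a multiple eigenvalue of $\LL^T(-1,\kappa)$. Since the eigenvalues are exactly the zeros of the entire function $f_\kappa(\lambda) = f(\lambda) + \frac{\kappa}{2\pi}$ with $f(\lambda) = \Ai'(e^{i2\pi/3}\lambda)\,\Ai'(e^{-i2\pi/3}\lambda)$, the first step is to relate algebraic multiplicity of the eigenvalue to the order of vanishing of $f_\kappa$ at $\lambda_0$. The key structural fact is that the resolvent kernel is a rank-one modification of the whole-line kernel (exactly as in \eqref{eq:63} for the Robin case), so the poles of the resolvent are simple precisely when $f_\kappa$ has a simple zero; hence a multiple eigenvalue forces $f_\kappa(\lambda_0) = f_\kappa'(\lambda_0) = 0$, i.e. $f(\lambda_0) = -\frac{\kappa}{2\pi}$ and $f'(\lambda_0) = 0$. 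I would record the derivative explicitly:
\begin{equation*}
f'(\lambda) = e^{i2\pi/3}\,\Ai''(e^{i2\pi/3}\lambda)\,\Ai'(e^{-i2\pi/3}\lambda) + e^{-i2\pi/3}\,\Ai'(e^{i2\pi/3}\lambda)\,\Ai''(e^{-i2\pi/3}\lambda)\,,
\end{equation*}
and then use the Airy equation $\Ai''(z) = z\,\Ai(z)$ to rewrite $\Ai''(e^{\pm i2\pi/3}\lambda) = e^{\pm i2\pi/3}\lambda\,\Ai(e^{\pm i2\pi/3}\lambda)$, turning $f'(\lambda_0) = 0$ into
\begin{equation*}
\lambda_0\Big(e^{-i2\pi/3}\,\Ai(e^{i2\pi/3}\lambda_0)\,\Ai'(e^{-i2\pi/3}\lambda_0) + e^{i2\pi/3}\,\Ai'(e^{i2\pi/3}\lambda_0)\,\Ai(e^{-i2\pi/3}\lambda_0)\Big) = 0\,.
\end{equation*}

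The second step handles the two resulting cases. If $\lambda_0 = 0$, then $f(\lambda_0) = \Ai'(0)^2 = -\frac{\kappa}{2\pi}$; but $\Ai'(0)^2 > 0$ while $-\frac{\kappa}{2\pi} \le 0$, a contradiction (this uses $\kappa \ge 0$, and for $\kappa = 0$ one notes $\Ai'(0) \ne 0$). If $\lambda_0 \ne 0$, then the bracketed expression above vanishes. Writing $w_\pm = e^{\pm i2\pi/3}\lambda_0$ (so $\overline{w_+} = w_-$ when $\lambda_0$ is real, though we should not assume that yet), I would combine this relation with the eigenvalue equation $\Ai'(w_+)\Ai'(w_-) = -\frac{\kappa}{2\pi}$. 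The cleanest route is to introduce the Wronskian-type quantity: since $u_\pm(x) = \Ai(e^{\pm i2\pi/3}(x - i\lambda_0))$ — up to the reflection for $u_-$ — are the decaying Airy solutions on $\R_\pm$, the transmission eigenfunction is $(c_- u_-, c_+ u_+)$ and the eigenvalue condition comes from the jump/continuity relations at $0$. A genuine Jordan block would require a generalized eigenfunction, equivalently a nontrivial solution of the inhomogeneous transmission problem with the eigenfunction on the right-hand side; I would show this is obstructed by a solvability (Fredholm) condition that reduces to $f_\kappa'(\lambda_0) \ne 0$, closing the loop.

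The main obstacle I anticipate is the $\lambda_0 \ne 0$ subcase: ruling out the simultaneous vanishing of $f$ (shifted by $\kappa/2\pi$) and $f'$ without an explicit location for the zeros. The tool I would reach for is the same one already invoked in the excerpt for the Robin case — uniqueness for the initial value problem $-u'' + zu = 0$ with $u(z_0) = u'(z_0) = 0$ forcing $u \equiv 0$, which shows $\Ai$ and $\Ai'$ have no common zero, so neither factor in $f$ can vanish at $\lambda_0$. Then from $f'(\lambda_0) = 0$ and $\lambda_0 \ne 0$ I get the bracket relation, and dividing by $f(\lambda_0) = \Ai'(w_+)\Ai'(w_-) \ne 0$ yields
\begin{equation*}
e^{-i2\pi/3}\,\frac{\Ai(w_+)}{\Ai'(w_+)} + e^{i2\pi/3}\,\frac{\Ai(w_-)}{\Ai'(w_-)} = 0\,.
\end{equation*}
The finish is to show this is incompatible with accretivity: translating it back through the eigenvalue equation $f(\lambda_0) = -\kappa/(2\pi) < 0$ (the case $\kappa = 0$ being separate and easy, as then $f'(\lambda_0) = 0$ with one factor of $f$ zero, already excluded), one obtains a contradiction with the fact — established via the quadratic form $a^T_\nu$ — that $\sigma(\LL^T(-1,\kappa))$ lies in $\{\Re \lambda > 0\}$, or more directly by testing the putative generalized eigenfunction against the adjoint eigenfunction of $\LL^T(1,\kappa)$ and getting $0 = \|v_1\|^2 \ne 0$. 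I would write the last step in the clean "generalized eigenfunction pairing" form, since it parallels the Robin argument and avoids delicate Airy-function estimates.
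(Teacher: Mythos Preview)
Your setup is correct and matches the paper: a multiple eigenvalue forces $f(\lambda_0)=-\kappa/2\pi$ and $f'(\lambda_0)=0$, and your computation of $f'$ via the Airy equation is right, as is the $\lambda_0=0$ case. The gap is in the finish for $\lambda_0\neq0$. Neither of your two proposed endings actually closes the argument. The ``generalized eigenfunction pairing'' route is circular: for a non-selfadjoint operator the relevant pairing is $\langle v,\bar v\rangle$ (equivalently $\langle v,v^*\rangle$ with $v^*$ the adjoint eigenfunction), not $\|v\|^2$, and the non-vanishing of this pairing is precisely the statement of algebraic simplicity you are trying to prove. The ``accretivity'' route --- deducing a contradiction from $\Re\lambda_0>0$ alone --- is not explained and I do not see how the single bracket relation you derived contradicts $\Re\lambda_0>0$.

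What is missing is a second relation that lets you solve for $\Ai(w_\pm)/\Ai'(w_\pm)$ \emph{separately}. The paper supplies this via the Wronskian identity
\[
e^{-i2\pi/3}\Ai'(e^{-i2\pi/3}\lambda_0)\,\Ai(e^{i2\pi/3}\lambda_0)-e^{i2\pi/3}\Ai'(e^{i2\pi/3}\lambda_0)\,\Ai(e^{-i2\pi/3}\lambda_0)=\tfrac{i}{2\pi}\,.
\]
Combining this with your bracket relation and with $\Ai'(w_+)\Ai'(w_-)=-\kappa/2\pi$ gives $\Ai(e^{\pm i2\pi/3}\lambda_0)=\tfrac{e^{\pm i\pi/6}}{2\kappa}\Ai'(e^{\pm i2\pi/3}\lambda_0)$. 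The paper then reads these two relations as saying that the two half-line pieces $u_\pm$ of the transmission eigenfunction each satisfy a \emph{Robin} condition $u_\pm'(0)=\pm 2\kappa\,u_\pm(0)$, so both $\lambda_0$ and $\bar\lambda_0$ are eigenvalues of $\overline{\LL^R(1,2\kappa)}$. Since that Robin spectrum lies in a single quadrant and transmission eigenvalues are never real, this is a contradiction. So the missing idea is: Wronskian $+$ reduction to the Robin problem $+$ non-reality of transmission eigenvalues.
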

\begin{proof}
Recall that if $\mu\in\sigma(\LL^T(-1,\kappa) )$ then
\begin{displaymath}
    f(\mu) = \Ai'(e^{i2\pi/3} \mu) \, \Ai'(e^{-i2\pi/3} \mu) =- \kappa/2\pi \,.
\end{displaymath}
Suppose further that $f^\prime(\mu)=0$. It has been established in
\cite{GHH} that
\begin{equation}  \label{eq:122}
  \Ai(e^{i2\pi/3} \mu) = \frac{e^{i\frac \pi 6}}{2\kappa }\Ai^\prime(e^{i2\pi/3} \mu)
  \quad ; \quad   \Ai(e^{-i2\pi/3} \mu) = \frac{e^{- i\frac \pi 6}}{2\kappa }\Ai^\prime(e^{-i2\pi/3} \mu)\,.
\end{equation}
Let $u=(u_+,u_-)$ denote the eigenfunction associated with $\mu$. It
can be easily verified that
\begin{equation}   \label{eq:84}
  \begin{cases}
    u_-(x) = C_-\Ai(-e^{i\pi/6}(x-i\mu)) & x<0 \\
    u_+(x) = C_+\Ai (e^{-i\pi/6}(x-i\mu)) & x>0 \,.
  \end{cases}
\end{equation}
We can now rewrite \eqref{eq:122} in the following manner
\begin{displaymath}
  u_-(0) = -\frac{1}{2\kappa}u_-^\prime(0)  \quad ; \quad    u_+(0) = \frac{1}{2\kappa}u_+^\prime(0) 
\end{displaymath}
It follows that both $\mu$ and $\bar{\mu}$ are eigenvalues of
$\bar{\LL}^R(1,2\kappa)$. This is, however, a contradiction, as
$\sigma(\bar{\LL}^R(1,2\kappa))$ lies in the fourth quadrant, and
since $\mu\notin\R$. 
\end{proof}
Before providing some semigroup estimates, as in the previous section,
we need to establish another auxiliary result. 
\begin{lemma}
Let $\omega\in\R_+$. Let $Z(\kappa,\omega)\in\Z_+$ denote the number
of zeros of $f_\kappa$ for $\Re\lambda\leq \omega$.  Then, for every
$\kappa_0>0$ there exists $M(\omega,\kappa_0)$ such that
\begin{equation}     \label{eq:17}
\sup_{\kappa\in[0,\kappa_0]} Z(\kappa,\omega) \leq M (\omega,\kappa_0)\,.
  \end{equation}
\end{lemma}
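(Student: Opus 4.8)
The plan is to control $Z(\kappa,\omega)$ by combining two ingredients: a uniform-in-$\kappa$ resolvent bound on a suitable vertical line $\Re\lambda = \omega$ together with the large-$|\lambda|$ resolvent estimate already quoted from \cite{GHH}, and then to convert this into a bound on the number of poles via a standard argument (a logarithmic residue / Jensen-type estimate, or equivalently a bound on the trace of a spectral projector). First I would fix $\kappa_0>0$ and observe that $f_\kappa(\lambda) = f(\lambda) + \kappa/(2\pi)$ with $f$ entire and independent of $\kappa$. The zeros of $f_\kappa$ in $\{\Re\lambda \le \omega\}$ coincide with the eigenvalues of $\LL^T(-1,\kappa)$ there; by Lemma \ref{lem:jordan} they are all simple, so $Z(\kappa,\omega)$ equals the rank of the Riesz projector
\begin{displaymath}
 P(\kappa) = \frac{1}{2\pi i}\oint_{\Gamma} (\LL^T(-1,\kappa)-\lambda)^{-1}\,d\lambda
\end{displaymath}
for a contour $\Gamma$ enclosing exactly the portion of the spectrum with $\Re\lambda \le \omega$. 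Since $\RR\sigma(\LL^T) \subset \overline{\R_+}$ by accretiveness, the spectrum in $\{\Re\lambda \le \omega\}$ lies in the vertical strip $0 \le \Re\lambda \le \omega$, and the large-$|\lambda|$ estimate $\|(\LL^T(-1,\kappa)-\lambda)^{-1}\| \le M(1+|\lambda|)^{-1/4}(\log\lambda)^{1/2}$ on $\R_+$ shows the eigenvalues cannot escape to $\Im\lambda = \pm\infty$ within this strip without the resolvent blowing up; so $\Gamma$ may be taken as the boundary of a fixed bounded rectangle $[-1,\omega]\times[-A,A]$ with $A = A(\omega,\kappa_0)$ chosen once and for all.

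The key technical point is then a \emph{uniform} resolvent bound: there is $C(\omega,\kappa_0)$ with $\sup_{\kappa\in[0,\kappa_0]}\sup_{\lambda\in\Gamma}\|(\LL^T(-1,\kappa)-\lambda)^{-1}\| \le C(\omega,\kappa_0)$. On the left edge $\Re\lambda = -1$ this is immediate from \eqref{eq:79}; on the top and bottom edges $\Im\lambda = \pm A$ one uses the analogue of \eqref{eq:79} for the imaginary part (the numerical range covers the right half-plane, so $\|(\LL^T-\lambda)^{-1}\| \le 1/|\Im\lambda|$ is false in general — one must instead invoke the quoted $(1+|\lambda|)^{-1/4}(\log\lambda)^{1/2}$ bound, which is uniform in $\kappa$ since the constant $M$ in \cite{GHH} is); on the right edge $\Re\lambda = \omega$ one argues as in the proof of Proposition \ref{lemmaunifbis}, bounding the Hilbert–Schmidt norm of the resolvent by the explicit Airy-function kernel and noting that the denominator in the kernel is bounded away from zero uniformly for $\kappa\in[0,\kappa_0]$ and $\lambda$ on the compact arc, \emph{provided} $\Gamma$ is chosen to avoid the (finitely many, $\kappa$-varying) eigenvalues with $\Re\lambda = \omega$ — one may perturb $\omega$ slightly to $\omega' \in (\omega, \omega+\delta)$ to ensure this, which only increases $Z$.

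Given the uniform bound on $\Gamma$, I would conclude as follows. Write $(\LL^T(-1,\kappa)-\lambda)^{-1} = R_0(\lambda) + K(\lambda,\kappa)$ where $R_0$ is the (entire in $\lambda$, $\kappa$-independent) resolvent of $D_x^2 - ix$ on $\R_-\cup\R_+$ and $K$ is the rank-correction term built from the Airy kernel as in \eqref{eq:63}; the poles of the resolvent are the zeros of the scalar entire function $f_\kappa$, and $Z(\kappa,\omega)$ is the number of such zeros inside $\Gamma$. By the argument principle $Z(\kappa,\omega) = \frac{1}{2\pi i}\oint_\Gamma \frac{f_\kappa'(\lambda)}{f_\kappa(\lambda)}\,d\lambda$; since $f_\kappa' = f'$ is $\kappa$-independent and $|f_\kappa|$ is bounded below on $\Gamma$ uniformly in $\kappa$ (equivalently, the resolvent is bounded on $\Gamma$ uniformly in $\kappa$), this integral is bounded in modulus by $|\Gamma|\cdot \sup_\Gamma|f'| / \inf_{\kappa,\Gamma}|f_\kappa|$, a finite constant $M(\omega,\kappa_0)$. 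Taking the supremum over $\kappa\in[0,\kappa_0]$ gives \eqref{eq:17}. The main obstacle is establishing the uniform lower bound on $|f_\kappa|$ along the right edge of the contour: this requires knowing that as $\kappa$ ranges over $[0,\kappa_0]$ the eigenvalues on or near $\Re\lambda = \omega$ move continuously and stay in a compact set, so that a fixed nearby contour can be chosen to avoid all of them — this is where the continuity of $\lambda\mapsto$(eigenvalues) in $\kappa$, itself a consequence of the simplicity from Lemma \ref{lem:jordan} and analytic perturbation theory, is used.
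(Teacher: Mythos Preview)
Your overall strategy matches the paper's: localize the zeros of $f_\kappa$ to a fixed compact region $D_\omega^L$ (using the large-$|\lambda|$ resolvent estimate from \cite{GHH}) and then count them via the argument principle applied to the entire function $f_\kappa = f + \kappa/(2\pi)$. The difference is in how the counting is made uniform in $\kappa$, and this is where your argument has a genuine gap.

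You try to choose a \emph{single} contour $\Gamma$ (the boundary of a rectangle $[-1,\omega']\times[-A,A]$) valid simultaneously for all $\kappa\in[0,\kappa_0]$, so that $\inf_{\kappa,\lambda\in\Gamma}|f_\kappa(\lambda)|>0$. For the right edge you propose perturbing $\omega$ to some nearby $\omega'$. But as $\kappa$ ranges over $[0,\kappa_0]$, the zeros of $f_\kappa$ trace out the real-analytic set $f^{-1}([-\kappa_0/(2\pi),0])$, which is a union of curves; nothing prevents one of these curves from crossing \emph{every} vertical line $\Re\lambda=\omega'$ in the interval $(\omega,\omega+\delta)$. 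Continuity of the eigenvalues in $\kappa$ and their confinement to a compact set do not, by themselves, furnish such an $\omega'$. So the sentence ``one may perturb $\omega$ slightly to $\omega'\in(\omega,\omega+\delta)$ to ensure this'' is unjustified, and with it the uniform lower bound on $|f_\kappa|$ along $\Gamma$.

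The paper avoids this obstacle by a compactness--contradiction argument: suppose $Z(\kappa_j,\omega)\to\infty$; pass to a subsequence $\kappa_j\to\kappa_\infty\in[0,\kappa_0]$; choose a closed contour $\mathcal C$ enclosing $D_\omega^L$ and avoiding the (discrete) zero set of $f_{\kappa_\infty}$ \emph{only}; then since $f_{\kappa_j}\to f_{\kappa_\infty}$ uniformly on $\mathcal C$, the winding number $\tfrac{1}{2\pi i}\oint_{\mathcal C} f'/f_{\kappa_j}\,d\lambda$ converges to the finite value $\tfrac{1}{2\pi i}\oint_{\mathcal C} f'/f_{\kappa_\infty}\,d\lambda$, contradicting $Z(\kappa_j,\omega)\to\infty$. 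The point is that the contour only needs to avoid zeros for the \emph{limit} $\kappa_\infty$, not for all $\kappa$ at once.

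If you prefer a direct (non-contradiction) argument, two fixes are available: either deform $\Gamma$ to a non-rectangular curve avoiding the nowhere-dense set $f^{-1}([-\kappa_0/(2\pi),0])$; or, more simply, apply Jensen's formula on a large disk, using that $|f_\kappa(0)| = |(\Ai'(0))^2 + \kappa/(2\pi)| \ge (\Ai'(0))^2 > 0$ and $\sup_{|z|=R}|f_\kappa|\le\sup_{|z|=R}|f|+\kappa_0/(2\pi)$ are both uniform in $\kappa$, which bounds the number of zeros in $|z|\le R/2$ without any contour-avoidance issue.
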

\begin{proof}
Let
\begin{displaymath}
    D_\omega =\{ z\in\C \,| \, 0\leq\Re z\leq\omega \} \,. 
\end{displaymath}
The number of zeros of $f_\kappa$ in $D_\omega$ is precisely $Z$ since
there are no eigenvalues of $A_1^{+,T}$ in the left side of the
complex plane (when $\kappa \geq 0$).  \\
From  the analysis of the resolvent for $\Im \lambda$ large (see
\cite{GHH}) and the continuity of the zeros, we deduce that there
exists $L(\kappa_0,\omega)$ such that, for all $\kappa \in
[0,\kappa_0]$, the number of zeros of $f_\kappa$ in
\begin{displaymath}
  D^L_\omega = \{ z\in\C \,| \, 0\leq\Re z\leq\omega \;;\; |z|\leq L(\kappa_0,\omega)\} \,,
\end{displaymath}
is precisely $Z(\kappa,\omega)$. 

To prove \eqref{eq:17} we now argue by contradiction. Suppose that
there exists $\{\kappa_j\}_{j=1}^\infty\subset[0,\kappa_0]$ such that
$Z(\kappa_j,\omega)\to\infty$.  Without loss of generality we assume
$\kappa_j\to\kappa_\infty$, otherwise we move to a subsequence. Let
$\CC$ denote a closed path in $\C\setminus D^L_\omega$ enclosing
$D_\omega^L$ such that $f_{\kappa_\infty}\neq0$ on $\CC$. For
sufficiently large $j$, $f_{\kappa_j}\neq0$ on $\CC$ and we may use
Rouch\'e's theorem to obtain
\begin{displaymath}
  \frac{1}{2\pi i} \oint_{\CC}\frac{f^\prime}{f_{\kappa_j}} d\lambda \to +\infty\,,
\end{displaymath}
leading to a contradiction as
\begin{displaymath}
    \frac{1}{2\pi i} \oint_{\CC}\frac{f^\prime}{f_{\kappa_\infty}} d\lambda < + \infty \,.
\end{displaymath}
\end{proof}

As in the previous section we prove a semigroup estimate. 
\begin{proposition}\label{lemmaunifbisT} 
Let $\lambda^T(\mathfrak{j},\kappa)$ denote the real value of the
leftmost eigenvalue of $\LL^T (\mathfrak{j},\kappa)$. Then for any
positive $\mathfrak{j}_0<\mathfrak{j}_1$, $\kappa_0$ and $\epsilon$,
there exists $C(\mathfrak{j}_0,\mathfrak{j}_1,\kappa_0,\epsilon)>0$
such that, for any $\mathfrak{j}_0\leq \mathfrak{j} \leq
\mathfrak{j}_1$ and $\kappa \in [0,\kappa_0]$
\begin{equation}   \label{eq:103}
\|e^{ -t\LL^T (\mathfrak{j},\kappa) }\|\leq C
(\mathfrak{j}_0,\mathfrak{j}_1,\kappa_0, \epsilon) \,
e^{-t(\lambda^T(\mathfrak{j},\kappa)- \epsilon)}\,. 
\end{equation}
\end{proposition}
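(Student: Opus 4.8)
My plan is to follow the same scheme as in the proof of Proposition~\ref{lemmaunifbis}, now drawing on the results of Section~\ref{s3} instead of those on the Robin realization. First, by the dilation argument --- exactly as in the proof of Proposition~\ref{lemmaunifbis}, with $U_{\mathfrak{j}}$ now acting componentwise on $L^2_-\times L^2_+$ --- the operator $\LL^T(\mathfrak{j},\kappa)$ is unitarily equivalent to $\mathfrak{j}^{2/3}\LL^T(1,\kappa^*)$ with $\kappa^*=\mathfrak{j}^{-1/3}\kappa$, so $\|e^{-t\LL^T(\mathfrak{j},\kappa)}\|=\|e^{-\mathfrak{j}^{2/3}t\,\LL^T(1,\kappa^*)}\|$; it therefore suffices to prove the estimate for $\mathfrak{j}=1$ with $\kappa^*$ in the bounded interval $[0,\kappa_0^*]$, $\kappa_0^*:=\kappa_0/\mathfrak{j}_0^{1/3}$, and the general case follows by running that argument with $\epsilon/\mathfrak{j}_1^{2/3}$ in place of $\epsilon$. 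The heart of the matter is then to establish that
\begin{equation*}
M_\epsilon:=\sup_{\kappa^*\in[0,\kappa_0^*]}\ \sup_{\Re\lambda\le\lambda^T(1,\kappa^*)-\epsilon}\ \bigl\|(\LL^T(1,\kappa^*)-\lambda)^{-1}\bigr\|<+\infty\,,
\end{equation*}
after which the Gearhart-Pr\"uss theorem yields $\|e^{-t\LL^T(1,\kappa^*)}\|\le C_\epsilon\,e^{-t(\lambda^T(1,\kappa^*)-\epsilon)}$ with $C_\epsilon$ independent of $\kappa^*$. Note that, by Lemma~\ref{lem:jordan}, for each $\kappa^*$ the leftmost eigenvalue of $\LL^T(1,\kappa^*)$ is simple and is a zero of the entire function $f_{\kappa^*}$, hence depends continuously on $\kappa^*$; in particular $\sup_{\kappa^*\in[0,\kappa_0^*]}\lambda^T(1,\kappa^*)<+\infty$, the transmission analogue of \eqref{eq:66}.

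To obtain the uniform resolvent bound I would split the half-plane $\{\Re\lambda\le\lambda^T(1,\kappa^*)-\epsilon\}$ into an outer region $D_\rho(\kappa^*)=\{|\lambda|\ge\rho,\ \Re\lambda\le\lambda^T(1,\kappa^*)-\epsilon\}$ and a bounded region $\{|\lambda|<\rho\}$. On the outer region I would argue as in \cite{Hel2,GHH}: combining the accretivity bound \eqref{eq:79} (valid for $\Re\lambda<0$), the large-imaginary-part decay $\|(\LL^T(\mathfrak{j},\kappa)-\lambda)^{-1}\|\le M(1+|\lambda|)^{-1/4}(\log\lambda)^{1/2}$ along $\R_+$ recalled in Section~\ref{s3}, the fact that the resolvent lies in $\mathcal C^p$ for $p>\frac{3}{2}$, and the Phragm\'en-Lindel\"of principle, one gets $\rho_0>0$ and $C>0$, both independent of $\kappa^*\in[0,\kappa_0^*]$, with $\sup_{\lambda\in D_\rho(\kappa^*)}\|(\LL^T(1,\kappa^*)-\lambda)^{-1}\|\le C$ for every $\rho>\rho_0$. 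On the bounded region I would estimate the resolvent norm by its Hilbert-Schmidt norm, computed from the explicit Green's function of $\LL^T$ obtained in \cite{GHH} (a free part plus a $\kappa^*$-dependent correction built from Airy functions, whose poles are precisely the zeros of $f_{\kappa^*}$): the set $\{(\kappa^*,\lambda):\kappa^*\in[0,\kappa_0^*],\ |\lambda|\le\rho,\ \Re\lambda\le\lambda^T(1,\kappa^*)-\epsilon\}$ is compact, by the continuity and boundedness of $\lambda^T(1,\cdot)$, and contains no eigenvalue of $\LL^T(1,\kappa^*)$ by the very definition of $\lambda^T(1,\kappa^*)$; hence $f_{\kappa^*}(\lambda)$ is bounded away from $0$ on it and the Hilbert-Schmidt norm is bounded, uniformly in $\kappa^*$. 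Adding the two contributions gives the finite $M_\epsilon$, and hence \eqref{eq:103}.

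The step I expect to be the main obstacle is the outer-region estimate. In contrast with the Robin case, the numerical range of $\LL^T(\mathfrak{j},\kappa)$ fills the whole right half-plane, so the companion bound $1/|\Im\lambda|$ for $\Im\lambda<0$ is no longer at hand, and the Phragm\'en-Lindel\"of argument must be fed instead by the weaker decay $(1+|\lambda|)^{-1/4}(\log\lambda)^{1/2}$ along $\R_+$; moreover one needs this argument in a form that is uniform in $\kappa^*$ over a bounded interval rather than for a single fixed operator. The uniform bound \eqref{eq:17} on the number of zeros of $f_{\kappa^*}$ to the left of a vertical line, together with the continuity of those zeros, is exactly what makes the localization of the relevant contours and the ``no eigenvalue to the left of $\lambda^T(1,\kappa^*)-\epsilon$'' bookkeeping persist uniformly in $\kappa^*$.
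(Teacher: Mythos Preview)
Your proposal is correct and follows exactly the approach the paper intends: the paper simply states that the proof is identical with that of Proposition~\ref{lemmaunifbis}, and you have carried out precisely that adaptation, correctly identifying that the missing $1/|\Im\lambda|$ bound (since the numerical range of $\LL^T$ fills the right half-plane) is compensated by the $(1+|\lambda|)^{-1/4}(\log\lambda)^{1/2}$ estimate along $\R_+$ from \cite{GHH} recalled in Section~\ref{s3}. The only minor imprecision is that ``the leftmost eigenvalue'' is not unique in the transmission case (there are at least two complex-conjugate eigenvalues sharing the minimal real part $\lambda^T$), but this does not affect your argument since you only need continuity and boundedness of $\kappa^*\mapsto\lambda^T(1,\kappa^*)=\inf\Re\sigma(\LL^T(1,\kappa^*))$, which follows from the simplicity of each zero of $f_{\kappa^*}$ together with \eqref{eq:17}.
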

We skip the proof as it is identical with the proof of Proposition
\ref{lemmaunifbis}.  One can also improve the proposition in the
following way:
\begin{proposition}
\label{lemmaunifbisTimp} 
Let $\mathfrak{j}_0 < \jf_1$ and $\kappa_0$ be positive constants.
Then, there exists $C(\mathfrak{j}_0,\mathfrak{j}_1,\kappa_0)>0$ such
that, for any $0 < \mathfrak{j}_0\leq \mathfrak{j} \leq
\mathfrak{j}_1$ and $\kappa \in [0,\kappa_0]$ 
\begin{equation}   \label{eq:3.8}
\|e^{ -t\LL^T (\mathfrak{j},\kappa) }\|\leq C
(\mathfrak{j}_0,\mathfrak{j}_1,\kappa_0) \,
e^{-t\lambda^T(\mathfrak{j},\kappa)}\,. 
\end{equation}
\end{proposition}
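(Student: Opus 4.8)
The plan is to follow \emph{mutatis mutandis} the proof of Proposition~\ref{lemmaunifter}, the transmission case differing only in technical details caused by the fact that the numerical range of $\LL^T(\jf,\kappa)$ fills the whole right half-plane rather than the first quadrant. By the dilation identity (the transmission analogue of \eqref{eq:69a}, obtained by letting $u(x)\mapsto \jf^{1/3}u(\jf^{1/3}x)$ act on $L^2(\R_-\cup\R_+)$, which conjugates $\LL^T(\jf,\kappa)$ into $\jf^{2/3}\LL^T(1,\jf^{-1/3}\kappa)$) it suffices to treat $\jf=1$ and to keep track of the uniformity of all estimates with respect to $\kappa^*=\jf^{-1/3}\kappa$ ranging in the bounded interval $[0,\kappa_0^*]$, $\kappa_0^*=\kappa_0/\jf_0^{1/3}$. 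Write $(\lambda_1^T(\kappa^*),v_1^T(\cdot,\kappa^*))$ for the eigenpair with $\Re\lambda_1^T(\kappa^*)=\lambda^T(1,\kappa^*)$ and $\|v_1^T\|=1$; by Lemma~\ref{lem:jordan} this eigenvalue is simple and, as established in \cite{GHH} (via the characterisation of $\sigma(\LL^T(-1,\kappa^*))$ through the zeros of $f_{\kappa^*}$), it is the unique eigenvalue of minimal real part, so that $\lambda^{T,2}(\kappa^*):=\Re\lambda_2^T(\kappa^*)>\lambda^T(1,\kappa^*)$. A compactness argument, using the continuous dependence of the (simple) zeros of $f_{\kappa^*}$ on $\kappa^*$, then shows that $\delta_0:=\inf_{\kappa^*\in[0,\kappa_0^*]}\big(\lambda^{T,2}(\kappa^*)-\lambda^T(1,\kappa^*)\big)>0$.

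Next I would introduce the rank-one Riesz projection $\Pi_1^T(\kappa^*)$ onto $\mathrm{span}(v_1^T(\cdot,\kappa^*))$ and check that $\|\Pi_1^T(\kappa^*)\|$ is bounded uniformly for $\kappa^*\in[0,\kappa_0^*]$; this follows from the explicit representation \eqref{eq:84} of $v_1^T$ in terms of Airy functions together with the continuous, non-vanishing normalising factor $\langle v_1^T,\bar v_1^T\rangle$, exactly as in \cite{GHH,GH}. Riesz--Schauder theory then gives
\[
(\LL^T(1,\kappa^*)-\lambda)^{-1}=\frac{\Pi_1^T(\kappa^*)}{\lambda-\lambda_1^T(\kappa^*)}+T_1(\lambda,\kappa^*),
\]
with $T_1$ holomorphic in the half-plane $\Re\lambda<\lambda^{T,2}(\kappa^*)$, and $T_1|_{E_1}=(\LL^{T,1}-\lambda)^{-1}$, where $E_1=(I-\Pi_1^T)(L^2_-\times L^2_+)$ and $\LL^{T,1}=\LL^T(I-\Pi_1^T)$, which is a dense subset in $L^2$ sense of $E_1$.

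The heart of the argument is the uniform bound
\[
\sup_{\kappa^*\in[0,\kappa_0^*]}\ \sup_{\Re\lambda\leq\lambda^{T,2}(\kappa^*)-\epsilon}\ \|T_1(\lambda,\kappa^*)\|\leq C_\epsilon ,
\]
which I would prove by the same three-region decomposition used for Proposition~\ref{lemmaunifbisT}: for $\Re\lambda<0$ the bound \eqref{eq:79} is available; for $|\Im\lambda|$ large one uses the decay of the resolvent established in \cite{GHH}; and on the remaining bounded region one estimates $\|T_1\|$ through its Hilbert--Schmidt norm, computed from the explicit Green's function of $\LL^T$, whose denominator is (up to constants) $f_{\kappa^*}$. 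It is exactly here that the counting lemma $\sup_{\kappa\in[0,\kappa_0]}Z(\kappa,\omega)\leq M(\omega,\kappa_0)$ enters: it guarantees that, uniformly in $\kappa^*$, only finitely many zeros of $f_{\kappa^*}$ lie in $\{\Re\lambda\leq\lambda^{T,2}(\kappa^*)-\epsilon/2\}$, and since $\lambda_1^T$ is the only eigenvalue to the left of $\lambda^{T,2}$, after subtracting its simple pole the resolvent — hence $T_1$ — stays bounded there, uniformly in $\kappa^*$ by continuity of these finitely many zeros on the compact parameter interval. Restricting to $E_1$ and applying the Gearhart--Pr\"uss theorem (as in \cite{H,Sj}) to $\LL^{T,1}$, whose associated semigroup is bounded since $\LL^T(\jf,\kappa)$ is maximal accretive up to a shift, yields, for any $\epsilon>0$,
\[
\sup_{\kappa^*\in[0,\kappa_0^*]}\ \|e^{-t\LL^{T,1}}\|\leq C_\epsilon\, e^{-(\lambda^{T,2}(\kappa^*)-\epsilon)t}.
\]

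Finally I would conclude by writing $e^{-t\LL^T(1,\kappa^*)}=e^{-t\lambda_1^T(\kappa^*)}\Pi_1^T(\kappa^*)+e^{-t\LL^{T,1}}$, using that $\Pi_1^T$ commutes with $\LL^T$ and that its range is the one-dimensional $\lambda_1^T$-eigenspace with no Jordan block (Lemma~\ref{lem:jordan}). Since $|e^{-t\lambda_1^T(\kappa^*)}|=e^{-t\lambda^T(1,\kappa^*)}$, $\|\Pi_1^T\|$ is uniformly bounded, and choosing $\epsilon<\delta_0$ makes $e^{-(\lambda^{T,2}(\kappa^*)-\epsilon)t}\leq e^{-t\lambda^T(1,\kappa^*)}$, we obtain $\|e^{-t\LL^T(1,\kappa^*)}\|\leq C(\kappa_0^*)\,e^{-t\lambda^T(1,\kappa^*)}$, and then \eqref{eq:3.8} follows by undoing the dilation. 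I expect the main obstacle to be precisely the uniform-in-$\kappa^*$ control of $T_1$ in the enlarged half-plane $\Re\lambda\leq\lambda^{T,2}(\kappa^*)-\epsilon$: because the cheap numerical-range estimate only controls the resolvent in the open left half-plane — not in the first quadrant, as was available in the Robin case — one must lean on the explicit Airy-function form of the Green's function and on the counting lemma to prevent a loss of uniformity, and one must separately ensure, via continuity and compactness, that the spectral gap $\delta_0$ does not degenerate as $\kappa^*$ runs over $[0,\kappa_0^*]$.
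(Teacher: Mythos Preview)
Your overall strategy---dilation to $\jf=1$, Riesz projection onto the leftmost eigenspace, Gearhart--Pr\"uss on the complement---is exactly the paper's, but there is a genuine error in your setup. You assert that $\lambda_1^T(\kappa^*)$ is ``the unique eigenvalue of minimal real part'' and then build a rank-one projection $\Pi_1^T$. This is false for the transmission operator: as the paper recalls from \cite{GHH}, every eigenvalue of $\LL^T$ is non-real, and by the symmetry $u(x)\mapsto\overline{u(-x)}$ (which conjugates $\LL^T(\jf,\kappa)$ into $\LL^T(\jf,\kappa)$) the spectrum is invariant under complex conjugation. Hence the eigenvalues with minimal real part come in at least one complex-conjugate \emph{pair}, so there are always at least two of them, and a priori there could be $K=K(\jf,\kappa)$ such pairs. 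Your rank-one decomposition $e^{-t\LL^T}=e^{-t\lambda_1^T}\Pi_1^T+e^{-t\LL^{T,1}}$ therefore does not isolate the whole leftmost spectral component, and the reduced operator $\LL^{T,1}$ you define still has spectrum on the line $\Re\lambda=\lambda^T$, so the Gearhart--Pr\"uss step cannot produce a decay rate strictly better than $\lambda^T$.

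The paper fixes this in the natural way: let $U_K=\mathrm{span}\{u_1,\ldots,u_{2K}\}$ be the span of all eigenfunctions whose eigenvalues have real part equal to $\lambda^T$, set $P_k^T=\sum_{\ell=1}^{2K}\Pi_\ell$, and define $\lambda^{T,2}=\inf\Re\sigma\big(\LL^T(I-P_k^T)\big)>\lambda^T$. The counting lemma \eqref{eq:17} is used precisely here, to guarantee that $K(\jf,\kappa)$ is bounded uniformly for $\kappa\in[0,\kappa_0]$, so that $\|P_k^T\|\leq\sum_\ell\|\Pi_\ell\|$ remains uniformly bounded; your use of the counting lemma (to control poles of $T_1$ in the enlarged half-plane) is not the role it plays in the paper's argument. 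With this finite-rank projection in place of your rank-one $\Pi_1^T$, the rest of your outline---uniform resolvent bound on the complement via the techniques of Proposition~\ref{lemmaunifbis}, then Gearhart--Pr\"uss---goes through exactly as you describe.
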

\begin{proof}
The proof is similar to the proof of Proposition \ref{lemmaunifter},
and we provide therefore only its outlines. Let
$\lambda\in\sigma(\LL^T)$. It has been proved in \cite{GHH} that
$\lambda\notin\R$ and that there are at least two complex conjugate
eigenvalues with a real value equal to $\lambda^T$.  With the proof of
simplicity in mind, we get $K$ pairs of complex conjugate eigenvalues
with same real part $ \lambda^T(\mathfrak{j},\kappa)$ and
$K(\jf,\kappa)$ is uniformly bounded by \eqref{eq:17}.  Let $U_K={\rm
span}\{u_1,\ldots,u_{2K}\}$ denote the space spanned by all the
eigenfunctions (and generalized eigenfunctions) of $\LL^T$
corresponding to eigenvalues whose real value is equal to $\lambda^T$.
Let
\begin{displaymath}
    P_k^T = \sum_{\ell =1}^{2k} \Pi_\ell \,,
\end{displaymath}
where $\Pi_\ell $ denotes the projection on $u_\ell $ defined in
\eqref{eq:70}.  Using the same technique as in the proof of
Proposition \ref{lemmaunifbis}, we then conclude for any $\epsilon>0$,
the existence of $C_\epsilon(\jf_0,\jf_1,\kappa_0)>0$ such that
\begin{equation}   \label{eq:22}
  \|e^{-t\LL^T}(I-P_k^T)\|\leq C_\epsilon(\jf_0,\jf_1, \kappa_0)\,e^{-(\lambda^{T,2}-\epsilon)t} \,,
\end{equation}
where
\begin{equation}    \label{eq:92}
\lambda^{T,2}=\inf \Re\sigma\big(\LL^T(I-P_k^T) \big)>\lambda^T \,.
\end{equation}
The proposition now follows from the fact that
\begin{displaymath}
  \|e^{-t\LL^T}P_k^T \|\leq \left(  \sum_{\ell=1}^{2k} \|\Pi_\ell \| \right) \, e^{-\lambda^Tt}  \leq  C(\jf_0,\jf_1,\kappa_0) e^{-\lambda^Tt} \,.
\end{displaymath}
\end{proof}
We conclude this section by making the following simple observation
\begin{lemma}
\label{lem:deriv-T}
Let $0 < \jf_0 < \jf_1$ and $\kappa_0 >0$.  Then, there exists
$C(\jf_0,\jf_1,\kappa_0)>0$ such that, for $\jf\in[\jf_0,\jf_1]$ and
$\kappa \in [0, \kappa_0]$,
\begin{equation}   \label{eq:77} 
\Big|\frac{\partial \lambda^T}{\partial \jf}(\jf,\kappa) \Big|\leq C(\jf_0,\jf_1,\kappa_0)\,.
\end{equation}
\end{lemma}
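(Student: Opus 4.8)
The plan is to mimic the proof of Lemma~\ref{lem:deriv-R}, i.e. to differentiate the leftmost eigenvalue by the Feynman--Hellmann formula, the only new difficulty being that in the transmission case the eigenvalue with smallest real part is not unique (it comes in complex conjugate pairs, and there may be several such pairs whose membership changes with the parameters). So instead of working with a single branch I would work with the finite family of eigenvalues sitting in a fixed half--plane. Fix $\jf\in[\jf_0,\jf_1]$ and $\kappa\in[0,\kappa_0]$ and let $\mu=\lambda_j^T(\jf,\kappa)$ be any eigenvalue of $\LL^T(\jf,\kappa)$ with $\Re\mu\le\Lambda_0$, where $\Lambda_0$ is an a priori upper bound for $\lambda^T(\jf,\kappa)$ over the parameter range (such a bound exists: by the dilation identity $\lambda^T(\jf,\kappa)=\jf^{2/3}\lambda^T(1,\kappa\jf^{-1/3})$ together with the continuity of $\kappa^*\mapsto\lambda^T(1,\kappa^*)$ on compact intervals, proved exactly as in \eqref{eq:66}). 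Since $(\LL^T(\jf,\kappa))^*=\LL^T(-\jf,\kappa)$ and $\kappa$ is real, the eigenfunction of the adjoint associated with $\bar\mu$ is simply $\bar v$, where $v=(v_-,v_+)$ is the eigenfunction of $\LL^T(\jf,\kappa)$ associated with $\mu$. As $\partial_\jf\LL^T$ is multiplication by $ix$, analytic perturbation theory gives
\begin{displaymath}
  \frac{\partial\mu}{\partial\jf}=i\,\frac{\langle x\,v,\bar v\rangle}{\langle v,\bar v\rangle}\,,
\end{displaymath}
and $\langle v,\bar v\rangle\neq0$ because $\mu$ is a \emph{simple} eigenvalue by Lemma~\ref{lem:jordan}.

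It then remains to bound $|\langle x\,v,\bar v\rangle|$ from above and $|\langle v,\bar v\rangle|$ from below, uniformly in $(\jf,\kappa)\in[\jf_0,\jf_1]\times[0,\kappa_0]$ and over the relevant eigenvalues. Two ingredients make this uniform. First, the number of eigenvalues with $\Re\lambda\le\Lambda_0$ is bounded uniformly by \eqref{eq:17}, and, arguing as in the cited proof, they all remain in one fixed compact set $K\subset\C$. Second, on $K$ one has the explicit representation \eqref{eq:84},
\begin{displaymath}
  v_-(x)=C_-\,\Ai\!\big(-e^{i\pi/6}(x-i\mu)\big)\ (x<0)\,,\qquad v_+(x)=C_+\,\Ai\!\big(e^{-i\pi/6}(x-i\mu)\big)\ (x>0)\,,
\end{displaymath}
and the super-exponential decay of $\Ai$ in the relevant sectors shows that, after the normalization $\|v\|=1$, the quantities $\|x\,v\|$ and $\langle v,\bar v\rangle$ are continuous functions of $(\mu,\jf,\kappa)$ on $K\times[\jf_0,\jf_1]\times[0,\kappa_0]$. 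In particular $\|x\,v\|\le C$, so $|\langle x\,v,\bar v\rangle|\le\|x\,v\|\le C$ by Cauchy--Schwarz.

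The main point — and the only place where something more than bookkeeping is needed — is the uniform lower bound $|\langle v,\bar v\rangle|\ge c>0$. I would obtain it by contradiction and compactness: if there were sequences $(\jf_n,\kappa_n)$ and eigenvalues $\mu_n\in K$ with $\langle v_n,\bar v_n\rangle\to0$, then, passing to subsequences, $(\jf_n,\kappa_n,\mu_n)\to(\jf_\infty,\kappa_\infty,\mu_\infty)$, the $v_n$ converge (through the explicit formula) to a normalized eigenfunction $v_\infty$ of $\LL^T(\jf_\infty,\kappa_\infty)$ with $\langle v_\infty,\bar v_\infty\rangle=0$; this would make $\mu_\infty$ a non-simple (defective) eigenvalue, contradicting Lemma~\ref{lem:jordan}. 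Combining the three bounds yields $|\partial_\jf\lambda_j^T(\jf,\kappa)|\le C(\jf_0,\jf_1,\kappa_0)$ for every eigenvalue in the family; since $\lambda^T(\jf,\kappa)$ equals the minimum of the real parts of these finitely many eigenvalues, it is Lipschitz in $\jf$ with the same constant, which is \eqref{eq:77} (at points where the derivative exists).
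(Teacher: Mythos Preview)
Your proposal is correct and follows the same approach as the paper, which simply states that the proof is identical to that of Lemma~\ref{lem:deriv-R} (Feynman--Hellmann formula plus simplicity of the eigenvalue). In fact you have been more careful than the paper: you explicitly address the transmission-specific issues (complex-conjugate pairs, possible multiplicity of eigenvalues with minimal real part, uniformity over the compact parameter range via \eqref{eq:17} and compactness, and the resulting Lipschitz interpretation of \eqref{eq:77}), all of which the paper leaves implicit.
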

The proof is identical with the proof of Lemma \ref{lem:deriv-R}.

\section{Limit problems: linear potential}\label{sModels}

In this section, we consider the simplified cases where $\Omega$ is
either the entire space $\mathbb{R}^n$, or the half-space
\begin{equation}\label{defHalfSpace}
\mathbb{R}_+^n = \{x=(x_1,\dots,x_n)\in\mathbb{R}^n : x_n>0\}.
\end{equation}
Furthermore the potential $V$ will be assumed to be a linear function. We note that these relatively simple problems naturally
arise in the semi-classical limit $h\to0$ of $(\A_h-\lambda)^{-1}$,
where $\A_h$ is given by \eqref{eq:1} or \eqref{eq:7}.

\subsection{The entire space problem}
\label{ssModels}
In this subsection, we mainly refer to \cite{Hel2}, and reformulate
the $2$-dimensional statements therein in the $n$-dimensional setting.
Let $ \vec J=(J_1,\dots,J_n)\in\mathbb{R}^n$ and
\begin{displaymath}
  \mathcal A_0 = -\Delta+i\ell 
\end{displaymath}
acting on $L^2(\mathbb{R}^n)\,$, where $ \ell(x) = \vec J\cdot
x\,$. Up to an orthogonal change of variable followed by the scale
change $ x\mapsto J^{2/3}x$, where we use the notation
\begin{equation*}
J = |\vec{J}|\,,
\end{equation*}
we can assume that $\mathcal A_0$ has the form
\begin{displaymath}
  \mathcal A_0 = -\Delta+ix_n\,.
\end{displaymath}
Let $x=(x^\prime,x_n)$. Applying a partial Fourier transform in
$x^\prime$
\begin{displaymath}
  \F(u)=(2\pi)^{-(n-1)/2}\int_{\R^{n-1}}e^{-i\omega'\cdot x^\prime}\,dx^\prime \,,
\end{displaymath}
we obtain the transformed operator on $L^2(\mathbb R_{\omega'}^{n-1}
\times \mathbb R_{x_n})$
\begin{displaymath}
  \hat{\A}_0= -\partial_{x_n}^2+ix_n + |\omega'|^2\,.
\end{displaymath}
with domain
\begin{displaymath}
 D ( \hat{\A_0}) = \big\{ u \in L^2(\mathbb R^n), |\omega'|^2 u \in L^2 (\mathbb R^n),\partial_{x_n}^2 u \in L^2(\mathbb R^n)\,,\,
 x_n u \in L^2(\mathbb R^n)\big\}\,.  
\end{displaymath}
from which we easily obtain that
\begin{equation}   \label{eq:93}
D ( \mathcal A_0) = \{ u\in  H^2(\mathbb R^n)\,|\, x_nu \in L^2(\mathbb R^n)\}\,.
\end{equation}
Recalling that the complex Airy operator
\begin{equation}
 \mathcal L  =  -\frac{d^2}{dx_n^2}+ix_n
\end{equation}
on $L^2(\mathbb{R})$ has empty spectrum, we then get as in
\cite[Proposition $7.1$]{Hel2} the following lemma.
\begin{lemma}\label{lemModelRn}
We have $\sigma(\mathcal A_0) = \emptyset$, and for all
$\omega\in\mathbb{R}$, there exists $C_\omega^0$ such that
\begin{equation}   \label{eq:12}
\sup_{\Re z\leq\omega}\|(\mathcal A_0-z)^{-1}\|\leq C_\omega^0.
\end{equation}
\end{lemma}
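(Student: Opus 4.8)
The plan is to reduce the $n$-dimensional statement to the known one-dimensional fact that the complex Airy operator $\mathcal L = -d^2/dx_n^2 + ix_n$ on $L^2(\mathbb R)$ has empty spectrum, together with an explicit bound on its resolvent in left half-planes. First I would perform the partial Fourier transform $\F$ in the $x'$-variable, exactly as set up in the excerpt, to conjugate $\mathcal A_0$ to the direct integral (in $\omega'\in\mathbb R^{n-1}$) of the fiber operators $\hat{\mathcal A}_0(\omega') = \mathcal L + |\omega'|^2$ on $L^2(\mathbb R_{x_n})$. Since $\F$ is unitary, $\sigma(\mathcal A_0) = \overline{\bigcup_{\omega'}\sigma(\hat{\mathcal A}_0(\omega'))}$ and $\|(\mathcal A_0-z)^{-1}\| = \sup_{\omega'}\|(\hat{\mathcal A}_0(\omega')-z)^{-1}\|$. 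Because $\sigma(\mathcal L)=\emptyset$, each fiber has empty spectrum, hence $\sigma(\mathcal A_0)\subset\overline{\bigcup_{\omega'}\emptyset}=\emptyset$; the only subtlety is that a direct integral of operators with empty spectrum could in principle have nonempty spectrum if the fiber resolvents blow up, so the spectral emptiness really follows from the resolvent bound below, applied for every $z\in\mathbb C$.

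Next I would establish the uniform resolvent estimate. For $z$ with $\Re z\le\omega$ and a fixed $\omega'$, write $(\hat{\mathcal A}_0(\omega')-z)^{-1} = (\mathcal L - (z-|\omega'|^2))^{-1}$, so we need a bound on $\|(\mathcal L-\zeta)^{-1}\|$ that is good when $\Re\zeta = \Re z - |\omega'|^2$ is large and negative. Two regimes: (i) when $\Re z - |\omega'|^2 \le -1$ (say), the numerical range of $\mathcal L$ lies in $\{\Re\ge 0\}$, so $\|(\mathcal L-\zeta)^{-1}\|\le 1/|\Re\zeta| = 1/(|\omega'|^2-\Re z)\le 1$, giving a bound uniform in such $\omega'$; (ii) when $|\omega'|^2 - \Re z \le 1$, i.e. $\omega'$ ranges over a bounded set $|\omega'|^2\le\omega+1$, the parameter $\zeta$ ranges over a bounded subset of the half-plane $\{\Re\le\omega+1\}$, and since $\zeta\mapsto(\mathcal L-\zeta)^{-1}$ is an entire operator-valued function (as $\sigma(\mathcal L)=\emptyset$), it is continuous hence bounded on that compact set. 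Taking the maximum of the two bounds yields a finite $C_\omega^0$ independent of $\omega'$, which is \eqref{eq:12}.

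The one step that needs genuine care is the global (in $z$) emptiness of the spectrum together with the well-definedness of the direct-integral decomposition on the operator domain: one must check that $\F$ maps $D(\mathcal A_0)$, as characterized in \eqref{eq:93}, onto the natural domain of $\int^\oplus\hat{\mathcal A}_0(\omega')\,d\omega'$, i.e. that $x_n u\in L^2$ and $|\omega'|^2 u\in L^2$ and $\partial_{x_n}^2 u\in L^2$ transform correctly — this is the content of the domain computation already displayed in the excerpt, so I would simply invoke it. With the decomposition in hand, for an arbitrary $z\in\mathbb C$ one chooses $\omega$ with $\Re z\le\omega$ and the estimate of the previous paragraph shows $(\mathcal A_0-z)^{-1}$ is a bounded operator, so $z\in\rho(\mathcal A_0)$; as $z$ was arbitrary, $\sigma(\mathcal A_0)=\emptyset$. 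This mirrors \cite[Proposition 7.1]{Hel2} in the two-dimensional case, the only new ingredient being that $|\omega'|^2$ now plays the role of a one-parameter nonnegative shift regardless of the ambient dimension, so the $2$D argument goes through verbatim. The main obstacle is purely bookkeeping — ensuring the bound in regime (ii) is uniform over the (now higher-dimensional, but still bounded) set of relevant $\omega'$ — and there is no real analytic difficulty beyond what is already in \cite{Hel2}.
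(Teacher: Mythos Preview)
Your approach via the partial Fourier transform and direct-integral fiber decomposition is sound in spirit and is a genuine alternative to the paper's argument, which instead uses the tensor-product semigroup identity $e^{-t\A_0}=e^{t\Delta_{x'}}\otimes e^{-t\mathcal L}$ together with the explicit decay $\|e^{-t\mathcal L}\|\le e^{-t^3/12}$ and the Laplace formula $(\A_0-z)^{-1}=\int_0^\infty e^{-t(\A_0-z)}\,dt$. The semigroup route gives an explicit integral bound (and even asymptotics of $C_\omega^0$ as $\omega\to+\infty$), while yours is more structural and avoids semigroups entirely.

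There is, however, a real gap in your regime~(ii). You claim that when $|\omega'|^2-\Re z\le 1$ the shifted spectral parameter $\zeta=z-|\omega'|^2$ ``ranges over a bounded subset'' of a half-plane, and then invoke continuity on a compact set. This is false: the condition $\Re z\le\omega$ puts no restriction on $\Im z$, so $\Im\zeta=\Im z$ is unbounded. What you actually get is that $\Re\zeta$ lies in the bounded interval $[-1,\omega]$, but $\zeta$ itself sweeps out a full vertical strip, which is not compact, and continuity alone gives nothing.

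The fix is short: the translation $x_n\mapsto x_n-\nu$ implements a unitary equivalence between $\mathcal L$ and $\mathcal L-i\nu$, so $\|(\mathcal L-\zeta)^{-1}\|$ depends only on $\Re\zeta$. Since $\Re\zeta$ lies in the compact interval $[-1,\omega]$ in regime~(ii), continuity of $\mu\mapsto\|(\mathcal L-\mu)^{-1}\|$ on $\R$ now does give a uniform bound. With this correction your argument is complete and correct.
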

\begin{proof}
Set
\begin{displaymath}
    \A_0 = -\Delta_{x^\prime} + \mathcal L  
\end{displaymath}
(equivalently we may set $\overline{-\Delta_{x^\prime} \otimes I + I
\otimes \mathcal L}$).\\
As
\begin{displaymath}
  e^{-t\A_0} = e^{t\Delta_{x^\prime}}\otimes e^{-t\mathcal L } \,,
\end{displaymath}
and since (see \cite{da07})
\begin{displaymath} 
  \|e^{-t\mathcal L }\|\leq e^{-\frac{t^3}{12}} \,,
\end{displaymath}
we obtain that
\begin{equation}
\label{eq:5} 
  \|e^{-t(\A_0-z)}\| \leq e^{-\frac{t^3}{12}+t\, \Re z } \,.
\end{equation}
Recalling that (see \cite{da07})
\begin{equation}   \label{eq:10}
   (\A_0-z)^{-1} = \int_0^{+\infty}e^{-t(\A_0-z)}dt\,,
\end{equation}
we may conclude that for any $\omega>0$ we have
\begin{displaymath} 
  \sup_{\Re z\leq\omega}\|(\A_0-z)^{-1}\|\leq  \int_0^{+\infty}e^{-\frac{t^3}{12}+t\omega}\,dt \,.
\end{displaymath}
As $\omega \to +\infty$, we have, using Laplace method,
\begin{equation}   \label{asy}
 \int_0^{+\infty}e^{-\frac{t^3}{12}+t\omega}\,dt \sim C\,  \omega^{-\frac 14} \, e^{\frac 43 \omega^\frac 32} \,,
\end{equation}
which gives a more precise information on $C_\omega^0$.  A universal
upper bound is obtained in Appendix \ref{appB}.
\end{proof}

\subsection{The half-space problem: Definitions}
\label{ssModelQcq}

\subsubsection{Notation}
Let
\begin{displaymath}
  \R^n_\pm  = \{ (x_1,\ldots,x_n)\in\R^n \, | \, \pm x_n>0\}\,,
\end{displaymath}
and $\vec J=(J^\prime,J_n)\in\mathbb{R}^n\setminus\{0\}$ where
$J^\prime\in\R^{n-1}$. We study here the spectrum and the behavior of
the resolvent of the operator
\begin{displaymath}
 \A^\# =\A^\# (\vec J): =-\Delta+i \vec J\cdot x
\end{displaymath}
acting on $L^2(\mathbb{R}_+^n)$ or on $L^2(\R^n_-)\times L^2(\R^n_+)$. \\
Here the superscript $\#$ means that $\A^\#$ is defined on a subset
of $ \mathfrak D^\#_n$, where 
\begin{equation}   \label{eq:102} 
  \begin{cases}
    \mathfrak D^\#_n=\{u\in H^2_{loc}(\overline{\R_+^n}) \,| \, u(x^\prime,0)=0 \} & \#=D \\
    \mathfrak D^\#_n = \{u\in H^2_{loc}(\overline{\R_+^n}) \,| \, u_{x_n}(x^\prime,0)=0 \} & \#=N \\
    \mathfrak D^\#_n = \{u\in H^2_{loc}(\overline{\R_+^n}) \,| \, u_{x_n}(x^\prime,0)= \, \kappa \,  u(x^\prime,0) \} & \#=R \\
    \mathfrak D^\#_n = \{u\in H^2_{loc}(\overline{\R_-^n}) \times
     H^2_{loc}(\overline{ \R_+^n}) \,|  \partial_{x_n} u_+(x^\prime ,  0)= & \\
    \kern 10em  \partial_{x_n}u_-(x^\prime ,  0) = \kappa[u_+  -  u_-]{(x^\prime,0)} \} &   \#=T \,.
  \end{cases}
\end{equation}
Naturally, $\mathfrak D^\#_n$ depends on $\kappa \geq 0$, when $\#\in
\{R,T\}$. We shall occasionally, therefore, write $\A^\#(\vec J,\kappa)$
or $\A^\#(\kappa)$ to emphasize the dependence on the parameters.  
We now attempt to obtain the domain of definition of $\A^\#$ so that
$\A^\#:D(\A^\#)\to L^2(\R^n_\#)$ is surjective (where
$\R^n_\#=\R^{n-1}\times\R_\#$). This has been
accomplished for Dirichlet boundary conditions by R. Henry
\cite{Hen2}. We now employ the same technique as in \cite{Hen2} to
obtain $D(\A^\#)$ when $\#\in\{N,R,T\}$. \\

\subsubsection{Definition by using the separation of variables}

We set, as above, $x'=(x_1,\dots,x_{n-1})$ to be the $(n-1)$ first
coordinates of a vector $x\in\mathbb{R}^n$ and $\vec
J=(J^\prime,J_n)$, and then let
\begin{equation}   \label{defCA_x'}
 \mathcal A_{x'} = \mathcal A_{x'} (J')= -\Delta_{x'}+iJ'\cdot x'\,,
\end{equation}\
acting on $L^2(\R^{n-1})$ and
\begin{equation}   \label{eq:9}
  \mathcal L _{x_n}^\#  = \mathcal L _{x_n} ^\# (J_n)=  -\frac{\partial^2}{\partial x_n^2} + i J_n x_n \,,
\end{equation}
acting on $L^2_\#$, which denotes either $L^2(\R_+,\C)$ when
$\#\in\{D,N,R\}$ or $ L^2(\R_+,\C)\times L^2(\R_-,\C)$ for the
transmission problem. \\
We recall that the domains of $\A_{x'}$ and $ \mathcal L _{x_n} ^\#$
have been well identified in \cite{Alm, Hen2,GHH,GH} and that
estimates for the resolvent have been established in each case (these
depend on $J'$, $J_n$ and on the value of $\#$ in $\{D,N,R,T\}$).  In
addition, it has been established, with the aid of the Hille-Yosida
theorem, that both $\mathcal A_{x'}$ and $ \mathcal L _{x_n} ^\# $ are
generators of contraction semigroups $(e^{-t\mathcal A_{x'}})_{t>0}$
and $(e^{-t \mathcal L _{x_n} ^\#})_{t>0}$, respectively (recall that
$\kappa\geq0$ when $\#\in\{R,T\}$). It can be easily verified that the
family $(e^{-t\mathcal A_{x'}}\otimes e^{-t \mathcal L _{x_n}
^\#})_{t>0}$ is a contraction semigroup in $L^2_\#$.  Thus, we can
define
\begin{definition}  \label{Definition4.2}
$\A^{\#,sv}$ is the generator of the semigroup $(e^{-t\mathcal
A_{x'}}\otimes e^{-t \mathcal L _{x_n}^\# })_{t>0}\,$.
\end{definition}
To derive the domain of the operator $ \A^{\#,sv}$, we follow the
approach of \cite{Hen2}. Recall, then, the following result (see
\cite[Theorem X.$49$]{ReSi}):
\begin{theorem}
\label{thmcoeur}
Let $\mathcal A$ be the generator of a contraction semigroup on a
Hilbert space $\mathcal H$. Let $\mathcal C \subset\mathcal D(\mathcal
A)$ be a dense subset of $\mathcal H$, such that
\begin{equation}   \label{stabi}
  e^{-t\mathcal A}\, \mathcal C \subset\mathcal C\,,
\end{equation}
Then, $\ \mathcal C$ is a \emph{core} for $\mathcal A\,$, that is
\begin{equation*}
\mathcal A = \overline{\mathcal A_{/\mathcal C}}\,.
\end{equation*}
\end{theorem}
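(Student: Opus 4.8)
The plan is to show that $B:=\overline{\mathcal A_{/\mathcal C}}$ coincides with $\mathcal A$. Since $\mathcal A$, being the generator of a strongly continuous semigroup, is closed, it is a closed extension of $\mathcal A_{/\mathcal C}$; hence $\mathcal A_{/\mathcal C}$ is closable and $B\subseteq\mathcal A$ (here, as the statement implicitly requires for $\overline{\mathcal A_{/\mathcal C}}$ to be meaningful, $\mathcal C$ is a linear subspace). It therefore suffices to prove $\mathcal D(\mathcal A)\subseteq\mathcal D(B)$. Write $T(t)=e^{-t\mathcal A}$; by contractivity, for every $\lambda>0$ the resolvent $R_\lambda=(\lambda+\mathcal A)^{-1}$ exists with $\|R_\lambda\|\le 1/\lambda$ and $R_\lambda f=\int_0^{+\infty}e^{-\lambda t}T(t)f\,dt$ (Bochner integral), and $R_\lambda$ commutes with $\mathcal A$ on $\mathcal D(\mathcal A)$.

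The heart of the argument is the claim that $R_\lambda u\in\mathcal D(B)$ with $(\lambda+B)R_\lambda u=u$ for every $u\in\mathcal C$. To prove this I would approximate the Bochner integral defining $R_\lambda u$: first truncate it to $[0,L]$, the tail being bounded by $\lambda^{-1}e^{-\lambda L}\|u\|$ since $\|T(t)\|\le1$; then replace $\int_0^L$ by Riemann sums $S=\sum_j e^{-\lambda t_j}T(t_j)u\,\Delta t_j$. Each such $S$ lies in $\mathcal C$, because $\mathcal C$ is a linear subspace invariant under every $T(t_j)$, and by strong continuity of $t\mapsto T(t)u$ a suitable sequence of these sums converges to $R_\lambda u$ in $\mathcal H$. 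Crucially, since $u\in\mathcal C\subseteq\mathcal D(\mathcal A)$ one may commute $\mathcal A$ with the semigroup, so $\mathcal A S=\sum_j e^{-\lambda t_j}T(t_j)(\mathcal A u)\,\Delta t_j$ is again a Riemann sum of the same type, which (same estimates, now with $\mathcal Au$ in place of $u$) converges to $\int_0^{+\infty}e^{-\lambda t}T(t)\mathcal A u\,dt=R_\lambda\mathcal A u=\mathcal A R_\lambda u$. Hence $R_\lambda u$ is a graph-norm limit of elements of $\mathcal C$, so $R_\lambda u\in\mathcal D(B)$ and $BR_\lambda u=\mathcal A R_\lambda u$; adding $\lambda R_\lambda u$ gives $(\lambda+B)R_\lambda u=u$.

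To finish, I would pass to all of $\mathcal D(\mathcal A)$ by closedness. Given $v\in\mathcal D(\mathcal A)$, put $f=(\lambda+\mathcal A)v$ and choose $f_n\in\mathcal C$ with $f_n\to f$ (possible since $\mathcal C$ is dense). Then $R_\lambda f_n\in\mathcal D(B)$ with $(\lambda+B)R_\lambda f_n=f_n\to f$, while $R_\lambda f_n\to R_\lambda f=v$ by boundedness of $R_\lambda$. Since $B$, hence $\lambda+B$, is closed, it follows that $v\in\mathcal D(B)$ and $(\lambda+B)v=f=(\lambda+\mathcal A)v$, i.e.\ $Bv=\mathcal A v$. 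Thus $\mathcal D(\mathcal A)\subseteq\mathcal D(B)$ and $B=\mathcal A$, which is the assertion.

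The step I expect to be the real obstacle is the middle one: the averaged element $R_\lambda u$ (or, equally well, $s^{-1}\int_0^sT(t)u\,dt$) does not in general belong to $\mathcal C$, so the content of the theorem is precisely the construction of approximants \emph{inside} $\mathcal C$ whose images under $\mathcal A$ also converge. This is exactly where both hypotheses on $\mathcal C$ — that it is a linear subspace and that it is invariant under the semigroup — are used, and it requires a little care to run the Riemann-sum approximation simultaneously in $\mathcal H$ and after applying $\mathcal A$, relying on the commutation $\mathcal A T(t)=T(t)\mathcal A$ on $\mathcal D(\mathcal A)$, the strong continuity of the relevant orbits, and the uniform bound $\|T(t)\|\le1$ controlling the truncation error.
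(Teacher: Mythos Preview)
Your proof is correct and is essentially the standard argument for this classical result. However, the paper does not supply its own proof: it states the theorem and cites \cite[Theorem X.49]{ReSi} (Reed--Simon), using it as a black box. So there is nothing in the paper to compare against beyond that reference; your write-up is in fact a faithful reconstruction of the Reed--Simon style argument, with the Riemann-sum approximation of the Laplace integral $R_\lambda u=\int_0^\infty e^{-\lambda t}T(t)u\,dt$ playing the role that the invariance hypothesis \eqref{stabi} is designed for.
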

We now apply this theorem with $\mathcal A= \A^{\#,sv}$  and 
\begin{equation}  \label{choice1}
 \mathcal C=\mathcal D (\mathcal A_{x'})\odot\mathcal D( \mathcal L _{x_n}^\# )
\end{equation}
that is the set of all finite linear combinations of functions of the
form $f\otimes g=f(x')g(x_n)$, where $f\in\mathcal D (\mathcal
A_{x'})$ and $g\in\mathcal D ( \mathcal L _{x_n}^\# )$. \\ Then it is
clear that $\mathcal C$ satisfies the conditions of Theorem
\ref{thmcoeur}, hence
\begin{equation}  \label{defdom}
\A^{\#,sv} = \overline{{\mathcal
    A^{\#,sv}}_{/\mathcal C}}\,.
\end{equation}
\begin{remark}\label{choixdeC}~\\
Instead of $\mathcal D( \mathcal L _{x_n}^\# )$ in \eqref{choice1} we
may use the span of the eigenfunctions of $ \mathcal L _{x_n}^\#
$. Note that it has been shown in \cite{Alm,Hen2,GHH} that this space
is dense in $L^2_\#$. Furthermore, it is a subset of $\mathcal S_\#$
and \eqref{stabi} is still satisfied, and hence we may still apply to
it Theorem \ref{thmcoeur}.  In this way, it is easier to estimate any
expression involving functions in $\CC$.
\end{remark}

We may now conclude from \eqref{defdom} that the domain of $A^{\#,sv}$
is given by
\begin{eqnarray}   \label{carac}
  \mathcal D (\A^{\#,sv}) &=& \{u\in L^2_\# : \exists (u_j)_{j\geq1}\in\mathcal C^\mathbb{N}\,,~
  u_j\underset{\tiny{j\to+\infty}}{\overset{L^2_\#}{\longrightarrow}}u\,,\nonumber\\
  & &\quad\quad \quad\quad  (\A^{\#,sv}u_j)_{j\geq1}~\textrm{is
    a Cauchy sequence in } L^2_\#~\}\,.\label{caracDomA+} 
\end{eqnarray}

Clearly, $(-\Delta+i\ell)u\in L^2_\#$ for any $u\in \mathcal D
(\mathcal A^{\#,sv}) $. Nevertheless, it is not obvious that $\Delta
u\in L^2_\#$ (and hence also $\ell u\in L^2_\#$). In the following we
obtain bounds on $\|u\|_{H^2(\mathbb{R}_+^n)}$ and $\|\ell
u\|_{L^2(\mathbb{R}_+^n)}\,$ in terms of the graph norm $u\mapsto
\sqrt{ \|u\|_2^2+\|\A^{\#,sv}u\|^2_2}$, thereby allowing for a
representation of $\mathcal D (\A^{\#,sv})$, which is much more
transparent than \eqref{defdom} or \eqref{carac}.  We state the result
for the Robin case only, as the statement for the transmission
problem are similar.
\begin{proposition}\label{Proposition4.4}
If $|\vec J|=J \neq 0$, we have
\begin{equation}\label{descrDomA+}
  \mathcal D (\A^{R,sv}) =  H^2(\mathbb{R}_+^n)\cap
  L^2(\mathbb{R}_+^n ; \ell(x)^2\,dx)\cap\mathfrak D^R_n\,.
\end{equation}
Furthermore, for all $u\in\mathcal D (\A^{R,sv})\,$, we have 
\begin{equation}  \label{eq:4.13}
\| \nabla u\|^2 \leq \Re \langle \A^{R,sv}\, u\,,\, u\rangle\,,
\end{equation}
and 
\begin{equation}\label{estDomA+a}
\|\Delta u\|_{L^2(\mathbb{R}_+^n)}^2 + \|\ell u\|_{L^2(\mathbb{R}_+^n)}^2 \leq \|\A^{R,sv} u\|_{L^2(\mathbb{R}_+^n)}^2+
J \, \|\nabla u\|_{L^2(\mathbb{R}_+^n)}\|u\|_{L^2(\mathbb{R}_+^n)}\,. 
\end{equation}
\end{proposition}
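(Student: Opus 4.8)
The plan is to follow the scheme of \cite{Hen2}: first establish the a priori bounds \eqref{eq:4.13} and \eqref{estDomA+a} on the core $\mathcal C$ of \eqref{choice1} (which, by Remark \ref{choixdeC}, may be taken to consist of finite sums of tensor products $f\otimes g$ with $f\in\mathcal D(\mathcal A_{x'})$ and $g$ an eigenfunction of $\mathcal L_{x_n}^R$, so that every element of $\mathcal C$ decays like an Airy function in $x_n$ and, after a harmless truncation in $x'$, all the integrations by parts below are legitimate), and then use the abstract description of $\mathcal D(\A^{R,sv})$ given by \eqref{defdom} to pin down the domain. On $\mathcal C$ one has, by Definition \ref{Definition4.2}, $\A^{R,sv}(f\otimes g)=(\mathcal A_{x'}f)\otimes g+f\otimes(\mathcal L_{x_n}^Rg)=(-\Delta+i\ell)(f\otimes g)$, so that $\A^{R,sv}=-\Delta+i\ell$ there. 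Integrating by parts on $\R_+^n$, using that the outward unit normal on $\partial\R_+^n$ is $-e_n$ and that $u$ satisfies $\partial_{x_n}u(\cdot,0)=\kappa\,u(\cdot,0)$, one obtains
\begin{equation*}
 \langle \A^{R,sv}u,u\rangle=\|\nabla u\|^2+\kappa\int_{\partial\R_+^n}|u|^2\,ds+i\int_{\R_+^n}\ell(x)\,|u(x)|^2\,dx\,,
\end{equation*}
whose real part is $\geq\|\nabla u\|^2$ since $\kappa\geq0$; this is \eqref{eq:4.13}. For \eqref{estDomA+a} I would expand $\|\A^{R,sv}u\|^2=\|\Delta u\|^2+\|\ell u\|^2+2\Re\langle-\Delta u,i\ell u\rangle$ and integrate by parts once more: writing $\nabla(\ell u)=\vec J\,u+\ell\nabla u$, the contributions of $\int_{\R_+^n}\ell|\nabla u|^2$ and of the Robin boundary term are real, so $\Re\langle-\Delta u,i\ell u\rangle=\Im\int_{\R_+^n}(\vec J\cdot\nabla u)\,\bar u\,dx$, and bounding this term by Cauchy--Schwarz (with $|\vec J|=J$) gives \eqref{estDomA+a}.

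The next ingredient is a $\kappa$-uniform elliptic estimate on the half-space: for $v\in\mathcal C$, and more generally for $v\in H^2(\R_+^n)\cap\mathfrak D^R_n$, one has $\|v\|_{H^2(\R_+^n)}\leq C\,(\|v\|_{L^2}+\|\Delta v\|_{L^2})$ with $C$ independent of $\kappa\geq0$. This follows from a partial Fourier transform in $x'$: after one integration by parts in $x_n$ using the transformed Robin condition $\partial_{x_n}\hat v(\omega',0)=\kappa\hat v(\omega',0)$,
\begin{equation*}
 \|\Delta v\|_{L^2}^2=\big\||\omega'|^2\hat v\big\|^2+\big\|\partial_{x_n}^2\hat v\big\|^2+2\big\||\omega'|\,\partial_{x_n}\hat v\big\|^2+2\kappa\int_{\R^{n-1}}|\omega'|^2\,|\hat v(\omega',0)|^2\,d\omega'\,,
\end{equation*}
and, every term on the right being nonnegative, the first three control the homogeneous $H^2$-seminorm (the remaining mixed second derivatives being handled by interpolation) while the $\kappa$-term only helps.

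With these facts the domain identification is soft. For the inclusion $\mathcal D(\A^{R,sv})\subseteq H^2(\R_+^n)\cap L^2(\R_+^n;\ell^2\,dx)\cap\mathfrak D^R_n$, take $u\in\mathcal D(\A^{R,sv})$; by \eqref{defdom} there is a sequence $u_j\in\mathcal C$ with $u_j\to u$ in $L^2$ and $(\A^{R,sv}u_j)$ Cauchy. Applying \eqref{eq:4.13} to $u_j-u_k$ shows $(u_j)$ is Cauchy in $H^1$; then \eqref{estDomA+a} shows $(\Delta u_j)$ and $(\ell u_j)$ are Cauchy in $L^2$, so $\Delta u,\ell u\in L^2$ and $\A^{R,sv}u=-\Delta u+i\ell u$; the elliptic estimate then upgrades the convergence $u_j\to u$ to $H^2$, so $u\in H^2(\R_+^n)$, and the trace theorem passes the Robin condition to the limit. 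For the reverse inclusion, $\A^{R,sv}$ is m-accretive (since $e^{-t\A^{R,sv}}=e^{-t\mathcal A_{x'}}\otimes e^{-t\mathcal L_{x_n}^R}$ is a contraction semigroup, cf.\ Definition \ref{Definition4.2}), so $\A^{R,sv}+c$ maps $\mathcal D(\A^{R,sv})$ bijectively onto $L^2(\R_+^n)$ for every $c>0$. Let $\widetilde{\A}$ denote $-\Delta+i\ell$ on the domain $H^2(\R_+^n)\cap L^2(\R_+^n;\ell^2\,dx)\cap\mathfrak D^R_n$; by the previous step $\widetilde{\A}$ extends $\A^{R,sv}$. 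If $(\widetilde{\A}+c)v=0$ for such a $v$, the integration by parts above (now legitimate since $v\in H^2\cap L^2(\ell^2\,dx)\cap\mathfrak D^R_n$) gives $\Re\langle(\widetilde{\A}+c)v,v\rangle=\|\nabla v\|^2+\kappa\|v\|_{L^2(\partial\R_+^n)}^2+c\|v\|^2=0$, hence $v=0$; thus $\widetilde{\A}+c$ is injective, and together with the surjectivity of $\A^{R,sv}+c$ this forces $\widetilde{\A}+c=\A^{R,sv}+c$, i.e.\ $\mathcal D(\A^{R,sv})=\mathcal D(\widetilde{\A})$, which is \eqref{descrDomA+}. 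The bounds \eqref{eq:4.13} and \eqref{estDomA+a} then hold on all of $\mathcal D(\A^{R,sv})$ by density (or directly, the integrations by parts being now justified).

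The step I expect to be the main obstacle is making the two integrations by parts of the first paragraph fully rigorous on $\mathcal C$, since one must control the decay as $|x|\to\infty$ in the presence of the unbounded weight $\ell$; this is precisely where the explicit (Airy-type) decay of the eigenfunctions of $\mathcal L_{x_n}^R$ and a truncation in $x'$ enter, and it is the reason for preferring the core of Remark \ref{choixdeC} to $\mathcal D(\mathcal A_{x'})\odot\mathcal D(\mathcal L_{x_n}^R)$ as such. The transmission case is entirely analogous: the boundary terms are then supported on the interface $\{x_n=0\}$ and involve the jump $u_+-u_-$, but the sign $\kappa\geq0$ again renders them favourable, so the same argument applies with $\mathfrak D^R_n$ replaced by $\mathfrak D^T_n$.
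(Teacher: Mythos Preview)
Your proof is correct and follows essentially the same route as the paper's: establish \eqref{eq:4.13} and \eqref{estDomA+a} on the core via the two integrations by parts (using that the Robin boundary term has the right sign when $\kappa\geq0$), pass to the limit along a $\mathcal C$-approximating sequence to obtain the inclusion $\mathcal D(\A^{R,sv})\subseteq\hat\Dg$, and close by the surjectivity/injectivity argument for $\A^{R,sv}+1$ on $\hat\Dg$. The one place where you go a bit further than the paper is the elliptic step: the paper simply invokes ``the regularity of the Robin Laplacian, which follows from the regularity of the Neumann Laplacian (see \cite{GH})'', whereas you supply a direct, $\kappa$-uniform proof via the partial Fourier transform identity for $\|\Delta v\|^2$---this is a nice self-contained alternative and makes the $\kappa$-independence transparent.
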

\begin{proof}
The proof below is adapted from Henry \cite{Hen2}. Let $u\in\mathcal D
(\A^{R,sv})$. By \eqref{carac} there exists $(u_j)_{j\geq1}\in\mathcal
C^\mathbb{N}$ such that
$u_j\underset{\tiny{j\to+\infty}}{\overset{L^2}{\longrightarrow}}u$
and $(\mathcal A u_j)_{j\geq1}$ is a Cauchy sequence. Then, as
\begin{equation*}
\Re\sca{\mathcal
    A v}{v} = \|\nabla v\|_{L^2(\mathbb{R}_+^n)}^2 + \kappa \int_{\mathbb
    R^{n-1}} |v (x',0)|^2 dx' \geq \|\nabla v\|_{L^2(\mathbb{R}_+^n)}^2 \,,
\end{equation*} 
(note that $\kappa \geq 0$) it follows that $(\nabla u_j)_{j\geq1}$ is
a Cauchy sequence in $L^2(\mathbb{R}_+^n)$, and hence
\begin{equation}
    \label{cvNabla} u_j\underset{\tiny{j\to+\infty}}{\overset{H^1}{\longrightarrow}}    u\,, 
\end{equation}
and $u\in H^1(\mathbb{R}_+^n)\,$.\\
At this stage, we have established that
\begin{equation} 
\mathcal D (\A^{R,sv}) \subset H^1(\mathbb{R}_+^n)\,.  
\end{equation}
Moreover the first trace of $u_j$ on $\{x_n=0\}$, which is defined by
\begin{equation*}
  (\gamma_0u_j) (x')= u_j (x',0)\,, 
\end{equation*}
converges to $\gamma_0 u$ in $H^\frac 12(\mathbb R^{n-1})$. For the
second trace, which is defined by
\begin{equation*}
  (\gamma_1 u_j ) (x')= \partial_{x_n} u_j (x',0)\,, 
\end{equation*}
we use standard elliptic estimates to obtain that $u\in
H^2_{loc}(\R^n_+)$ and hence also that $\gamma_1u\in H^{-\frac
12}_{loc} (\mathbb R^{n-1})$. Consequently, the Robin boundary
condition is satisfied by $u$ as well.\\
In order to prove (\ref{estDomA+a}), we write (all the norms denoting
$L^2$ norms)
\begin{eqnarray} \label{eq:4.20}
    \|\A u_j\|^2 & = & \sca{(-\Delta+i\ell)u_j}{(-\Delta+i\ell)u_j} \nonumber\\
    & = & \|\Delta u_j\|^2+ \|\ell u_j\|^2 + 2\Im\sca{-\Delta u_j}{\ell
      u_j}\,.\label{estDomInter} 
\end{eqnarray} 
Here we use the regularity of $u_j$ which follows from the fact that
$\Dg(\mathcal A_{x'})$ is given by \eqref{eq:93} and $\Dg(\mathcal
L^R_{x_n})$ by \eqref{eq:107}.\\
We now observe that 
\begin{equation} \label{intbyparts} \sca{-\Delta
      u_j}{\ell u_j} = \int_{\mathbb{R}_+^n}\nabla u_j(x)\cdot\overline{\nabla (\ell
      u_j)(x)}dx - \int_{\mathbb R^{n-1}} \partial_{x_n} u_j (x',0) \ell (x',0)
    \bar u_j (x',0)\, dx'\,.  
\end{equation} 
Using the Robin condition, this reads
\begin{equation}\label{intbyparts1} \sca{-\Delta u_j}{\ell u_j}
    = \int_{\mathbb{R}_+^n}\nabla u_j(x)\cdot\overline{\nabla (\ell u_j)(x)}dx + \kappa
    \int_{\mathbb R^{n-1}}\ell (x',0) |u_j (x',0)|^2 \, dx'\,.
\end{equation} 
Hence we have 
\begin{eqnarray*}
    \Im \sca{-\Delta u_j}{\ell u_j} & = & \Im\int_{\mathbb{R}_+^n}\nabla u_j(x)\cdot\overline{\nabla (\ell u_j)(x)}\,dx\\
    & = & \Im\int_{\mathbb{R}_+^n} \left( \vec J\cdot\nabla
    u_j(x)\right) \, \overline{u_j(x)}\,dx\,.  
\end{eqnarray*} 
This implies 
\begin{equation*}
|\Im \sca{-\Delta u_j}{\ell u_j}|\leq J \,\|\nabla u_j\|\, \|u_j\|\,. 
\end{equation*}
Thus, the estimate (\ref{estDomA+a}) holds for $u_j$ for all $1\leq
j$.  Consequently, $(u_j)_{j\geq1}$ is a Cauchy sequence in
$L^2(\mathbb{R}_+^n ; |\ell(x)|^2dx)$ and $\Delta u_j$ is a Cauchy
sequence in $L^2 (\mathbb R^n_+)$.  Hence $u\in H^1( \mathbb{R}_+^n)$,
$\Delta u \in L^2(\mathbb{R}_+^n)$ and $u$ satisfies the Robin
condition.  We can then use the regularity of the Robin Laplacian
which follows from the regularity of the Neumann Laplacian (see
\cite{GH}) in order to show that $u\in H^2(\mathbb R_+^n)$.  We have
thus established that
\begin{displaymath} 
\mathcal D
      (\A^{R,sv}) \subseteq \hat{\Dg}\overset{def}{=}H^2(\mathbb{R}_+^n)\cap
      L^2(\mathbb{R}_+^n ; \ell(x)^2\,dx)\cap\mathfrak D^R_n\,.
\end{displaymath} 
The proof of the converse inclusion goes as follows. Let
$u\in\hat{\Dg}$. Clearly $f=(-\Delta+i\ell +1 )u\in L^2(\R^n_+)$.
Since $-\Delta+i\ell +1: \mathcal D(\A^{R,sv}) \to L^2(\R^n_+)$ is
surjective, it follows that there exists $ v\in\mathcal D(\A^{R,sv})$
such that $(-\Delta+i\ell +1 )v=f$. Thus $w=u-v\in\hat{\Dg}$ and
satisfies $(-\Delta+i\ell +1)w=0$. By the injectivity of
$(-\Delta+i\ell +1)$ on $ \hat \Dg$ we obtain that $u=v$ and hence
$\hat{\Dg}\subseteq\mathcal D(\A^{R,sv})$.
\end{proof}
\begin{remark} 
Similar arguments lead to the proof of the same statement for the
transmission case.
\end{remark}
\begin{remark}\label{RmLM}
Using \cite[Theorem 2.1]{AH}, we can also define the operator via a
generalized Lax-Milgram lemma (see also \cite{GH}). It can be shown
that the two definitions coincide. This approach has the advantage
that it is applicable in cases where the potential is not a sum of
potentials with separate variables. Nevertheless, we may apply the
Lax-Milgram approach only under the following assumption on $V$:
\begin{equation}\label{ineqV} 
|\nabla V| \leq C \sqrt{1+ V(x)^2}\,.
\end{equation} 
We note that in the next section we consider potentials that are
neither separable nor satisfy (\ref{ineqV}). However, since in Section
\ref{s7} we will consider again a separable potential, preference was
given to the separation of variables technique over the other
approach.  
\end{remark}

\subsection{Determination of the spectrum}

Once $\A^\#$ has been defined in Definition \ref{Definition4.2} (we
occasionally write it as either $\A^\# (\vec J)$ or $\A^\#
(J,\theta,\vec v)$ with $\theta \in (-\pi, +\pi]$ to emphasize the
dependence on the parameters), we can focus our attention on its
spectrum $\sigma(\A^\# ((J,\theta,\vec v)) )$ and on its resolvent.
\begin{proposition} 
\label{propSpecA+} 
Let $\vec J:= (J^\prime,J_n)= J\, (\sin \theta \,
\vec{v},\cos\theta)\,$, where $\vec v\in \mathbb{S}^{n-2}$ and $J >
0$.  Then:
\begin{itemize} 
\item 
If $\theta\neq 0,\pi\,$, we have $ \sigma(\A^\#(J,\theta,\vec v) )=
\emptyset\,$.  
\item 
Otherwise, we have
\begin{equation} \label{eq:8}
\sigma(\A^\# (J,0,\vec v)) = \Union_{r\geq0}\{ \sigma(\mathcal L ^\# (J) )+r \}\,, 
\end{equation} 
and 
\begin{equation} \label{eq:8a} 
\sigma(\A^\# (J,\pi,\vec v)) = \overline{\sigma(\A^\# (J,0,\vec v))} \,. 
\end{equation}
\end{itemize} 
\end{proposition}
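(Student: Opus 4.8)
The plan is to exploit the tensor-product (separation of variables) structure established in Definition \ref{Definition4.2}, namely that $\A^\#(\vec J)=\overline{\A_{x'}(J')\otimes I+I\otimes\LL^\#_{x_n}(J_n)}$, and to combine it with the explicit spectral information already available for the two one-dimensional factors. First I would reduce to the normalized form: after an orthogonal change of variables in $x'$ (which is unitary and leaves $-\Delta_{x'}$ invariant), we may assume $\vec J=(J',J_n)$ with $J'=(0,\dots,0,J\sin\theta)$ along one coordinate axis, or more simply treat $\A_{x'}(J')$ directly. The key structural fact is that the semigroup factorizes, $e^{-t\A^\#(\vec J)}=e^{-t\A_{x'}(J')}\otimes e^{-t\LL^\#_{x_n}(J_n)}$, and hence, via the Laplace-transform formula \eqref{eq:10} and the standard spectral-mapping relation for tensor sums of semigroup generators, the spectrum of $\A^\#(\vec J)$ is the closure of $\sigma(\A_{x'}(J'))+\sigma(\LL^\#_{x_n}(J_n))$.

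The next step is to insert what is known about each factor. By Lemma \ref{lemModelRn} (applied in dimension $n-1$), if $J'\neq 0$, i.e. $\sin\theta\neq0$, then $\sigma(\A_{x'}(J'))=\emptyset$; consequently $\sigma(\A^\#(\vec J))=\emptyset$, giving the first bullet, for $\theta\neq0,\pi$. For $\theta=0$ we have $J'=0$, so $\A_{x'}=-\Delta_{x'}$, whose spectrum is $[0,+\infty)=\{r:r\geq0\}$ (it has no eigenvalues, but the spectrum is this half-line), while the $x_n$-factor contributes $\sigma(\LL^\#(J))$ with $J=J_n>0$. The tensor-sum spectral identity then yields
\begin{displaymath}
\sigma(\A^\#(J,0,\vec v))=\overline{\bigcup_{r\geq0}\bigl(\sigma(\LL^\#(J))+r\bigr)},
\end{displaymath}
and one checks the union is already closed (each $\sigma(\LL^\#(J))+r$ is closed, the family is increasing in a suitable sense and the eigenvalue set of $\LL^\#$ is discrete with real parts tending to $+\infty$), giving \eqref{eq:8}. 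Finally \eqref{eq:8a} follows from the identity $\A^\#(J,\pi,\vec v)=\overline{\A^\#(J,0,\vec v)}$ obtained by complex conjugation $u\mapsto\bar u$, which sends $-\Delta+i\vec J\cdot x$ to $-\Delta-i\vec J\cdot x$ and preserves the boundary conditions in $\mathfrak D^\#_n$ (since $\kappa$ is real), together with the general fact $\sigma(\bar A)=\overline{\sigma(A)}$.

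The main obstacle I anticipate is the rigorous justification of the spectral-mapping/tensor-sum identity in this non-self-adjoint, non-normal setting: one cannot simply invoke the spectral theorem. The clean route is to argue at the resolvent level. For $z$ with $\Re z$ sufficiently negative relative to both factors, write $(\A^\#(\vec J)-z)^{-1}=\int_0^{+\infty}e^{-t(\A^\#(\vec J)-z)}\,dt=\int_0^{+\infty}e^{tz}\,e^{-t\A_{x'}(J')}\otimes e^{-t\LL^\#_{x_n}(J_n)}\,dt$, using the contraction bounds for each factor (and the refined bound $\|e^{-t\LL^\#}\|\le C e^{-t\lambda^\#(J)}$ from Propositions \ref{lemmaunifbis}–\ref{lemmaunifbisTimp} / \cite[Lemma 4.2]{AlHe} in the transmission and Robin cases, or $\|e^{-t\LL^D}\|,\|e^{-t\LL^N}\|$ decaying exponentially), so the integral converges and defines a bounded operator; this shows resolvent existence off $\overline{\sigma(\A_{x'})+\sigma(\LL^\#)}$. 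For the reverse inclusion — that every point of the sum-spectrum is genuinely spectral — one uses that the eigenfunctions of $\LL^\#(J)$ are complete in $L^2_\#$ (Propositions \ref{CompRa}, \ref{CompR}) together with approximate eigenfunctions (Weyl sequences) for $-\Delta_{x'}$ at each $r\ge0$, and forms tensor products of these to produce a singular sequence for $\A^\#(\vec J)$ at each point of $\sigma(\LL^\#(J))+r$. Care is needed that these tensor products lie in the domain $\mathcal D(\A^\#)$ as described in Proposition \ref{Proposition4.4} and the analogue for the transmission case, but this is guaranteed since $\mathcal C=\mathcal D(\A_{x'})\odot\mathcal D(\LL^\#_{x_n})$ is a core. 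When $\theta\notin\{0,\pi\}$ the emptiness is the softer statement, following directly from $\sigma(\A_{x'})=\emptyset$ and the convergent Laplace integral; the genuinely delicate bookkeeping is confined to the $\theta=0$ case and the closedness of the union.
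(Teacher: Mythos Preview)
Your approach is correct, but for the main case $\theta=0$ it is more laborious than what the paper actually does. The paper's proof consists of a single observation: since $J'=0$ when $\theta=0$, a partial Fourier transform in $x'$ conjugates $\A^\#$ to the constant-fiber direct integral $\widehat\A^\#=\LL^\#_{x_n}+|\omega'|^2$ over $\omega'\in\R^{n-1}$, and one reads off $\sigma(\A^\#)=\sigma(\widehat\A^\#)=\bigcup_{r\ge0}(\sigma(\LL^\#(J))+r)$ immediately from the standard spectral theory of direct integrals. This sidesteps entirely the ``main obstacle'' you identify---the tensor-sum spectral identity for non-normal generators---because the $x'$-factor is the self-adjoint Laplacian, so no abstract spectral mapping for non-self-adjoint tensor sums is needed, and no Weyl-sequence construction is required for the reverse inclusion.

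For the remaining two assertions the paper's written proof is in fact silent: it only treats \eqref{eq:8}. Your argument for \eqref{eq:8a} via complex conjugation is the intended one. Your argument for the emptiness of the spectrum when $\theta\neq0,\pi$ (Laplace integral of the factored semigroup, using the Gaussian-in-$t$ decay of $\|e^{-t\A_{x'}(J')}\|$ from \eqref{eq:5}) is exactly what the paper carries out in the very next proposition (see \eqref{ResA+a} and its proof), so you are effectively anticipating that step. In short: your proof works, but for $\theta=0$ the Fourier-transform route is both shorter and avoids the delicate point you flag.
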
 
\begin{proof} 
To prove \eqref{eq:8} we apply the partial Fourier transform $\mathcal
F$ defined by
\begin{equation} \label{eq:31}
( \F f) (\omega',x_n) = \int_{\R^{n-1}} e^{i\omega'\cdot
        x^\prime}f(x^\prime,x_n) \, dx^\prime \,, 
\end{equation} 
to $\A^\#$ to obtain (recall that $\theta=0$)
\begin{equation}\label{deffourier}
    \widehat\A^\# := \F\,\A^\#\, \F^{-1}  =
      \mathcal L _{x_n} ^\# + |\omega'|^2 \,.  
\end{equation} 
As $ \sigma(\A^\#)=\sigma\big(\widehat\A^\# \big)$ we readily obtain
\eqref{eq:8}.
\end{proof} 
Henry's lemma \cite[Lemma 2.8]{Hen2}
  can be extended to any realization $\#$ in the following way:
\begin{proposition}\label{lemAngle}~ 
\begin{enumerate} 
\item 
For every $\omega\in\mathbb{R}\,$, there exists $C_\omega>0$ such that
\begin{equation} \label{ResA+a} 
\sup_{\tiny{\Re z \leq \omega}}\|(\A^\#(J,\theta, \vec v) -z)^{-1}\| \leq C_\omega \,
        \exp \Big\{\frac{C_\omega}{J |\sin\theta|}\Big\}\,.  
\end{equation}
\item 
Let $0 < J_0 < J_1$ and $\kappa_0 >0$. There exists
$K(J_0,J_1,\varepsilon, \kappa_0) >0$ such that for all
$\varepsilon>0\,$, $\vec v\in \mathbb{S}^{n-2}\,$, $J_0\leq J \leq
J_1$, and $\kappa \in [0, \kappa_0]$
\begin{equation} \label{resAngle1}
        \sup_{\tiny{\begin{array}{c}\theta\in[-\pi,\pi]\\ \Re z\leq \lambda^\#(J \cos
              \theta,\kappa) - \varepsilon \end{array}}}\|(\A^\# (J,\theta,\vec
        v)- z)^{-1}\|\leq K(J_0,J_1,\varepsilon, \kappa_0) \,, 
\end{equation} 
where $\lambda^\#(J\cos\theta,\kappa)=\min \Re\sigma\big(\mathcal L
^\#(J\cos\theta,\kappa)\big) $ (when $\#\in \{D,N\}$ the dependence on
$\kappa$ should be omitted).
\end{enumerate} 
\end{proposition}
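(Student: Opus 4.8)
The plan is to build everything from the separation of variables set up in Definition~\ref{Definition4.2} and from the half-line and full-space semigroup estimates of Sections~\ref{s2}--\ref{sModels}. After an orthogonal change of variables in $x'$ — which is unitary, fixes the hyperplane $\{x_n=0\}$ and the boundary realization $\#$, and conjugates $\A^\#(\vec J)$ to an operator of the same type — I may assume $J'=J|\sin\theta|\,e_1$, so that $\A^\#(J,\theta,\vec v)$ is the closure of $\A_{x'}(J')\otimes I+I\otimes\LL^\#_{x_n}(J_n)$ with $|J'|=J|\sin\theta|$ and $J_n=J\cos\theta$, whence $e^{-t\A^\#}=e^{-t\A_{x'}(J')}\otimes e^{-t\LL^\#_{x_n}(J_n)}$. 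I will use three facts about the two factors: $\|e^{-t\A_{x'}(J')}\|\le e^{-|J'|^2t^3/12}$ (decompose $\A_{x'}=\LL_{x_1}(|J'|)\otimes I+I\otimes(-\Delta_{x''})$, use $\|e^{t\Delta_{x''}}\|\le1$ and, after the dilation $x_1\mapsto|J'|^{1/3}x_1$, the bound $\|e^{-t\LL}\|\le e^{-t^3/12}$ recalled in the proof of Lemma~\ref{lemModelRn}); $\|e^{-t\LL^\#_{x_n}(J_n)}\|\le1$, since $\LL^\#_{x_n}$ generates a contraction semigroup (Sections~\ref{s2}--\ref{s3}); and, for every $\epsilon>0$ and every compact interval $[\jf_0,\jf_1]\subset(0,+\infty)$, the estimate $\|e^{-t\LL^\#_{x_n}(J_n,\kappa)}\|\le C_\epsilon(\jf_0,\jf_1,\kappa_0)\,e^{-t(\lambda^\#(|J_n|,\kappa)-\epsilon)}$ valid whenever $\jf_0\le|J_n|\le\jf_1$ and $\kappa\in[0,\kappa_0]$ — this is Proposition~\ref{lemmaunifbis} for $\#\in\{N,R\}$ (Neumann being $\kappa\equiv0$), Proposition~\ref{lemmaunifbisT} for $\#=T$, and for $\#=D$ it follows, with uniformity in $\jf$ by dilation, from \cite[Lemma~4.2]{AlHe}; here I also use the conjugation symmetry $\LL^\#_{x_n}(-\jf,\kappa)=\overline{\LL^\#_{x_n}(\jf,\kappa)}$, hence $\lambda^\#(-\jf,\kappa)=\lambda^\#(\jf,\kappa)$, to reduce to $J_n>0$. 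Since each tensor factor decays, $\int_0^{+\infty}e^{tz}e^{-t\A^\#}\,dt$ converges in operator norm for all $z$ considered below, and by holomorphy it equals $(\A^\#-z)^{-1}$.

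For part~(1), when $\theta\neq0,\pi$ I set $a:=J|\sin\theta|=|J'|>0$ and use only $\|e^{-t\A_{x'}(J')}\|\le e^{-a^2t^3/12}$ together with $\|e^{-t\LL^\#_{x_n}}\|\le1$, obtaining for $\Re z\le\omega$ that $\|(\A^\#-z)^{-1}\|\le\int_0^{+\infty}e^{t\omega-a^2t^3/12}\,dt$. After the substitution $\tau=a^{2/3}t$ this equals $a^{-2/3}\int_0^{+\infty}e^{\tau\omega a^{-2/3}-\tau^3/12}\,d\tau$, and the Laplace-method asymptotics \eqref{asy} (together with the elementary inequality $a^{-s}\le e^{s/a}$ for $s>0$) give the bound $C_\omega\exp\{C_\omega/a\}=C_\omega\exp\{C_\omega/(J|\sin\theta|)\}$, which is \eqref{ResA+a}.

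For part~(2) I fix once and for all some $\delta\in(0,1)$ and split according to $\theta$. On the region $|\cos\theta|\ge\delta$ one has $J_0\delta\le|J_n|=J|\cos\theta|\le J_1$, so applying the uniform semigroup bound with $\epsilon=\varepsilon/2$ and $[\jf_0,\jf_1]=[J_0\delta,J_1]$ gives, for $\Re z\le\lambda^\#(J\cos\theta,\kappa)-\varepsilon$,
\[
\|(\A^\#-z)^{-1}\|\le C_{\varepsilon/2}(J_0\delta,J_1,\kappa_0)\int_0^{+\infty}e^{t(\Re z-\lambda^\#(|J_n|,\kappa)+\varepsilon/2)}\,dt\le\frac{2\,C_{\varepsilon/2}(J_0\delta,J_1,\kappa_0)}{\varepsilon}\,.
\]
On the complementary region $|\cos\theta|<\delta$ one has $J|\sin\theta|\ge J_0\sqrt{1-\delta^2}>0$, while $\Re z\le\lambda^\#(J\cos\theta,\kappa)-\varepsilon\le\omega_0$ with $\omega_0:=\sup\{\lambda^\#(\jf,\kappa):0<\jf\le J_1\delta,\ \kappa\in[0,\kappa_0]\}<\infty$; applying part~(1) with $\omega=\omega_0$ yields $\|(\A^\#-z)^{-1}\|\le C_{\omega_0}\exp\{C_{\omega_0}/(J_0\sqrt{1-\delta^2})\}$. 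Taking $K$ to be the larger of the two bounds proves \eqref{resAngle1}.

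The argument is mechanical once these ingredients are assembled; the points that require care are: (i) the \emph{uniformity} in $\jf$ and $\kappa$ of the half-line semigroup estimates, which is exactly the content of Propositions~\ref{lemmaunifbis}--\ref{lemmaunifbisTimp} (and, for $\#\in\{D,N\}$, of dilation plus \cite[Lemma~4.2]{AlHe}); (ii) the finiteness of $\omega_0$: by the dilation identity $\lambda^\#(\jf,\kappa)=\lambda^\#(1,\kappa\jf^{-1/3})\,\jf^{2/3}$ this reduces to $\sup_{\kappa^*\ge0}\lambda^\#(1,\kappa^*)<\infty$ for $\#\in\{R,T\}$, which holds since $\kappa^*\mapsto\lambda^\#(1,\kappa^*)$ is continuous on $[0,+\infty)$ and converges as $\kappa^*\to+\infty$ (to the Dirichlet value for $\#=R$, to the glued value for $\#=T$), so that $\lambda^\#(\jf,\kappa)\le C\jf^{2/3}$ for small $\jf$ uniformly in $\kappa\in[0,\kappa_0]$ (for $\#\in\{D,N\}$ this is explicit); and (iii) the two-way degeneration $\sin\theta\to0$ versus $\cos\theta\to0$, which is what forces the case split — in each region exactly one of the two tensor factors of $e^{-t\A^\#}$ keeps enough decay. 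I expect (ii)--(iii) to be the only genuinely delicate parts; everything else is bookkeeping.
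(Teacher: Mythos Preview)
Your approach is essentially the paper's: the tensor factorization $e^{-t\A^\#}=e^{-t\A_{x'}}\otimes e^{-t\LL^\#_{x_n}}$, the cubic-decay bound $\|e^{-t\A_{x'}}\|\le e^{-|J'|^2t^3/12}$ on the tangential factor, contraction (respectively Propositions~\ref{lemmaunifbis} and~\ref{lemmaunifbisT}) on the normal factor, and a case split in $\theta$. The paper splits via $|\theta|\lessgtr\theta_0$ and treats $|\theta|$ near $\pi$ by conjugation at the end; you fold both ends together via $|\cos\theta|\lessgtr\delta$ --- purely cosmetic.

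One justification is wrong, though. In your point~(ii) you claim $\sup_{\kappa^*\ge0}\lambda^T(1,\kappa^*)<\infty$ on the grounds that $\lambda^T(1,\kappa^*)$ converges to ``the glued value'' as $\kappa^*\to\infty$. That is false: in this limit the transmission condition degenerates to continuity at $0$, one recovers the complex Airy operator on the whole line, and its spectrum is \emph{empty} --- so the leftmost eigenvalue escapes to infinity (consistently with the formal asymptotic $\Re\lambda^T_1(y)\sim(\tfrac34\log y)^{2/3}$ discussed in Appendix~\ref{appC}). Your argument for $\#=R$ is fine (there the limit is indeed Dirichlet and Appendix~\ref{appC} gives $\Re\lambda^R_1(y)\le|a_1|/2$), but for $\#=T$ what you actually need is that $\lambda^T(\jf,\kappa)=\jf^{2/3}\lambda^T(1,\kappa\jf^{-1/3})$ stays bounded as $\jf\downarrow0$ uniformly for $\kappa\in[0,\kappa_0]$, and this calls for a sub-polynomial \emph{growth} bound on $\lambda^T(1,\cdot)$ at infinity rather than a convergence statement. (The paper's own proof, incidentally, passes over exactly this point without comment, asserting only that part~(2) ``follows immediately'' from \eqref{ResA+a} in the range $\theta_0\le|\theta|\le\pi-\theta_0$.)
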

\begin{proof} 
Let $ \mathcal L _{x_n}^\# $ be given by \eqref{eq:9} and
$\A_{x^\prime}$ be given by \eqref{defCA_x'}.  When $\# \in \{R,T\}$
with $\#$-constant $\kappa$ we write $\mathcal L_{x_n}^{\#} (J\cos
\theta,\kappa)$.  As both $\A_{x^\prime}$ and $\mathcal L_{x_n}^{\#}(J
\cos \theta)$ are maximal accretive, we may write
\begin{equation} \label{OtimesAJthetav} 
  e^{-t\A^\#
      (J,\theta,\vec v)} = e^{-t\A_{x^\prime}(J,\theta,\vec v)}\otimes e^{-t\mathcal
      L_{x_n}^\# (J \cos \theta)}\,.  
\end{equation} 
Note that, for $\pm \theta\in(0,\pi)\,$, we may apply the rescaling
$x\mapsto (J\, |\sin\theta| )^{1/3}x\,$, to obtain
\begin{equation*}
e^{-t\A_{x^\prime}(J,\theta,\vec v)} = e^{-t( J
    |\sin\theta| )^{2/3}\A_{x^\prime}(1, \pm \pi/2,\vec v)}\,,
\end{equation*} 
Hence, by \eqref{eq:5},
\begin{equation} \label{eq:121} 
|| e^{-t\A_{x^\prime}(J,\theta,\vec v)} || \leq e^{- \frac{t^3}{12} ( J
      |\sin\theta| )^2}\,.  
\end{equation} 
From  (\ref{OtimesAJthetav}) and the fact that $ e^{-t\mathcal
L_{x_n}^{\#} (J \cos \theta)}$ is a contraction semi-group, we get
\begin{displaymath} \|e^{-t\mathcal
      A^\#(J,\theta,\vec v)}\|\leq e^{-\frac{1}{12}J ^2\sin^2\theta \, t^3}\,.
\end{displaymath} 
Suppose first that $\omega\geq1$. Using \eqref{eq:10}, this time for
$\A^\#$, i.e.,
\begin{equation} \label{formResSG}
 (\A^\# -z)^{-1} = \int_0^{+\infty}e^{-t(\A^\# -z)}dt\,, 
\end{equation} 
we obtain from \eqref{eq:64a} that, for all $\Re z \leq \omega\,$,
\begin{equation} 
\label{eq:11} 
\|(\A^\#
    -z)^{-1} \|\leq \int_0^{+\infty}e^{-\frac{1}{12}J^2\sin \theta^2\, t^{3}+\omega
      t}\,dt\leq \frac{\sqrt{2\pi}}{[J |\sin\theta|]^{1/2}\omega^{1/4}}\exp
    \biggl(\frac{2\omega^{3/2}}{J |\sin\theta|}\biggr) \,, 
\end{equation} 
which proves (\ref{ResA+a}) for $\omega\geq1$. For $\omega<1$ we write
\begin{displaymath} 
\|(\A^\# -z)^{-1} \|\leq
    \int_0^{+\infty}e^{-\frac{1}{12}J^2\sin \theta^2\, t^{3}+t}\,dt\leq
    \frac{\sqrt{2\pi}}{[J |\sin\theta|]^{1/2}}\exp \biggl(\frac{2}{J
      |\sin\theta|}\biggr) \,, 
\end{displaymath} 
which completes the proof of (\ref{ResA+a}). \\
Let $0<\theta_0<\pi/2$. Since (\ref{resAngle1}) follows immediately
from \eqref{ResA+a} whenever $\theta_0\leq|\theta|\leq\pi-\theta_0$,
we suppose first that $|\theta|<\theta_0$.  We then use
\eqref{OtimesAJthetav} and \eqref{eq:121} to obtain
\begin{equation*} 
\|e^{-t\mathcal A^{\#}(J,\theta,\vec v)}\| \leq
    e^{-\frac{t^3}{12}(J|\sin\theta|)^2}  
    \|e^{-t \LL_{x_n} ^\#(J\cos\theta,\kappa)}\|\leq\|e^{-t \LL_{x_n} ^\#(J\cos\theta,\kappa)}\|   \,.  
\end{equation*} 
  We may now use
either \eqref{eq:104} (for $\#\in\{N,R\}$) or
\eqref{eq:103} (when $\#=T$) to obtain that, for every $\epsilon>0$,
there exists $C(J_0, J_1, \theta_0,\kappa_0,\epsilon)$ such that
\begin{displaymath} 
\|e^{-t\mathcal A^\#(J,\theta,\vec v)}\| \leq
C(J_0, J_1, \theta_0,\kappa_0,\epsilon)e^{-t(\lambda^\#( J\cos\theta,\kappa)-\epsilon/2)}\,.  
\end{displaymath} 
We now use \eqref{formResSG} to obtain that 
\begin{displaymath} 
\|(\A^\#(J,\theta,\vec v) -z)^{-1} \|\leq
  C(J_0, J_1, \theta_0, \kappa_0,\epsilon)\int_0^\infty { e^{-t(\lambda^\#(J\cos \theta,\kappa)-\Re
    z-\epsilon/2)}\,}dt\,, 
\end{displaymath} 
from which \eqref{resAngle1} is easily verified for
$|\theta|\leq\theta_0$. For $\pi-\theta_0<|\theta|\leq\pi$ we use the
fact that
$\LL_{x_n}^\#(J\cos\theta,\kappa)=\overline{\LL_{x_n}^\#(J\cos(\pi-\theta),\kappa)}$
to complete the proof.
\end{proof}
\begin{remark}
\label{remunif} 
Note that $C_\omega$ in \eqref{ResA+a} is  independent of $\kappa $ (as long
as $\kappa$ is non-negative) for  $\#\in \{R,T\}$. Furthermore, let
$\theta_0\in(0,\pi/2)\,$ and $0 < J_0 <J_1$ . Then, (\ref{ResA+a})  implies
that for all $\omega\in\mathbb{R}\,$, $\kappa \geq 0$, and , there exists
$K_\omega'>0$ such that 
\begin{equation}\label{resAngle2}
     \sup_{
       \begin{subarray}{c}
         J_0\leq J\leq J_1\\
         \vec v\in\mathbb S^{n-2}
       \end{subarray}}   
\sup_{\tiny{\begin{array}{c}|\theta|\kern -2pt \in\kern -2pt [\theta_0\kern
      -1pt , \kern -1pt \pi\kern -2pt -\kern -2pt \theta_0] \\ \Re z\leq\omega
            \end{array}}}\|(\A^\# (J,\theta,\vec v)-z )^{-1}\|\leq
        K_\omega' \,.  
\end{equation} 
\end{remark}

For the sake of the upper bound derived in Section \ref{s7}, we now
provide an extension of \eqref{resAngle1} to the case where $\Re z$ is
slightly larger than $\lambda^\#$.  To this end we use the improved
estimates of $||e^{- t \mathcal L_{x_n}^\#}||$ provided in
Propositions \ref{lemmaunifter} and \ref{lemmaunifbisTimp}.
\begin{proposition}~\\ 
\label{prop:semigroup-upper}
Let $\theta_0\in(0,\pi/2)$, $0 < J_0 < J_1$, and $\kappa_0 >0$ (for
$\#\in\{R,T\}$). Then, there exists $C(\kappa_0,J_0,J_1,\theta_0)>0$
such that, for all $J \in [J_0,J_1]$, $\kappa \in [0,\kappa_0]$ (for
$\#\in\{R,T\}$), $t>0$ and $\theta\in [-\theta_0,\theta_0]\cup [\pi
-\theta_0,\pi + \theta_0]$\,, we have
\begin{equation} \label{decSGA+a} 
  \,~\|e^{-t\A^\# (J,\theta,\vec v)}\|\leq C(J_0,J_1,\theta_0,\kappa_0)\,
      e^{-\frac{1}{12} J^2\sin^2\theta \, t^3 - t \lambda^\#(J\cos\theta,\kappa)}\,.
\end{equation} 
Furthermore, we have that 
\begin{equation} 
\label{eq:28} 
\sup_{\Re z\leq\lambda^\#(J\cos \theta ) +[J\, |\sin \theta|]^{2/3} } 
\|(\mathcal A^\#(J,\theta,\vec v) -z)^{-1} \|\leq \frac{C(J_0,J_1,\theta_0,\kappa_0) }{[\,J\, |\sin \theta|\, ]^{2/3}} \,.
\end{equation} 
\end{proposition}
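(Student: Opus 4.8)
The plan is to combine the tensor-product factorization \eqref{OtimesAJthetav} of the semigroup with the loss-free one-dimensional estimates of Propositions \ref{lemmaunifter} and \ref{lemmaunifbisTimp}, and then to integrate in $t$ through the Laplace representation \eqref{formResSG} in order to read off \eqref{eq:28}.

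\textbf{Step 1: the semigroup bound \eqref{decSGA+a}.} By \eqref{OtimesAJthetav} the operator norm of $e^{-t\A^\#(J,\theta,\vec v)}$ is the product of the norms of $e^{-t\A_{x^\prime}(J,\theta,\vec v)}$ and $e^{-t\mathcal L_{x_n}^\#(J\cos\theta,\kappa)}$. The first factor is controlled by \eqref{eq:121}, namely $\|e^{-t\A_{x^\prime}(J,\theta,\vec v)}\|\leq e^{-\frac{t^3}{12}(J|\sin\theta|)^2}$, a bound independent of $\vec v\in\mathbb S^{n-2}$ and depending on $\theta$ only through $|\sin\theta|$. For the second factor, note that since $\theta\in[-\theta_0,\theta_0]\cup[\pi-\theta_0,\pi+\theta_0]$ and $J\in[J_0,J_1]$, the parameter $|J\cos\theta|$ ranges over the fixed compact interval $[J_0\cos\theta_0,J_1]\subset(0,\infty)$. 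Hence Proposition \ref{lemmaunifter} applies for $\#=R$ (and for $\#=N$ on setting $\kappa=0$; for $\#=D$ one invokes \cite[Lemma 4.2]{AlHe} as in the remark following Proposition \ref{lemmaunifter}), and Proposition \ref{lemmaunifbisTimp} applies for $\#=T$, each producing a constant $C(J_0,J_1,\theta_0,\kappa_0)$, uniform in $\kappa\in[0,\kappa_0]$, with $\|e^{-t\mathcal L_{x_n}^\#(J\cos\theta,\kappa)}\|\leq C\,e^{-t\lambda^\#(J\cos\theta,\kappa)}$. For $\theta$ near $\pi$ one uses $\mathcal L_{x_n}^\#(J\cos\theta,\kappa)=\overline{\mathcal L_{x_n}^\#(-J\cos\theta,\kappa)}$ — the Robin and transmission boundary conditions having real coefficients — so that both the semigroup norm and $\lambda^\#(\cdot,\kappa)$ depend only on $|J\cos\theta|$. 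Multiplying the two factor bounds yields \eqref{decSGA+a}.

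\textbf{Step 2: the resolvent bound \eqref{eq:28}.} Insert \eqref{decSGA+a} into \eqref{formResSG}. Writing $a:=J|\sin\theta|$ and taking $z$ with $\Re z\leq\lambda^\#(J\cos\theta,\kappa)+a^{2/3}$, and using $\|e^{-t(\A^\#-z)}\|=\|e^{-t\A^\#}\|\,e^{t\Re z}$, one gets
\begin{displaymath}
\|(\A^\#(J,\theta,\vec v)-z)^{-1}\|\leq\int_0^{+\infty}\|e^{-t\A^\#(J,\theta,\vec v)}\|\,e^{t\Re z}\,dt\leq C(J_0,J_1,\theta_0,\kappa_0)\int_0^{+\infty}e^{-\frac{a^2}{12}t^3+a^{2/3}t}\,dt\,.
\end{displaymath}
The substitution $s=a^{2/3}t$ turns the last integral into $a^{-2/3}\int_0^{+\infty}e^{-s^3/12+s}\,ds$, and since this integral is a finite universal constant, \eqref{eq:28} follows, the bound being vacuous (its right-hand side infinite) at $\theta=0,\pi$. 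For $\theta\neq0,\pi$ one has $\sigma(\A^\#)=\emptyset$ by Proposition \ref{propSpecA+}, so the Laplace integral genuinely represents the resolvent at every such $z$.

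\textbf{Main difficulty.} No real analytic obstacle remains: the loss-free semigroup estimates, Propositions \ref{lemmaunifter} and \ref{lemmaunifbisTimp}, already do the heavy lifting, and the rest is the elementary scaling computation above. The only point demanding care is uniformity bookkeeping — checking that all constants can be chosen to depend solely on $J_0,J_1,\theta_0,\kappa_0$. This reduces to two observations already available: the $x_n$-parameter $|J\cos\theta|$ stays in a fixed compact subinterval of $(0,\infty)$, so that Propositions \ref{lemmaunifter} and \ref{lemmaunifbisTimp} apply with $h$-independent constants, and the $x^\prime$-factor estimate \eqref{eq:121} is already $\vec v$-independent; the conjugation symmetry then handles the range $\theta\in[\pi-\theta_0,\pi+\theta_0]$ with no loss.
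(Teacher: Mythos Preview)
Your proof is correct and follows essentially the same approach as the paper: tensor-factorize the semigroup via \eqref{OtimesAJthetav}, bound the transverse factor by \eqref{eq:121} and the normal factor by the loss-free estimates \eqref{eq:68R} and \eqref{eq:3.8}, then plug into the Laplace representation \eqref{formResSG}. The only cosmetic difference is that the paper cites the Appendix~\ref{appB} inequality \eqref{eq:64a} to estimate the resulting integral, whereas you perform the equivalent scaling substitution $s=a^{2/3}t$ directly; your treatment of the range $\theta\in[\pi-\theta_0,\pi+\theta_0]$ via conjugation and your explicit verification that $|J\cos\theta|\in[J_0\cos\theta_0,J_1]$ are, if anything, slightly more careful than the paper's write-up.
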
 
\begin{proof} 
From  \eqref{eq:68} and \eqref{eq:3.8} it follows that for any
$\theta\in[-\theta_0,\theta_0]$ and $J_0\leq J\leq J_1$, there exists
$C_1(\kappa_0,J_0,J_1,\theta_0) >0$ such that
\begin{equation} 
\|e^{-t\mathcal
        L _{x_n}^\# (J\cos \theta)} \| \leq C_1(J_0,J_1,\theta_0,\kappa_0) \, e^{-t\lambda^\# (J\cos \theta,\kappa)}\,.  
\end{equation} 

As
\begin{displaymath} 
\|e^{-t\A^\#(J,\theta,\vec v)}\| \leq \| e^{-t\mathcal
        A_{x'}}\|\,\| e^{-t \mathcal L _{x_n} ^\#}\| 
\end{displaymath} 
we readily obtain \eqref{decSGA+a}. \\
Combining \eqref{formResSG} and \eqref{decSGA+a} we obtain that
\begin{equation} \label{eq:69}
      \|(\A^\#(J,\theta,\vec v) -z)^{-1} \|\leq C
      \int_0^{+\infty} e^{-\frac{1}{12} J^2\sin
        \theta^2\, t^{3}+(\Re z-\lambda^\# (J\cos \theta,\kappa))\, t}\,dt\,. 
\end{equation} 
To obtain \eqref{eq:28} we use \eqref{eq:69} in conjunction with
\eqref{eq:64a}.
\end{proof}

We conclude this section with the following straightforward estimate
which will become useful in Section \ref{s7}.
\begin{lemma} 
Let $x=(x^\prime,x_n)\in\R^{n-1}\times\R^\#$, and 
\begin{displaymath}
   \A^\#  = -\Delta + ix_n \,, 
\end{displaymath}
be defined on $D(\A^\#)$ given by  \eqref{carac}.  
For any $\mu \in \mathbb R$, there exist positive $C(\mu)$ such that,
for any $\lambda=\mu+i\nu$ with with $|\nu|>\mu+4\,$,
\begin{equation}
  \label{eq:30}
\|(\A^\# -\lambda)^{-1}\| \leq  C(\mu) \,.
\end{equation}
\end{lemma}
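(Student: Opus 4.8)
The plan is to conjugate $\A^\#$ by the partial Fourier transform $\F$ in the tangential variables $x'$, exactly as in \eqref{deffourier}: this gives $\F\,\A^\#\,\F^{-1}=\mathcal L^\#_{x_n}+|\omega'|^2$ on $L^2(\mathbb R^{n-1}_{\omega'}\times\mathbb R^\#_{x_n})$, whence
\[
\|(\A^\#-\lambda)^{-1}\|=\sup_{\rho\ge 0}\big\|\big(\mathcal L^\#_{x_n}-(\lambda-\rho)\big)^{-1}\big\|\,.
\]
As $\rho$ runs through $[0,+\infty)$ the number $w:=\lambda-\rho$ has fixed imaginary part $\nu$ and real part running through $(-\infty,\mu]$, so the whole statement reduces to the one-dimensional bound
\[
\big\|(\mathcal L^\#_{x_n}-w)^{-1}\big\|\le C(\mu)\qquad\text{whenever }\Re w\le\mu,\ |\Im w|=|\nu|>\mu+4\,,
\]
the condition $|\nu|>\mu+4$ serving to keep every such $w$ inside the region where the one-dimensional resolvent is under control (in particular clear of $\sigma(\mathcal L^\#_{x_n})$; recall from \eqref{eq:8} that $\sigma(\A^\#)$ is the union of the rightward horizontal rays issued from the eigenvalues of $\mathcal L^\#_{x_n}$, only finitely many of which meet $\{\Re\le\mu\}$).

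I would then split according to $\Re w$. For $\Re w\le -1$ the numerical range of $\mathcal L^\#_{x_n}$ lies in $\{\Re z\ge 0\}$ (the first quadrant when $\#\in\{D,N,R\}$, the right half-plane when $\#=T$), so the bounds \eqref{eq:64}, resp.\ \eqref{eq:79}, give $\|(\mathcal L^\#_{x_n}-w)^{-1}\|\le 1/|\Re w|\le 1$; moreover for $\#\in\{D,N,R\}$ and $\nu<0$ the half-plane bound $\|(\mathcal L^\#_{x_n}-w)^{-1}\|\le 1/|\Im w|=1/|\nu|$ settles all of $\nu<0$ at once. The remaining, and essential, case is $\Re w\in(-1,\mu]$ with $|\Im w|=|\nu|>\mu+4$: here $\Re w$ is confined to a bounded interval while $|\Im w|$ is large.

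On that set I would invoke the ``$\Im\lambda$ large'' analysis of the resolvent of the one-dimensional complex Airy operator: for $\#\in\{N,R,T\}$ this is the estimate already quoted from \cite{GHH} in Sections \ref{s2} and \ref{s3}, and for $\#=D$ it is in \cite{Alm,Hen2}. Concretely, writing $\mathcal G^R=\mathcal G_0+\mathcal G_1^R$ with $\mathcal G_1^R$ given by \eqref{eq:63}: the whole-line part $\mathcal G_0$ has operator norm depending only on $\Re w$ and bounded by $\int_0^{+\infty}e^{-t^3/12+t\mu}\,dt$ (using $\|e^{-t(-d^2/dx^2+ix)}\|_{L^2(\mathbb R)}\le e^{-t^3/12}$), while $\mathcal G_1^R$ is the rank-one operator $u\mapsto c^R(w)\,\langle u,\bar\phi_w\rangle\,\phi_w$ with $\phi_w(x)=\Ai\big(e^{i2\pi/3}(-ix+w)\big)$, of norm $|c^R(w)|\,\|\phi_w\|_{L^2(\mathbb R_+)}^2$; the asymptotics of $\Ai$ then show that, as $|\Im w|\to\infty$ with $\Re w$ in a bounded interval, the Airy factor in the numerator of $c^R(w)$ decays exponentially while the one in its denominator grows fast enough to dominate the exponential growth of $\|\phi_w\|^2$, so that $\|\mathcal G_1^R\|\to 0$. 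The transmission case is handled the same way with its explicit Green's function, and for $\#\in\{R,T\}$ one checks in passing that the bound is uniform in $\kappa\in[0,\kappa_0]$. Combining the two ranges of $\Re w$ gives the uniform constant $C(\mu)$, hence \eqref{eq:30}. (For $\#=D$ there is also a soft shortcut that dispenses with the hypothesis on $\nu$: the odd extension of $u$ and $f$ across $\{x_n=0\}$ turns $(\A^D-\lambda)u=f$ into $(\A_0-\lambda)\tilde u=\tilde f$ on $\mathbb R^n$, and $\A_0-\lambda$ is unitarily equivalent via $x_n\mapsto x_n+\nu$ to $\A_0-\mu$, so Lemma \ref{lemModelRn} yields $\|(\A^D-\lambda)^{-1}\|\le\|(\A_0-\mu)^{-1}\|\le C_\mu^0$.)

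The main obstacle is this last one-dimensional estimate, in the window $\Re w$ bounded, $|\Im w|$ large: the numerical-range bounds are of no use there, since such $w$ already lie in the closure of the numerical range, so one genuinely needs the explicit Green's function together with a somewhat delicate Airy-function asymptotics, tracking the balance between the exponential growth of $\|\phi_w\|^2$ and the exponential decay of $|c^\#(w)|$ as $|\Im w|\to\infty$.
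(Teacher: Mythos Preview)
Your approach is correct and essentially the same as the paper's: both conjugate by the partial Fourier transform in $x'$ to reduce to the direct integral $\int^{\oplus}(\mathcal L_{x_n}^\#+|\omega'|^2)\,d\omega'$ and then invoke one-dimensional resolvent bounds for $\mathcal L_{x_n}^\#-w$ with $\Re w\le\mu$ and $|\Im w|$ large. The paper's proof is very terse---it simply cites the technique of \cite{Hel2} for the one-dimensional step and notes the $|\omega'|^{-2}$ decay for large $|\omega'|$---whereas you spell out that step via the numerical-range bound for $\Re w\le-1$ and the explicit Green's function plus Airy asymptotics for $\Re w$ bounded; this is exactly the content behind the citation.
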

\begin{proof}
Applying a partial Fourier transform \eqref{eq:31} in the $x^\prime$
direction yields (see also \eqref{deffourier}) 
\begin{displaymath}
  \widehat\A^\# = \int^{\oplus}  \widehat\A^\#(\omega') d \omega'   \,,
\end{displaymath}
where, for $\omega'\in \mathbb R^{n-1}$,
\begin{equation*}
  \widehat\A^\#(\omega') := \mathcal L_{x_n}^\# + |\omega'|^2\,,
\end{equation*}
is considered as an unbounded one variable operator on $L^2_\#$.\\
We may now use the same technique as in \cite{Hel2} to establish that
\begin{displaymath}
   \| (\widehat\A^\#(\omega')  -\lambda)^{-1}\|\leq C(\mu_\omega')\,,
\end{displaymath}
where $\mu_{\omega'}=\mu-|\omega'|^2$.\\
Since for sufficiently large $|\omega'|$ we have (see \cite{Hel2})
\begin{displaymath}
  \| (\widehat\A^\#(\omega')   -\lambda)^{-1}\|\leq \frac{C(\mu)}{|\omega'|^2}\,.
\end{displaymath}
we easily obtain \eqref{eq:30}.
\end{proof}

\section{Limit problems: Quadratic potential} \label{s5}

\subsection{The quadratic model}
In this section, we consider the $\#$ realization of the operator 
\begin{equation}    \label{eq:62} 
\PP^\# = -\Delta +i\,\biggl(\vec J^\cdot x +\sum_{k=1}^{n-1}\alpha_j x_j^2\biggr)\,, 
\end{equation}
acting in $\mathbb R^n_\#$.  When $\#\in \{ R,T\}$, there is the
additional Robin or Transmission parameter $\kappa\geq 0$ which is not
always mentioned in the notation. In the above, the $\alpha_j\neq0$
for all $1\leq j\leq n-1$ and $J_n\neq 0$.  By applying an appropriate
translation in the $x^\prime= (x_1,\ldots,x_{n-1})$ direction and a
shift of the spectrum, which does not modify its real part, we obtain
the case $J'=0$ and consider from now on the reduced form
\begin{equation}    \label{eq:62a} 
\PP^\# = \PP^\# (\vec \alpha)= -\Delta +i\, \biggl(\alpha_n x_n
    +\sum_{k=1}^{n-1}\alpha_j x_j^2 \biggr)\,, 
\end{equation}
with $\alpha_n:=J_n$.\\
Setting 
\begin{equation}   \label{eq:78}
V (x) = \alpha_n x_n +\sum_{k=1}^{n-1}\alpha_jx_j^2 \,, 
\end{equation} 
we shall also use the notation 
\begin{equation*}
  \PP^\# = \mathcal P_{ V}^\#\,.  
\end{equation*}
We adopt here an approach which can be applied to a much wider class
of operators. To this end it proves useful to consider first the
problem in $\R^n$, and only then to introduce the effect of the
boundary.

\subsection{The entire space problem}

The problem we address here has already been treated for the
selfadjoint case in \cite{HM}, for the polynomial case in \cite{HNo},
for the case of Fokker-Planck operators in \cite{HN}, and for the
complex Schr\"odinger magnetic operator in \cite{AH}. The complex harmonic oscillator was first analyzed in \cite{Dav} (see  also \cite{H}). We consider
here a different class of operators, which includes the operator
\eqref{eq:62} acting on a dense set in $L^2(\R^n)$.  More precisely,
our goal is to establish compactness of the resolvent and to provide a
transparent description, when possible, for the domain of operators of
the type $\PP_V:=-\Delta + i V$. Here $\PP_V$ is defined as the
closure of $P_V/{C_0^\infty(\mathbb R^n)}$. \\
We note first that $D (\PP_V) \subset D ((\PP_{-V})^*)$, since
$(\PP_{-V})^*$ is a closed extension of $\PP_V$. Moreover by
\cite[Exercise 13.7]{H} $\PP_V$ and $\PP_{-V}$ are maximal
accretive. It follows immediately that $(\PP_{-V})^* +1$ is injective
and $(\PP_{V}) +1$ is surjective.  This implies $ D ((\PP_{-V})^*)
\subset D (\PP_V) $. If indeed, $u\in D ((\PP_{-V})^*)$, there exists,
by the surjectivity of $(\PP_{V}) +1$, $v \in D(\PP_V)$ such that
$((\PP_{-V})^* +1) u = (\PP_V+ 1) v = ((\PP_{-V})^* +1) v$.  We
conclude then that $u=v$ by the injectivity of $(\PP_{-V})^* +1$, so
$u\in D (\PP_V)$.  Hence, we have proved 
\begin{equation} 
  D (\PP_V):=  \{ u\in L^2 (\mathbb R^n)\,|\, (-\Delta + i V) u \in L^2(\mathbb R^n)\}\,.  
\end{equation}
 
Following \cite{HM}, we now introduce a rather general class of
potentials extending polynomials of degree $r$.
\begin{definition}
\label{def:Tr}
For $r\in \mathbb N$, we say that $V\in\Tg_r$ if
\begin{enumerate}
   \item $V\in C^{r+1}(\mathbb R^n,\mathbb R)$
   \item There exists $C_0$ such that for all
$x\in\mathbb R^n$
\begin{equation}  \label{A.17}
\max_{|\beta|=r+1} |D_x^\beta V(x)| \leq C_0\, m(V,r,x)\,,
\end{equation}
where
\begin{equation}  \label{A.15}
m:= m(V,r,x)= \sqrt{ \sum _{|\alpha| \leq r}  |D_x^\alpha V(x)|^2 +1}\, .
\end{equation}
\end{enumerate}
\end{definition}
In particular, we have
\begin{example}
\label{ExampleA1}~ 
\begin{enumerate}
\item 
The potential $V = J\cdot x$ is of class $\Tg_0$.
\item 
The potential $V$ defined by
\begin{equation*}
V(x',x_n):=  \alpha_n x_n + \sum_{j=1}^{n-1} \alpha_j x_j^2 \,,
\end{equation*}
with $\alpha_j \neq 0$ ($j=1,\cdots,n$), is of class $\Tg_1$.
\end{enumerate}
\end{example}

Note also that, for the case $r=0$, \eqref{A.17} reduces to
\eqref{ineqV} which is precisely the type of potentials considered in
\cite{AH} in the absence of magnetic field. \\
 
The following auxiliary lemma is an adaptation of a similar result in
\cite{HM} to our needs. 
\begin{lemma}
\label{LemmaA4}~\\
Let $T\in C^2(\R^n)$. Then, for $k=1\,\,\dots,n$, $u\in
C_0^\infty(\R^n)$ and $0\leq s\leq1$\,,
\begin{align}
\label{eq:86}
&\| m^{- 1+ \frac s2} (\partial_{x_k}T) u \|^2 = - \langle m^{-1+s} Tu \;|\; m^{-1}(\partial_{x_k}T)\partial_{x_k} u\rangle 
 -\notag\\  &\langle \partial_{x_k} u \;|\; m^{-1} (\partial_{x_k}T) m^{-1+s} Tu \rangle
\kern -3pt - \kern -3pt \langle m^{-1+s} T u \;| \; m^{1-s}\partial_{x_k}(m^{-2 +s}\partial_{x_k}T)u \rangle \,,
\end{align} 
where $m$ is given by (\ref{A.15}).
\end{lemma}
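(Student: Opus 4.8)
The identity is an integration by parts in the variable $x_k$ followed by the Leibniz rule. First I would rewrite the left-hand side as
\begin{equation*}
\| m^{-1+\frac s2}(\partial_{x_k}T)u\|^2 = \int_{\R^n} m^{-2+s}(x)\,\bigl(\partial_{x_k}T(x)\bigr)^2\,|u(x)|^2\,dx\,,
\end{equation*}
which is legitimate since $m\geq 1$ everywhere (the weight is bounded and positive) and $u\in C_0^\infty(\R^n)$. I would then view the integrand as $(\partial_{x_k}T)\cdot w$ with $w:=m^{-2+s}(\partial_{x_k}T)\,|u|^2$, a compactly supported $C^1$ function: here I use $T\in C^2(\R^n)$, so that $\partial_{x_k}T\in C^1$, together with the fact that $m\in C^1$, which holds because $m^2=\sum_{|\alpha|\leq r}|D_x^\alpha V|^2+1$ is $C^1$ (as $V\in C^{r+1}$) and bounded below by $1$.

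Next I would integrate by parts, moving $\partial_{x_k}$ off the factor $T$; since $w$ has compact support there are no boundary terms, so $\int (\partial_{x_k}T)\,w\,dx = -\int T\,\partial_{x_k}w\,dx$. Applying the Leibniz rule,
\begin{equation*}
\partial_{x_k}w = \partial_{x_k}\!\bigl(m^{-2+s}(\partial_{x_k}T)\bigr)\,|u|^2 + m^{-2+s}(\partial_{x_k}T)\,\bigl[(\partial_{x_k}u)\,\bar u + u\,\overline{\partial_{x_k}u}\bigr]\,,
\end{equation*}
which produces exactly three terms. Finally I would reassemble them in the inner-product form of the statement. Using that $m$ and $T$ are real-valued, so that the conjugation touches only $u$ and $\partial_{x_k}u$, and factoring the weights appropriately (noting in particular $m^{-1+s}\cdot m^{1-s}=1$), one recognizes
\begin{align*}
-\int T\,\partial_{x_k}\!\bigl(m^{-2+s}\partial_{x_k}T\bigr)|u|^2\,dx &= -\langle m^{-1+s}Tu \,\mid\, m^{1-s}\partial_{x_k}(m^{-2+s}\partial_{x_k}T)u\rangle\,,\\
-\int m^{-2+s}T(\partial_{x_k}T)(\partial_{x_k}u)\,\bar u\,dx &= -\langle \partial_{x_k}u \,\mid\, m^{-1}(\partial_{x_k}T)m^{-1+s}Tu\rangle\,,\\
-\int m^{-2+s}T(\partial_{x_k}T)u\,\overline{\partial_{x_k}u}\,dx &= -\langle m^{-1+s}Tu \,\mid\, m^{-1}(\partial_{x_k}T)\partial_{x_k}u\rangle\,,
\end{align*}
and summing gives \eqref{eq:86}.

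This argument is essentially routine; there is no genuine obstacle. The only points that require care are the verification that $w$ is an admissible test function for the integration by parts (reducing, as noted, to $m\in C^1$, hence to the regularity assumption $V\in C^{r+1}$ on $V$), and the bookkeeping of the powers of $m$ and the placement of the complex conjugates when matching the three terms to the stated inner products.
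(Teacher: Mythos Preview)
Your proof is correct and essentially identical in content to the paper's: both amount to a single integration by parts in $x_k$ followed by the Leibniz rule. The paper merely packages the same computation in commutator notation, writing the multiplication operator by $\partial_{x_k}T$ as $[X_k,T]$ with $X_k=\partial_{x_k}$, and then expanding $\|m^{-1+s/2}[X_k,T]u\|^2$ via $[X_k,T]=X_kT-TX_k$; the three resulting terms match yours one-to-one.
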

\begin{proof}
Let $U\in C^1$. To make the following integrations by parts more
transparent to the reader, we represent the multiplication operator by
$\partial_{x_k}U$, for each $1\leq k\leq n\,$, as the bracket
$[X_k,U]$ where $X_k=\partial_{x_k}$.  We then write:
\begin{equation*}
\begin{array}{ll}
\| m^{- 1+ \frac s2} [X_k,T] u \|^2&=
 \langle  m^{-1+s} (X_k T -T X_k)u \;|\;  m^{-1} [X_k,T] u \rangle \\
&=
 \langle  m^{-1+s} X_k T u \;|\;  m^{-1} [X_k,T] u \rangle \\
&\quad  - 
 \langle  m^{-1+s} T X_k u \;|\;  m^{-1} [X_k,T] u \rangle \\
& 
=  - \langle m^{-1+s} Tu \;|\; m^{-1}[X_k,T] X_k u\rangle \\
  &\quad - \langle X_k u \;|\; m^{-1} [X_k,T] m^{-1+s} Tu \rangle \\
& \quad - \langle  T u \;|\; [X_k, m^{-2 +s} [X_k,T]] u \rangle  \,,
\end{array}
\end{equation*}
from which \eqref{eq:86} easily follows.
\end{proof}

The following weighted estimate is useful when proving compactness of
the resolvent $(\PP_V-\lambda)^{-1}$ and for describing $D(\PP_V)$
when $r=1$.
\begin{proposition}
\label{prop:improve}
Let $V$ be such that \eqref{A.17} is satisfied for some $r\geq1$. Then
\begin{equation}
    \label{eq:108}
\|m^{\frac{2}{2^{r+1}-1}} u||^2 +
||m^{-2\frac{2^{r-1}-1}{2^{r+1}-1}}Vu||^2  \leq C \left( ||P_V u||^2 +
  ||u||^2 \right)\,,\, \forall u \in C_0^\infty(\mathbb R^n)\,. 
\end{equation}
\end{proposition}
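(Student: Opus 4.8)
### Proof proposal for Proposition \ref{prop:improve}

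The plan is to prove the weighted estimate \eqref{eq:108} by induction on $r\geq 1$, using Lemma \ref{LemmaA4} as the basic interpolation-type tool. The base case $r=1$ is essentially the proposition for quadratic potentials, and the induction step gradually improves the exponent of the weight $m$ (the sequence $2/(2^{r+1}-1)$ is exactly what one obtains by iterating a "gain of a square root" in each application of Lemma \ref{LemmaA4}).

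First I would establish the elementary a priori bound. Since $P_V = -\Delta + iV$ with $V$ real, taking the real and imaginary parts of $\langle P_V u, u\rangle$ (for $u\in C_0^\infty(\mathbb R^n)$) gives $\|\nabla u\|^2 = \Re\langle P_V u,u\rangle$ and $\|V^{1/2}u\|^2 \le |\Im\langle P_V u,u\rangle| \le \|P_V u\|\,\|u\|$; more usefully, testing against $Vu$ and integrating by parts yields control of $\|Vu\|^2$ together with commutator terms $\langle \nabla u, (\nabla V) u\rangle$. This is the standard "multiply by $\bar V u$" trick already used in the proof of Proposition \ref{Proposition4.4} via \eqref{estDomA+a}; here it must be localized with the weight $m^{-\theta}$ for an appropriate power $\theta$ so that the error terms generated by differentiating the weight can be absorbed. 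The hypothesis \eqref{A.17}, namely $\max_{|\beta|=r+1}|D^\beta V|\le C_0\, m$, together with the definition \eqref{A.15} of $m$, guarantees that all derivatives of $m$ and of $V$ up to the relevant order are dominated by suitable powers of $m$ itself (a Glaeser-type inequality: $|D^\alpha V|\le C\, m$ for $1\le|\alpha|\le r$ follows from \eqref{A.15}, and $|\nabla m|\le C\, m^{1-1/(\text{something})}$ by Cauchy–Schwarz on the definition of $m$).

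For the induction step, I would apply Lemma \ref{LemmaA4} with $T$ chosen among the derivatives $\partial_{x_j}V$ (so that $\partial_{x_k}T = \partial_{x_k}\partial_{x_j}V$ is itself controlled by $m$ via \eqref{A.17} when $r$ is large enough, or by $m^{\text{(lower power)}}$ by the inductive control of lower derivatives) and with $s$ chosen so that the left-hand side $\|m^{-1+s/2}(\partial_{x_k}T)u\|^2$ produces the next weight in the sequence. The three terms on the right of \eqref{eq:86} are then estimated: the first two are handled by Cauchy–Schwarz against $\|\partial_{x_k}u\|$ (controlled by the $r=0$ bound $\|\nabla u\|^2\le \Re\langle P_V u,u\rangle$, or rather a weighted version of it) and against the previous inductive weight $\|m^{\gamma_{r-1}}u\|$; the last term, which involves $\partial_{x_k}(m^{-2+s}\partial_{x_k}T)$, is where the derivatives of $m$ enter, and one uses $|\nabla m|\lesssim m$ and \eqref{A.17} to see it is again of the form (bounded weight) times $u$, absorbable into the left side for a suitable small constant. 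Bookkeeping the exponents: if the inductive hypothesis controls $\|m^{\gamma_{r-1}}u\|$ with $\gamma_{r-1}=2/(2^{r}-1)$, then one application of \eqref{eq:86} with the right $s$ upgrades this to $\gamma_r = 2/(2^{r+1}-1)$, which is exactly the half-step $\gamma_{r-1}\mapsto \gamma_{r-1}/(2-\gamma_{r-1}\cdot(\text{shift}))$ encoded by the recursion; simultaneously the $Vu$-weight exponent $-2(2^{r-1}-1)/(2^{r+1}-1)$ tracks along because $|V|\lesssim m$ and the identity gives control of $m^{-1+s}Vu$ inner products.

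The main obstacle I anticipate is precisely this exponent bookkeeping together with the absorption argument: one must verify that every error term produced by differentiating the weights $m^{\pm\theta}$ carries a power of $m$ strictly smaller than the target weight $m^{\gamma_r}$ on the left-hand side, so that it can be absorbed after a Cauchy–Schwarz with an $\epsilon$/$\epsilon^{-1}$ split, leaving only $\|P_V u\|^2 + \|u\|^2$ on the right. Making this rigorous requires the pointwise inequalities $|D^\alpha m|\le C_\alpha\, m$ and $|D^\alpha V|\le C_\alpha\, m$ for $1\le|\alpha|\le r$, which follow from \eqref{A.15}, \eqref{A.17} and a standard Landau–Kolmogorov/Glaeser interpolation; once these are in hand, the induction closes and \eqref{eq:108} follows by summing the two estimates and using density of $C_0^\infty(\mathbb R^n)$. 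In particular, for $r=1$ this already gives $\|m^{2/3}u\|^2 + \|m^{0}Vu\|^2 = \|m^{2/3}u\|^2+\|Vu\|^2 \lesssim \|P_Vu\|^2+\|u\|^2$, which is the estimate needed to describe $D(\PP_V)$ for the quadratic model \eqref{eq:62a} and to deduce compactness of $(\PP_V-\lambda)^{-1}$.
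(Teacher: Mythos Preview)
Your choice of tool (Lemma \ref{LemmaA4}) and your diagnosis of the closing difficulty (absorption of the $\epsilon\|m^\gamma u\|$ error) are both correct, but the organizing principle ``induction on $r$'' is not how the paper proceeds and, as you have set it up, it does not yield a proof. The inductive hypothesis you invoke would be the estimate \eqref{eq:108} for potentials in $\Tg_{r-1}$ with weight $m(V,r-1,\cdot)$; but for a given $V\in\Tg_r$ there is no reason that $V\in\Tg_{r-1}$, and even when it is, the weight $m(V,r-1,\cdot)$ differs from $m=m(V,r,\cdot)$, so the hypothesis does not feed into the conclusion. The sequence $2/(2^{r+1}-1)$ does \emph{not} arise from a recursion $\gamma_{r-1}\mapsto\gamma_r$ across values of $r$.

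What the paper does instead is fix $r$ and $m=m(V,r,\cdot)$ once and for all and run a bootstrap carrying a \emph{free parameter} $\beta\ge 0$, chosen only at the very end. First, from $\Re\langle m^\beta u,P_Vu\rangle$ one gets $\|m^{\beta/2}\nabla u\|^2\le C_\epsilon(\|P_Vu\|^2+\|u\|^2)+\epsilon^2\|m^\beta u\|^2$ (using only $|\nabla m|\le Cm$, which follows directly from \eqref{A.15}--\eqref{A.17}). Second, from $\Im\langle m^{-\alpha}Vu,P_Vu\rangle$ one gets $\|m^{(3\beta-2)/4}Vu\|^2$ controlled by the same right-hand side. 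Third, Lemma \ref{LemmaA4} is applied \emph{$r$ times} with $T=\partial_x^{k-1}V$, $k=1,\dots,r$, producing a recursion on the weight exponent for $(\partial_x^kV)u$, namely $\sigma_k=\tfrac12\sigma_{k-1}+\tfrac{\beta-2}{4}$ with $\sigma_0=(3\beta-2)/4$. Summing over $k=0,\dots,r$ reconstructs $\|m^{1+\sigma_r}u\|$ on the left, still with the error $\epsilon\|m^\beta u\|$ on the right; the loop closes by choosing $\beta$ so that $1+\sigma_r=\beta$, which forces $\beta=2/(2^{r+1}-1)$, and then taking $\epsilon$ small. So the exponent emerges as the fixed point of the $r$-step recursion on derivative order, not from an outer induction on $r$.

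Two minor corrections: the bounds $|D^\alpha V|\le m$ for $|\alpha|\le r$ are immediate from the definition \eqref{A.15} of $m$ and need no Glaeser/Landau--Kolmogorov interpolation; and the relevant derivative estimate is $|\nabla m|\le C\,m$ (not $m^{1-\delta}$), which is precisely what the commutation with weights in Steps~1 and~3 uses.
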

\begin{proof}~\\
{\bf Step 1:} For $\beta \geq 0$, we prove that for every $\epsilon>0$
there exists $C_\epsilon>0$ such that
\begin{equation}
\label{eq:110}
\|m^{\beta/2}\nabla u\|_2^2 \leq C_\epsilon\, \left(\|P_Vu \|_2^2+\|u\|_2^2\right)+ \epsilon^2 \|m^\beta u\|_2^2\,,
\end{equation}
for all $u\in C_0^\infty(\R^n)$. 

To prove \eqref{eq:110} we first observe that
\begin{equation}
  \Re\langle m^\beta u, P_V u\rangle = \|m^{\beta/2}\nabla u\|_2^2 +  \Re\langle\nabla(m^\beta)u,\nabla u\rangle \,. 
\end{equation}
It then follows, using Cauchy-Schwarz's inequality,  that
\begin{equation}
\label{eq:111} 
  \|m^{\beta/2}\nabla u\|_2^2  \leq \frac{2}{\epsilon^2}\| P_V  u \|^2_2 + \frac{\epsilon^2}{2}||m^\beta u||^2_2+
  \|m^{-\beta/2}\nabla(m^\beta)u\|_2^2\,.
\end{equation}
We now use the fact that by Assumption \eqref{A.17} we have  
\begin{equation*}
m^{-\beta/2}|\nabla(m^\beta)|\leq C \, m^{\beta/2}\,,
\end{equation*}
to obtain
\begin{displaymath}
  \|m^{-\beta/2}\nabla(m^\beta)u\|_2^2 \leq  \frac{C^2}{2\epsilon^2}\| u \|^2_2 + \frac{\epsilon^2}{2}||m^\beta u||^2_2\,,
\end{displaymath}
which, combined with \eqref{eq:111}, yields \eqref{eq:110}.\\
 
{\bf Step 2:} We now prove that, for $0\leq \beta \leq 2/3\,$, there
exists $C>0$ such that for every $\epsilon>0$ we have, for some
$C_\epsilon>0$,
\begin{equation}
\label{eq:112}
 \|m^{(3\beta-2)/4}Vu\|_2^2\leq C_\epsilon \, (  \| P_V u \|^2_2 + ||u||^2_2)+ C \, \epsilon^2 \, \|m^\beta u\|_2^2\,,
\end{equation}
for all $u\in C_0^\infty(\R^n)$. \\
Let $0\leq \alpha\leq 1$. An integration by parts yields
\begin{displaymath}
  \Im\langle m^{-\alpha}Vu,P_Vu\rangle= \langle m^{-\alpha} V^2u,u\rangle + \Im\langle\nabla(m^{-\alpha}V)u,\nabla u\rangle\,.
\end{displaymath}
We can then conclude that
\begin{displaymath}
  \|m^{-\alpha/2} Vu\|_2^2 \leq \|m^{-\alpha}Vu\|_2\|P_Vu\|_2+
  \|m^{-(1-\alpha)}\nabla(m^{-\alpha}V)\|_\infty \|m^\beta u\|_2\|m^{1-\alpha-\beta}\nabla u\|_2\,.
\end{displaymath}
Since $m^{-\alpha}\leq m^{-\alpha/2}$ and since by \eqref{A.17}, $|
m^{-(1-\alpha)}\nabla(m^{-\alpha}V)|$ is bounded we obtain
\begin{displaymath} 
    \|m^{-\alpha/2} Vu\|_2^2 \leq   \|P_Vu\|_2^2+ 2 C\, \|m^\beta u\|_2\, \|m^{1-\alpha-\beta}\nabla u\|_2\,.
\end{displaymath}
Setting $\alpha=1-3\beta/2$, we obtain \eqref{eq:112} from \eqref{eq:110}. \\

{\bf Step 3:} For $\beta \geq 0$ and $\sigma\leq0$, we prove that
there exists $C>0$ such that for every $\epsilon>0$ there exists
$C_\epsilon>0$ such that,
\begin{equation}
\label{eq:113}
 \|m^{\sigma/2-(2-\beta)/4}(\partial_{x_k} T)u\|_2^2\leq \|m^\sigma Tu\|_2\, \big[C_\epsilon(  \|P_Vu\|_2 + ||u||_2) + C\, \epsilon \,   \|m^\beta u\|_2\big]\,,
\end{equation}
for all $u\in C_0^\infty(\R^n)$, and $T\in C^2(\R^n)$ satisfying
\begin{equation}
\label{eq:67}
   \sup_{1\leq k,j\leq N} \{|\partial_{x_k} T|+|\partial_{x_j}\partial_{x_k} T| \} \leq C\, m\,,
\end{equation}
for some positive $C>0\,$.

We begin by rewriting \eqref{eq:86} in the form:
\begin{multline*}
  || m^{- 1+ \frac s2} (\partial_{x_k} T)u ||^2 =- \langle m^{-1+s} Tu \;|\;
  m^{-1}(\partial_{x_k} T)\partial_{x_k}  u\rangle \\
 - \langle \partial_{x_k}  u \;|\; m^{-1} (\partial_{x_k} T) m^{-1+s} Tu \rangle 
+ \langle m^{-1+s} Tu \;|\; m^{1-s}\partial_{x_k} (m^{-2 +s}
 (\partial_{x_k} T)) \,  u \rangle\;. 
\end{multline*}
 
For the first and second terms on the right-hand-side, which are
complex conjugate,  we have by \eqref{eq:110} and \eqref{A.17} 
that 
\begin{multline*}
 |  \langle m^{-1+s} Tu \;|\;
  m^{-1}(\partial_{x_k} T)\partial_{x_k}  u\rangle \,| 
   \leq \hat C \, \|m^{\beta/2}\partial_{x_k}  u\|_2\,\|m^{-1+s-\beta/2} Tu
  \|_2 \,  \\   \leq \hat C\, \|m^{-1+s-\beta/2} Tu\|_2\,  [C_\epsilon(\|P_Vu \|_2 + ||u||_2)+ C\, \epsilon \,  \|m^\beta u\|_2]\,.
\end{multline*}
Finally, for the last term, we have 
\begin{multline*}
  |\langle m^{-1+s} Tu \;|\; m^{1-s}\partial_{x_k} (m^{-2 +s} (\partial_{x_k} T)) u \rangle|\leq C\, \|m^{\beta/2} u\|_2\,\|m^{-1+s-\beta/2} Tu
  \|_2 \\\leq C\,  \|m^{-1+s-\beta/2} Tu\|_2 \, (C_\epsilon\|u \|_2 + \epsilon \, \|m^\beta u\|_2)\,.
\end{multline*}
Consequently,
\begin{displaymath}
   \| m^{- 1+ \frac s2} (\partial_{x_k} T)u \|^2_2 \leq C\,  \|m^{-1+s-\beta/2} Tu\|_2\, [C_\epsilon(\|P_Vu \|_2 + ||u||_2)+ \epsilon \|m^\beta u\|_2]\,.
\end{displaymath}
We now choose $s=\sigma+1+\beta/2$  to obtain \eqref{eq:113}.\\
 
{\bf Step 4:} We prove that for all $u\in C_0^\infty(\R^n)$, $1\leq
k\leq r$, $0\leq \beta\leq1$, and $\epsilon>0\,$, there exists
$C_\epsilon>0$ such that
\begin{multline}
  \label{eq:109}
\|m^{\beta/2-1+(\beta/2+1)2^{-k-1}}(\partial_x^kV)u\|_2^2\leq\\
\leq \|m^{(3\beta-2)/4}Vu\|_2^{2^{-k+1}}\big[C_\epsilon(\|P_Vu \|_2 + ||u||_2)+ \epsilon^{1/2}
\|m^\beta u\|_2\big]^{\frac{2^k-1}{2^{k-1}}}\,. 
\end{multline}
Applying \eqref{eq:113} recursively $1\leq k\leq r$ times yields
(using the fact that $T=\partial_x^jV$ satisfies \eqref{eq:67} for all
$0\leq j\leq r-1$)
\begin{equation}
\label{eq:114}
  \|m^{\sigma_k}(\partial_x^rV)u\|_2^2\leq
\|m^{\sigma_0}Vu\|_2^{2^{-k+1}}[C_\epsilon(\|P_Vu \|_2 + ||u||_2)+ \epsilon^{1/2}
\|m^\beta u\|_2]^{\frac{2^k-1}{2^{k-1}}}\,,
\end{equation}
where
\begin{equation}
\label{eq:115}
  \sigma_k = \frac{1}{2}\sigma_{k-1} + \frac{\beta}{4} -  \frac{1}{2} \,.
\end{equation}
The solution for the above recurrence relation is given by
\begin{displaymath}
  \sigma_k= 2^{-k}\sigma_0 + \Big(\frac{\beta}{2}-1\Big)(1-2^{-k}) \,.
\end{displaymath}
Setting $\sigma_0=(3\beta-2)/4$ in the above and in \eqref{eq:114} yields
\eqref{eq:109}.

{\bf Step 5:} We finally prove \eqref{eq:108}. \\
Let $\epsilon>0$ and $u\in C_0^\infty(\R^n)$. As
$\{\sigma_k\}_{k=0}^r$ is monotone decreasing for $\beta \geq -2$, we
obtain from \eqref{eq:109}, \eqref{eq:112} and \eqref{eq:113}, for
$T=V$, that
\begin{displaymath}
  \Big\|m^{\beta/2-1+(\beta/2+1)2^{-(r+1)}}\sum_{|\gamma|=0}^r|(D^\gamma V) u|\Big\|_2^2\leq C_\epsilon\, (  \|P_V u\|^2_2 + ||u||^2_2) + C\epsilon  \|m^\beta u\|_2^2\,.
\end{displaymath}
The above, with the aid of \eqref{A.17}, yields
\begin{displaymath}
    \|m^{\beta/2+(\beta/2+1)2^{-(r+1)}}u\|_2^2\leq C_\epsilon\, (  \| P_V u \|^2_2 + \|u\|^2_2)+C \, \epsilon^2 \|m^\beta u\|_2^2\,.
\end{displaymath}
Choosing $\beta=[2^r-1/2]^{-1}$ and $\epsilon$ which is sufficiently
small in the above and in \eqref{eq:112} yields \eqref{eq:108}.
\end{proof}

\begin{corollary}
\label{cor:entirespacer=1}
Suppose that
\begin{displaymath}
    \lim_{|x|\to + \infty}m(V,r,x)=+\infty \,. 
\end{displaymath}
Then, the resolvent of $\PP_V$ is compact.
\end{corollary}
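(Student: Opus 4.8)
The plan is to deduce compactness of the resolvent from the weighted a priori estimate \eqref{eq:108}, following the classical scheme for Schr\"odinger operators whose potential tends to infinity.

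First I would upgrade \eqref{eq:108} from $C_0^\infty(\R^n)$ to the whole domain $D(\PP_V)$. Since $\PP_V$ is by definition the closure of $P_V$ acting on $C_0^\infty(\R^n)$, every $u\in D(\PP_V)$ is an $L^2$-limit of a sequence $u_j\in C_0^\infty(\R^n)$ with $P_Vu_j\to\PP_Vu$ in $L^2$. Applying \eqref{eq:108} to the $u_j$, together with the elementary identity $\|\nabla u_j\|_2^2=\Re\langle P_Vu_j,u_j\rangle$, the quantities $\|m^{\delta}u_j\|_2$ and $\|\nabla u_j\|_2$ remain bounded, where $\delta:=\tfrac{2}{2^{r+1}-1}>0$; since $m^{\delta}u_j\to m^{\delta}u$ and $\nabla u_j\to\nabla u$ in $\mathcal D'(\R^n)$, the weak lower semicontinuity of the $L^2$-norm yields
\[
  \|m^{\delta}u\|_2^2+\|\nabla u\|_2^2\le C\bigl(\|\PP_Vu\|_2^2+\|u\|_2^2\bigr)\,,\qquad\forall\,u\in D(\PP_V)\,.
\]

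Next I would use that $\PP_V+1$ is surjective and $\PP_V$ is maximal accretive, so that $-1\in\rho(\PP_V)$ and $(\PP_V+1)^{-1}$ is bounded on $L^2(\R^n)$. By the displayed estimate, $(\PP_V+1)^{-1}$ maps the unit ball of $L^2(\R^n)$ into a set of the form
\[
  \mathcal B:=\bigl\{u\in H^1(\R^n)\,:\,\|m^{\delta}u\|_2+\|\nabla u\|_2+\|u\|_2\le M\bigr\}\,,
\]
so it suffices to prove that $\mathcal B$ is relatively compact in $L^2(\R^n)$. Given $\epsilon>0$, the hypothesis $m(V,r,x)\to+\infty$ as $|x|\to\infty$ furnishes $\rho_0>0$ with $m(V,r,x)^{\delta}\ge M/\epsilon$ for $|x|\ge\rho_0$, whence $\int_{|x|\ge\rho_0}|u|^2\,dx\le\epsilon^2$ uniformly over $u\in\mathcal B$; on the ball $\{|x|\le\rho_0\}$ the restrictions of the elements of $\mathcal B$ form a bounded subset of $H^1$, hence a relatively compact subset of $L^2$ by the Rellich--Kondrachov theorem. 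Combining the uniform tail bound with a finite $\epsilon$-net on the ball produces a finite $2\epsilon$-net for $\mathcal B$ in $L^2(\R^n)$, so $\mathcal B$, and therefore $(\PP_V+1)^{-1}$, is compact. Finally, the resolvent identity $(\PP_V-\lambda)^{-1}=\bigl(I+(\lambda+1)(\PP_V-\lambda)^{-1}\bigr)(\PP_V+1)^{-1}$ shows that $(\PP_V-\lambda)^{-1}$ is compact for every $\lambda\in\rho(\PP_V)$.

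The only points needing some care are the closure argument that carries \eqref{eq:108} over to all of $D(\PP_V)$ and the elementary compactness criterion ``a weighted $H^1$-bound whose weight blows up at infinity implies relative compactness in $L^2$''; neither is a genuine obstacle, the substantive analysis having already been done in Proposition \ref{prop:improve}.
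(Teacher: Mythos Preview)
Your argument is correct and is precisely the standard route the paper has in mind: the corollary is stated without proof because once \eqref{eq:108} is available, the compactness follows by the classical Rellich-type criterion you spell out (weighted $L^2$ bound with weight $m^\delta\to\infty$ for tightness, plus the $H^1$ bound from $\Re\langle P_Vu,u\rangle=\|\nabla u\|^2$ for local compactness). The care you take in passing from $C_0^\infty$ to $D(\PP_V)$ via the closure definition and weak lower semicontinuity is exactly right, and the resolvent identity at the end is the standard way to propagate compactness to all of $\rho(\PP_V)$.
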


\begin{remark}
For the case $r=1$, \eqref{eq:108} and \eqref{eq:110} together yield
\begin{equation}
    \label{eq:82}
\|m^{2/3}u\|_2^2 + \|m^{1/3}\nabla u\|_2^2+\|Vu\|_2^2 \leq C\, (  \| P_V  u \|^2_2 + \|u\|^2_2)\,.
\end{equation}
Consequently we have 
\begin{displaymath} 
  \|u\|_{2,2}^2 \leq C(\|u\|_2^2+ \|\Delta u\|_2^2) \leq C\, (\| P_V  u \|^2_2 + \|u\|^2_2)\,.
\end{displaymath}
We conclude from the above that
\begin{equation}
\label{cardom}
    D(P_V)=\{u\in H^2(\R^n) \, | \, Vu\in L^2(\R^n)\} \,.
\end{equation}
Note that we do not use in the proof the assumption that $m (V,r,x)
\ar +\infty$ as $|x| \ar +\infty$\,.
\end{remark}

\subsection{Half space problems}

We consider in the following the operator \eqref{eq:62} acting on
$\R^n_\#$. The boundary condition satisfied on $\partial\R^n_\#$ are
given by \eqref{eq:102} and depend on the value of $\#$ in
$\{D,N,R,T\}$ and on the additional parameter $\kappa \geq 0$ when
$\#\in \{R,T\}$.  We begin by defining $\PP^\#$, given by
\eqref{eq:62}, using the separation of variables technique presented
in the previous section.  Recall that the ``generalized Lax-Milgram''
approach, mentioned in Remark \ref{RmLM}, which relies on
\eqref{ineqV} (a condition which is {\em not} satisfied when all the
$\alpha_j$'s do not have the same sign in \eqref{eq:62}) is
inapplicable in this case.

In view of the foregoing discussion we now state 
\begin{proposition} 
The domain of $\PP^\#$ is given by \eqref{carac}, i.e.,
\begin{multline}
\label{eq:73} \
 D (\PP^\#):= \{u\in L^2_\# : \exists (u_j)_{j\geq1}\subset\mathcal C^\#\,,~
  u_j\underset{\tiny{j\to+\infty}}{\overset{L^2_\#}{\longrightarrow}}u\,,\\
   (\PP^\#u_j)_{j\geq1}~\textrm{is
    a Cauchy sequence in } L^2_\#\}\,,
\end{multline}
where $\mathcal C^\#$ is the core for $\mathcal P^\#$. \\
Moreover, we have
\begin{equation}  
D (\PP^\#)\subset H^1_\#\,,
\end{equation}
and  the $\#$-condition is satisfied for $u$ in the domain of $\PP^\#$.
\end{proposition}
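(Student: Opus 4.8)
The plan is to mimic the proof of Proposition \ref{Proposition4.4}, but now splitting off the quadratic part in the tangential variables as a separate entire-space factor and invoking the results of the preceding subsection for it. Write $x=(x',x_n)$ with $x'\in\R^{n-1}$ and set $\A_{x'}:=-\Delta_{x'}+i\sum_{j=1}^{n-1}\alpha_j x_j^2$ on $L^2(\R^{n-1})$ and $\mathcal L_{x_n}^\#:=-\partial_{x_n}^2+i\alpha_n x_n$ on $L^2_\#$ (the half-line complex Airy operator with the $\#$-condition of \eqref{eq:102}), so that, formally, $\PP^\#=\A_{x'}\otimes I+I\otimes\mathcal L_{x_n}^\#$ (recall $\alpha_n=J_n\neq0$ and $\alpha_j\neq0$). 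The first factor is of class $\Tg_1$ by Example \ref{ExampleA1}, hence maximal accretive with compact resolvent by Corollary \ref{cor:entirespacer=1}; the second is maximal accretive for $\#\in\{D,N,R\}$ and maximal accretive after a shift $\mathcal L^T_{x_n}+\nu$ for $\#=T$ (with $\kappa\ge0$), by Sections \ref{s2}--\ref{s3}. Consequently $(e^{-t\A_{x'}}\otimes e^{-t\mathcal L_{x_n}^\#})_{t>0}$ is, up to the factor $e^{\nu t}$ in the transmission case, a $C_0$-contraction semigroup on $L^2_\#$, and one defines $\PP^\#$ (or $\PP^\#+\nu$) as its generator, exactly as in Definition \ref{Definition4.2}.

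First I would identify the core. Take $\mathcal C^\#:=D(\A_{x'})\odot D(\mathcal L_{x_n}^\#)$, the span of functions $f(x')g(x_n)$ with $f\in D(\A_{x'})$ and $g\in D(\mathcal L_{x_n}^\#)$ (or, as in Remark \ref{choixdeC}, one may replace the second factor by the span of the eigenfunctions of $\mathcal L^\#_{x_n}$, which is dense in $L^2_\#$ by Propositions \ref{CompRa} and \ref{CompR}). This set is dense in $L^2_\#$, contained in $D(\PP^\#)$ with $\PP^\#(f\otimes g)=(\A_{x'}f)\otimes g+f\otimes(\mathcal L_{x_n}^\# g)\in L^2_\#$, and invariant under the tensor semigroup since a $C_0$-semigroup preserves the domain of its generator. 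Theorem \ref{thmcoeur}, applied in the transmission case to $\PP^T+\nu$, then yields that $\mathcal C^\#$ is a core and $\PP^\#=\overline{\PP^\#_{/\mathcal C^\#}}$, which is precisely the characterization \eqref{carac} rewritten as \eqref{eq:73}.

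For the inclusion $D(\PP^\#)\subset H^1_\#$ and for the boundary relation, I would argue as in Proposition \ref{Proposition4.4}. Given $u\in D(\PP^\#)$, pick $(u_j)\subset\mathcal C^\#$ with $u_j\to u$ in $L^2_\#$ and $(\PP^\#u_j)$ Cauchy. For every $v\in\mathcal C^\#$ one has $\Re\langle\PP^\# v,v\rangle=\|\nabla v\|_{L^2_\#}^2+\kappa\,B(v)$, where $B(v)=\int_{\R^{n-1}}|v(x',0)|^2\,dx'$ in the Robin case and $B(v)=\int_{\R^{n-1}}|v_+(x',0)-v_-(x',0)|^2\,dx'$ in the transmission case; since $\kappa\ge0$ this is $\ge\|\nabla v\|_{L^2_\#}^2$, so $(\nabla u_j)$ is Cauchy in $L^2_\#$, hence $u_j\to u$ in $H^1_\#$ and $u\in H^1_\#$. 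The zeroth traces $\gamma_0 u_j$ then converge in $H^{1/2}(\R^{n-1})$, and standard elliptic estimates applied to the equation $\PP^\# u\in L^2_\#$ give $u\in H^2_{\loc}$, so the first trace $\gamma_1 u$ is well defined in $H^{-1/2}_{\loc}(\R^{n-1})$ and the $\#$-relation of \eqref{eq:102} passes to the limit.

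The main obstacle here is more bookkeeping than a genuine difficulty: unlike the linear case of Proposition \ref{Proposition4.4}, the cross term $\Im\langle-\Delta u_j,Vu_j\rangle$ cannot be bounded by $\|\nabla u_j\|\,\|u_j\|$ because $\nabla V$ is now unbounded, so the present statement deliberately stops at $H^1_\#$ and does not assert the sharp description $D(\PP^\#)=H^2_\#\cap L^2(\R^n_\#;\ell(x)^2dx)\cap\mathfrak D^\#_n$; obtaining that would require transplanting the weighted estimates of Proposition \ref{prop:improve} to the half-space, controlling the extra boundary terms in the integrations by parts, and using elliptic regularity for the Robin (resp. transmission) Laplacian. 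The two points that need care even for the weaker claim are the spectral shift in the transmission case, which must be carried consistently through the application of Theorem \ref{thmcoeur}, and the fact — recalled in Remark \ref{choixdeC} — that the core cannot be taken to be $C_0^\infty$, since that set is not invariant under the semigroup.
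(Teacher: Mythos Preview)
Your proposal is correct and follows essentially the same approach as the paper's own proof: the paper simply states that \eqref{eq:73} is obtained exactly as \eqref{carac}, derives the $H^1$ inclusion from the accretivity inequality $\|\nabla u\|^2\le\tfrac12(\|\PP^\# u\|^2+\|u\|^2)$ on the core, and remarks that the boundary/transmission condition passes to the limit as in the proof of Proposition~\ref{Proposition4.4}. You have spelled out these steps in more detail (including the semigroup shift in the transmission case and the reason why the argument stops at $H^1$ rather than $H^2$), but there is no substantive difference in method.
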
 
\begin{proof}
The proof of \eqref{eq:73} is identical with that of
\eqref{carac}.  It can also be easily verified that 
\begin{equation}\label{controlH1} 
\sum_j \|D_{x_j}  u\|^2 \leq \frac 12 (\|\mathcal P^\# u\|^2 + \|u\|^2)\,,\, \forall u \in
    \mathcal C^\#\,, 
\end{equation} 
and hence $D (\PP^\#)\subset H^1_\#$. Similarly to the proof of
\eqref{carac}, we can show that either the boundary or transmission
condition are well defined and satisfied for $u$ in the domain of
$\PP^\#$.
\end{proof}

We now obtain an estimate, similar to \eqref{eq:108}, in the presence
of boundaries. We derive it for operators of the type $\PP^\#$ as
defined in \eqref{eq:62a} and \eqref{eq:73} and let $V$ be defined,
through the remainder of this section, by \eqref{eq:78} (note that
$V\in\Tg_1$).  We begin by introducing a partition of unity on
$\overline{\mathbb R_+}$ corresponding to the normal variable $x_n$
\begin{equation}   \label{eq:80}
  1 = \phi_1^2+\phi_2^2\,,\, \phi_1 =1 \mbox{ on } [0,1]\,,\, \supp \phi_1
\in [0,2]\,. 
\end{equation}
The support of $ x \mapsto \phi_1(x_n) u (x)$ belongs to $\mathcal
B^n_+:= \mathbb R^{n-1}\times [0,2]$.  We then define 
\begin{equation}   \label{eq:116}
m_T(V,r,x',x_n) =\sqrt{1+ \sum_{|\alpha'| \leq 1}
        |\partial_{x'}^{\alpha'} V (x',x_n)|^2} 
\end{equation} 
which satisfies in $\mathcal B^n_+$
\begin{equation} 
|\nabla V (x) | \leq C\, m_T \,.
\end{equation}
We note the explicit expression for $m_T$  
\begin{equation*}
m_T (x',x_n) = \sqrt{ 1 + V (x',x_n)^2 + |x'|^2}\,.
\end{equation*}
We can now state   
\begin{proposition}
Then for any $u\in\CC^\#$ we have
\begin{equation}    \label{eq:117}
\|m_T^{2/3}\phi_1u\|_2^2 + \|m^{2/3}\phi_2u\|_2^2 + \|Vu\|_2^2 \leq C\, (  \| \PP^\#u \|^2_2 + \|u\|^2_2)\,.
\end{equation}
\end{proposition}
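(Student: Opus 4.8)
The plan is to split $u$ by means of the partition of unity \eqref{eq:80} and to estimate the two pieces by different arguments. Since $\phi_1^2+\phi_2^2\equiv 1$ one has $\|Vu\|_2^2=\|V\phi_1u\|_2^2+\|V\phi_2u\|_2^2$, so that \eqref{eq:117} follows once we establish
\[
\|m_T^{2/3}\phi_1u\|_2^2+\|V\phi_1u\|_2^2\le C\,(\|\PP^\#u\|_2^2+\|u\|_2^2),\qquad
\|m^{2/3}\phi_2u\|_2^2+\|V\phi_2u\|_2^2\le C\,(\|\PP^\#u\|_2^2+\|u\|_2^2).
\]
Throughout, $u\in\CC^\#$, hence $u$ is locally $H^2$, satisfies the $\#$-condition, and (by the preceding proposition, cf.\ \eqref{controlH1}) obeys $\|\nabla u\|_2\le C(\|\PP^\#u\|_2+\|u\|_2)$; all the integrations by parts below are therefore justified.

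\textbf{The piece away from the boundary.} Since $\phi_2u$ is supported in $\{x_n\ge 1\}$ (in $\{|x_n|\ge 1\}$ for the transmission problem), its extension by zero lies in $H^2(\R^n)$ with $V\phi_2u\in L^2(\R^n)$, i.e.\ in $D(\PP_V)$ by \eqref{cardom}. Applying the entire-space estimate \eqref{eq:82} (extended from $C_0^\infty(\R^n)$ to $D(\PP_V)$ by density) to $\phi_2u$, and using $\PP_V(\phi_2u)=\phi_2\,\PP^\#u-2\nabla\phi_2\cdot\nabla u-(\Delta\phi_2)u$ together with the bound on $\|\nabla u\|_2$, yields the second inequality above.

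\textbf{The piece near the boundary.} Put $w=\phi_1u$. Since $\phi_1\equiv 1$ near $\{x_n=0\}$, $w$ satisfies the same $\#$-condition as $u$, and $\|\PP^\#w\|_2\le C(\|\PP^\#u\|_2+\|u\|_2)$ because $\PP^\#w=\phi_1\PP^\#u-[\Delta,\phi_1]u$; moreover, on $\supp\phi_1$ the weight $m_T$ is comparable to a power of $1+|x'|$, so that $\|m_T^{2/3}w\|_2$ and $\|Vw\|_2$ are a priori finite for $u\in\CC^\#$. It therefore suffices to show $\|m_T^{2/3}w\|_2^2+\|Vw\|_2^2\le C(\|\PP^\#w\|_2^2+\|w\|_2^2)$. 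This is obtained by running the multiplier argument of Proposition~\ref{prop:improve} (the case $r=1$, exponent $\beta=2/3$, equivalently $\alpha=0$) in the half-space, but carrying out \emph{every} integration by parts in the tangential variables $x'=(x_1,\dots,x_{n-1})$ only. The steps are: (i) from $\Re\langle m_T^{2/3}w,\PP^\#w\rangle$ — where the single integration by parts of $-\Delta$ produces the boundary term $\kappa\int_{x_n=0}m_T^{2/3}|w|^2$ (respectively $\kappa\int_{x_n=0}m_T^{2/3}|w_+-w_-|^2$, obtained by combining the $\R^n_+$ and $\R^n_-$ interface integrals), which is $\ge 0$, and vanishes for Dirichlet — together with $|\nabla m_T|\le Cm_T$, one gets $\|m_T^{1/3}\nabla w\|_2^2\le C_\var(\|\PP^\#w\|_2^2+\|w\|_2^2)+\var^2\|m_T^{2/3}w\|_2^2$; (ii) from $\Im\langle Vw,\PP^\#w\rangle$, writing $-\Delta=-\Delta_{x'}-\partial_{x_n}^2$, integrating $-\Delta_{x'}$ by parts in $x'$ (no boundary term) and $-\partial_{x_n}^2$ by parts in $x_n$ (its boundary term $\kappa\int_{x_n=0}V|w|^2$, or the transmission analogue, being real, drops from the imaginary part), and using $|\nabla_{x'}V|\le Cm_T$ and $|\partial_{x_n}V|\le C$, one obtains $\|Vw\|_2^2\le C_\var(\|\PP^\#w\|_2^2+\|w\|_2^2)+C\var\|m_T^{2/3}w\|_2^2$; (iii) the purely tangential version of Lemma~\ref{LemmaA4} (with $X_k=\partial_{x_k}$, $1\le k\le n-1$, and $T=V$) — which involves no normal derivative, hence no boundary term — controls $\|m_T^{-1/3}(\partial_{x_k}V)w\|_2^2$ by $\|Vw\|_2$ times $\bigl(C_\var(\|\PP^\#w\|_2+\|w\|_2)+\var^{1/2}\|m_T^{2/3}w\|_2\bigr)$; (iv) finally, using the elementary bound $m_T^2\le C(1+V^2+|\nabla_{x'}V|^2)$ (valid since each $\alpha_j\ne 0$), one combines (ii) and (iii) to estimate $\|m_T^{2/3}w\|_2^2$ by $C_\var(\|\PP^\#w\|_2^2+\|w\|_2^2)+C\var^{1/2}\|m_T^{2/3}w\|_2^2$, absorbs the last term by choosing $\var$ small, and then feeds the resulting bound back into (ii) to get $\|Vw\|_2^2\le C(\|\PP^\#w\|_2^2+\|w\|_2^2)$.

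\textbf{Main difficulty.} The non-routine point is verifying, uniformly over $\#\in\{D,N,R,T\}$, that the boundary contributions produced by the normal integration by parts are either non-negative (step (i)) or purely real (step (ii)), so that they may be discarded; in the transmission case this rests on combining the two interface integrals through the jump relation $\partial_{x_n}w_-(x',0)=\partial_{x_n}w_+(x',0)=\kappa\bigl(w_+(x',0)-w_-(x',0)\bigr)$, which turns the combined term into one proportional to $\int_{x_n=0}V|w_+-w_-|^2$. The decision to integrate by parts only in the tangential directions is exactly what prevents any higher-order boundary terms from appearing, while the cut-off $\phi_1$ keeps all the weighted norms finite.
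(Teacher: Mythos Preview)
Your proof is correct and follows essentially the same route as the paper: split via $\phi_1,\phi_2$, apply the entire-space estimate \eqref{eq:82} to $\phi_2u$, and for $\phi_1u$ rerun the $r=1$ case of Proposition~\ref{prop:improve} in the half-space, observing that the boundary terms arising from the normal integration by parts are non-negative (Step~(i)) or real (Step~(ii)) and that the commutator identity of Lemma~\ref{LemmaA4} is purely tangential (Step~(iii)). You supply more detail than the paper does on the boundary contributions, in particular the explicit transmission computation yielding $\kappa\int_{x_n=0}m_T^{2/3}|w_+-w_-|^2$ and its analogue with $V$, which the paper leaves implicit.
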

\begin{proof}
The proof is an adaptation of Proposition \ref{prop:improve} to the
$\#$ realization of $\PP_V$ in the case $r=1$. In view of the
similarities we provide only its outlines.

It can be easily verified that (recall that $\kappa\geq0$) for any
$v\in \mathcal S (\overline{\mathbb R^n_+})\cap \mathfrak D^\#_n$ with
support in $\overline{\mathcal B^n_+}$
\begin{displaymath}
   \Re\langle m_T^{2/3} v, \PP^\#v\rangle \geq  \|m_T^{1/3}\nabla v\|_2^2 +  \Re\langle\nabla(m_T^{2/3})v,\nabla v\rangle \,. 
\end{displaymath}
Hence, we obtain that for every $\epsilon>0$ there exists
$C_\epsilon>0$ such that
\begin{equation}   \label{eq:119}
  \|m_T^{1/3}\nabla u\|_2^2 \leq C_\epsilon\, \left(\|\PP^\#u \|_2^2+\|u\|_2^2\right)+ \epsilon^2 \|m_T^{2/3} u\|_2^2\,,
\end{equation}
Next we observe that
\begin{displaymath}
  \Im\langle Vv,\PP^\#v\rangle= \langle V^2v,v\rangle + \Im\langle v\nabla V,\nabla v\rangle\,,
\end{displaymath}
which together with \eqref{eq:119} leads to 
\begin{equation}
\label{eq:118}
 \|Vv\|_2^2\leq C_\epsilon \, (  \| \PP^\#v\|^2_2 + \|v\|^2_2)+ C \, \epsilon \, \|m_T^{1/3}v\|_2^2\,.
\end{equation}

We next repeat the argument leading to \eqref{eq:113}, and as the
integration by parts does not involve any boundary terms we obtain in
the same manner
\begin{displaymath}
 \|m_T^{-1/3}(\partial_{x_k} V)u\|_2^2\leq C_\epsilon(  \|\PP^\#  u\|^2_2 + ||u||^2_2) + C\, \epsilon\,   \|m_T^{2/3} u\|_2^2\,.
\end{displaymath}
We can now combine the above with \eqref{eq:118} to obtain
\begin{displaymath}
  \|m_T^{2/3}v\|_2^2  + \|Vv\|_2^2 \leq C\, (  \| \PP^\#v \|^2_2 + \|v\|^2_2)\,.
\end{displaymath}
Since $\phi_1u\in \mathcal S(\overline{\mathbb R^n_+})\cap{\mathfrak
D}^\#_n$ with support in $\overline{\mathcal B^n_+}$, and
$\phi_2u\in\Sg(\R^n)$ we may use the above and \eqref{eq:82} to obtain
that
\begin{equation}   \label{eq:120}
  \|m_T^{2/3}\phi_1u\|_2^2 + \|m^{2/3}\phi_2u\|_2^2 + \|Vu\|_2^2 \leq C\, (  \|
  \PP^\#(\phi_1u) \|^2_2 + \|\PP^\#(\phi_2u) \|^2_2+ \|u\|^2_2)\,.
\end{equation}
It can be easily verified, using \eqref{controlH1}, that
\begin{displaymath}
   \|\PP^\#(\phi_1u) \|^2_2 + \|\PP^\#(\phi_2u) \|^2_2\leq C\, (\|\PP^\#(\phi_2u) \|^2_2+\|u\|_2^2)\,,
\end{displaymath}
which together with \eqref{eq:120} yields \eqref{eq:117}.
\end{proof}
As a corollary, we get
\begin{corollary}
\label{lem:truncation}
Let $V$ be given by \eqref{eq:78}. Then, the domain of $\PP^\#$ is
\begin{displaymath}
   D(\PP^\#) = \hat{\Dg}_\PP\overset{def}{=}\{u\in H^2(\R^n_\#)\cap \mathfrak
   D^\#_n \,|\,Vu\in L^2( \R^n_\#)\}\,. 
\end{displaymath}
\end{corollary}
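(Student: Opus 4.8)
The strategy is to bootstrap the weighted a priori estimate \eqref{eq:117} into a two-sided description of $D(\PP^\#)$, following the scheme already used for the linear potential in Proposition~\ref{Proposition4.4}: the inclusion $D(\PP^\#)\subseteq\hat{\Dg}_\PP$ is obtained by passing to the limit along an approximating sequence drawn from the core $\CC^\#$, while the reverse inclusion $\hat{\Dg}_\PP\subseteq D(\PP^\#)$ is obtained by combining surjectivity of $\PP^\#+1$ with an injectivity argument. For the first inclusion, let $u\in D(\PP^\#)$ and pick, by \eqref{eq:73}, a sequence $(u_j)_{j\geq1}\subset\CC^\#$ with $u_j\to u$ in $L^2_\#$ and $(\PP^\#u_j)_{j\geq1}$ Cauchy in $L^2_\#$. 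Since $\CC^\#$ is a linear subspace and $\PP^\#$ is linear, each difference $u_j-u_k$ again lies in $\CC^\#$, so applying \eqref{eq:117} and \eqref{controlH1} to $u_j-u_k$ shows that $(Vu_j)_{j\geq1}$ and $(\nabla u_j)_{j\geq1}$ are Cauchy in $L^2_\#$, whence $(\Delta u_j)_{j\geq1}=(-\PP^\#u_j+iVu_j)_{j\geq1}$ is Cauchy as well. Letting $j\to\infty$ gives $u\in H^1_\#$, $\Delta u\in L^2_\#$ and $Vu\in L^2_\#$ (the latter since multiplication by the continuous function $V$ is closed), and, exactly as in the proof of Proposition~\ref{Proposition4.4}, the $H^1_\#$-convergence of $(u_j)$ together with local elliptic estimates forces $u$ to satisfy the $\#$-condition \eqref{eq:102}. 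Invoking finally the boundary (resp. interface) elliptic regularity of the corresponding $\#$-Laplacian---Dirichlet being classical, Neumann and Robin as in \cite{GH}, transmission analogously---we conclude $u\in H^2(\R^n_\#)$, hence $u\in\hat{\Dg}_\PP$.

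For the reverse inclusion, let $u\in\hat{\Dg}_\PP$. Then $f:=(\PP^\#+1)u=(-\Delta+iV+1)u$ belongs to $L^2_\#$ because $u\in H^2(\R^n_\#)$ and $Vu\in L^2_\#$. Since $\PP^\#$ is m-accretive---it generates the contraction semigroup $e^{-t\PP^\#}$ by the separation-of-variables construction of Section~\ref{sModels}---the map $\PP^\#+1\colon D(\PP^\#)\to L^2_\#$ is bijective, so there is $v\in D(\PP^\#)$ with $(\PP^\#+1)v=f$; by the first part of the proof $v\in\hat{\Dg}_\PP$, hence $w:=u-v\in\hat{\Dg}_\PP$ solves $(-\Delta+iV+1)w=0$. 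Because $w\in H^2(\R^n_\#)\cap\mathfrak D^\#_n$ with $Vw\in L^2_\#$, the integration by parts
\begin{equation*}
0=\Re\langle(-\Delta+iV+1)w,w\rangle=\|\nabla w\|_2^2+\|w\|_2^2+\mathcal B(w)
\end{equation*}
is legitimate, where $\mathcal B(w)=\kappa\int_{\{x_n=0\}}|w|^2$ for $\#=R$, $\mathcal B(w)=\kappa\int_{\{x_n=0\}}|w_+-w_-|^2$ for $\#=T$, and $\mathcal B(w)=0$ for $\#\in\{D,N\}$; in every case $\mathcal B(w)\geq0$ since $\kappa\geq0$. Hence $w=0$, i.e. $u=v\in D(\PP^\#)$, which proves the claim.

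The only genuinely delicate point is the final step of the first inclusion, namely the passage from $\Delta u\in L^2_\#$ (with $u\in H^1_\#$ satisfying the $\#$-condition) to $u\in H^2(\R^n_\#)$ up to the boundary: here interior regularity does not suffice and one must appeal to the boundary/interface elliptic regularity of the relevant $\#$-Laplacian, exactly as in the proof of Proposition~\ref{Proposition4.4}. Everything else---the linearity of $\CC^\#$, the closedness of multiplication by $V$, the m-accretivity of $\PP^\#$, and the sign of $\mathcal B$---is routine, the substantive analytic content having been absorbed already into the estimate \eqref{eq:117}.
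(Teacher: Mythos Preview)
Your proof is correct and follows essentially the same approach as the paper: the inclusion $D(\PP^\#)\subseteq\hat{\Dg}_\PP$ via the core sequence and the estimate \eqref{eq:117} (the paper cites the equivalent \eqref{eq:120}), and the reverse inclusion by the surjectivity/injectivity argument borrowed from the end of the proof of Proposition~\ref{Proposition4.4}. You supply more detail than the paper---in particular you spell out the injectivity step via the integration-by-parts identity with the nonnegative boundary term $\mathcal B(w)$---but the strategy is identical.
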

\begin{proof}
The inclusion $D(\PP^\#)\subseteq\hat{\Dg}_\PP$ follows immediately
from \eqref{eq:120}. To prove the other direction we may use the same
argument as in the conclusion of the proof of Proposition
\ref{Proposition4.4}.
\end{proof}

Since $m_T (x)$ tends to $+\infty$ as $|x|\ar +\infty$, we get another
corollary:
\begin{corollary} 
The resolvent of $\PP^\#$ given by \eqref{eq:62} is compact.
\end{corollary}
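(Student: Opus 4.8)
The plan is to reduce the statement to the compactness of the embedding $D(\PP^\#)\hookrightarrow L^2_\#$, where $D(\PP^\#)$ is equipped with the graph norm $u\mapsto(\|u\|_2^2+\|\PP^\# u\|_2^2)^{1/2}$. As in the proof of Corollary \ref{lem:truncation}, $\PP^\#+1$ is a bijection from $D(\PP^\#)$ onto $L^2_\#$, and since $\PP^\#$ is accretive by construction (it is the generator of a contraction semigroup obtained by the separation-of-variables procedure of the previous section) one has $\|(\PP^\#+1)^{-1}\|_{\mathcal L(L^2_\#)}\leq 1$. Hence, once the embedding $D(\PP^\#)\hookrightarrow L^2_\#$ is shown to be compact, the operator $(\PP^\#+1)^{-1}\colon L^2_\#\to L^2_\#$ is compact, and therefore so is the resolvent at every point of $\rho(\PP^\#)$.

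To prove compactness of the embedding, I would take a sequence $(u_k)_{k\geq1}\subset D(\PP^\#)$ bounded in the graph norm, and show it has a subsequence convergent in $L^2_\#$. By \eqref{eq:117} — which, by density of $\CC^\#$ and the identification of $D(\PP^\#)$ in Corollary \ref{lem:truncation}, holds for every element of $D(\PP^\#)$ — together with \eqref{controlH1}, \eqref{eq:82} applied to $\phi_2 u_k$, and the interior elliptic control of $\|u_k\|_{H^2(\R^n_\#)}$ used in the proof of Corollary \ref{lem:truncation}, the sequence $(u_k)$ is bounded in $H^2(\R^n_\#)$ and the quantities $\|m_T^{2/3}\phi_1 u_k\|_2$ and $\|m^{2/3}\phi_2 u_k\|_2$ are bounded uniformly in $k$.

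Two further ingredients finish the argument. First, \emph{local compactness}: for every $\rho>0$ the Rellich--Kondrachov theorem gives a compact embedding $H^2(B_\rho\cap\R^n_\#)\hookrightarrow L^2(B_\rho\cap\R^n_\#)$ (applied componentwise in the transmission case), so a diagonal extraction over an exhausting sequence of balls produces a subsequence, still denoted $(u_k)$, converging in $L^2_{\loc}(\overline{\R^n_\#})$. Second, \emph{tightness}: recalling that $m_T(x)=\sqrt{1+V(x)^2+|x'|^2}$ and that $m=m(V,1,\cdot)$ with $V$ given by \eqref{eq:78}, one checks that $m_T(x)\to+\infty$ as $|x|\to+\infty$ within the slab $\{x_n\leq 2\}$ and $m(x)\to+\infty$ as $|x|\to+\infty$ in all of $\R^n$; since $\phi_1$ is supported in $\{x_n\leq 2\}$ and $\phi_1^2+\phi_2^2\equiv1$, the uniform bounds above yield $\sup_k\|u_k\|_{L^2(\{|x|\geq\rho\})}\to0$ as $\rho\to+\infty$. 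Combining local convergence with uniform smallness of the tails shows that $(u_k)$ is Cauchy in $L^2_\#$, hence convergent, which establishes the compact embedding and therefore the corollary.

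The argument is essentially routine once \eqref{eq:117} is available; the only points needing a little care are the bookkeeping in the transmission case (two half-space components and the interface $\{x_n=0\}$) and the overlap region $\{1\leq x_n\leq 2\}$ where both cut-offs $\phi_1,\phi_2$ are active — but there both weights control the $L^2$ mass, so no genuine obstacle arises. The real content is entirely in the weighted a priori estimate \eqref{eq:117} proven above.
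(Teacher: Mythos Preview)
Your proposal is correct and follows exactly the same approach as the paper, which simply states that since $m_T(x)\to+\infty$ as $|x|\to+\infty$ the compactness follows from the weighted estimate \eqref{eq:117}; you have spelled out the standard tightness-plus-local-Rellich argument that underlies this one-line claim. One minor remark: you do not actually need the full $H^2$ control for the local compactness step---the $H^1$ bound from \eqref{controlH1} already suffices for Rellich--Kondrachov.
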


\section{A lower bound}
\label{s6}

In this section we prove \eqref{limSpect1N} for either the Robin
boundary condition $\#=R$ or the transmission problem $\# = T\flat$
with $\flat = D$ or $N$.  We walk along the same steps used in
\cite{Hen2} to obtain the same lower bound as for the Dirichlet
boundary condition. The Dirichlet and Neumann cases can appear as
particular cases of the Robin case for $\kappa=0$ and
$\kappa=+\infty$.

For some $1/3<\varrho<2/3$ and for every $h\in(0,h_0]$, we choose two
sets of indices $\Jg_{i}(h)\,$, $ \Jg_{\partial}(h) \,$, and a set of
points (for a transmission problem we have $\Omega^\# = \Omega^T:=
\Omega_-\cup\Omega_+$ and
$\partial\Omega^\#=\partial\Omega_-\cup\partial\Omega$)
\begin{subequations}   \label{eq:25}
\begin{equation}
\big\{a_j(h)\in \Omega^\# : j\in \mathcal J_i(h)\big\}\cup\big\{b_k(h)\in\pa {\Omega^\#} : k\in \mathcal J_\partial (h)\big\}\,,
\end{equation}
such that $B(a_j(h),h^\varrho)\subset\,{\Omega^\#} $\,,
\begin{equation}
    \bar\Omega\subset\bigcup_{j\in \Jg_{i}(h)}B(a_j(h),h^{\varrho})~\cup\bigcup_{k\in \Jg_{\partial }(h)}B(b_k(h),h^{\varrho})\,,
\end{equation}
and such that the closed balls $\bar B(a_j(h),h^{\varrho}/2)\,$, $\bar
B(b_k(h),h^{\varrho}/2)$ are all disjoint. \\
Note that $ \sharp \mathcal J_i(h)\propto h^{-n{\varrho}}$ and $
\sharp \mathcal J_\partial (h)\propto h^{-(n-1){\varrho}}\,.$\\
Now we construct in $\mathbb R^n$ two families of functions 
\begin{equation}\label{part1}
 (\chi_{j,h})_{j\in \mathcal J_i(h)} \mbox{ and } (\zeta_{j,h})_{j\in \mathcal J_\partial (h)}\,,
\end{equation}
such that, for every $x\in\bar\Omega\,$,
\begin{equation}
\sum_{j\in \mathcal J_i(h)}\chi_{j,h}(x)^2+\sum_{k\in \mathcal J_\partial (h)}\zeta_{k,h}(x)^2=1\,,
\end{equation}
\end{subequations}
and such that $\Supp \chi_{j,h}\subset B(a_j(h),h^{\varrho})$ for
$j\in \mathcal J_i(h)$, $\Supp\zeta_{j,h}\subset
B(b_j(h),h^{\varrho})$ for $j\in \Jg_\partial\,$, and
$\chi_{j,h}\equiv 1$ (respectively $\zeta_{j,h}\equiv1$) on $\bar
B(a_j(h),h^{\varrho}/2)$ (respectively $\bar
B(b_j(h),h^{\varrho}/2)$)\,.\\

To verify that the approximate resolvent constructed in the sequel
(see \eqref{defResApp1}) satisfies the boundary conditions on
$\partial\Omega^\#$, we require in addition that
\begin{equation}
 \label{eq:90}
\frac{\partial\zeta_{k,h}}{\partial\nu}\Big|_{\partial\Omega^\#}=0 
\end{equation}
for $\#\in \{ N, R, T\flat\}$.  Note that, for all
$\alpha\in\mathbb{N}^n\,$, we can assume that there exist positive
$h_0$ and $C_\alpha$, such that, $\forall h \in (0,h_0]$, $ \forall
x\in \overline{\Omega}$,
\begin{equation}
\label{supPaCutoff} 
\sum_j |\pa^\alpha\chi_{j,h}(x)|^2\leq C_\alpha \, h^{- 2 |\alpha|{\varrho}} ~~~\textrm{ and }~~~
\sum_j |\pa^\alpha\zeta_{j,h}(x) |^2 \leq C_\alpha  \, h^{-2 |\alpha|{\varrho}}\,.
\end{equation}
We introduce also $\eta_{j,h} = 1_{\Omega^\#}\, \zeta_{j,h}\,. $\\

We next introduce for each $j$ an approximate operator. \\ 
{\bf Interior balls.\\} For $j\in \mathcal J_i(h)\,$, we use a
linearization of $V$ at $a_j$ and set
\begin{equation}   \label{defCAjh}
\begin{cases}
  \A_{j,h} = -h^2\Delta+i\, \bigl( V(a_j(h))+\nabla V(a_j(h))\cdot(x-a_j(h))\bigr)\,,\\
\Dg(\A_{j,h}) = H^2(\mathbb{R}^n)\cap L^2(\mathbb{R}^n ; |\nabla V(a_j) \cdot x|^2dx)\,.
\end{cases}
\end{equation}
In this case, $\sigma(\A_{j,h}) = \emptyset\,$ and, after the
rescaling $x\mapsto h^{-2/3}x\,$, we may use \eqref{eq:12} to obtain
that that for all $\omega\in\mathbb{R}\,$, there exist $C_\omega>0$
{and $h_0 >0$ such that, for any $h\in (0,h_0]$} and any $j \in
\mathcal J_i(h)$,
\begin{equation}    \label{resCAjh}
\sup_{\Re z\leq \, \omega \, h^{2/3}}\|(\A_{j,h}-z)^{-1}\|\leq \frac{C_\omega}{h^{2/3}}\,.
\end{equation}
To obtain that $C_\omega$ is independent of $j$ and $h$, we use
Assumption \ref{notdeg}. \\
{\bf Boundary balls.}\\
In order to define the approximating operators at the boundary, we
denote by\break  $\mathcal{F}_{b_j} = \mathcal{F}_{b_j(h)}$ the local
diffeomorphism defined by 
\begin{equation}   \label{eq:38}
  \mathcal{F}(s,\rho) = \varphi(s) -  \rho \, \vec\nu(s)\,,
\end{equation}
where $|\rho |<\delta$, $s \in \mathbb R^{n-1}$ and
$\varphi(s)\in\partial\Omega^\# $ for all $s \in B(0,\delta)$ for some
sufficiently small $\delta>0$. We choose $b = b_j(h)$ as the origin,
so that $\varphi(0) = b_j(h)\,$. Let further
\begin{equation}   \label{defvecJ}
  \vec{J} = \nabla V (b_j(h))\cdot D \mathcal{F}(0,0) \quad ; \quad \vec{J}^\prime= \vec{J}-(\vec{J} \cdot \vec \nu)\vec \nu \,,
\end{equation} 
and
\begin{displaymath}
  \cos \theta =    \frac{\vec{J}\cdot \vec{\nu}}{J} \,, 
\end{displaymath}
where $J=|\vec{J}|$.  

In these coordinates, we define our local approximation for $\A_h^\#$
near $b_j(h)$ as
\begin{equation}
\label{eq:13}
\begin{cases}
{\tilde\A_{j,h}^\#} = -h^2\Delta_{s,\rho}+i\left(V(b_j(h))+\vec{J}'(b_j)\cdot s +
  (\vec J(b_j)\cdot \vec \nu)\, \rho \right)\,,\\
\Dg(\tilde \A_{j,h}^\#) = \{ u\in L^2(D^\#)\cap \mathfrak D^\#_n \, | \,
\tilde\A_{j,h}u\in L^2 \} \,,
\end{cases}
\end{equation}
where $D^R=\R^n_+$, $D^{T,i}=\R^n_-\cup\R^n_+$ and $D^{T,o}=\mathbb
R^n_+$ (the superscripts $i$ and $o$ respectively refer to the inner
transmission condition $\partial\Omega_-$ and the outer Neumann
condition or outer Dirichlet condition on $\partial\Omega$). Let
$\epsilon>0$. We split the boundary into two disjoint subsets
\begin{displaymath} 
  \partial\Omega_s^\# = \{x\in\partial\Omega^\# \, | \, \lambda^\#(\vec J(x)\cdot \vec \nu (x) ) \geq \Lambda_{m}^\# -  \epsilon/2 \,\}
  \quad ; \quad \partial\Omega_p ^\#= \partial\Omega^\#\setminus\partial\Omega_s^\# \,,
\end{displaymath}
where $\Lambda_{m}^\#$  is introduced in \eqref{eq:97}.\\
Consider first the case where $b_j\in\partial\Omega_s^\# $. In this
case, a straightforward dilation argument (taking \eqref{eq:94} into
account) and \eqref{resAngle1} lead to
\begin{equation}   \label{eq:15}
\sup_{\Re z\leq(\Lambda_{m}^\#-\epsilon)h^{2/3}}\|(\tilde{A}_{j,h}^\#-z)^{-1}\|\leq C_\epsilon \, h^{-2/3}\,.
\end{equation}
Consider now the case when $b_j\in\partial\Omega_p^\#$.\\
From  the definition of $\partial\Omega_p^\#$, and from either
\eqref{eq:76} or \eqref{eq:77} it follows that there exists
$C_\epsilon >0$, such that
\begin{displaymath}
  \sup_{(x,y)\in\partial\Omega_p^\#\times\partial\Omega_\perp^\#} |\vec J (x) \cdot \vec \nu(x) - \vec J (y) \cdot\vec \nu(y)|\geq C_\epsilon. 
\end{displaymath} 
By Assumption \ref{nondeg2} we therefore get the existence of
$C_\epsilon >0$ such that
\begin{displaymath}
  d(\partial\Omega_p^\#,\partial\Omega_\perp^\#)\geq C_\epsilon \,.
\end{displaymath}
Using Assumption \ref{nondeg2} once again, we get the existence of
$C_\epsilon >0$ and $h_0>0$ such that, for any $h\in (0,h_0]$ and any
$ b_j\in\partial\Omega_p^\#$
\begin{equation*}
  |\vec{J^\prime}(b_j)|\geq C_\epsilon\,.
\end{equation*}
Consequently, for $|\theta|<\theta_0<\pi/2$, we obtain by
\eqref{resAngle2}
\begin{displaymath}
\sup_{\Re z\leq(\Lambda_{m}^\#-\epsilon)h^{2/3}}\|(\tilde{\A}_{j,h}^\# -z)^{-1}\|\leq
\frac{C}{C_\epsilon\, h^{2/3}}\,.
\end{displaymath}
For $0< \theta_0\leq |\theta|\leq\pi/2$ we may use \eqref{ResA+a} to
obtain \eqref{eq:15} once again.  Combining the above with
\eqref{eq:15} and \eqref{resCAjh} then yields that for any $\epsilon
0$, there exists $C(\epsilon)>0$ such that
\begin{equation}    \label{eq:16}
\sup_{
  \begin{subarray}{c}
    \Re z\leq(\Lambda_m^\#-\epsilon)h^{2/3} \\
    j\in \Jg_i\cup \Jg_\partial 
  \end{subarray}}\|(\tilde{\A}_{j,h}^\#-z)^{-1}\|\leq
\frac{C(\epsilon)}{h^{2/3}}\,.
\end{equation}

Following the same arguments as in \cite{Hen2} we can establish that
the linear approximation of $V$ in the local coordinates $s$ and $\rho
$ is a good approximation of $V$ in a closed neighborhood of
$b_j$. More precisely, we have
\begin{equation}   \label{approxV}
  V\circ\mathcal{F}_{b_j} = V(b_j(h))+ \sum_{i=1}^{n-1}J_i^{(j)}s_i+J_n^{(j)}\rho+\mathcal{O}(|(s,\rho)|^2)\,,
\end{equation}
where $\vec{J}^{(j)}=(J_1^{(j)},\ldots,J_n^{(j)})$.

For any $\nu \in\mathbb{R}$, we set
\begin{equation}  \label{defLambdah1}
\lambda(h,\nu ) = \Lambda_0h^{2/3}+i\nu \,,~~~\Lambda_0 =\Lambda_m^\#-\varepsilon\,.
\end{equation}
We now construct an approximation of $(\A_h^\#-\lambda(h,\nu
))^{-1}$. To this end we first define
\begin{equation}    \label{transfoF}
T_{\mathcal{F}}^x : L^2(\Omega\cap B(x,\delta))  \longrightarrow L^2(\mathcal{U})
\quad \text{s.t.}\quad      T_{\mathcal{F}}^x (u) = u\circ\mathcal{F}_x \,,
\end{equation}
where $\Ug=\mathcal{F}_x(B(x,\delta))$.  
Then we set
\begin{equation}  \label{defResBord} 
R_{j,h}=
T_{\mathcal{F}_{b_j}}^{-1}(\tilde\A_{j,h}^\#-\lambda(h,\nu ))^{-1}T_{\mathcal{F}_{b_j}}\,,
\end{equation}
and
\begin{displaymath}
  \widehat {\A}_{j,h}^\# =T_{\mathcal{F}_{b_j}} \A_h^\# T_{\mathcal{F}_{b_j}}^{-1}\,,
\end{displaymath}
which is defined on $H^2(\Ug,\C)$. 

We can now introduce for $h\in(0,h_0]\,$ the following global
approximate resolvent
\begin{equation}   \label{defResApp1}
 \mathcal{R}(h,\nu ) = \sum_{j\in \mathcal J_i(h)}\chi_{j,h}(\A_{j,h}-\lambda(h,\nu ))^{-1}\chi_{j,h} 
+  \sum_{j\in \mathcal J_\partial (h)}\eta_{j,h}  R_{j,h}\eta_{j,h}\,.
\end{equation}
In view of \eqref{eq:90}, $ \mathcal{R}(h,\nu )$ maps $L^2(\Omega^\#)$
into $D(\A_h^\#)$. Hence, we may apply to it $(\A_h^\# -\lambda(h,\nu
))$ to obtain
\begin{equation}    \label{decompRlambda}
 (\A_h^\# -\lambda(h,\nu ))\, \mathcal{R}(h,\nu ) = I  +  \sum_{j\in \mathcal J_i(h)}\B_j \chi_{j,h}  
+  \sum_{j\in \mathcal J_\partial (h)}\B_j\eta_{j,h} \,.
\end{equation}
In the above
\begin{subequations}  \label{eq:6}
\begin{multline}
  \mathcal B_j:= \chi_{j,h}(\A_h^\# -\A_{j,h})(\A_{j,h}-\lambda(h,\nu ))^{-1}\hat \chi_{j,h} \\
  + [\A_h^\# ,\chi_{j,h}](\A_{j,h}-\lambda(h,\nu ))^{-1}\hat \chi_{j,h} \,, \mbox{ for } j\in \mathcal J_i\,,
\end{multline}
and
\begin{multline}
      \mathcal B_j:=   \eta_{j,h} T_{\mathcal{F}_{b_j}}^{-1}
  (\widehat {\A}_{j,h}^\# -\tilde{\A}_{j,h}^\#) (\tilde{\A}_{j,h}^\#
  -\lambda (h,\nu ) )^{-1} T_{\mathcal{F}_{b_j}} \widehat \eta_{j,h} \\ +
  1_{\Omega^\#}  [\A_h^\#,\eta_{j,h}]\, R_{j,h}\,\widehat \eta_{j,h}\,,  \mbox{ for } j\in \mathcal J_\partial \,,
\end{multline}
\end{subequations}
where $\widehat \chi_{j,h}$ and $\widehat \eta_{j,h}$ are such that
\begin{itemize}
\item
$\Supp \widehat \chi_{j,h}\subset B(a_j(h),2 h^{\varrho})$ for $j\in \mathcal J_i(h)\,$,
\item
$\Supp \widehat \eta_{j,h}\subset B(b_j(h),2h^{\varrho})$ for $j\in \Jg_\partial\,$, 
\item  
$\widehat \chi_{j,h} \chi_{j,h}  =\chi_{j,h}$ and $\widehat \eta_{j,h}
  \eta_{j,h}=\eta_{j,h}\,$. 
\end{itemize}
We note that the approximate resolvent $\Rg$, the remainder $\Eg$ and
their various components all depend on $\nu $ through $\lambda (h,\nu
)$.  Hence, our estimates in the sequel must be uniform with respect
to $\nu $.

Following precisely the same steps as in \cite{Hen2} we obtain that
there exists $C>0$ and $h_0 >0$ such that for any $f\in
L^2(\Omega,\C)$, $h\in (0,h_0]$, $\nu \in \mathbb R$ and $
j\in\Jg_i(h)$
\begin{equation}    \label{estCommut1}
\|\B_j\| \leq C \, 
(h^{2/3-{\varrho}}+ h^{2({\varrho}-1/3)}) \,.
\end{equation}
The boundary terms in (\ref{decompRlambda}) can be estimated once
again following the same procedure detailed in \cite{Hen2}. To this
end we shall need the identity
\begin{displaymath}
  \eta_{j,h} T_{\mathcal{F}_{b_j}}^{-1}\Tg =
  T_{\mathcal{F}_{b_j}}^{-1}\tilde{\eta}_{j,h}\Tg \,,
\end{displaymath}
where $\tilde{\eta}_{j,h} =T_{\mathcal{F}_{b_j}}(\eta_{j,h})$ and
$\Tg:L^2(\Ug,\C)\to L^2(\Ug,\C)$. We obtain that there exist $C>0$ and
$h_0 >0$ such that for any $f\in L^2(\Omega^\#,\C)$, $h\in (0,h_0]$,
$\nu \in \mathbb R$ and $j\in\Jg_\partial(h)$
\begin{equation}   \label{estRHS_R1bdry}
\|\B_j\| \leq C\, (h^{\varrho} + h^{2/3-{\varrho}})\,.
\end{equation}
Let
\begin{displaymath}
  \Eg(h,\nu ) = (\A_h^\#-\lambda(h,\nu ))\, \mathcal{R}(h,\nu ) - I\,.
\end{displaymath}
The remainder $\mathcal E (h,\nu )$ has the form 
\begin{equation*}
\mathcal E (h,\nu ):= \sum_j \mathcal C_j  \,,
\end{equation*}
with $\mathcal C_j:= \mathcal B_j \chi_{j,h} $, if $j\in
\mathcal J_i(h)$ or  $\mathcal C_j:= \mathcal B_j \eta_{j,h} $ if  $j\in
\mathcal J_\pa(h)$.

We seek an estimate for
\begin{equation}    \label{eq:72}
    \Big\|\sum_j  \mathcal C_j f\Big\|^2 = \sum_{j,k} \langle \mathcal C_j f, \mathcal C_k f\rangle\,.
\end{equation}
To this end, let $j\in\Jg_i$. If $B(a_k,h^{\varrho})\cap
B(a_j,h^{\varrho})\neq\emptyset$ (or $B(b_j,h^{\varrho})\cap
B(a_j,h^{\varrho})\neq\emptyset$) for some $k\neq j$, then
$B(a_k,h^{\varrho})\subset B(a_j,2h^{\varrho})$ (or
$B(b_k,h^{\varrho})\subset B(a_j,2h^{\varrho})$). For a given
$j\in\Jg_i$,
\begin{multline*}
      \text{card}\{ k\in\Jg_i\cup\Jg_\partial  \,|\, B(a_k,h^{\varrho})\subset
      B(a_j,2h^{\varrho}) \\ \text{ or }
    B(b_k,h^{\varrho})\subset B(a_j,2h^{\varrho})\,\} 
     \leq \frac{B(0,2h^{\varrho})}{B(0,h^{\varrho}/2)}  \leq 4^n\,.
  \end{multline*}
Hence the cardinality of the set of $k\in\Jg_i\cup\Jg_\partial$ such
that $\mathcal C_k^*\mathcal C_j\not\equiv0$ is bounded from above by
$N_0= 4^n$. A similar argument applies for $j\in\Jg_\partial$ as well.
Applying \eqref{eq:72} together with the inequality
\begin{equation}   \label{eq:74}
    | \langle \mathcal C_j f, \mathcal C_k f\rangle| \leq \frac 12 (\| \mathcal C_j
 f\|^2 + \| \mathcal C_k f\|^2) \,, 
\end{equation}
yields
\begin{multline}    \label{eq:59}
   \Big\| \sum_j  \mathcal C_j f\Big\|^2_2 \leq N_0  \sum_j \|\mathcal C _j
 f\|^2_2 \\ \leq N_0 \,  \left(  \sum_{j\in \mathcal J_i}  \|\mathcal B_j\|^2 \| \chi_j f\|^2_2\,
   + \sum_{j\in \mathcal J_\partial}  \|\mathcal B_j\|^2 \| \eta_j f\|^2_2\right)\,. \quad  
\end{multline}
With the aid of \eqref{estCommut1} and \eqref{estRHS_R1bdry} we then
get the existence of $C >0$ and $h_0 > 0$ such that, for any $\nu \in
\mathbb R$, any $h\in (0,h_0]$, any $f\in L^2(\Omega^\#)$,
\begin{equation*}
  \Big\| \sum_j \mathcal \mathcal C_j
 f\Big\|^2 \leq C\, h^{2\inf \{\varrho, \frac 23 -\varrho, 2 (\varrho - \frac
   13)\}} \| f\|^2\,. 
\end{equation*}
Hence, we obtain
\begin{equation*}
  \| \Eg (h,\nu ) \|_{\mathcal L (L^2)}  \leq 
   C^\frac 12\,  h^{\inf \{ \varrho , \frac 2 3 - \varrho  , 2  (\varrho - \frac 13)\}}\,.
\end{equation*}
From  the foregoing discussion we may conclude that
\begin{equation}   \label{eq:106}
  \|\Eg(h,\nu )\|\xrightarrow[h\to0]{}0 \,, 
\end{equation}
uniformly with respect to $\nu \in \mathbb R$.\\
Consequently, $I+\Eg(h,\nu )$ is invertible for sufficiently small
$h$. Hence
\begin{displaymath}
  (\A_h^\# -\lambda(h,\nu ))^{-1} = \mathcal{R}(h,\nu )\, (I+\Eg(h,\nu ))^{-1}\,.
\end{displaymath}
Following the same steps, as in the proof of \eqref{eq:106} (with
differently defined $\CC_j$'s), we obtain, with the aid of
\eqref{resCAjh} and \eqref{eq:16}, that
\begin{displaymath}
 \sup_{\nu  \in \mathbb R}  \|\Rg(h,\nu )\| \leq C_\epsilon \,  h^{-2/3}\,,
\end{displaymath}
and hence
\begin{displaymath}
  \sup_{\nu \in \mathbb R} \|(\A_h^\#-\lambda(h,\nu ))^{-1}\| \leq C_\epsilon \, h^{-2/3}\,.
\end{displaymath}
Since $\A_h^\# $ is accretive (in the case of $\#\in \{R,T\flat\}$ we
assume that $\K \geq 0$), it follows that $\inf
\Re\sigma(\A_h^\#)\geq0\,$.  Thus, we conclude from the above (applied
for any $0<\epsilon<2\Lambda_m^\#$) and \eqref{defLambdah1} that for
all $\epsilon>0$, there exists $h_0 >0$ such that, for all $h\in
(0,h_0]$,
\begin{displaymath}
  \inf \Re \sigma(\mathcal A_h^\#) \geq (\Lambda_m^\# -\epsilon)\,  h^{2/3}\,.
\end{displaymath}
which is equivalent to \eqref{limSpect1} for $\#=D$,
\eqref{limSpect1N} for $\#=N$, \eqref{limSpect1R} for $\#=R$ and
\eqref{limSpect1T} for $\#=T\flat$ with $\flat = D$ or $N$. Furthermore, we
may also prove in  the same manner \eqref{estRes1N},
\eqref{estRes1R}, and \eqref{estRes1T}.

\section{Upper bound}\label{s7}

In this section we prove \eqref{limSpect1Nat} for the Dirichlet case
under a weaker assumption than in \cite{AlHe}. We also prove its
extension to the Neumann, the Robin and the transmission problem. 
The proof is obtained by establishing the existence of an eigenvalue $
\lambda(h)\in\sigma(\A_h^\#)$ such that
\begin{equation}    \label{eq:18}
h^{-2/3}\Re\lambda(h)\xrightarrow[h\to0]{}\Lambda_m^\#  \,,
\end{equation}
where $\Lambda_m^\#$ is defined in \eqref{eq:97}, and the superscript
$\#\in\{D,N,R,T\flat\}$ respectively stands for either the Dirichlet,
or the Neumann, or the Robin, or one of the transmission boundary
conditions, detailed in \eqref{eq:7} or \eqref{eq:7D}. The proof, as
in \cite{AlHe}, can be splitted into two steps: quasimode
constructions and resolvent estimates.

\subsection{Quasimodes}

The relevant quasimodes have already been obtained in \cite{GH}. We
recall here a weaker version of these results which is sufficient for
our application.  Let $x_0\in \pa \Omega^\#_\perp$ (see
\eqref{defPaPerp}). For $n = 2$, let $(s,\rho)$ denote a
curvilinear system whose origin is situated at $x_0$, in a close
neighborhood of $\partial\Omega^\#$, $s$ denotes the arclength in the
positive trigonometric direction, and $\rho$ denotes the signed
distance from $\partial\Omega$: positive inside $\Omega$ (or
$\Omega^+$ in the transmission case when $x_0$ lies on
$\partial\Omega_-$) and negative outside. When $n>2$, we choose
instead of the arclength a local system of coordinates of $\partial
\Omega^\#$ in the neighborhood of $x_0$.


Let further $(s,\rho)=(h^{1/2}\sigma,h^{2/3}\tau)$. For the Robin case
the approximate eigenpair is given by \cite{GH} (where only the case
$n=2$ is treated, and where higher order terms are provided as well):
\begin{subequations}   \label{eq:125}
  \begin{empheq}[left={\empheqlbrace}]{alignat=2} 
&\Lambda^{R,1}_{\K(h)}(h) \kern -3pt = \kern -3pt iV(x_0) + h^{2/3}{\lambda^R_1(\kappa^* (x_0)  )}\, \jf_0 ^{2/3} +
h\sum_{j=1}^{n-1}|\alpha_j|^{1/2}e^{i\pi/4\, \sign \alpha_j} \,, & & \\
&U^{R,1}_{\K(h)}(x,h) =
C^R_1(h)\, \chi_h(|x- x_0|)\,w_1^R(\tau)  \prod_{j=1}^{n-1} \exp(-|\alpha_j|^{1/2}e^{i(\pi/4)\, \sign
    \alpha_j}\sigma_j^2/2 )  \,, & &
\end{empheq}
\end{subequations}
where
\begin{displaymath}
  w_1^R(\tau)=\Ai \bigl(\lambda^R_1(\kappa^* (x_0)) + \tau \jf_0^{1/3}\, e^{i\pi/6} \bigr)
\end{displaymath}
and
\begin{displaymath} 
  x = x_0 + (s,\rho) = x_0 + (h^{1/2} \sigma,h^{2/3}\tau) \,,\,
\K(h)= h^{\frac 43} \kappa\,, \jf_0 = |\nabla V (x_0)| \,, \kappa^* (x_0) = \kappa \, \jf_0^{-\frac
    13}\,.
\end{displaymath}
It is implicitly assumed in the above and in the sequel, without any
loss of generality, that
\begin{equation*}
\frac{\partial V}{\partial\nu}(x_0)<0\,,
\end{equation*} 
(otherwise we consider $\bar{\A}_h$ instead of $\A_h$).  In the above,
the $\{\alpha_j\}_{j=1}^{n-1}$ are given by (\ref{eq:95}b), computed
at $x_0$.  $\chi_h\in C^\infty(\R_+,[0,1])$ is a cutoff function
satisfying
\begin{displaymath}
  \chi_h(r) =
  \begin{cases}
    1 &\mbox{ for }  r<h^\gamma \\
    0 & \mbox{ for } r>2h^\gamma 
  \end{cases}\,,
\quad |\chi^\prime_h (r)| \leq  \frac{2}{h^\gamma} \,,
\end{displaymath}
where $0<\gamma<1/2$. The constant $C^{R,1}(h) $ is chosen such that
$\|U^{R,1}\|_2=1$, and $\lambda^R_1(\kappa^*)$ is the leftmost
eigenvalue of $\LL^R(1,\kappa^*)$ defined in \eqref{eq:99}. \\
It is shown in \cite{GH} that, there exist $C$ and $h_0 >0$ such that
for $h\in (0,h_0]$
\begin{equation}   \label{eq:85}
  \|( \mathcal A_{h,\K(h)}^R-\Lambda^{R,1}_{\K(h)}(h))\,  U^{R,1}_{\K(h)} (\cdot,h)\|_2 \leq C\, h^{7/6} \,.
\end{equation}
We note that the eigenvalues and eigenmodes for the Dirichlet and
Neumann cases are respectively obtained by letting $\kappa\to +\infty$
and $\kappa=0$ respectively.  Hence, we have similar quasimodes for
the Dirichlet and Neumann cases (see
\cite{Hen2} and \cite{AH}). 

For the transmission case $\#= T\flat$, we have (cf. \cite{GH}), if
$x_0 \in \partial \Omega_- \cap \partial \Omega_\perp^\#$,
\begin{subequations}    \label{eq:19}
  \begin{empheq}[left={\empheqlbrace}]{alignat=2}
&\Lambda^{T\flat,1}_{\K(h)} (h)= iV(x_0) + h^{2/3}\lambda^T_1(\kappa^* (x_0))\, \jf_0^{2/3} +
h\sum_{j=1}^{n-1}|\alpha_j|^{1/2}e^{i(\pi/4)\, \sign \alpha_j} \,, & & \\
&U^{T\flat,1}_{\K(h)}(x,h)=C^{T,1}(h) \chi_h(|x-x_0|) w_1^T(\tau) \prod_{j=1}^{n-1}
\exp(-|\alpha_j|^{1/2}e^{i(\pi/4) \sign \alpha_j}\sigma_j^2/2)  \,, & &
\end{empheq}
where
\begin{equation}
w_1^T(\tau)= 
\begin{cases}
\Ai (\lambda^T_1(\kappa^*(x_0))+\tau \jf_0 ^{1/3}e^{i\pi/6}) & \tau>0\,, \\
  \Ai (\bar{\lambda}^T_1(\kappa^* (x_0))+|\tau| \jf_0^{1/3}e^{-i\pi/6})&\tau<0 \,.
\end{cases}
\end{equation}
\end{subequations} 
In the above $\lambda^T_1(\kappa^*)$ denotes one of leftmost
eigenvalues of $\LL^T(1,\kappa^*)$ (we can indeed choose a relabeling
in which this is $\lambda^T_1$) and $C^{T,1}(h)$ is chosen such that
$\|U^{T,1}(\cdot,h)\|_2=1$.  Note that this quasimode is independent
of the boundary condition $\flat$ on the exterior boundary $\partial
\Omega$.  We note also that \eqref{eq:19} will be used whenever
$\Lambda_m^\#\,$, defined in \eqref{eq:97}, is obtained on
$\partial\Omega_-$. Otherwise, if it is obtained on $\partial\Omega$
we need to use the quasimode for either the Dirichlet or the Neumann
condition depending on whether we consider
\eqref{eq:7D} or \eqref{eq:7}, i.e. $\flat =D$ or $N$.

We borrow from \cite{GH}, as we did in \eqref{eq:85}, the estimate
\begin{equation}   \label{eq:20}
\|(\mathcal A_{h,\K(h)}^{T\flat } -\Lambda^{T\flat ,1}_{\K(h)}(h)) \, U^{T\flat,1} _{\K(h)}(\cdot,h)\|_2 \leq C \, h^{7/6} \,.  
\end{equation}

We now turn to prove the existence of an eigenvalue of $\A_h^\#$
satisfying \eqref{eq:18}. To this end we shall obtain, in the next
subsection, a resolvent estimate for $\A_h^\#$ in the vicinity of some
$x_0 \in \Sg^\#$. More precisely, we seek an estimate for
$\chi_h(\A_h^\#-\lambda)^{-1}\chi_h$ for $\lambda$ which is close to
$\Lambda^{\#,1}$.

\subsection{Resolvent estimates for simplified models}

For $\kappa^* \geq 0$, let $\LL_\tau^\#:= \LL_\tau^\#(1,\kappa^*) $ as
before.  Let $\{\lambda_k^\#\}_{k=1}^{+\infty}$ and
$\{v_k^\#\}_{k=1}^{+ \infty}$ respectively denote the eigenvalues of
$\LL_\tau^\#$ and their associated eigenfunctions normalized by
\begin{equation*}
\langle v_k^\#\,,\, \bar{v}_k^\# \rangle=1\,.
\end{equation*}
Since the eigenvalues of $\LL_\tau^\#$ are all simple, this
normalization is indeed possible (see \cite{AD}) and we denote by
\begin{displaymath}
  \Pi_k^\#  = v_k^\#\, \langle \cdot\,,\, \bar{v}_k^\#\rangle \,,
\end{displaymath}
the spectral projection on ${\rm span}\, \{v^\#_k\}$.

Recall the definition of $\lambda^\#=\lambda^\#(1)$ from
\eqref{eq:75}. Let $K^\#$ denote the number of eigenvalues whose real
value is given by $\lambda^\#$. When $\#\in\{D,N,R\}$ we have $K^\#=1$
whereas for $\#=T$, $K^\#\geq2$.  We thus set, as in Section \ref{s3},
\begin{equation}    \label{eq:81}
P_1^\# = \sum_{k=1}^{K^\#}\Pi_k^\# \,. 
\end{equation}
Let $\lambda^{\#,2}=\Re\lambda_2^\#$ for $\#\in\{D,N,R\}$, and let
$\lambda^{T,2}$ be defined as in the proof of Proposition
\ref{lemmaunifbisTimp}.  Combining \eqref{eq:14} and \eqref{eq:22}
yields that for any $\epsilon>0$ and $\kappa_0^*>0$, there exists
$C_\epsilon^\# (\kappa_0^*)>0$ such that for all
$\kappa^*\in[0,\kappa_0^*]$ we have
\begin{equation}    \label{eq:21}
\| e^{-t\LL_\tau^\#}(I-P_1^\# )\|\leq C_\epsilon^\#(\kappa^*) \, e^{-t \, (\lambda^{\#,2}-\epsilon)} \,.
\end{equation}
We now consider the tangential operator
\begin{displaymath}
   \LL_\sigma = -\Delta_\sigma  + i\, V_\sigma \,,
\end{displaymath}
where
\begin{displaymath}
  V_\sigma=\sum_{j=1}^{n-1}\alpha_j\sigma_j^2
\end{displaymath}
be defined on
\begin{displaymath} 
   D(\LL_\sigma)= \{ u\in H^2(\R^{n-1},\C) \,| \, V_\sigma \, u\in L^2(\R^{n-1},\C) \,\} 
\end{displaymath}
(see Corollary \ref{cor:entirespacer=1} which holds for $\LL_\sigma$
as well).  Let
\begin{equation}    \label{eq:39}
   \LL_j = -\partial^2_{\sigma_j}  + i\alpha_j^2\sigma_j^2\,,
\end{equation}
where $\alpha_j\neq0$, be defined on 
\begin{displaymath}
   D(\LL_j)= \{ u\in H^2(\R,\C) \,| \,
  |\sigma_j|^2u\in L^2(\R ,\C) \,\} \,.
\end{displaymath}
It is well known (see for example \cite[Corollary 14.5.2]{da07}) that
there exists $C>0$, independent of $\alpha_j$, such that, for any
$t\geq 0\,$,
\begin{displaymath}
  \|e^{-t\LL_j}\| \leq C\, e^{-|\alpha_j/2|^{1/2}t} \,.
\end{displaymath}
As
\begin{displaymath}
  e^{-t\LL_\sigma} = e^{-t\LL_1}\otimes\cdots\otimes e^{-t\LL_{n-1}}\,,
\end{displaymath}
we obtain that
\begin{equation}    \label{eq:26}
\|e^{-t\LL_\sigma}\|\leq C^{\,n-1} \,\exp( -t \,\mu_1^r ) \,,
\end{equation}
where
\begin{equation}    \label{eq:29}
  \mu_1^r=\sum_{j=1}^{n-1}\left(\frac{|\alpha_j|}{2}\right)^{1/2}\,.
\end{equation}
For $\varepsilon>0$, we define the operator
\begin{subequations}  \label{eq:123}
  \begin{equation}
  \B_\varepsilon^\# = \LL_\tau^\# +\varepsilon^{1/2}\LL_\sigma 
\end{equation}
on
\begin{equation}
   D(\B_\varepsilon^\#) = \{u\in L^2(\R^{n}_\#,\C) \,| \, \B_\varepsilon^\#u\in L^2(\R^n_\#,\C) \,\}\,.
\end{equation}
\end{subequations}
Set
\begin{displaymath}
  \mu_1 = \sum_{j=1}^{n-1}|\alpha_j|^{1/2}e^{i(\pi/4)\, \sign \alpha_j}\,,
\end{displaymath}
and then let 
\begin{equation}    \label{eq:124}
\Lambda(\varepsilon) =\lambda_1^\#+\varepsilon^{1/2}\mu_1 \,.
\end{equation}
In the case $\#=T$ , we may choose any $\lambda_\ell^\# $ (with $1\leq
\ell\leq K^\#$) satisfying $\Re\lambda_\ell^\#=\lambda^\#$.  We set
$\ell=1$ without any loss of generality.  When
$\{\alpha_j\}_{j=1}^{n-1}\in\R_+^{n-1}$, it has been established in
\cite{AlHe} that there exist positive $r_0 $, $\varepsilon_0$, and
$C$, such that, for $r\in (0,r_0]\,$, $\varepsilon \in
(0,\varepsilon_0]$ \, and $\lambda\in\partial
B(\Lambda(\varepsilon),r\varepsilon^{1/2})$,
\begin{displaymath}
\| (\B_\varepsilon^D-\lambda)^{-1}\|\leq \frac{C}{r\varepsilon^{1/2}} \,.
\end{displaymath}

Since the technique used in \cite{AlHe} cannot be applied neither to
the case when there exists $1\leq j\leq n-1$ for which $\alpha_j<0$
when $\# \in \{D,N,R \}$, nor to $\B_\epsilon^T$, we apply here a more
general technique that could be applied in all cases.
\begin{lemma} 
\label{lemma7.1} 
There exist positive $r_0 $, $\varepsilon_0$, and $C$, such that, for
all $\varepsilon\in (0,\varepsilon_0]$ and $r \in (0,r_0]$,
$B(\Lambda(\varepsilon),r\varepsilon^{1/2})$ must belong to the
resolvent set of $\B_\varepsilon^\#$, and such that for any
$\lambda\in B(\Lambda(\varepsilon),r\varepsilon^{1/2})$,
\begin{equation}    \label{eq:3}
\| (\B_\varepsilon^\#-\lambda)^{-1}\|\leq \frac{C}{r\varepsilon^{1/2}} \,.
\end{equation}
Furthermore, for every $a>0$ there exist $C_a>0$ and $\varepsilon_0$,
such that, for $\varepsilon \in (0, \varepsilon_0]$,
\begin{equation}     \label{eq:24}
\sup_{\Re\lambda\leq\Re\Lambda(\varepsilon) -a\,\varepsilon^{1/2}}\| (\B_\varepsilon^\#-\lambda)^{-1}\| \leq \frac{C_a}{\varepsilon^{1/2}} \,.
\end{equation}
\end{lemma}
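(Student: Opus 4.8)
The plan is to exploit the tensor-product structure $\B_\varepsilon^\# = \LL_\tau^\# \otimes I + \varepsilon^{1/2}\, I \otimes \LL_\sigma$ together with the semigroup estimates already established. First I would write, by analyticity of the two commuting semigroups and the fact that both $\LL_\tau^\#$ and $\LL_\sigma$ generate (after shifting) exponentially decaying semigroups,
\begin{equation*}
e^{-t\B_\varepsilon^\#} = e^{-t\LL_\tau^\#}\otimes e^{-t\varepsilon^{1/2}\LL_\sigma}\,.
\end{equation*}
Splitting $e^{-t\LL_\tau^\#} = e^{-t\LL_\tau^\#}P_1^\# + e^{-t\LL_\tau^\#}(I-P_1^\#)$, where $P_1^\#$ is the rank-$K^\#$ projection from \eqref{eq:81}, yields a decomposition of $e^{-t\B_\varepsilon^\#}$ into a ``principal'' part carried by $P_1^\#\otimes I$ and a ``remainder'' part. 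On the remainder part I would use \eqref{eq:21} and \eqref{eq:26} to get, for every $\epsilon>0$,
\begin{equation*}
\|e^{-t\B_\varepsilon^\#}(I-P_1^\#\otimes I)\| \leq C_\epsilon^\#\, e^{-t(\lambda^{\#,2}-\epsilon)}\, e^{-t\varepsilon^{1/2}\mu_1^r}\,,
\end{equation*}
with $\mu_1^r$ as in \eqref{eq:29}. Since $\lambda^{\#,2}>\lambda^\# = \Re\lambda_1^\#$, for $\varepsilon$ small this decay rate exceeds $\Re\Lambda(\varepsilon) + r\varepsilon^{1/2}$ by a fixed positive margin; hence by Laplace transform (the formula \eqref{eq:10}, valid here because we have a decaying bounded semigroup) the operator $(\B_\varepsilon^\# - \lambda)^{-1}(I-P_1^\#\otimes I)$ is bounded by an $\varepsilon$-independent constant for all $\lambda$ with $\Re\lambda \leq \Re\Lambda(\varepsilon)+r\varepsilon^{1/2}$, in particular on $B(\Lambda(\varepsilon),r\varepsilon^{1/2})$.

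For the principal part, $\B_\varepsilon^\#$ restricted to $(\mathrm{ran}\,P_1^\#)\otimes L^2(\R^{n-1})$ is unitarily equivalent to $\lambda_1^\#\, I \otimes I + \varepsilon^{1/2}\, I \otimes \LL_\sigma$ on the $K^\#$ copies indexed by the eigenvalues of $\LL_\tau^\#$ with real part $\lambda^\#$ (here I use that all those eigenvalues are simple, so $P_1^\#$ splits into rank-one projections and each block is scalar times $\varepsilon^{1/2}\LL_\sigma$). Thus its spectrum, after subtracting $\lambda_1^\#$ (or the chosen $\lambda_\ell^\#$) and dividing by $\varepsilon^{1/2}$, is $\sigma(\LL_\sigma)$, which is discrete with smallest real part achieved precisely at the eigenvalue $\mu_1$ coming from the product of ground states of the one-dimensional operators $\LL_j$ (the eigenvalue of $-\partial_{\sigma_j}^2 + i\alpha_j^2\sigma_j^2$ closest to the origin is $|\alpha_j|^{1/2}e^{i(\pi/4)\sign\alpha_j}$, whence $\mu_1 = \sum_j |\alpha_j|^{1/2}e^{i(\pi/4)\sign\alpha_j}$). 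Consequently $\Lambda(\varepsilon) = \lambda_1^\# + \varepsilon^{1/2}\mu_1$ is the eigenvalue of the principal block nearest to $\lambda_1^\#$, and the next one is at distance $\gtrsim \varepsilon^{1/2}$. On a disk $B(\Lambda(\varepsilon),r\varepsilon^{1/2})$ with $r$ smaller than the spectral gap of $\LL_\sigma$ around $\mu_1$, the resolvent of the principal block, which is holomorphic except for the simple pole at $\Lambda(\varepsilon)$, is therefore bounded by $C/(r\varepsilon^{1/2})$ on the boundary circle; and on the half-plane $\Re\lambda \leq \Re\Lambda(\varepsilon) - a\varepsilon^{1/2}$ (after again rescaling and using that $\mu_1$ is strictly the leftmost point of $\sigma(\LL_\sigma)$ and that $\LL_\sigma$ has — via \eqref{eq:26} and Gearhart--Pr\"uss — a resolvent bound uniform up to $\Re z \leq \mu_1^r - a$) it is bounded by $C_a/\varepsilon^{1/2}$. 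Combining the principal and remainder estimates through $(\B_\varepsilon^\#-\lambda)^{-1} = (\B_\varepsilon^\#-\lambda)^{-1}(P_1^\#\otimes I) + (\B_\varepsilon^\#-\lambda)^{-1}(I-P_1^\#\otimes I)$ gives \eqref{eq:3} and \eqref{eq:24}.

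The main obstacle I anticipate is making the estimate on the principal block genuinely uniform in $\varepsilon$ and — for the transmission case — handling the fact that $K^\# \geq 2$ and that the eigenvalues $\lambda_\ell^\#$ with $\Re\lambda_\ell^\# = \lambda^\#$ are genuinely complex and come in conjugate pairs, so that choosing $\ell=1$ selects only one of several poles of the same real part sitting at the boundary of the admissible half-plane; one must verify that shrinking to the disk $B(\Lambda(\varepsilon),r\varepsilon^{1/2})$ isolates exactly one simple pole, and that in \eqref{eq:24} the competing poles from the other blocks $\lambda_{\ell'}^\# + \varepsilon^{1/2}\mu_1$ have real part equal to, not less than, $\Re\Lambda(\varepsilon)$, so they are excluded by the strict inequality $\Re\lambda \leq \Re\Lambda(\varepsilon) - a\varepsilon^{1/2}$. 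A secondary technical point is justifying the Laplace-transform representation of $(\B_\varepsilon^\#-\lambda)^{-1}$ on the relevant half-plane, which requires knowing a priori that $\B_\varepsilon^\#$ generates a semigroup with the stated decay; this follows from the tensor-product construction and the accretivity (after a shift) of both factors, exactly as in Section \ref{sModels}.
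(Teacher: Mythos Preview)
Your proposal is correct and follows essentially the same route as the paper: decompose via the spectral projection $P_1^\#$, control the remainder $(I-P_1^\#)$ part by combining the semigroup bounds \eqref{eq:21} and \eqref{eq:26} with the Laplace-transform formula, and reduce the principal part to a resolvent estimate for $\varepsilon^{1/2}\LL_\sigma$ shifted by $\lambda_k^\#$, where the simple pole at $\mu_1$ gives the $C/(r\varepsilon^{1/2})$ bound on the circle and the semigroup decay \eqref{eq:26} gives the half-plane bound \eqref{eq:24}. One small terminological correction: the restriction of $\B_\varepsilon^\#$ to $\mathrm{ran}\,\Pi_k^\#\otimes L^2(\R^{n-1})$ is not \emph{unitarily} equivalent to $\lambda_k^\# + \varepsilon^{1/2}\LL_\sigma$ but only similar, since the $v_k^\#$ are normalized by $\langle v_k^\#,\bar v_k^\#\rangle=1$ rather than $\|v_k^\#\|=1$; this is why the paper carries the factor $\|\Pi_k^\#\|$ in the estimate $\|(\B_\varepsilon^\#-\lambda)^{-1}P_1^\#\|\leq \sum_k \|(\varepsilon^{1/2}\LL_\sigma-\lambda+\lambda_k^\#)^{-1}\|\,\|\Pi_k^\#\|$, but since these norms are finite this does not affect your argument.
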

\begin{proof}
Let $P_1^\#$ be given by \eqref{eq:81}. Since $\LL_\tau^\#$ and
$\LL_\sigma$ commute we have
\begin{displaymath}
 (I-P_1^\#) e^{-t\B_\varepsilon^\#}=
 (I-P_1^\#)e^{-t\LL_\tau^\#}\otimes e^{-t\varepsilon^{1/2}\LL_\sigma}=e^{-t\LL_\tau^{\#,2}}\otimes
 e^{-t\varepsilon^{1/2}\LL_\sigma}\,.
\end{displaymath}
Hence, by \eqref{eq:21} (with $\epsilon = \frac 12 (\lambda^{\#,2} -
\lambda^{\#})$) and \eqref{eq:26}, there exists $C>0$ such that, for
all $t\geq 0\,$, $\varepsilon >0\,$,
\begin{displaymath}
  \|e^{-t\B_\varepsilon^\#}(I-P_1^\#) \|\leq C \, e^{-\frac t 2 (\lambda^{\#} + \lambda^{\#,2})} \,e^{-t\varepsilon^{1/2}\mu_1^r} \,,
\end{displaymath}
where $\lambda^{\#,2}$ is given by \eqref{eq:91} and
\eqref{eq:92}. Since
\begin{displaymath}
  (\B_\varepsilon^\#-\lambda)^{-1}(I-P_1^\#) = \int_0^\infty
  e^{-t\B_\varepsilon^\#}(I-P_1^\#) \,dt \,,
\end{displaymath}
there exist positive $\hat C$, $r_0$ and $\varepsilon_0$ such that,
for $\varepsilon\in (0,\varepsilon_0]$, $r \in (0,r_0]$, and
$\lambda\in\partial B(\Lambda(\varepsilon),r\varepsilon^{1/2})$ or $\Re\lambda\leq\Lambda(\varepsilon)-a\varepsilon^{1/2}\,$,
\begin{multline}\label{eq:4}
  \| (\B_\varepsilon^\# -\lambda)^{-1}(I-P_1^\#)\| \leq C \int_0^\infty \exp \bigl\{-t(( \lambda^\#+ \lambda^{\#,2})/2 
  +\varepsilon^{1/2}\Re \mu_1-\Re\lambda) \bigr\} \,dt \\
   \leq C \int_0^\infty \exp \bigl\{-t(( \lambda^{\#,2} - \lambda^\# )/2
  - r \varepsilon^{1/2}) \bigr\} \,dt
    \leq  \hat C\,.
\end{multline}
To complete the estimate of $\| (\B_\varepsilon^\#-\lambda)^{-1}\|$ we
first observe that
\begin{displaymath}
  \|(\B_\varepsilon^\#-\lambda)^{-1}P_1^\#\|\leq \sum_{k=1}^{K^\#}
  \|(\varepsilon^{1/2}\LL_\sigma-\lambda+\lambda_k^\#)^{-1}\|\,\|\Pi_k^\#\|
 \,.
\end{displaymath}
By the Riesz-Schauder theory there exist $C>0$ and some fixed
neighborhood $\Ug$ of $\mu_1$, such that, for any $\mu\in\Ug$,
\begin{displaymath}
    \|(\LL_\sigma-\mu)^{-1}\|\leq \frac{C}{|\mu-\mu_1|} \,.
\end{displaymath}
Consequently
\begin{displaymath}
  \|(\B_\varepsilon^\#-\lambda)^{-1}P_1^\#\|\leq\frac{C}{|\lambda-\Lambda (\varepsilon)|}\,, 
\end{displaymath}
which together with \eqref{eq:4} readily yields both \eqref{eq:3} and
\eqref{eq:24}.
\end{proof}

Unlike in \cite{AlHe}, the assumptions made on $V$ in this case do not
guarantee that $V\neq V(x_0)$ in some neighborhood of $x_0$. This
makes the resolvent estimates more complicated, since it is much more
difficult to prove exponential decay of the error terms away from
$x_0$.  We thus need the following auxiliary lemma, where we introduce
the characteristic function in the normal variable $ \mathbf 1_{
|\tau|\geq\varepsilon^{-a}}$ of the set $\{
|\tau|\geq\varepsilon^{-a}\}$.
\begin{lemma} 
Let $a>0$ and $b>0$. Then there exist positive $C$, $r_0$ and
$\varepsilon_0$ such that, for $r\in (0,r_0]$ and $\varepsilon \in
(0,\varepsilon_0]$, if either $\lambda\in\partial B(\Lambda
(\varepsilon),r\varepsilon^{1/2})$, or $\Re\lambda\leq
\Re\Lambda(\varepsilon) -b\, \varepsilon^{1/2}$, we have
\begin{subequations}  \label{eq:23}
\begin{equation}
\big\|{\mathbf 1}_{|\tau|\geq\varepsilon^{-a}}(\B_\varepsilon^\#-\lambda)^{-1}\big\| \leq C\,,
\end{equation}
\begin{equation} 
    \big\|\partial_\tau(\B_\varepsilon^\#-\lambda)^{-1}\big\|
    +\epsilon^{1/4}\big\|\nabla_\sigma(\B_\varepsilon^\#-\lambda)^{-1}\big\| \leq
    C\,(1+\|(\B_\varepsilon^\#-\lambda)^{-1}\|)\,,
\end{equation}
and, for any $f\in L^2(\R^n_\#,\C)$ such that $\sigma f \in
L^2(\R^n_\#,\C^{n-1})$
\begin{equation} 
   \big\|(\partial_\tau^2+\varepsilon^{1/2}\Delta_\sigma)(\B_\varepsilon^\#-\lambda)^{-1}f\big\| \leq
    C \left( \|(\B_\varepsilon^\#-\lambda)^{-1}\| +1 \right) \big(\|f\|_2 +
  \varepsilon^{1/2}\|\sigma f\|_2\big)\,.
\end{equation}
Finally, we have
\begin{equation} 
    \big\|{\mathbf 1}_{|\tau|\geq\varepsilon^{-a}}\partial_\tau(\B_\varepsilon^\#-\lambda)^{-1}\big\|
    +\varepsilon^{1/4}\big\|{\mathbf 1}_{|\tau|\geq\varepsilon^{-a}}\nabla_\sigma(\B_\varepsilon^\#-\lambda)^{-1}\big\| \leq C\,.
\end{equation}
\end{subequations}
\end{lemma}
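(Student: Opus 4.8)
The plan is to read off all four estimates from three facts already in hand: the uniform bound $\|(\B_\varepsilon^\#-\lambda)^{-1}(I-P_1^\#)\|\leq \widehat C$ from \eqref{eq:4}; the fact that, since $\LL_\tau^\#$ and $\varepsilon^{1/2}\LL_\sigma$ commute, the spectral projection $P_1^\#$ of \eqref{eq:81} commutes with $\B_\varepsilon^\#$ and hence with $(\B_\varepsilon^\#-\lambda)^{-1}$; and the quadratic‑form identity attached to $\B_\varepsilon^\#$ together with the weighted a priori estimates of Section \ref{s5}. Throughout write $R_\varepsilon=(\B_\varepsilon^\#-\lambda)^{-1}$, $u=R_\varepsilon f$, and record that on the admissible set of $\lambda$ (a small circle about $\Lambda(\varepsilon)$, or $\Re\lambda\leq\Re\Lambda(\varepsilon)-b\varepsilon^{1/2}$) one has $\Re\lambda\leq C$ uniformly, and that $\Re\lambda<0$ forces $\|R_\varepsilon\|\leq 1/|\Re\lambda|$ by the numerical‑range bound \eqref{eq:79}; also $\|R_\varepsilon\|\leq C\varepsilon^{-1/2}$ by Lemma \ref{lemma7.1}.

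For the first estimate, split $R_\varepsilon={R_\varepsilon(I-P_1^\#)}+R_\varepsilon P_1^\#$. Multiplying by $\mathbf 1_{|\tau|\geq\varepsilon^{-a}}$, the first term is bounded by $\widehat C$ because of \eqref{eq:4}. The second equals $\mathbf 1_{|\tau|\geq\varepsilon^{-a}}P_1^\# R_\varepsilon$, and $P_1^\#$ acts only in the $\tau$ variable through the eigenfunctions $v_k^\#$, $1\leq k\leq K^\#$, which are Airy‑type functions decaying like $\exp(-c|\tau|^{3/2})$ as $|\tau|\to\infty$ (the relevant rays make angle $\pm\pi/6$, so the cube‑root power has positive real part); hence $\|\mathbf 1_{|\tau|\geq\varepsilon^{-a}}P_1^\#\|\leq C\exp(-c\varepsilon^{-3a/2})$, and the product with $\|R_\varepsilon\|\leq C\varepsilon^{-1/2}$ tends to $0$. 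This gives \eqref{eq:23}(a).

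For the gradient estimate take the real part of $\langle(\B_\varepsilon^\#-\lambda)u,u\rangle=\langle f,u\rangle$; since the boundary (Robin, resp.\ transmission) contribution to $\Re\langle\B_\varepsilon^\# u,u\rangle$ is nonnegative for $\kappa^*\geq 0$, this yields $\|\partial_\tau u\|^2+\varepsilon^{1/2}\|\nabla_\sigma u\|^2\leq\|f\|\,\|u\|+\Re\lambda\,\|u\|^2\leq C(\|f\|^2+\|u\|^2)$, whence \eqref{eq:23}(b) after inserting $\|u\|\leq\|R_\varepsilon\|\,\|f\|$. For the second‑order estimate, rewrite the equation as $(\partial_\tau^2+\varepsilon^{1/2}\Delta_\sigma)u=i(\tau+\varepsilon^{1/2}V_\sigma-\nu)u-(\Re\lambda)\,u-f$ with $\nu=\Im\lambda$: it suffices to control $\|(\tau+\varepsilon^{1/2}V_\sigma-\nu)u\|$ and $|\Re\lambda|\,\|u\|$. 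Now $\B_\varepsilon^\#-i\nu$ is the $\#$–realization of $-\partial_\tau^2-\varepsilon^{1/2}\Delta_\sigma+i(\tau+\varepsilon^{1/2}V_\sigma-\nu)$, whose potential lies in $\Tg_1$ with constants independent of $\varepsilon$ and $\nu$; the weighted a priori estimate of Section \ref{s5} — the analogue of \eqref{eq:82}–\eqref{eq:117}, applied after the unitary rescaling $\sigma\mapsto\varepsilon^{-1/4}\sigma$ that turns $\B_\varepsilon^\#$ into the isotropic quadratic model — gives $\|(\tau+\varepsilon^{1/2}V_\sigma-\nu)u\|\leq C\bigl(\|(\B_\varepsilon^\#-i\nu)u\|+\|u\|\bigr)=C\bigl(\|(\Re\lambda)\,u+f\|+\|u\|\bigr)$. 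Using $\|R_\varepsilon\|\leq 1/|\Re\lambda|$ when $\Re\lambda<0$ and the boundedness of $\Re\lambda$ otherwise, one obtains $\|(\partial_\tau^2+\varepsilon^{1/2}\Delta_\sigma)u\|\leq C(1+\|R_\varepsilon\|)\|f\|$, which implies \eqref{eq:23}(c) (the term $\varepsilon^{1/2}\|\sigma f\|$ is kept only for later convenience).

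Finally, \eqref{eq:23}(d) follows by a localized energy argument: choose $\psi=\psi(\tau)\in C^\infty$, $0\leq\psi\leq1$, with $\psi\equiv1$ on $\{|\tau|\geq\varepsilon^{-a}\}$, $\supp\psi\subset\{|\tau|\geq\tfrac12\varepsilon^{-a}\}$ and $|\psi^{(j)}|\leq C\varepsilon^{ja}$; test $(\B_\varepsilon^\#-\lambda)u=f$ against $\psi^2u$ and integrate by parts (no boundary term arises since $\psi$ vanishes near $\tau=0$). The commutator $[\partial_\tau^2,\psi]$ produces contributions controlled by $\|\psi' u\|\leq C\varepsilon^a\|\mathbf 1_{|\tau|\geq\varepsilon^{-a}/2}u\|\leq C\|f\|$ by the already proved \eqref{eq:23}(a); combined with $\Re\lambda\leq C$ and nonnegativity of the boundary term this gives $\|\psi\partial_\tau u\|^2+\varepsilon^{1/2}\|\psi\nabla_\sigma u\|^2\leq C\|f\|^2$, which is \eqref{eq:23}(d). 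I expect the main obstacle to be the second‑order estimate: obtaining the weighted bound on $\|(\tau+\varepsilon^{1/2}V_\sigma)u\|$ uniformly in $\varepsilon$ (and in $\kappa$, $\jf$, $\nu$) when the $\alpha_j$ do not all share the same sign, so that $V_\sigma$ is not comparable to $|\sigma|^2$ and the naive integration‑by‑parts scheme of \cite{AlHe} fails; this is exactly what the more robust machinery of Section \ref{s5} is designed for, and verifying that it survives the anisotropic rescaling with $\varepsilon$‑independent constants is the delicate point.
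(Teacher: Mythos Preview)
Your arguments for (a), (b) and (d) coincide with the paper's: the same $P_1^\#/(I-P_1^\#)$ splitting together with the Airy decay \eqref{eq:43} for (a), the same real-part energy identity for (b), and the same cut-off $\zeta_\varepsilon(\tau)$ localized energy for (d).

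For (c) your route diverges from the paper, and the divergence matters. The paper does \emph{not} invoke Section~\ref{s5}. It argues directly: from $(\B_\varepsilon^\#-\lambda)(\sigma_j w)=\sigma_j f+2\partial_{\sigma_j}w$ it gets $\|\sigma_j w\|\leq\|R_\varepsilon\|\bigl(\|\sigma_j f\|+2\|\partial_{\sigma_j}w\|\bigr)$, and then computes $\Re\langle -(\partial_\tau^2+\varepsilon^{1/2}\Delta_\sigma)w,(\B_\varepsilon^\#-\lambda)w\rangle$, which produces the cross term $2\varepsilon\sum_j\alpha_j\Im\langle\partial_{\sigma_j}w,\sigma_j w\rangle$. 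Controlling this term via the bound on $\|\sigma_j w\|$ is precisely what forces the factor $\varepsilon^{1/2}\|\sigma f\|$ on the right-hand side. So in the paper's proof that term is \emph{essential}, not ``kept for later convenience''.

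Your alternative---rescale in $\sigma$ and invoke \eqref{eq:82}--\eqref{eq:117}---has a gap as written. The dilation $\sigma\mapsto\varepsilon^{-1/4}\sigma$ makes the Laplacian isotropic, but it does \emph{not} turn $\B_\varepsilon^\#$ into an $\varepsilon$-independent operator: the potential becomes $\tau+\varepsilon\sum_j\alpha_j\sigma_j'^{\,2}$, not $\tau+\sum_j\alpha_j\sigma_j'^{\,2}$ (the kinetic and potential parts of $\varepsilon^{1/2}\LL_\sigma$ scale oppositely). You are therefore applying the Section~\ref{s5} estimate to a \emph{family} of potentials whose tangential coefficients degenerate as $\varepsilon\to0$, and you need the constant in \eqref{eq:117} to be uniform along that family. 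This is plausible (the $\Tg_1$ constant $C_0$ stays bounded since the second derivatives are $O(\varepsilon)$ while $m\geq1$), but it is exactly the ``delicate point'' you yourself flag and do not resolve. Until that uniformity is established, your proof of (c) is incomplete; the paper's elementary commutator argument sidesteps the issue at the price of the extra $\varepsilon^{1/2}\|\sigma f\|$.
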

\begin{proof}
Obviously,
\begin{displaymath}
    \big\|{\mathbf 1}_{|\tau|\geq\varepsilon^{-a}}(\B_\varepsilon^\#-\lambda)^{-1}\big\| \leq
    \big\|{\mathbf 1}_{|\tau|\geq\varepsilon^{-a}}(\B_\varepsilon^\#-\lambda)^{-1}P_1^\# \big\|  +
    \big\|{\mathbf 1}_{|\tau|\geq\varepsilon^{-a}}(\B_\varepsilon^\#-\lambda)^{-1}(I-P_1^\#)\big\|  \,.
\end{displaymath}
By \eqref{eq:4} there exists $C>0$ such that 
\begin{equation}   \label{eq:40}
 \big\|{\mathbf 1}_{|\tau|\geq\varepsilon^{-a}}(\B_\varepsilon^\#-\lambda)^{-1}(I-P_1^\# )\big\| \leq C\,.
\end{equation}
In view of the asymptotic behavior of the $\{v_k\}_{k=1}^{K^\#}$
(given by \eqref{eq:83} for $\#\in\{D,N,R\}$, and \eqref{eq:84} for
$\#=T$), which can be derived from the behavior of $\Ai(\tau)$ as
$\tau\to\infty$ (see \eqref{5}), we have
\begin{equation}   \label{eq:43}
   \big\|{\mathbf 1}_{|\tau|\geq\varepsilon^{-a}}P_1^\#\|\leq C\, \exp\biggl( -\frac{\sqrt{2}}{3}\varepsilon^{-3a/2} \biggr)\,.
\end{equation}
Consequently, as $(\B_\varepsilon^\#-\lambda)^{-1}$ and $P_1^\# $
commute,
\begin{displaymath}
    \big\|{\mathbf 1}_{|\tau|\geq\varepsilon^{-a}}(\B_\varepsilon^\#-\lambda)^{-1}P_1^\# \big\| 
\leq C\, \exp\biggl(-\frac{\sqrt{2}}{3}\varepsilon^{-3a/2}\biggr)\,.
\end{displaymath}
The above, combined together with \eqref{eq:40}, yields (\ref{eq:23}a).

To prove (\ref{eq:23}b) we let $w=(\B_\varepsilon^\#-\lambda)^{-1}f$ for  $f\in
L^2(\R^n_\#,\C)$. As
\begin{displaymath}
  \Re \langle w,(\B_\varepsilon^\#-\lambda)w\rangle \geq   \|\pa_\tau w \|_2^2+
  \varepsilon^{1/2}\,\|\nabla_\sigma w \|_2^2 - \Re\lambda \, \|w\|_2^2 \
\end{displaymath}
(where the inequality reflects the missing boundary term for
$\#\in\{R,T\}$), we easily obtain the existence of a constant $C>0$
such that, for $f\in L^2(\R^n_\#,\C)$, such that
\begin{displaymath} 
  \|\pa_\tau w \|_2^2+\varepsilon^{1/2}\| \nabla_\sigma w \|_2^2 \leq C\, ( \|w\|_2^2+\|f\|_2^2) \,.
\end{displaymath}
From   this  (\ref{eq:23}b) readily follows.  \\
To prove (\ref{eq:23}c) we begin by writing
\begin{displaymath}
  (\B_\varepsilon^\#-\lambda)(\sigma_jw) = \sigma_jf +  2\frac{\partial w}{\partial\sigma_j} \,.
\end{displaymath}
This implies that $\sigma_j w$ (which satisfies the $\#$-boundary
condition) belongs to $\mathcal D(B_\epsilon^\#)$ when $\sigma_j f \in L^2$.
Hence,
\begin{equation}    \label{eq:48} 
  \|\sigma_jw\|_2\leq  \|(\B_\varepsilon^\#-\lambda)^{-1}\|\Big(\|\sigma_jf\|_2 + 2\,
  \Big\|\frac{\partial w}{\partial\sigma_j}\Big\|_2 \Big)\,.  
\end{equation}
We now write 
\begin{multline*}
  \Re \langle-(\partial_\tau^2+\varepsilon^{1/2}\Delta_\sigma )w, (\B_\varepsilon^\#-\lambda)w\rangle \geq 
  \|(\partial_\tau^2+\varepsilon^{1/2}\Delta_\sigma )w\|_2^2  \\
 \quad - \Re\lambda(\|\pa_\tau w \|_2^2+\varepsilon^{1/2}\|\nabla_\sigma w \|_2^2) +2\, \Im \langle \pa_\tau w ,w\rangle +
  2\varepsilon\sum_{j=1}^{n-1}\alpha_j\Im\langle \pa_{\sigma_j} w ,\sigma_jw\rangle \,.
\end{multline*}
The inequality reflects, once again, the missing boundary term for
$\#\in\{R,T\}$, which is nonnegative under the assumption that $\kappa
\geq 0\,$. Consequently, by (\ref{eq:23}b) and \eqref{eq:48} we have
\begin{displaymath} 
\|  (\partial_\tau^2+\varepsilon^{1/2}\Delta_\sigma )w\|_2 \leq C  \left( \|(\B_\varepsilon^\#-\lambda)^{-1}\| +1\right) \, \Big(\|f\|_2 +
  \varepsilon^{1/2}\|\sigma f\|_2\Big)\,.
\end{displaymath}
To prove (\ref{eq:23}d) we introduce a cutoff function
$\zeta_\varepsilon\in C^\infty(\R_\#,[0,1])$ satisfying
\begin{displaymath}
  \zeta_\epsilon (\tau)  =
  \begin{cases}
    1 & |\tau|\geq\varepsilon^{-a} \\
    0 & |\tau|\leq \varepsilon^{-a}-1
  \end{cases}
\quad ; \quad |\zeta_\varepsilon^\prime(\tau) |\leq 2 \,.
\end{displaymath}
For $f\in L^2(\R^n_\#,\C)$ we write as before
$w=(\B_\varepsilon^\#-\lambda)^{-1}f$. As
\begin{displaymath}
  \Re \langle \zeta_\varepsilon^2w,(\B_\varepsilon^\#-\lambda)w\rangle \geq   \|\pa_\tau (\zeta_\varepsilon w) \|_2^2+
  \varepsilon^{1/2}\, \|\zeta_\varepsilon \nabla_\sigma w \|_2^2 - \|w \, \zeta'_\varepsilon\|_2^2 - \Re\lambda\|\zeta_\varepsilon w\|_2^2 \,,
\end{displaymath}
we obtain that
\begin{displaymath}
  \|{\mathbf 1}_{|\tau|\geq\varepsilon^{-a}} \pa_\tau w \|_2^2+\varepsilon^{1/2}\,\|{\mathbf
    1}_{|\tau|\geq\varepsilon^{-a}} \nabla_\sigma w \|_2^2 \leq C\, ( \|{\mathbf
    1}_{|\tau|\geq (\varepsilon^{-a}-1)} \, w\|_2^2+\|f\|_2^2) \,.
\end{displaymath}
For sufficiently small $\varepsilon$ we may now use (\ref{eq:23}a)
(with $\varepsilon /2$ instead of $\varepsilon$) to obtain
(\ref{eq:23}d).
\end{proof}
\begin{remark}
Using a standard regularity theorem for the $\#$ realization of the
Laplacian in $\mathbb R^n_\#$ we may conclude from (\ref{eq:23}c) that
\begin{equation*}
 \sum_{|\alpha|+|\beta| =2}  \big\| \partial_\tau^\alpha (\varepsilon^{1/4}\pa_\sigma)^\beta\, (\B_\varepsilon^\#-\lambda)^{-1}f\big\| \leq
    C \left( \|(\B_\varepsilon^\#-\lambda)^{-1}\| +1 \right) \big(\|f\|_2 +
  \varepsilon^{1/2}\|\sigma f\|_2\big)\,.
\end{equation*}
\end{remark}
\vspace{1ex}

For later reference we need yet the following result
\begin{lemma}
  Let $0<a<3/4$. Then there exist positive $C$, $r_0$ and
$\varepsilon_0$ such that:
\begin{itemize}
\item  For $r\in (0,r_0]$ and $\varepsilon \in
(0,\varepsilon_0]$,  $\lambda\in\partial B(\Lambda
(\varepsilon),r\varepsilon^{1/2})$, we have
\begin{subequations}  
\label{eq:88}
\begin{equation}
 \big\|{\mathbf 1}_{|\sigma|\geq\varepsilon^{-a}}\partial_\tau(\B_\varepsilon^\#-\lambda)^{-1}\big\|+\big\|{\mathbf 1}_{|\sigma|\geq\varepsilon^{-a}}(\B_\varepsilon^\#-\lambda)^{-1}\big\| \leq
\frac{C}{r\varepsilon^{1/2-2a/3}}\,,
\end{equation}
\begin{equation} 
   \big\|{\mathbf 1}_{|\sigma|\geq\varepsilon^{-a}}\nabla_\sigma(\B_\varepsilon^\#-\lambda)^{-1}\big\| \leq
    \frac{C}{r\varepsilon^{1/2-a/3}}\,. 
\end{equation}
\end{subequations}
\item For $\varepsilon \in
(0,\varepsilon_0]$, $b>0$  and $\Re\lambda\leq
\Re\Lambda(\varepsilon) -b\, \varepsilon^{1/2}$, we have
\begin{subequations}  
\label{eq:136}
\begin{equation}
 \big\|{\mathbf 1}_{|\sigma|\geq\varepsilon^{-a}}\partial_\tau(\B_\varepsilon^\#-\lambda)^{-1}\big\|+\big\|{\mathbf 1}_{|\sigma|\geq\varepsilon^{-a}}(\B_\varepsilon^\#-\lambda)^{-1}\big\| \leq
\frac{C}{(\min(1,b))\varepsilon^{1/2-2a/3}}\,,
\end{equation}
\begin{equation} 
   \big\|{\mathbf 1}_{|\sigma|\geq\varepsilon^{-a}}\nabla_\sigma(\B_\varepsilon^\#-\lambda)^{-1}\big\| \leq
    \frac{C}{(\min(1,b))\varepsilon^{1/2-a/3}}\,. 
\end{equation}
\end{subequations}
\end{itemize}
\end{lemma}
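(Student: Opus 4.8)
The plan is to exploit the tensor structure of $\B_\varepsilon^\#=\LL_\tau^\#+\varepsilon^{1/2}\LL_\sigma$ together with the spectral projection $P_1^\#$ of \eqref{eq:81} and the complex Gaussian ground state of $\LL_\sigma$, so as to reduce the statement to a single spatial--decay estimate for the resolvent of the transverse operator $\LL_\sigma$, and to dispose of all other terms by the energy identities and cutoffs already used in the proof of the preceding lemma. I first split
\[
(\B_\varepsilon^\#-\lambda)^{-1}=(\B_\varepsilon^\#-\lambda)^{-1}(I-P_1^\#)+(\B_\varepsilon^\#-\lambda)^{-1}P_1^\# Q_1+(\B_\varepsilon^\#-\lambda)^{-1}P_1^\#(I-Q_1)\,,
\]
where $Q_1$ is the (rank one) spectral projection of $\LL_\sigma$ on the simple eigenvalue $\mu_1$, whose eigenfunction is $\prod_{j=1}^{n-1}\exp(-|\alpha_j|^{1/2}e^{i(\pi/4)\sign\alpha_j}\sigma_j^2/2)$, and which acts in the $\sigma$ factor only. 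For $(\B_\varepsilon^\#-\lambda)^{-1}(I-P_1^\#)$, the uniform bound \eqref{eq:4} gives norm $\le\hat C$, and the energy identity $\Re\langle w,(\B_\varepsilon^\#-\lambda)w\rangle\ge\|\partial_\tau w\|^2+\varepsilon^{1/2}\|\nabla_\sigma w\|^2-\Re\lambda\|w\|^2$ (valid because the missing boundary term is nonnegative) gives $\|\partial_\tau(\cdot)\|\le C$ and $\|\nabla_\sigma(\cdot)\|\le C\varepsilon^{-1/4}$; since $a<3/4$ all of these are absorbed by the right--hand sides, which blow up. For $(\B_\varepsilon^\#-\lambda)^{-1}P_1^\# Q_1=\sum_k\Pi_k^\#\otimes\frac{Q_1}{\varepsilon^{1/2}\mu_1-(\lambda-\lambda_k^\#)}$, only the critical index $\lambda_k^\#=\lambda_1^\#$ has a small denominator, equal to $\Lambda(\varepsilon)-\lambda$ of modulus $r\varepsilon^{1/2}$ (resp. $\ge c\,b\,\varepsilon^{1/2}$), but $\|\mathbf 1_{|\sigma|\ge\varepsilon^{-a}}Q_1\|\le C\exp(-c\varepsilon^{-2a})$ --- and similarly with $\nabla_\sigma$ in front, since $\nabla_\sigma$ of the Gaussian decays as well --- which beats any power of $\varepsilon$, so this term too is absorbed.

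There remains the only genuinely nontrivial contribution: the critical index of $(\B_\varepsilon^\#-\lambda)^{-1}P_1^\#(I-Q_1)$, which is $\Pi_k^\#\otimes\varepsilon^{-1/2}(\LL_\sigma-\nu)^{-1}(I-Q_1)$ with $\nu=(\lambda-\lambda_1^\#)\varepsilon^{-1/2}$ in a fixed neighbourhood of $\mu_1$ (the non--critical indices keep $\nu$ at distance $\gtrsim1$ from $\sigma(\LL_\sigma)$ and are harmless). The whole lemma therefore follows once one proves
\[
\big\|\mathbf 1_{|\sigma|\ge\varepsilon^{-a}}(\LL_\sigma-\nu)^{-1}(I-Q_1)\big\|\le C\,\varepsilon^{2a/3}
\]
uniformly in such $\nu$ and in $\varepsilon\in(0,\varepsilon_0]$, together with the variant in which $\nabla_\sigma$ stands in front and the bound is $C\,\varepsilon^{a/3}$ (the latter obtained from the former by a localised energy estimate at the cutoff).

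My plan for this decay estimate is geometric. Cover $\{|\sigma|\ge\varepsilon^{-a}\}$ by two pieces with a smooth partition of unity: a neighbourhood of the degeneracy cone $\{V_\sigma=0\}$ and its complement in $\{|\sigma|\ge\varepsilon^{-a}\}$. Away from the cone one has $|V_\sigma|\gtrsim\varepsilon^{-a}$, so by the elementary bound on the numerical range the localised resolvent contributes $O(\varepsilon^{a})\le O(\varepsilon^{2a/3})$. Near the cone, after introducing local coordinates $(s,\sigma_\perp)$ with $s$ along $\nabla V_\sigma$, the operator $\LL_\sigma-\nu$ equals, to leading order on a ball of radius $\ll\varepsilon^{-a}$, the one--dimensional complex Airy operator $-\partial_s^2+i\,|\nabla V_\sigma|\,s$ --- with slope of size $\varepsilon^{-a}$ --- plus the nonnegative term $-\Delta_{\sigma_\perp}$ and a bounded shift; after the Airy rescaling $s\mapsto(\varepsilon^{-a})^{-1/3}s$ its resolvent has norm $O((\varepsilon^{-a})^{-2/3})=O(\varepsilon^{2a/3})$, uniformly in the transverse frequency (when that frequency is large one simply uses $\|\cdot\|\le1/|\Re(\cdot)|$). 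Glueing the two local estimates, with the commutator errors from the partition of unity and from the linearisation of $V_\sigma$ being lower order, yields the claim; the projection $(I-Q_1)$ is irrelevant here since the Gaussian spanning $\operatorname{ran}Q_1$ is negligible at distance $\varepsilon^{-a}$ from the origin. The statement \eqref{eq:136} for $\Re\lambda\le\Re\Lambda(\varepsilon)-b\varepsilon^{1/2}$ is obtained by the same scheme, using \eqref{eq:24} in place of \eqref{eq:3} and $\min(1,b)$ in place of $r$.

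The main obstacle is precisely this last step. Because the $\alpha_j$ need not have a common sign, $V_\sigma$ genuinely changes sign and vanishes on a cone through the origin; one cannot extract the decay from largeness of the potential alone, and the worst part of $\{|\sigma|\ge\varepsilon^{-a}\}$ is the thin layer around $\{V_\sigma=0\}$, where the operator degenerates to a complex Airy operator --- this is what forces the exponent $2/3$ rather than the $1$ a non--vanishing potential would give. Making the local Airy estimates, the transverse decomposition, and the glueing all uniform in $r\in(0,r_0]$, in $b$, in $\nu$ near $\mu_1$, and in $\varepsilon\in(0,\varepsilon_0]$, with exactly the exponents $2a/3$ and $a/3$, is the delicate part, and the hypothesis $a<3/4$ is used both here and in absorbing the $\varepsilon^{-1/4}$ losses coming from the gradient estimates.
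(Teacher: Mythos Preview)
Your handling of the $(I-P_1^\#)$ piece via \eqref{eq:4} and the energy identity matches the paper exactly. The divergence comes after that.

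The paper does \emph{not} introduce the further projection $Q_1$ onto the $\LL_\sigma$--ground state, and it does \emph{not} attempt any geometric partition of unity near the zero cone of $V_\sigma$. Instead, after projecting on $v_1^\#$ to obtain the scalar equation $(\LL_\sigma-\nu_1)w_1=\varepsilon^{-1/2}g_1$ (your reduction minus the $Q_1$ split), the paper simply invokes the weighted estimate \eqref{eq:82} already proved in Section~\ref{s5} for potentials of class $\Tg_1$, applied with the shifted potential $V=\sum\alpha_j\sigma_j^2-\Im\nu_1$. The key observation is that the constant in \eqref{eq:82} is independent of the additive constant $\Im\nu_1$, because $|D^2V|\le C_0(1+|\nabla V|^2+|V|^2)^{1/2}$ with $C_0$ independent of that shift. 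Since $m\gtrsim 1+|\sigma|$, this yields $\||\sigma|^{2/3}w_1\|+\||\sigma|^{1/3}\nabla w_1\|\le C(\varepsilon^{-1/2}\|g_1\|+|\Re\nu_1|\,\|w_1\|)$, and the pointwise bound $\mathbf 1_{|\sigma|\ge\varepsilon^{-a}}\le\varepsilon^{pa}|\sigma|^p$ converts this directly into \eqref{eq:88}--\eqref{eq:136}. The case $\Re\nu_1<-1$ is disposed of separately by accretiveness of $\LL_\sigma$.

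Your geometric scheme is heuristically sound --- the exponent $2/3$ really does come from the Airy scaling near $\{V_\sigma=0\}$, and this is morally the mechanism behind Proposition~\ref{prop:improve} too --- but as written it would need considerably more work: you have to choose the ball radius near the cone in the window $\varepsilon^{a/3}\ll R\ll\varepsilon^{-a/3}$ to balance commutator against linearisation errors, you have to cover the unbounded region near the cone by infinitely many such balls and sum the local resolvent bounds (the slope $|\nabla V_\sigma|$ grows along the cone, so the pieces do shrink, but the summation and overlap control must be made uniform), and you have not addressed the regime $\Re\nu_1\ll -1$. In effect you would be re-deriving a localised version of \eqref{eq:82} by hand. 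The $Q_1$ split is also superfluous: the weighted estimate applies to the full $w_1$, and the Gaussian contribution you isolate is already absorbed in it.
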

\begin{proof}
By (\ref{eq:4}) 
  \begin{displaymath}
    \|(\B_\varepsilon^\#-\lambda)^{-1}(I-P_1^\#)\|\leq C \,,
  \end{displaymath}
Furthermore since, for every $w\in D(\B_\varepsilon^\#)$, 
\begin{displaymath}
 \Re \langle(I-P_1^\#)w,(\B_\varepsilon^\#-\lambda)(I-P_1^\#)w\rangle = \|\partial_\tau(I-P_1^\#)w\|_2^2
  +  \varepsilon^{1/2}\|\partial_\sigma(I-P_1^\#)w\|_2^2- \Re\lambda\|(I-P_1^\#)w\|_2^2 \,,
\end{displaymath}
we easily get
\begin{equation}
   \|\partial_\tau(\B_\varepsilon^\#-\lambda)^{-1}(I-P_1^\#)\|\leq C \,,
\end{equation}
and
\begin{equation}\label{eq:136bis} 
   \|\nabla_\sigma (\B_\varepsilon^\#-\lambda)^{-1}(I-P_1^\#)\|\leq C \varepsilon^{-\frac 14} \,.
\end{equation}

Consequently, it suffices to  prove (\ref{eq:88}a) for ${\mathbf
  1}_{|\sigma|\geq\varepsilon^{-a}}(\B_\varepsilon^\#-\lambda)^{-1}P_1^\#$.\\
   Let $w\in D(\B_\varepsilon^\#)$
and $g\in L^2(\R^n_\#)$ satisfy
\begin{displaymath}
  (\B_\varepsilon^\#-\lambda)w=g \,.
\end{displaymath}
We further introduce $P_1^\#w=w_1(\sigma)v_1^\#(\tau)$ (or
$w_1=\langle w\,,\,  \overline{v_1^\#}\, \rangle_\tau$) and $g_1=\langle g \,,\,  \overline{v_1^\#}\, \rangle_\tau$). \\
Then,
\begin{equation}
\label{eq:128}
  (\LL_\sigma-\nu_1)w_1= \frac{g_1}{\varepsilon^{1/2}}\,,
\end{equation}
where $\nu_1=\varepsilon^{-1/2}(\lambda-\lambda_1^\#)$, and $\lambda_1^\#$ is the leftmost
eigenvalue of $\LL_\tau^\#$. \\
If $-1\leq\Re\nu_1\leq\mu_1^r+1$, where $\mu_1^r$
is given by (\ref{eq:29}), we  may write
\begin{displaymath} 
  (\LL_\sigma -i \, \Im\nu_1)w_1 = \frac{g_1}{\varepsilon^{1/2}}+\Re\nu_1w_1 \,,
\end{displaymath}
to obtain from \eqref{eq:82}, with
\begin{displaymath}
  V=\sum_{j=0}^{n-1}\alpha_j\sigma_j^2-\Im\nu_1\,,
\end{displaymath}
that
\begin{equation}
\label{eq:89}
  \||\sigma|^{2/3}w_1\|_2 +  \||\sigma|^{1/3}\nabla w_1\|_2  \leq C\Big(\frac{|| g_1|| }{\varepsilon^{1/2}}+ |\Re\nu_1|\,||w_1|| \Big)\,.
\end{equation}
Note that $C$ is independent of $\Im\nu_1$ as
$$|D^2V|\leq C_0[1+|\nabla V|^2+|V|^2]^{1/2}$$
 where $C_0$ is independent of
$\Im\nu_1$. \\
In the first case, i.e. for $\lambda\in\partial B(\Lambda(\varepsilon),r\varepsilon^{1/2})$ we have by (\ref{eq:3}) 
\begin{displaymath}
  \|w_1\|_2\leq\frac{C}{r\varepsilon^{1/2}}||g_1||_2\,,
\end{displaymath}
which, when substituted into (\ref{eq:89}) easily yields
both (\ref{eq:88}a) and (\ref{eq:88}b) via the inequality  $|\sigma|^p\geq\varepsilon^{-pa}{\mathbf
  1}_{|\sigma|\geq\varepsilon^{-a}}$.\\
We now consider the second case: $\Re\lambda\leq
\Re\Lambda(\varepsilon) -b\, \varepsilon^{1/2}$.  For 
$$
\Re\Lambda(\varepsilon) -(\mu_1^r+1)\, \varepsilon^{1/2}\leq\Re\lambda\leq\Re\Lambda(\varepsilon) -b\, \varepsilon^{1/2}
$$ 
we obtain (\ref{eq:136}a) in the
same manner using this time \eqref{eq:24}.
For 
$$
\Re\lambda<\Re\Lambda(\varepsilon) -(\mu_1^r+1)\varepsilon^{1/2}
$$ 
(i.e. for $\Re\nu_1<-1$) we may use the accretiveness of $\LL_\sigma$ to get
\begin{displaymath}
 -\Re\nu_1\|w_1\|_2\leq \varepsilon^{-1/2}\|g_1\|_2 \,.
\end{displaymath}
Substituting the above into \eqref{eq:89} yields (\ref{eq:136}a) 
  and (\ref{eq:136}b) for
$\Re\nu_1\leq -1$ as well.
\end{proof}

The last auxiliary result we present here is useful for the estimate
of the resolvent in regions where $\nabla V$ is nearly perpendicular
to $\partial\Omega$.
\begin{lemma}
\label{lem:oblique}
For $\ell \in \{1,\ldots,K^\#\}$, let $0<r_0\leq
d(\lambda_\ell^\#,\sigma(\LL^\#_\tau)\setminus\lambda_\ell^\# )/2$ and
\break $\lambda\in\partial B(\lambda_\ell^\#,r_0)$.  Let further $\vec
J\in\R^n$ satisfy $|\vec J\times\hat{i}_\tau|>0$, i.e. $J'\neq 0$, and
\begin{displaymath} 
    \B_{\vec{J}}^\#  = -\partial^2_\tau- \Delta_\sigma + i\,[J^\prime\cdot \sigma+(1+\tilde{J}_n)\tau]\,,
\end{displaymath}
be defined on~\eqref{descrDomA+}, with $J_n =1 + \tilde J_n\,$.  Then,
for any $a>0$, $\kappa_0>0$, there exist positive $C(\kappa_0)$ and
$\delta$ such that if
\begin{equation}\label{asb}
  \exp\bigl(-\varepsilon^{-a/2}\bigr)\leq |J'|\leq \delta\,, \quad
|\tilde{J}_n|\leq\delta\,, \quad \kappa\in[0,\kappa_0] \,, \mbox{ and } \quad 0<\varepsilon<1\,,
\end{equation}
then
\begin{subequations} \label{eq:42}
\begin{equation}
\big\|{\mathbf 1}_{|\tau|\geq\varepsilon^{-a}}( \B_{\vec J}^\#  -\lambda)^{-1}\big\| \leq C
\end{equation}
and
\begin{equation}
\big\|{\mathbf 1}_{|\tau|\geq\varepsilon^{-a}}\nabla_{\sigma,\tau}( \B_{\vec J} ^\#  -\lambda)^{-1}\big\| \leq C\,.
\end{equation}
\end{subequations}

Finally, for sufficiently small $\delta$ and for
$|\Re(\lambda-\lambda^\#_\ell (1+\tilde{J}_n)^{1/3})|\leq|J^\prime|^{2/3}$ we have 
\begin{equation}
  \label{eq:130}
\big\|\nabla_\sigma( \B_{\vec J}^\#  -\lambda)^{-1}\big\| \leq \frac{C}{|J^\prime|^{1/3}}\,.
\end{equation}
\end{lemma}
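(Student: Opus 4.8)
The plan is to exploit the tensor‑sum structure $\B_{\vec J}^\# = \LL_\tau^\#(1+\tilde J_n,\kappa)\otimes I + I\otimes\A_\sigma(J')$, where $\A_\sigma(J') = -\Delta_\sigma + iJ'\cdot\sigma$ is the entire‑space linear‑potential operator on $\R^{n-1}$. This tangential factor has empty spectrum and, after an orthogonal rotation and the rescaling $\sigma\mapsto|J'|^{1/3}\sigma$, it satisfies $\|e^{-t\A_\sigma(J')}\|\leq e^{-t^3|J'|^2/12}$ by \eqref{eq:5}. Since $J'\neq0$, Proposition \ref{propSpecA+} gives $\sigma(\B_{\vec J}^\#)=\emptyset$, so every $\lambda$ under consideration automatically lies in the resolvent set and only a quantitative bound is at stake. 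Because $\tilde J_n$ is small and $\kappa$ bounded, the eigenvalues $\lambda_k^\#(1+\tilde J_n,\kappa)$ of the normal factor stay close to $\lambda_k^\#$; I will write $\Pi_k^\#$ and $P_1^\#=\sum_{k\leq K^\#}\Pi_k^\#$ for the associated spectral projections, which act in the $\tau$ variable only and commute with $\B_{\vec J}^\#$.

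For \eqref{eq:42} I would split $(\B_{\vec J}^\#-\lambda)^{-1}=(\B_{\vec J}^\#-\lambda)^{-1}P_1^\#+(\B_{\vec J}^\#-\lambda)^{-1}(I-P_1^\#)$. On the range of $I-P_1^\#$ the normal operator has spectrum with real part $\geq\lambda^{\#,2}>\lambda^\#$, so after fixing $r_0$ (hence $\delta$) small enough that $\Re\lambda<\lambda^{\#,2}$ for all $\lambda\in\partial B(\lambda_\ell^\#,r_0)$, the semigroup estimate \eqref{eq:21} together with the contraction bound for $e^{-t\A_\sigma(J')}$ yields a bound on $\|(\B_{\vec J}^\#-\lambda)^{-1}(I-P_1^\#)\|$ that is uniform in $J'$; the energy identity $\Re\langle w,(\B_{\vec J}^\#-\lambda)w\rangle\geq\|\partial_\tau w\|^2+\|\nabla_\sigma w\|^2-\Re\lambda\,\|w\|^2$ (the missing Robin/transmission boundary term being nonnegative since $\kappa\geq0$) then controls the $\partial_\tau$ and $\nabla_{\sigma,\tau}$ derivatives of this piece. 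On the range of $P_1^\#$ one has the explicit formula $(\B_{\vec J}^\#-\lambda)^{-1}\Pi_k^\#=v_k^\#(\cdot)\otimes(\A_\sigma(J')-\mu_k)^{-1}\langle\,\cdot\,,\bar v_k^\#\rangle_\tau$ with $\mu_k=\lambda-\lambda_k^\#(1+\tilde J_n,\kappa)$; here the tangential resolvent may be as large as $\exp(C/|J'|)$, but composing on the left with ${\mathbf 1}_{|\tau|\geq\varepsilon^{-a}}$ introduces the factor $\|{\mathbf 1}_{|\tau|\geq\varepsilon^{-a}}v_k^\#\|$, which decays like $\exp(-\tfrac{\sqrt2}{3}\varepsilon^{-3a/2})$ by the Airy asymptotics already used for \eqref{eq:43}. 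The lower bound $|J'|\geq\exp(-\varepsilon^{-a/2})$ is precisely what makes this decay dominate the resolvent growth, giving (\ref{eq:42}a); (\ref{eq:42}b) then follows by combining this with the energy identity above and a cutoff $\zeta_\varepsilon$ supported in $\{|\tau|\geq\varepsilon^{-a}-1\}$ as in the proof of (\ref{eq:23}d).

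For \eqref{eq:130} I would again decompose along $P_1^\#$, the key new input being the hypothesis that $\Re\lambda$ is within $|J'|^{2/3}$ of the leftmost eigenvalue level; this forces $\Re\mu_k=O(|J'|^{2/3})$ for every $k\leq K^\#$. After the rescaling $\sigma=|J'|^{-1/3}\tilde\sigma$ one has $\A_\sigma(J')=|J'|^{2/3}\A_{\tilde\sigma}(\hat J')$ with $\hat J'=J'/|J'|$, so the rescaled parameter $\mu_k|J'|^{-2/3}$ stays in a fixed compact set, and the accretivity energy estimate in the $\tilde\sigma$ variable gives $\|(\A_\sigma(J')-\mu_k)^{-1}\|\leq C|J'|^{-2/3}$ and $\|\nabla_\sigma(\A_\sigma(J')-\mu_k)^{-1}\|=|J'|^{-1/3}\|\nabla_{\tilde\sigma}(\A_{\tilde\sigma}(\hat J')-\mu_k|J'|^{-2/3})^{-1}\|\leq C|J'|^{-1/3}$. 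Summing over $k\leq K^\#$ yields $\|\nabla_\sigma(\B_{\vec J}^\#-\lambda)^{-1}P_1^\#\|\leq C|J'|^{-1/3}$, whereas $\|\nabla_\sigma(\B_{\vec J}^\#-\lambda)^{-1}(I-P_1^\#)\|=O(1)$ by the energy identity; together this is \eqref{eq:130}.

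The main obstacle is that $\lambda_\ell^\#$ sits at the edge of the spectrum of the $J'=0$ limit operator $\LL_\tau^\#\otimes I+I\otimes(-\Delta_\sigma)$, so the global resolvent of $\B_{\vec J}^\#$ genuinely blows up as $|J'|\to0$ on $\partial B(\lambda_\ell^\#,r_0)$; the estimates (\ref{eq:42}a)--(\ref{eq:42}b) must therefore rest entirely on the concentration of the eigenfunctions $v_k^\#$ near $\tau=0$, and the delicate point is the quantitative matching of their decay rate against the resolvent growth in the window prescribed for $|J'|$, as well as propagating the gradient bounds into the region $\{|\tau|\geq\varepsilon^{-a}\}$ without generating boundary terms of the wrong sign — which is possible only because the Robin and transmission coefficients are nonnegative.
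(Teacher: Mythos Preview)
Your overall strategy matches the paper's: split $(\B_{\vec J}^\#-\lambda)^{-1}$ via the spectral projection $P_1^\#$, bound the $(I-P_1^\#)$ piece uniformly through the semigroup estimate \eqref{eq:21}, and for the $P_1^\#$ piece trade the indicator ${\mathbf 1}_{|\tau|\geq\varepsilon^{-a}}$ for the Airy decay \eqref{eq:43} of $v_k^\#$ multiplied by a bound on the tangential resolvent. Your treatment of \eqref{eq:130} (energy identity on each piece, rescaling in $\sigma$) is also essentially the paper's.

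There is, however, a genuine arithmetic gap in your argument for (\ref{eq:42}a). You correctly note that the tangential resolvent $\|(\A_\sigma(J')-\mu_k)^{-1}\|$ may grow like $\exp(C/|J'|)$. But under the lower bound $|J'|\geq\exp(-\varepsilon^{-a/2})$ this only yields
\[
\exp(C/|J'|)\;\leq\;\exp\!\bigl(C\,e^{\,\varepsilon^{-a/2}}\bigr),
\]
which is \emph{doubly} exponential in $\varepsilon^{-a/2}$. The eigenfunction decay $\exp\bigl(-\tfrac{\sqrt2}{3}\varepsilon^{-3a/2}\bigr)$ is only cubic in $\varepsilon^{-a/2}$ inside the exponent. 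Hence the product
\[
\exp\!\bigl(C\,e^{\,\varepsilon^{-a/2}}-\tfrac{\sqrt2}{3}\,\varepsilon^{-3a/2}\bigr)
\]
blows up as $\varepsilon\to0$, and the decay does \emph{not} dominate the resolvent growth as you claim. The lower bound on $|J'|$ in \eqref{asb} is calibrated to beat a polynomial factor in $1/|J'|$, not an exponential one.

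The paper avoids this by invoking the polynomial bound from \eqref{eq:28} (derived from the improved semigroup estimate of Proposition~\ref{prop:semigroup-upper}) in place of your exponential bound, obtaining
\[
\big\|{\mathbf 1}_{|\tau|\geq\varepsilon^{-a}}(\B_{\vec J}^\#-\lambda)^{-1}P_1^\#\big\|
\;\leq\;\frac{C}{|J'|^{2/3}}\,\exp\!\Bigl(-\tfrac{\sqrt2}{3}\,\varepsilon^{-3a/2}\Bigr).
\]
With this polynomial prefactor one has $|J'|^{-2/3}\leq\exp\bigl(\tfrac23\varepsilon^{-a/2}\bigr)$, and now $\tfrac23\varepsilon^{-a/2}-\tfrac{\sqrt2}{3}\varepsilon^{-3a/2}\to-\infty$, so the product is indeed bounded. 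This sharper resolvent input is the ingredient your argument is missing.
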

\begin{proof}
The proof is very similar to the proof of \eqref{eq:23}. Without any
loss of generality we assume $\tilde{J}_n=0$, otherwise we rescale
$\tau$ and $\sigma$ by $(1+\tilde{J}_n)^{1/3}$ and proceed
similarly. There exists $\delta>0$ such that, under assumption
\eqref{asb},
\begin{displaymath} 
   \| (I-P_1^\#) e^{-t\B_{\vec J} ^\#}\|\leq C\, e^{-t (\lambda^\#+\lambda^{\#,2})/2}\,,
\end{displaymath}
where $P_1^\#$ is given by \eqref{eq:81}. \\
Hence,
\begin{equation}
\label{eq:131}
  \|( \B_{\vec J}^\#  -\lambda)^{-1}(I-P_1^\#)\|\leq C \,,
\end{equation}
and since by \eqref{eq:43} and \eqref{eq:28} we have 
\begin{displaymath}
  \|{\mathbf 1}_{|\tau|\geq\varepsilon^{-a}}( \B_{\vec J}^\#  -\lambda)^{-1}P_1^\#\|\leq
  \frac{C}{ |J^\prime|^{2/3}} \, \exp\biggl(-\frac{\sqrt{2}}{3}\varepsilon^{-3a/2} \biggr)\,,
\end{displaymath}
(\ref{eq:42}a) readily follows under the lower bound assumption
appearing in \eqref{asb}. The proof of (\ref{eq:42}b) is obtained in a
similar manner to the proof of (\ref{eq:23}d).

It can be easily verified that for every $w\in D(\B_{\vec J}^\# )$ 
\begin{displaymath}
  \|\nabla_\sigma(I-P_1^\#)w\|_2^2 \leq \Re\lambda\|(I-P_1^\#)w\|_2^2 +
  \|(I-P_1^\#)w\|_2\|( \B_{\vec J}^\#  -\lambda)(I-P_1^\#)w\|_2 
\end{displaymath}
By \eqref{eq:131} we thus have
\begin{displaymath}
  \|\nabla_\sigma(I-P_1^\#)w\|_2 \leq C\,.
\end{displaymath}
Consequently, once we establish \eqref{eq:130} for $P_1^\#w$ it will
hold for $w$ as well. Let then $w_1(\sigma)=\langle v_l^\#,w\rangle_\tau$. As
\begin{displaymath}
  (-\Delta_\sigma+iJ^\prime\cdot \sigma)w_1 = (\lambda-\lambda_\ell^\#)w_1 + g_1\,,
\end{displaymath}
where $g_1=\langle v_l^\#,( \B_{\vec J}^\#  -\lambda)w\rangle_\tau$,
we easily obtain, via integration by parts that
\begin{displaymath}
  \|\nabla_\sigma w_1\|_2^2 \leq |\Re(\lambda-\lambda_\ell^\#)|\|w_1\|_2^2 +  \|w_1\|_2\|g_1\|_2\,.
\end{displaymath}
Since, $\|w_1\|_2\leq C|J^\prime|^{-2/3}$,  we can easily establish
\eqref{eq:130} keeping in mind that 
$$
|\Re(\lambda-\lambda^\#_\ell)|\leq|J^\prime|^{2/3}\,.
$$
\end{proof}

\subsection{Approximate resolvent}

Let $x_0\in \mathcal S^\#$, where $\Sg^\#$ is given by \eqref{eq:2},
and $\Lambda^{\#,1}_{\mathcal K (h)}(x_0,h)$ be given by either
\eqref{eq:125} or \eqref{eq:19}. We seek an estimate for the resolvent
along a circle  in the complex plane, centered at
$\Lambda^{\#,1}_{\mathcal K (h)}(h)$, whose radius is of some suitably
chosen $o(h)$ size.

We now construct once again the partition of unity as in
\eqref{eq:25}. In contrast with the previous section we need to use
two different scales (or ball sizes).  We first split $\Jg_\partial$ into
two disjoint subsets:
\begin{displaymath}
  \Jg_\partial^\perp = \{ j\in \Jg_\partial \,| \, b_j\in\partial\Omega_\perp^\#\}\quad ;\quad \Jg_\partial^o =
  \Jg_\partial\setminus \Jg_\partial^\perp  \,.
\end{displaymath}
We note here that, since $\partial\Omega_\perp^\#$ is a finite set, we
could have easily constructed $\Jg_\partial$ so that
$\Jg_\partial^\perp =\emptyset$. However, since it simplifies the
construction of the approximate resolvent in the vicinity of
$\partial\Omega_\perp^\#$\,, we prefer to select a partition of unity
for which
\begin{equation}
\label{eq:126}
  \Union_{j\in \Jg_\partial^\perp}\{b_j\} = \partial\Omega_\perp^\#  \,. 
\end{equation}
We use different ball sizes for $j\in \Jg_\partial^\perp$ and
  $j\in\Jg_i\cup\Jg_\partial^o$. We proceed in two steps.  We first construct a finite (independent of $h$) partition  of unity (of size $h^{\varrho_\perp}$) $\tilde \xi_{h}, \zeta_{j,h}$ with $j\in \mathcal \Jg^\perp$
   such that
   \begin{equation}
   \tilde \xi_h^2 + \sum_{j\in \mathcal \Jg_\pa^\perp} \zeta_{j,h}^2 = 1 \mbox{ in } \Omega^\#\,,
   \end{equation}
 with $$\zeta_{j,h}\equiv1 \mbox{ in } B(b_j,h^{\varrho_\perp}/2)\,,\,
\zeta_{j,h}\equiv0 \mbox{ in } \Omega\setminus B(b_j,h^{\varrho_\perp})\,,
$$
\begin{equation}
\label{eq:138}
  |\nabla\zeta_{j,h}| + h^{\varrho_\perp}|D^2\zeta_{j,h}|\leq C\,h^{-\varrho_\perp} \,, \;
  \forall j\in\Jg_\partial^\perp\,.
\end{equation} 
and
\begin{equation}
  |\nabla\tilde \xi_{h}| + h^{\varrho_\perp}|D^2 \tilde \xi_{h}|\leq C\, h^{-\varrho_\perp} \,.
  \end{equation}
As in \eqref{eq:90} we introduce an additional condition
  \begin{equation}\label{eq:90a}
  \frac{\pa \zeta_{j,h}}{\pa \nu} \big |_{\pa \Omega^\#}=0 \,,\, \forall j \in
  \mathcal \Jg_\pa^\perp \mbox{ and }  \frac{\pa \tilde \xi_{h}}{\pa \nu} \big |_{\pa \Omega^\#}=0 \,.
  \end{equation}

We now combine the above with the partition of unity (of size $h^\varrho$ with $\varrho < \varrho_{\perp}$) given in
(\ref{eq:25}) and \eqref{part1}: $(\chi_{k,h},\zeta_{k,h})$. Note that
$\chi_{k,h}$ and $\zeta_{k,h}$ are respectively supported on $B(a_j,h^\varrho)$
or $B(b_j,h^\varrho)$. We set
\begin{displaymath}
  \tilde{\zeta}_{k,h} = \zeta_{k,h}\, \tilde \xi_h \quad   \tilde{\chi}_{k,h} = \chi_{k,h}\, \tilde \xi_h \,.
\end{displaymath}
As $\varrho<\varrho_\perp$, for sufficiently small $h$ we must have for some
$k\in\Jg_i$ and $j\in\Jg_\partial^\perp$ that $B(a_k,h^\varrho)\subset B(b_j,h^{\varrho_\perp}/2)$
and hence $\tilde{\chi}_{k,h}\equiv0$. A similar observation can be made for
$k\in\Jg_\partial^o$. We thus define
\begin{displaymath}
  \tilde{\Jg}_i= \{k\in\Jg_i \,| \, \tilde{\chi}_{k,h}\not\equiv0 \} \quad ;
  \quad  \tilde{\Jg}_\partial^o= \{k\in\Jg_\partial^o \,| \, \tilde{\zeta}_{k,h}\not\equiv0 \} \,.
\end{displaymath}
Clearly,
\begin{displaymath}
  \sum_{j\in\Jg_\partial^\perp} \zeta_{j,h}^2 + \sum_{j\in\tilde{\Jg}_\partial^o}
  \tilde{\zeta}_{j,h}^2 +  \sum_{j\in\tilde{\Jg}_i}
  \tilde{\chi}_{j,h}^2 =1 \mbox{ in } \Omega^\# \,.  
\end{displaymath}
For simplicity of notation we drop the tilde accent  in the sequel
and use $(\chi_{j,h},\zeta_{j,h})$ instead of
$(\tilde{\chi}_{j,h},\tilde{\zeta}_{j,h})$ and $(\Jg_i,\Jg_\partial^o)$ instead
of $(\tilde{\Jg}_i,\tilde{\Jg}_\partial^o)$. \\
Note that we deduce of the previous construction that
\begin{subequations}
  \label{eq:137}
  \begin{equation}
    \begin{cases}
   |\nabla\chi_{j,h}|+h^{\varrho_\perp}|D^2\chi_{j,h}|\leq Ch^{-\varrho_\perp} & \text{ in }
B(a_j,h^{\varrho}/2) \\ |\nabla\chi_{j,h}|+h^\varrho|D^2\chi_{j,h}|\leq Ch^{-\varrho}&  \text{ in }
B(a_j,h^{\varrho})  
    \end{cases}
 \; \forall j\in\Jg_i   
\end{equation}
and
\begin{equation}
  \begin{cases}
    |\nabla\zeta_{j,h}|+h^{\varrho_\perp}|D^2\zeta_{j,h}|\leq Ch^{-\varrho_\perp}& \text{ in }
B(b_j,h^{\varrho}/2) \\ |\nabla\zeta_{j,h}|+h^{\varrho}|D^2\zeta_{j,h}|\leq Ch^{-\varrho} &
\text{ in }
B(b_j,h^{\varrho}) 
  \end{cases}
\; \forall j\in\Jg_\partial^o\,.
\end{equation}
\end{subequations}
As in the previous section, we keep the property that each point of
$\Omega^\#$ belongs to at most $N_0$ balls with $N_0$ independent of $h$
and the inequality  
\begin{equation}
\sum_j |\pa^\alpha\chi_{j,h}(x)|^2 + 
\sum_j |\pa^\alpha\zeta_{j,h}(x) |^2 \leq C_\alpha  \, h^{-2 |\alpha|{\varrho}}\,,
\end{equation}
and introduce $\eta_{j,h} = 1_{\Omega^\#}\, \zeta_{j,h}\,. $\\
Note that, as a result of \eqref{eq:90} and \eqref{eq:90a}, we have
\begin{equation}\label{eq:139}
  \frac{\pa \zeta_{j,h}}{\pa \nu} \big |_{\pa \Omega^\#}=0 \,,\, \forall j \in \mathcal \Jg_\pa\,.
  \end{equation}

As a consequence of future constraints we impose from now on the following conditions:
\begin{equation}
\label{eq:87}
0 < \frac 13 < \varrho_{\perp}  < \frac 25 \mbox{ and } \frac 12 <  \varrho < \frac 59 \,.
\end{equation}

We further split $\Jg_\partial^\perp$ into the subsets
\begin{displaymath} 
  \Jg_\partial^m = \{ j\in \Jg_\partial^\perp \,| \, b_j\in\partial\Omega_\perp^\# \;;\;
  V(b_j)=V(x_0) \; ; \; \lambda^\#(|\nabla
  V(b_j)|)=\Lambda_m^\#\; ; \;  \lambda_1^\#(b_j)=\lambda_1^\#(x_0)\}
\end{displaymath}
and
\begin{displaymath}
  \Jg_\partial^c =   \Jg_\partial^\perp\setminus \Jg_\partial^m   \,.
\end{displaymath}
Recall again that 
\begin{displaymath}
  \Union_{j\in \Jg_\partial^m}\{b_j\} = \Sg^\#  \,.
\end{displaymath}

We now construct the approximate resolvent in a similar, though
slightly different, manner to the one used in the previous section.
We obtain bounds for the approximate operators $\A_{j,h}$ defined in
\eqref{defCAjh}, and $\tilde{A}_{j,h}^\#$, which for
$j\in\Jg_\partial^0$ is defined by \eqref{eq:13}.  A separate
definition of $\tilde{A}_{j,h}^\#$ is necessary for $j\in
\Jg_\partial^m$. For $j\in\Jg_i$, \eqref{resCAjh} still holds, whereas
for $j\in\Jg_\partial^m$, we need to refine the estimates of Section
\ref{s6}. Thus, for $j\in\Jg_\partial^m$,  we set 
\begin{equation}   \label{eq:34}
\begin{cases}
 \tilde\A_{j,h}^\#= -h^2\Delta_{s,\rho}+i\big(V(b_j)+
  J(b_j)\rho+ s\cdot D^2_sV_\partial(b_j)s\big)\,,\\
\Dg(\tilde \A_{j,h}^\#) = \{ u\in L^2(\R^n_\#)\cap \mathfrak D^\#_n\, | \, \tilde\A_{j,h}u
\in L^2(\R^n_\#)  \} \,,
\end{cases}
\end{equation}
where the curvilinear coordinates $(s,\rho)$ are given by
\eqref{eq:38} and $\mathfrak D^\#_n$ is given by \eqref{eq:102}. Note
that $\tilde\A_{j,h}^\#$ depends on $\K (h) =\kappa h^\frac 43$ when
$\# \in \{R, T\flat\}$ and $b_j \in \partial \Omega_-$.\\
We now set 
\begin{equation*}
r(h)=h^q\,,
\end{equation*} 
for some $q$ satisfying
\begin{equation}\label{condq}
0 < q < \frac{1}{15}\,.
\end{equation}
Applying to \eqref{eq:34} the dilation
\begin{equation}     \label{eq:96}
\rho=\Big[\frac{h^2}{J(b_j)}\Big]^{1/3}\tau \quad ; \quad s=h^{1/2}\sigma \,,
\end{equation}
we obtain the operator $$ i V (b_j(h)) + h^{2/3} J(b_j)^{-\frac 13}
\B_{\varepsilon_j}^\#\,,$$
where $\B_\varepsilon^\#$ is defined in
\eqref{eq:123}, $$\varepsilon_j =h^\frac 23 J(b_j) ^\frac 23$$ and (if
appropriate) the Robin or Transmission parameter   $$\kappa_j^*= \kappa
J(b_j)^{-\frac 13}\,.$$
  By \eqref{eq:3} we then have
\begin{equation}     \label{eq:27} 
\sup_{j\in \Jg_\partial^m}\sup_{\lambda\in\partial
  B(\Lambda^{\#,1},rh)}\|(\tilde{\A}_{j,h}^\# -\lambda)^{-1}\|\leq \frac{C}{rh}=
\frac{C}{h^{1+q}}\,,
\end{equation}
where $\Lambda^{\#,1}$ is given by \eqref{eq:125} and \eqref{eq:19}. \\
{\bf The case $j\in \Jg_\partial^o$.\\}
Here we use \eqref{eq:13} for the definition of $\tilde{\A}_{j,h}$ in
this case. By the smoothness of $\partial\Omega$ and $V$, and by
Assumption \ref{nondeg2}, there exists $C>0$ such that
\begin{displaymath} 
  | \nabla V - (\vec \nu\cdot \nabla V)\vec \nu |(b_j) \geq C\, d(b_j,\partial\Omega_\perp^\#)\,.
\end{displaymath}
Furthermore, as $\lambda^\#_1$ is a simple eigenvalue $\LL^\#$, we
have, by either \eqref{eq:76} or \eqref{eq:77}, for
$\lambda\in\partial B(\Lambda^{\#,1},rh)$,
\begin{displaymath}
  |\lambda^\#(J(b_j)\cdot\nu)-\lambda| \leq C\, d(b_j,\partial\Omega_\perp^\#) 
\leq \hat C \,   | \nabla V - (\vec \nu\cdot \nabla V)\vec \nu |(b_j) \,.
\end{displaymath}
Hence, we may use \eqref{eq:11} together with \eqref{eq:96} to
establish that
\begin{equation}   \label{eq:36}
  \sup_{\lambda\in\partial B(\Lambda^{\#,1},rh)}\|(\tilde{\A}_{j,h}^\# -\lambda)^{-1}\|\leq
  \frac{C}{[d(b_j,\partial\Omega_\perp^\#)h]^{\frac{2}{3}}}\leq
  \frac{\widehat C}{h^{\frac{2}{3}(1+\varrho_\perp)}}\,. 
\end{equation}
Note that by \eqref{eq:126} there exists $C>0$ such that
$d(b_j,\partial\Omega_\perp^\#) \geq \frac 1C h^{\varrho_\perp}$ for all
$j\in \Jg_\partial^o$ and sufficiently small $h$. \\

{\bf The case  $j\in \Jg_\partial^c$.\\}
In this case we need to consider three different subsets:
\begin{enumerate}
\item $\Jg_\partial^{c,1}=\{j\in \Jg_\partial^c \, | \, V(b_j)\neq V(x_0)\,\}$, 
\item $\Jg_\partial^{c,2}=\{j\in \Jg_\partial^c \, | \,
  V(b_j)=V(x_0)\,,\;\lambda^\#(J(b_j))>\Lambda^\#_m,\}$,
\item $\Jg_\partial^{c,3}=\{j\in \Jg_\partial^c \, | \,
  V(b_j)=V(x_0)\,,\;\lambda^\#(J(b_j))=\Lambda^\#_m,\,, \; 
    \mu_1^r(b_j)>\mu_1^r(x_0)\,\}$,
\end{enumerate}
where $\mu_1^r$ is given by \eqref{eq:29}, in which the
$\{\alpha_j\}_{j=1}^{n-1}$ are the eigenvalues of $D^2V_\partial$, where $V_\partial$
denotes the restriction of
$V$ to $\partial\Omega^\#$ (see Assumption \ref{nondeg2}).

{\bf For the first subset} we use \eqref{eq:13} as the definition of
$\tilde{\A}_{j,h}^\#$.  Using \eqref{eq:30} upon applying
\eqref{eq:96} leads for $h$ small enough to the following estimate
\begin{equation}  \label{eq:32}
\max_{j\in \Jg_\partial^{c,1}}\sup_{\lambda\in\partial B(\Lambda^{\#,1},\, h^{1+q})}\|(\tilde{\A}_{j,h}^\# -\lambda)^{-1}\| \leq \frac{C}{h^{2/3}} \,.
\end{equation}
{\bf For the second subset, } i.e. for $j\in \Jg_\partial^{c,2}$ we
may use \eqref{eq:15}, with
\begin{equation*}
\epsilon= \epsilon^{c,2}=\min_{j\in \Jg_\partial^{c,2}} \{ J(b_j)-J(x_0) \}\,,
\end{equation*}
to obtain that there exists $C>0$ such that
\begin{equation}    \label{eq:33}
\max_{j\in \Jg_\partial^{c,2}}\sup_{\lambda\in\partial B(\Lambda^{\#,1},\,h^{1+q})}\|(\tilde{\A}_{j,h}^\# -\lambda)^{-1}\| 
\leq \frac{C}{\epsilon^{c,2}\, h^{2/3}} \,.
\end{equation}
{\bf Finally, for the third subset}, i.e. for $j\in
\Jg_\partial^{c,3}$, we use \eqref{eq:34} as the definition of
$\tilde{A}_{j,h}^\# $, together with \eqref{eq:24} and a dilation to
obtain
\begin{equation}   \label{eq:32a}
\sup_{j\in \Jg_\partial^{c,3}}\sup_{\lambda\in\partial B(\Lambda^{\#,1},\,h^{1+q})}\|(\tilde{\A}_{j,h}^\# -\lambda)^{-1}\|\leq \frac{C}{h}\,.
\end{equation}

We use \eqref{defResApp1} for the approximate resolvent, whose
definition is repeated here for the convenience of the reader,
\begin{displaymath}
 \mathcal{R}(h,\lambda) = \sum_{j\in \mathcal \Jg_i(h)}\chi_{j,h}(\A_{j,h}-\lambda)^{-1}\chi_{j,h} 
+  \sum_{j\in \mathcal \Jg_\partial (h)}\eta_{j,h}  R_{j,h}\eta_{j,h}\,,
\end{displaymath}
where the definition of $R_{j,h}$ is given in \eqref{defResBord} with
$\lambda(h,\nu)$ replaced by some $\lambda\in\partial B(\Lambda^{\#,1},h^{1+q})\,$. The
boundary transformation $\mathcal{F}_{b_j}$ and its associated
operator $T_{\mathcal{F}_{b_j}}$, needed in \eqref{defResBord}, are
respectively given by \eqref{eq:38} and \eqref{transfoF}. Let, as in
the previous section,
\begin{displaymath}
  \Eg(h,\lambda) = (\A_h^\# -\lambda)\mathcal{R}(h,\lambda) - I\,.
\end{displaymath}
Then, as in \eqref{decompRlambda}, we get
\begin{multline}    \label{eq:44}
 \Eg(h,\lambda) = \sum_{j\in \mathcal \Jg_i(h)}\chi_{j,h}(\A_h-\A_{j,h})(\A_{j,h}-\lambda)^{-1}\chi_{j,h} 
 + [\A_h,\chi_{j,h}](\A_{j,h}-\lambda )^{-1}\chi_{j,h}  \\
\quad +  \sum_{j\in \mathcal \Jg_\partial (h)} \left( \eta_{j,h} T_{\mathcal{F}_{b_j}}^{-1}
(\widehat {\A}_{j,h}-\tilde{\A}_{j,h}) (\tilde{\A}_{j,h}-\lambda)^{-1} T_{\mathcal{F}_{b_j}}\eta_{j,h}  
 + [\A_h,\eta_{j,h}]R_{j,h}\eta_{j,h} \right)\,.
\end{multline}
In the sequel we  prove the following
\begin{lemma}
Under Assumptions \ref{notdeg} and \ref{nondeg2}, for $q$, $\varrho_\perp$, and
$\varrho$ satisfying  \eqref{eq:87} and \eqref{condq},  we have with  $r=h^q$, 
\begin{equation}
\label{eq:46}
    \lim_{h\to0} \sup_{\lambda\in\partial B(\Lambda^{\#,1},rh)} \| \Eg(h,\lambda)\| =0\,.
\end{equation}
\end{lemma}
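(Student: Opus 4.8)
The plan is to estimate $\|\Eg(h,\lambda)\|$ by the same ``almost orthogonality'' argument used in Section~\ref{s6}: write $\Eg(h,\lambda)=\sum_j\CC_j$ where each $\CC_j$ is supported near a single ball, observe that at each point of $\Omega^\#$ at most $N_0$ of the $\CC_j$ overlap, so that $\|\Eg f\|^2\le N_0\sum_j\|\CC_j f\|^2$, and then bound each $\|\CC_j\|$. The novelty compared with Section~\ref{s6} is that now we must control the resolvent on a circle of radius $rh=h^{1+q}$ \emph{around} $\Lambda^{\#,1}$, rather than to the left of $\Lambda_m^\# h^{2/3}$, and that the balls near $\partial\Omega_\perp^\#$ have the larger radius $h^{\varrho_\perp}$; so the error terms $\B_j$ must be estimated using the refined resolvent bounds of Lemma~\ref{lemma7.1} and the two lemmas that follow it, rather than the crude bound \eqref{eq:16}.

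First I would treat the interior balls $j\in\Jg_i$ and the ``far'' boundary balls $j\in\Jg_\partial^o\cup\Jg_\partial^{c,1}\cup\Jg_\partial^{c,2}$. For these, the linearization error $\A_h-\A_{j,h}$ (respectively $\widehat\A_{j,h}-\tilde\A_{j,h}$) is $O(h^{2\varrho})$ on a ball of radius $h^\varrho$ after the natural rescaling, while the relevant resolvent bound is $O(h^{-2/3})$ (from \eqref{resCAjh}, \eqref{eq:32}, \eqref{eq:33}, and for $j\in\Jg_\partial^o$ the bound \eqref{eq:36} which costs $h^{-\frac23(1+\varrho_\perp)}$). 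The commutator terms $[\A_h,\chi_{j,h}]$ contribute $h^2\cdot h^{-\varrho}\cdot(\text{grad})+h^2 h^{-2\varrho}$, again multiplied by the resolvent bound. Collecting powers of $h$ and using \eqref{eq:87}, \eqref{condq}, each such $\|\B_j\|$ is a positive power of $h$; since the $\chi_{j,h}$ form a bounded partition of unity, the contribution of these $j$ to $\|\Eg f\|$ is $o(\|f\|)$.

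The main work, and the main obstacle, is the ``critical'' boundary balls $j\in\Jg_\partial^m\cup\Jg_\partial^{c,3}$, which have the \emph{large} radius $h^{\varrho_\perp}$ and where the resolvent is only bounded by $C/(rh)=Ch^{-1-q}$ (from \eqref{eq:27}, \eqref{eq:32a}). On such a ball the quadratic Taylor remainder $V\circ\mathcal F_{b_j}-\tilde V_{j}$ is $O(|(s,\rho)|^3)=O(h^{3\varrho_\perp})$ in $L^\infty$, which after multiplication by $h^{-1-q}$ is still not small unless we exploit the spatial decay of the resolvent in the normal variable $\tau$. This is precisely why the auxiliary estimates \eqref{eq:23}, \eqref{eq:88}, \eqref{eq:136} were proved: one splits the error term $(\widehat\A_{j,h}-\tilde\A_{j,h})(\tilde\A_{j,h}-\lambda)^{-1}$ using the characteristic functions ${\mathbf 1}_{|\tau|\le\varepsilon^{-a}}$ and ${\mathbf 1}_{|\tau|\ge\varepsilon^{-a}}$ (and similarly in $|\sigma|$), with $\varepsilon=\varepsilon_j\sim h^{2/3}$. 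On $\{|\tau|\le\varepsilon^{-a}\}$ the factor of $|(s,\rho)|^3$ is bounded by $(h^{2/3}\varepsilon^{-a})^{3}\cdot(\text{poly in }h)$, which is a gain; on $\{|\tau|\ge\varepsilon^{-a}\}$ the resolvent applied after the cutoff is $O(1)$ by \eqref{eq:23}, so no $h^{-1-q}$ loss occurs. One also needs \eqref{eq:23}b,c (and the remark after the second lemma) to absorb the first- and second-derivative terms of $V$ that arise from the commutators $[\A_h,\eta_{j,h}]$ and from the change of variables $\mathcal F_{b_j}$, using that $h^{-\varrho_\perp}$ gradients times $\varepsilon^{1/4}$-weighted resolvent bounds are controlled. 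Finally, for $j\in\Jg_\partial^c$ whose $b_j$ is near (but not in) $\partial\Omega_\perp^\#$, one invokes Lemma~\ref{lem:oblique} with $\vec J=\vec J(b_j)$, $|J'|\gtrsim d(b_j,\partial\Omega_\perp^\#)$, to get the same kind of $O(1)$ bound on the tail $\{|\tau|\ge\varepsilon^{-a}\}$. Optimizing the exponent $a$ (taking $a<3/4$ as in the statement of the third lemma) and the constraints \eqref{eq:87}, \eqref{condq} on $q,\varrho_\perp,\varrho$, every $\|\B_j\|$ acquires a strictly positive power of $h$; summing via the almost-orthogonality inequality then gives \eqref{eq:46}. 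I expect the bookkeeping of which power of $h$ each of the many terms carries — and checking that \eqref{eq:87} and \eqref{condq} were chosen so that the minimum of these powers is positive — to be the delicate part, rather than any single estimate.
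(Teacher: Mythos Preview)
Your overall strategy is the paper's: almost-orthogonality reduces to bounding each $\|\B_j\|$, and the new ingredient versus Section~\ref{s6} is the two-scale partition and the refined resolvent estimates of Lemma~\ref{lemma7.1}, \eqref{eq:23}, \eqref{eq:88}--\eqref{eq:136}, and Lemma~\ref{lem:oblique}. However, you have the index sets partly scrambled, and this matters for the bookkeeping you flag as delicate.

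First, by construction $\Jg_\partial^c\subset\Jg_\partial^\perp$, so every $b_j$ with $j\in\Jg_\partial^c$ \emph{lies in} $\partial\Omega_\perp^\#$; there is no ``near but not in'' case there. The balls for $j\in\Jg_\partial^{c,1}\cup\Jg_\partial^{c,2}$ have the \emph{large} radius $h^{\varrho_\perp}$, not $h^\varrho$, so the linearization error is $O(h^{2\varrho_\perp})$ and the resulting power is $h^{2(\varrho_\perp-1/3)}$, not $h^{2(\varrho-1/3)}$. Second, Lemma~\ref{lem:oblique} is used for $j\in\Jg_\partial^o$, not $\Jg_\partial^c$: these are the small balls (radius $h^\varrho$) whose centers are \emph{not} in $\partial\Omega_\perp^\#$ but may be arbitrarily close to it, so that $|J'(b_j)|$ can be as small as $Ch^{\varrho_\perp}$. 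This case is the most intricate in the paper, not one of the easy ones: the modified cutoffs $\tilde\zeta_{j,h}=\zeta_{j,h}\tilde\xi_h$ carry gradients of \emph{both} sizes $h^{-\varrho}$ and $h^{-\varrho_\perp}$ (cf.\ \eqref{eq:137}), and the commutator $[\A_h,\eta_{j,h}]R_{j,h}\eta_{j,h}$ must be split into five pieces (see \eqref{eq:129}) according to whether one is in the inner ball $B(b_j,h^\varrho/2)$, whether $\rho\gtrless h^{2/3-a}$, and whether $d(b_j,\partial\Omega_\perp^\#)\le C_0\delta$ (where \eqref{eq:42} and \eqref{eq:130} apply) or not (where \eqref{eq:28} suffices). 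Your sketch of $\Jg_\partial^m$ is closer to the mark, though note that the Taylor remainder there is not uniformly $O(h^{3\varrho_\perp})$: in the normal variable the error starts at $O(\rho^2)+O(\rho|s|)$, which is why the split is on $\{\rho\gtrless h^{2/3-a}\}$ rather than on $|(s,\rho)|$. Once these attributions are corrected the exponents line up with \eqref{eq:87} and \eqref{condq} exactly as in the paper.
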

\begin{proof}
As in \eqref{eq:59} we may write
\begin{equation}
  \label{eq:61}
\|\Eg(h,\lambda)f\|_2^2\leq C_0 \, \left(  \sum_{j\in \mathcal J_i}  \|\mathcal B_j\|^2 \| \chi_{j,h}  f\|^2_2\,
   + \sum_{j\in \mathcal J_\partial}  \|\mathcal B_j\|^2 \| \eta_{j,h}  f\|^2_2\right)\,,
\end{equation}
where $\B_j$ is given by \eqref{eq:6} with $\lambda(h,\nu)$ replaced
by some $\lambda\in\partial B(\Lambda^{\#,1},h^{1+q})\,$.\\

For $j\in \Jg_i$, the estimates are unchanged in this new partition and  we can still write
\begin{displaymath}
  \|\chi_{j,h}(\A_h-\A_{j,h})(\A_{j,h}-\lambda)^{-1}\chi_{j,h}\|\leq Ch^{2(\varrho-1/3)}
\end{displaymath}
and 
\begin{displaymath}
  \|[\A_h,\chi_{j,h}](\A_{j,h}-\lambda )^{-1}\chi_{j,h}\|\leq Ch^{2/3-\varrho}\,.
\end{displaymath}
Consequently,
\begin{equation}   \label{eq:41}
\sum_{j\in \mathcal J_i}  \|\mathcal B_j\|^2 \| \chi_{j,h} f\|^2_2\, \leq C \,
(h^{4/3-2\varrho}+
h^{4(\varrho-1/3)})\|f\|_2^2
\end{equation}

For $j\in \Jg_\partial^{c,1}\cup \Jg_\partial^{c,2}$, we
follow the same steps as in the proof of \eqref{estRHS_R1bdry} to
obtain, with the aid of \eqref{eq:32}, \eqref{eq:33}, and
\eqref{eq:138} that
\begin{equation}   \label{eq:35} 
\sum_{j\in \Jg_\partial^{c,1}\cup \Jg_\partial^{c,2}} \|\mathcal B_j\|^2 \| \eta_{j,h}
f\|^2_2\leq C \,  (h^{2(\varrho_\perp-1/3)} + h^{4/3-2\varrho_\perp}) \|f\|_2^2\,.
\end{equation}
To complete the proof of the lemma we need yet an estimate for the
terms in \eqref{eq:44} for which $j\in
\Jg_\partial^{c,3}\cup\Jg_\partial^m\cup\Jg_\partial^o$.
 
{\bf The case $j \in \Jg_\partial^m$. } \\
For $j\in \Jg_\partial^m$ we write (using the methods in \cite{Hen2}
while taking account the regularity of the $\#$-realization for the
Laplacian in $\mathbb R^n_\#$)
\begin{multline}    \label{eq:49}
  \big\|\eta_{j,h}T_{\mathcal{F}_{b_j}}^{-1}
  (\widehat{\A}_{j,h}^\#-\tilde{\A}_{j,h}^\#) (\tilde{\A}_{j,h}-\lambda)^{-1}
  T_{\mathcal{F}_{b_j}}\hat{\eta}_{j,h}\big\|\\
  \leq C \Big(h^{\varrho_\perp} \|h^2\Delta_{(s,\rho)}(\tilde\A_{j,h}^\#-\lambda)^{-1} \breve{\eta}_{j,h}\| +
   \|h^2\nabla_{(s,\rho)}(\tilde\A_{j,h}^\#-\lambda(h))^{-1}\| \\ + 
 \| (V\circ \mathcal F_{b_j} -V_{b_j} ^{(2))\,}\tilde{\eta}_{j,h} (\tilde\A_{j,h}^\#-\lambda)^{-1}\breve{\eta}_{j,h}\|\Big) \,,
\end{multline}
where   
\begin{equation*}
V_{b_j} ^{(2)}(s,\rho))=V(b_j)+J(b_j)\rho+s\cdot
D^2_sV_\partial(b_j)s\,,
\end{equation*}
and 
\begin{displaymath} 
  \breve{\eta}_{j,h} = T_{\mathcal{F}_{b_j}}\hat{\eta}_{j,h}\,,\quad \tilde \eta_{j,h} = T_{\mathcal{F}_{b_j}} \eta_{j,h}\,.
\end{displaymath}
For the first term on the right-hand-side we use dilation together
with (\ref{eq:23}c), Lemma \ref{lemma7.1}, and the localization of the
support of $\breve{\eta}_{j,h}$, 
\begin{equation}  \label{eq:50} 
h^{\varrho_\perp}\|h^2\Delta_{(s,\rho)}(\tilde\A_{j,h}^\#-\lambda)^{-1} \breve{\eta}_{j,h} \| \leq C\, h^{\varrho_\perp-1/3-q} \,.
\end{equation}
The second term can be bounded via dilation and (\ref{eq:23}b),
yielding
\begin{equation}  \label{eq:51} 
\|h^2\nabla_{(s,\rho)}(\tilde\A_{j,h}^\#-\lambda(h))^{-1}\| \leq C\, h^{1/6-q} \,.
\end{equation}
Finally, to bound the last term on the right-hand-side of
\eqref{eq:49} we first write, for some $a>0$,
\begin{multline}   \label{eq:53}
  \| (V\circ \mathcal F_{b_j} -V_{b_j} ^{(2)}) \tilde{\eta}_{j,h}(\tilde\A_{j,h}^\#-\lambda)^{-1}\breve{\eta}_{j,h} \| \leq   
\| (V\circ \mathcal F_{b_j} -V_{b_j} ^{(2)}) {\mathbf
  1_{\rho\geq h^{2/3-a}}}\tilde{\eta}_{j,h}(\tilde\A_{j,h}^\#-\lambda)^{-1}\breve{\eta}_{j,h}\| \\ + 
\| (V\circ \mathcal F_{b_j} -V_{b_j} ^{(2)}) \mathbf
 1_{\rho<h^{2/3-a}}\tilde{\eta}_{j,h}(\tilde\A_{j,h}^\#-\lambda)^{-1}\breve{\eta}_{j,h}\|\,.
\end{multline}
We use \eqref{eq:3} together with (\ref{eq:23}a), to obtain via
dilation,
\begin{equation}   \label{eq:52} 
  \| (V\circ \mathcal F_{b_j} -V_{b_j} ^{(2)})\, {\mathbf
  1_{\rho\geq h^{2/3-a}}}\tilde{\eta}_{j,h}(\tilde\A_{j,h}^\#-\lambda)^{-1}\breve{\eta}_{j,h}\|_2 \leq C\, h^{2(\varrho_\perp-1/3)}\,.
\end{equation}
For the second term on the right-hand-side,
\begin{displaymath} 
   \| (V\circ \mathcal F_{b_j} -V_{b_j} ^{(2)} ){\mathbf
  1_{\rho<h^{2/3-a}}}\breve{\eta}_{j,h}\|_\infty \leq C \, (h^{3\varrho_\perp}+h^{2/3+\varrho_\perp-a}+ h^{\frac 43 -2a})\,.
\end{displaymath}
Hence, using Lemma \ref{lemma7.1},
\begin{displaymath}
  \| (V\circ \mathcal F_{b_j} -V_{b_j} ^{(2)} ){\mathbf
  1_{\rho<h^{2/3-a}}}\tilde{\eta}_{j,h}(\tilde\A_{j,h}^\#-\lambda)^{-1}\breve{\eta}_{j,h}\|\leq 
C(h^{3\varrho_\perp-1-q}+h^{\varrho_\perp-1/3-a-q} + h^{\frac 13 -2a -q}) \,.  
\end{displaymath}\
Substituting the above together with \eqref{eq:52} into \eqref{eq:53}
then yields, for any $(a,\varrho_\perp,q)$ such that  \eqref{eq:87}, \eqref{condq},  and
 \begin{equation}\label{condarhoq}
 0<a<\varrho_\perp-1/3-q\,,
\end{equation}
are satisfied, 
the upper bound
\begin{displaymath}
\| (V\circ \mathcal F_{b_j} -V_{b_j} ^{(2)}) \, \tilde{\eta}_{j,h}(\tilde\A_{j,h}^\#-\lambda)^{-1}\breve{\eta}_{j,h}\|
\leq  C\,h^{\varrho_\perp-1/3-a-q}\,.
\end{displaymath}
The above together with \eqref{eq:51} and \eqref{eq:50} yield, when
substituted into \eqref{eq:49},
\begin{equation}   \label{eq:55} 
 \sup_{j\in \Jg_\partial^m}\Big\|\eta_{j,h} T_{\mathcal{F}_{b_j}}^{-1}
  (\widehat{\A}_{j,h}-\tilde{\A}_{j,h}) (\tilde{\A}_{j,h}-\lambda)^{-1}
  \breve{\eta}_{j,h}\Big\|
  \leq C\, h^{\varrho_\perp-1/3-a-q}\,.
\end{equation}

To estimate the entire contribution of $\Jg_\partial^m$ to the second sum in
\eqref{eq:44} we need yet the following bound, which is obtained with
the aid of (\ref{eq:23}a), (\ref{eq:23}d), \eqref{eq:88}, and a dilation:
\begin{multline*}
   \sup_{j\in \Jg_\partial^m} \Big\|[\A_h,\eta_{j,h}]R_{j,h}\eta_{j,h}  \Big\| \leq
 C \sup_{j\in \Jg_\partial^m} \,\big(h^{2(1-\varrho_\perp)} \|(\tilde\A_{j,h}^\#-\lambda)^{-1}\|
    \\ +h^{2-\varrho_\perp} \big[\|{\mathbf
     1_{\rho>h^{2/3-a}}}\nabla_{s,\rho}(\tilde\A_{j,h}^\#-\lambda)^{-1}\|+  \|{\mathbf
     1_{s>\frac{h^{\rho_\perp}}{2}}}\nabla_s(\tilde\A_{j,h}^\#-\lambda)^{-1}\|\big] \\+h^{8/3-a-2\varrho_\perp}\|{\mathbf
     1_{s>\frac{h^{\rho_\perp}}{2}}}\nabla_\rho(\tilde\A_{j,h}^\#-\lambda)^{-1}\| \big) \leq
   C\, h^{\frac{2}{3}(1-2\varrho_\perp)-q} \,,
\end{multline*}
where use has been made of the fact that by
\eqref{eq:138} and \eqref{eq:90a}, we have,  for $\rho<h^{2/3-a}$, the inequality
$$
 |\partial\eta_{j,h}/\partial\rho|\leq Ch^{2/3-a-\varrho_\perp}\,.
$$ 

Combining the above with \eqref{eq:55} then yields that for
$(a,q,\varrho_\perp)$ satisfying \eqref{condarhoq}, we have
\begin{equation}     \label{eq:56}
\sum_{j\in \Jg_\partial^m}\|\B_j\|^2 \| \eta_{j,h}
f\|^2_2\leq C\, \big(h^{\frac{2}{3}(1-2\varrho_\perp)-q}+h^{\varrho_\perp-1/3-a-q}\big)\|f\|_2^2 \,.
\end{equation}

In a similar manner, with the aid of \eqref{eq:32a}, we establish
\begin{equation}   \label{eq:57}
\sum_{j\in\Jg_\partial^{c,3}}\|\B_j\|^2 \| \eta_j
f\|^2_2\leq C\, \big(h^{\frac{2}{3}(1-2\varrho_\perp)}+h^{\varrho_\perp-1/3-a}\big)\|f\|_2^2  \\.
\end{equation}

{\bf The case $j\in \Jg_\partial^o$.}\\
Using the fact that
\begin{displaymath} 
  \| (V\circ \mathcal F_{b_j} -V_{b_j} ^{(2)}) \breve{\eta}_{j,h}\|_\infty\leq C\,h^{2\varrho}\,,
\end{displaymath}
we get, as in \eqref{eq:49},
\begin{multline}  \label{eq:45}
  \big\|\eta_{j,h} T_{\mathcal{F}_{b_j}}^{-1}
  (\widehat {\A}_{j,h}^\#-\tilde{\A}_{j,h}^\# ) (\tilde{\A}_{j,h}^\# -\lambda)^{-1}
  \breve{\eta}_{j,h}\big\|
  \leq C\, \Big(h^\varrho\|h^2\Delta_{(s,\rho)}(\tilde\A_{j,h}^\#-\lambda)^{-1} \breve{\eta}_{j,h} \| \\+
  \|h^2{\nabla_{(s,\rho)}}(\tilde\A_{j,h}^\#-\lambda(h))^{-1}\| + 
h^{2\varrho}\|\tilde{\eta}_{j,h}(\tilde\A_{j,h}^\#-\lambda)^{-1}\breve{\eta}_{j,h}\|\Big) \,.
\end{multline}
 We may now use \eqref{eq:36} to obtain that
\begin{equation}  \label{eq:47}
h^{2\varrho}\|\tilde{\eta}_{j,h}(\tilde\A_{j,h}^\#-\lambda)^{-1}\breve{\eta}_{j,h}\|\leq
C\, h^{2\varrho-\frac{2}{3}(\varrho_\perp+1)}\,,
\end{equation}
which is small as $h\to0$, in view of \eqref{eq:87}.\\
As in \cite[Eq. (4.24)]{Hen2} (or see the proof of \eqref{eq:50} with
\eqref{eq:3} replaced by \eqref{eq:36}) we then obtain, with the aid
of \eqref{eq:87}
\begin{equation}  \label{eq:54} 
h^\varrho\|h^2 {\Delta_{(s,\rho)}}(\tilde\A_{j,h}^\# -\lambda)^{-1}  \breve{\eta}_{j,h}
\| \leq C\, h^{\varrho-2\varrho_\perp/3} \leq C\, h^{(1-\varrho_\perp)/3} \,.
\end{equation}
Similarly, 
\begin{displaymath}
  \|h^2{\nabla_{(s,\rho)}}(\tilde\A_{j,h}^\# -\lambda)^{-1}\| \leq C\, h^{
    \frac{4}{3}(1-\varrho_\perp)}\,.
\end{displaymath}
Substituting the above, together with \eqref{eq:54} and \eqref{eq:47},
into \eqref{eq:45} yields
\begin{equation}  \label{eq:58}
 \sup_{j\in \Jg_\partial^o}\big\|\eta_{j,h} T_{\mathcal{F}_{b_j}}^{-1}
  (\widehat {\A}_{j,h}^\#-\tilde{\A}_{j,h}^\#) (\tilde{\A}_{j,h}^\# -\lambda)^{-1}
  \breve{\eta}_{j,h}\big\| \leq C\, h^{2\varrho-\frac{2}{3}(\varrho_\perp+1)}\,. 
\end{equation}

We now estimate the rest of the contribution of $\Jg_\partial^o$, as in
\eqref{eq:56}, by writing first
\begin{multline}
\label{eq:129}
  \big\|[\A_h,\eta_{j,h}]R_{j,h}\eta_{j,h}  \big\| \leq
   C\Big(h^{2(1-\varrho)}\|(\tilde\A_{j,h}^\# -\lambda)^{-1}\|+ h^{2-\varrho_\perp}\|{\mathbf
     1}_{[s^2+\rho^2]^{1/2}\leq h^\varrho/4}\nabla_{s,\rho}(\tilde\A_{j,h}^\# -\lambda)^{-1}\|
    \\  + h^{2-\varrho}\|{\mathbf
     1_{\rho>h^{2/3-a}}}\nabla_{s,\rho}(\tilde\A_{j,h}^\# -\lambda)^{-1}\|+
 h^{2-\varrho}\|{\mathbf
     1_{\rho<h^{2/3-a}}}\nabla_s(\tilde\A_{j,h}^\# -\lambda)^{-1}\| \\ + h^{8/3-a-2\varrho}\|{\mathbf
     1}_{[s^2+\rho^2]^{1/2}>h^\varrho/4}{\mathbf
     1_{\rho<h^{2/3-a}}}\partial_\rho(\tilde\A_{j,h}^\# -\lambda)^{-1}\| \Big)  \,,
\end{multline}
where use have been made  again of the fact that, by \eqref{eq:138} and \eqref{eq:139}, for
$\rho<h^{2/3-a}$ and $[s^2+\rho^2]^{1/2}>h^\varrho/4$ in coordinates $(s,\rho)$ centered at $b_j$, we must have
$$
|\partial\eta_{j,h}/\partial\rho|\leq Ch^{2/3-a-2\varrho}\,.
$$
Note that for sufficiently small $h$,  
\begin{displaymath}
  [s^2+\rho^2]^{1/2}\leq h^\varrho/4\Rightarrow x(s,\rho)\in B(b_j,h^\varrho/2)
\end{displaymath}
and hence, by (\ref{eq:137}a), 
$$
|{\mathbf  1}_{[s^2+\rho^2]^{1/2}\leq h^\varrho/4}\nabla\eta_{j,h}|\leq Ch^{-\varrho_\perp}\,.
$$

We begin the estimate of the right-hand-side of \eqref{eq:129}
  by observing that, in view of \eqref{eq:36},
  \begin{equation}
    \label{eq:132}
h^{2(1-\varrho)}\|(\tilde\A_{j,h}^\#
-\lambda)^{-1}\|\leq C\,h^{2\Big[\frac{2-\varrho_\perp}{3}-\varrho\Big]}\,.
  \end{equation}
The second term on the right-hand-side can be estimated using
\eqref{eq:36} and dilation
\begin{equation}
  \label{eq:133}
h^{2-\varrho_\perp}\|{\mathbf
  1}_{[s^2+\rho^2]^{1/2}\leq h^{\varrho}/4}\nabla_{s,\rho}(\tilde\A_{j,h}^\#
-\lambda)^{-1}\|\leq C\,h^{2/3-5\varrho_\perp/3}\,.
\end{equation}

Let $\delta>0$ be as in Lemma \ref{lem:oblique}. Then, there exists
$C_0>0$ such that if \break $d(b_j,\partial\Omega_\perp^\#)\leq
C_0\delta$, we may use \eqref{eq:42} to obtain
that
\begin{equation}
  \label{eq:134}
h^{2-\varrho}\|{\mathbf
     1_{\rho>h^{2/3-a}}}\nabla_{s,\rho}(\tilde\A_{j,h}^\# -\lambda)^{-1}\|\leq C\, h^{4/3-\varrho}\,.
\end{equation}
Furthermore, by \eqref{eq:130} we have, as $|J^\prime|\geq C\, h^{\varrho_\perp}$,
\begin{equation}
  \label{eq:135}
h^{2-\varrho}\|{\mathbf
     1_{\rho<h^{2/3-a}}}\nabla_s(\tilde\A_{j,h}^\# -\lambda)^{-1}\| \leq C\, h^{\frac{2-\varrho_\perp}{3}-\varrho}\,.
\end{equation}
Finally, for the last term on the right-hand-side of \eqref{eq:129} we
have
\begin{displaymath}
  h^{8/3-a-2\varrho}\|{\mathbf
     1_{\rho<h^{2/3-a}}}\partial_\rho(\tilde\A_{j,h}^\# -\lambda)^{-1}\| \leq C\, h^{2\Big[\frac{2-\varrho_\perp}{3}-\varrho\Big]-a}\,.
\end{displaymath}
Substituting the above together with \eqref{eq:135}, \eqref{eq:134},
\eqref{eq:133}, and \eqref{eq:132} into \eqref{eq:129} yields,
choosing $a<(2-\varrho_\perp)/3-\varrho$
\begin{displaymath}
 \sup_{  \begin{subarray}{c}
    j\in \Jg_\partial^o \\
   d(b_j,\partial\Omega_\perp^\#)\leq C_0\, \delta 
  \end{subarray}}\Big\|[\A_h,\eta_{j,h}]R_{j,h}\eta_{j,h}  \Big\|   \leq C\,
  h^{\frac{2-\varrho_\perp}{3}-\varrho} \,.
\end{displaymath}
Otherwise, if $d(b_j,\partial\Omega_\perp^\#)>C_0\,\delta$, then, since
in that case $|\nabla V - (\vec \nu \cdot\nabla V)\vec \nu |(b_j)\geq
C\, \delta$, we may use \eqref{eq:28} to obtain that
\begin{displaymath}
 \sup_{  \begin{subarray}{c}
    j\in \Jg_\partial^o \\ 
   d(b_j,\partial\Omega_\perp^\#)>C_0\, \delta 
  \end{subarray}}
\big\|[\A_h,\eta_{j,h}]R_{j,h}\eta_{j,h}  \big\|   \leq
   C\, h^{2/3-\varrho} \,,
\end{displaymath}
and hence,
\begin{displaymath}
  \sum_{j\in\Jg_\partial^o}\|\B_j\|^2 \| \eta_{j,h}
f\|^2_2\leq C\,  h^{\frac{2-\varrho_\perp}{3}-\varrho}\|f\|_2^2 \,.
\end{displaymath}
Substituting the above together with 
\eqref{eq:57}, \eqref{eq:56}, \eqref{eq:35}, and \eqref{eq:41}, into
\eqref{eq:61} yields
\begin{displaymath}
  \|\Eg(h,\lambda)\| \leq C\, ( h^{\frac{2-\varrho_\perp}{3}-\varrho}+ h^{
    \frac{2}{3}(1-2\varrho_\perp)-q}+ h^{ 
    \varrho_\perp -1/3-a-q}+h^{2\varrho-\frac{2}{3}(\varrho_\perp+1)})\,,
\end{displaymath}
for $(a,\varrho,q)$ satisfying \eqref{condarhoq}. Assumptions \eqref{eq:87} and an appropriate
choice of $a$ then complete the proof of the lemma.
\end{proof}

From  \eqref{eq:46} it follows that $(I+\Eg(h,\lambda))^{-1}$ is
uniformly bounded as $h\to0$. Consequently, as in Section \ref{s6},
using this time the estimates \eqref{eq:27}, \eqref{eq:36},
\eqref{eq:32}, \eqref{eq:33} and \eqref{eq:32a}, we obtain the
existence of $h_0$ and $C>0$, such that for $h\in (0,h_0]$ and
$\lambda\in\partial B(h^{2/3}\lambda_1^\# +h\mu_1,h^{1+q})$
\begin{displaymath}
  \|(\A_h^\# -\lambda)^{-1}\|\leq C\, h^{-(1+q)}\,.
\end{displaymath}
We have thus established the
following result:
\begin{proposition}
Under Assumptions \ref{notdeg} and \ref{nondeg2}, 
there exist for $q\in (0,\frac{1}{15})$ positive constants $C$ and $h_0$ such that, for all $h\in
(0,h_0]$,
\begin{equation}
    \label{eq:60}
  \sup_{\lambda\in\partial B(iV(x_0)+h^{2/3}\lambda^\#_1+h\mu_1,h^{1+q})}\|(\A_h^\# -\lambda)^{-1}\|\leq C\, h^{-(1+q)}\,.
\end{equation}
\end{proposition}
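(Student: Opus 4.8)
The plan is to run the parametrix argument of Section~\ref{s6} once more, this time localized near the points of $\Sg^\#$, with $\lambda$ now running over the small circle $\partial B(\Lambda^{\#,1},h^{1+q})$ about the approximate eigenvalue $\Lambda^{\#,1}=iV(x_0)+h^{2/3}\lambda_1^\#+h\mu_1$ furnished by \eqref{eq:125} (or by \eqref{eq:19} in the transmission case). One takes the two-scale partition of unity constructed above, forms the approximate resolvent $\mathcal R(h,\lambda)$ exactly as in \eqref{defResApp1} but with $\lambda(h,\nu)$ replaced by $\lambda\in\partial B(\Lambda^{\#,1},h^{1+q})$, observes that $\mathcal R(h,\lambda)$ still maps $L^2(\Omega^\#)$ into $D(\A_h^\#)$ because of \eqref{eq:139}, and writes
\[
(\A_h^\#-\lambda)\,\mathcal R(h,\lambda)=I+\Eg(h,\lambda),
\]
with $\Eg(h,\lambda)$ as in \eqref{eq:44}.

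First I would invoke the preceding lemma, i.e. \eqref{eq:46}: for $q,\varrho_\perp,\varrho$ satisfying \eqref{eq:87} and \eqref{condq} and with $r=h^q$, one has $\sup_{\lambda\in\partial B(\Lambda^{\#,1},rh)}\|\Eg(h,\lambda)\|\to0$ as $h\to0$. Hence $I+\Eg(h,\lambda)$ is invertible for $h$ small enough, with $\|(I+\Eg(h,\lambda))^{-1}\|\le 2$ uniformly in $\lambda$ on the circle, so that
\[
(\A_h^\#-\lambda)^{-1}=\mathcal R(h,\lambda)\,(I+\Eg(h,\lambda))^{-1}.
\]
It therefore remains only to bound $\|\mathcal R(h,\lambda)\|$ on the circle.

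For this I would repeat the almost-orthogonality estimate used for $\Eg$ in \eqref{eq:61} (itself the analogue of \eqref{eq:59}), applied now to $\mathcal R$: since every point of $\overline{\Omega^\#}$ lies in at most $N_0$ of the covering balls, for $f\in L^2(\Omega^\#)$ one gets
\[
\|\mathcal R(h,\lambda)f\|_2^2\le N_0\Big(\sum_{j\in\Jg_i}\|(\A_{j,h}-\lambda)^{-1}\|^2\,\|\chi_{j,h}f\|_2^2+\sum_{j\in\Jg_\partial}\|R_{j,h}\|^2\,\|\eta_{j,h}f\|_2^2\Big)
\]
together with $\sum_j\|\chi_{j,h}f\|_2^2+\sum_j\|\eta_{j,h}f\|_2^2\le N_0\|f\|_2^2$. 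Inserting the model resolvent bounds already established --- \eqref{resCAjh} for $j\in\Jg_i$ (of order $h^{-2/3}$), \eqref{eq:27} for $j\in\Jg_\partial^m$ (of order $h^{-(1+q)}$), \eqref{eq:36} for $j\in\Jg_\partial^o$ (of order $h^{-\frac23(1+\varrho_\perp)}$), \eqref{eq:32} and \eqref{eq:33} for $j\in\Jg_\partial^{c,1}\cup\Jg_\partial^{c,2}$ (of order $h^{-2/3}$), and \eqref{eq:32a} for $j\in\Jg_\partial^{c,3}$ (of order $h^{-1}$) --- and using that under \eqref{eq:87} one has $\frac23(1+\varrho_\perp)<1+q$ (because $\varrho_\perp<\frac25<\frac12+\frac32 q$) while trivially $1<1+q$ and $\frac23<1+q$, we conclude that every contribution is at most $Ch^{-(1+q)}$, the dominant one coming from $\Jg_\partial^m$. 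This yields $\sup_{\lambda\in\partial B(\Lambda^{\#,1},h^{1+q})}\|\mathcal R(h,\lambda)\|\le Ch^{-(1+q)}$, and combining with the uniform bound on $(I+\Eg(h,\lambda))^{-1}$ gives \eqref{eq:60}.

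The one genuinely delicate point is the remainder estimate \eqref{eq:46} itself; everything above it is bookkeeping. That estimate --- already proved in the lemma preceding this proposition --- is precisely where the two distinct mesh scales $h^{\varrho_\perp}$ and $h^\varrho$ are both needed, in combination with the refined semigroup/resolvent bounds of Propositions~\ref{lemmaunifter} and \ref{lemmaunifbisTimp}, the spectral-projection splitting through $P_1^\#$ of Lemma~\ref{lemma7.1}, and the localized estimates \eqref{eq:23}, \eqref{eq:88} and \eqref{eq:42}; the admissible window \eqref{eq:87}, \eqref{condq}, \eqref{condarhoq} for the parameters $(\varrho_\perp,\varrho,q,a)$ is exactly what forces all the error exponents to be strictly positive.
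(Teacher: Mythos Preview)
Your proof is correct and follows essentially the same approach as the paper: invoke \eqref{eq:46} to invert $I+\Eg(h,\lambda)$ uniformly on the circle, then bound $\mathcal R(h,\lambda)$ by almost-orthogonality and the model resolvent estimates \eqref{resCAjh}, \eqref{eq:27}, \eqref{eq:36}, \eqref{eq:32}, \eqref{eq:33}, \eqref{eq:32a}, observing that the $\Jg_\partial^m$ contribution $h^{-(1+q)}$ dominates. The paper's proof is just a two-line sketch pointing to these same estimates, so you have merely written out what it leaves implicit.
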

 
We can now prove the upper bound for the spectrum. 
\begin{proposition}
There exist $h_0 >0$ and, for $h\in (0,h_0]$, an eigenvalue $\lambda
\in \sigma(\A_h^\#)$ satisfying
\begin{equation}  \label{eq:5bis} 
\lambda-  iV(x_0)- \lambda_1^\# ( |\nabla V (x_0)|,\kappa)h^\frac 23 -\mu_1(x_0)
h  =  o(h) \quad \text{as }h\to0\,.
\end{equation}
\end{proposition}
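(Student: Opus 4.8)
The plan is to extract the eigenvalue from the resolvent bound \eqref{eq:60} together with the quasimode estimates of Section~7.1, by the standard contradiction argument based on the maximum principle for operator-valued holomorphic functions. Write $\lambda^{\#,1}(h):= iV(x_0)+h^{2/3}\lambda_1^\#(|\nabla V(x_0)|,\kappa)+h\,\mu_1(x_0)$ for the centre of the circle in \eqref{eq:60}; by the dilation identities established earlier this coincides with the quasimode value $\Lambda^{\#,1}_{\K(h)}(h)$ of \eqref{eq:125} (Robin) or \eqref{eq:19} (transmission), the Dirichlet and Neumann cases being obtained as the limits $\kappa\to+\infty$ and $\kappa=0$. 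Since $\A_h^\#$ has compact resolvent on the bounded set $\Omega^\#$, its spectrum consists of isolated eigenvalues, so it suffices to produce a point of $\sigma(\A_h^\#)$ in the closed disc $\overline{B(\lambda^{\#,1}(h),h^{1+q})}$.

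Suppose, for contradiction, that for some sequence $h_k\to0$ this closed disc contains no point of $\sigma(\A_{h_k}^\#)$. Then $\lambda\mapsto(\A_{h_k}^\#-\lambda)^{-1}$ is holomorphic on the disc, hence for any $u,v$ the scalar function $\lambda\mapsto\langle(\A_{h_k}^\#-\lambda)^{-1}u,v\rangle$ is holomorphic there and is bounded in modulus by its maximum over the boundary circle; taking suprema over $\|u\|,\|v\|\le1$ and invoking \eqref{eq:60} gives
\[
\bigl\|(\A_{h_k}^\#-\lambda^{\#,1}(h_k))^{-1}\bigr\|\le\sup_{\lambda\in\partial B(\lambda^{\#,1}(h_k),h_k^{1+q})}\bigl\|(\A_{h_k}^\#-\lambda)^{-1}\bigr\|\le C\,h_k^{-(1+q)}.
\]
On the other hand, by \eqref{eq:85}, \eqref{eq:20} and the analogous bounds for the Dirichlet and Neumann quasimodes (cf.\ \cite{Hen2,AH}), the normalized quasimode satisfies $\|(\A_{h}^\#-\lambda^{\#,1}(h))\,U^{\#,1}_{\K(h)}(\cdot,h)\|_2\le C h^{7/6}$. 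Combining the two estimates,
\[
1=\|U^{\#,1}_{\K(h_k)}(\cdot,h_k)\|_2\le\bigl\|(\A_{h_k}^\#-\lambda^{\#,1}(h_k))^{-1}\bigr\|\,\bigl\|(\A_{h_k}^\#-\lambda^{\#,1}(h_k))\,U^{\#,1}_{\K(h_k)}(\cdot,h_k)\bigr\|_2\le C\,h_k^{7/6-1-q}=C\,h_k^{1/6-q}.
\]
Since $0<q<\tfrac1{15}<\tfrac16$ by \eqref{condq}, the right-hand side tends to $0$, a contradiction.

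Therefore, for all sufficiently small $h$, there is an eigenvalue $\lambda(h)\in\sigma(\A_h^\#)$ with $|\lambda(h)-\lambda^{\#,1}(h)|\le h^{1+q}=o(h)$; recalling that $\lambda^{\#,1}(h)=iV(x_0)+\lambda_1^\#(|\nabla V(x_0)|,\kappa)h^{2/3}+\mu_1(x_0)h$ yields \eqref{eq:5bis}. I do not foresee a real obstacle here: the only point requiring care is the maximum-principle step, which is routine once one scalarizes the operator-valued resolvent, together with the identification of the circle centre with the quasimode value, which is immediate from the explicit formulas \eqref{eq:125}, \eqref{eq:19} and the dilation identities. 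The substantive work has already been carried out in \eqref{eq:60} and in the quasimode construction imported from \cite{GH}.
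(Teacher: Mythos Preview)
Your proof is correct and follows essentially the same route as the paper: both combine the resolvent bound \eqref{eq:60} on the circle with the quasimode estimate $\|(\A_h^\#-\Lambda^{\#,1})U^{\#,1}\|\le Ch^{7/6}$ and the inequality $q<1/6$ to force an eigenvalue inside $B(\Lambda^{\#,1},h^{1+q})$. The only cosmetic difference is that the paper packages this via a contour integral of $\langle U^{\#,1},(\A_h^\#-\lambda)^{-1}U^{\#,1}\rangle$ (showing directly that the resolvent is not holomorphic in the disc), whereas you argue by contradiction using the maximum modulus principle for the scalarized resolvent to bound $\|(\A_h^\#-\lambda^{\#,1})^{-1}\|$ at the centre; these are equivalent formulations of the same idea.
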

\begin{proof}
Let $U^{\#,1}$ be given by either (\ref{eq:125}b) or (\ref{eq:19}b),
and let $f=(\A_h^\#-\Lambda^{\#,1})U^{\#,1}$. Clearly,
\begin{displaymath}
  (\A_h^\# -\lambda)U^{\#,1} = f + (\Lambda^{\#,1}-\lambda)U^{\#,1}\,.
\end{displaymath}
Hence
\begin{displaymath}
  \langle U^{\#,1}, (\A_h^\# -\lambda)^{-1}U^{\#,1}\rangle =-\frac{1}{\lambda-\Lambda^{\#,1}}[1-\langle U^{\#,1}, (\A_h^\# -\lambda)^{-1}f\rangle] \,.
\end{displaymath}
By \eqref{eq:60} and either \eqref{eq:125} or \eqref{eq:19}, we then
obtain that
\begin{displaymath}
  \|(\A_h^\# -\lambda)^{-1}f\|_2\leq \frac{C}{h^{1+q}}\|f\|_2\leq C\, h^{1/6-q}\,.
\end{displaymath}
Consequently,
\begin{displaymath}
  \Big|\frac{1}{2\pi
    i}\oint_{\partial B(iV(x_0)+\lambda_1^\# ( |\nabla V (x_0)|,\kappa)h^\frac 23+h\mu_1,h^{1+q})}\langle U^{\#,1},
  (\A_h^\# -\lambda)^{-1}f\rangle \,  d\lambda+1\Big|\leq C\, h^{1/6-q} \,.
\end{displaymath}
Hence there exists $h_0 >0$ such that, for $h\in (0,h_0]$, $(\A_h^\#
-\lambda)^{-1}$ is not holomorphic in $B(iV(x_0)+ \lambda_1^\# (
|\nabla V (x_0)|,\kappa) h^\frac 23 +h\mu_1,h^{1+q})$ and the
proposition is proved.
\end{proof}

\appendix
\section{An integral estimate}
\label{appB}

In the following we prove the following estimate, which is useful 
in Section \ref{sModels}.
\begin{lemma}
Let $\alpha$ and $\beta$ be positive. Then,
\begin{equation}    \label{eq:64a}
  \int_0^\infty \exp (-\alpha t^3 + \beta t) \,dt \leq
\frac{\sqrt{\pi}}{(3\beta\alpha)^{1/4}}\exp \biggl(\frac{\beta^{3/2}}{(3\alpha)^{1/2}} \biggr) \,.
\end{equation}
\end{lemma}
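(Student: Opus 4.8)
The plan is to bound the cubic exponent $-\alpha t^{3}+\beta t$ from above, for $t\geq 0$, by a downward parabola whose associated Gaussian integral reproduces exactly the right-hand side of \eqref{eq:64a}. Put $t_{*}=\sqrt{\beta/(3\alpha)}$, which is the maximiser of $t\mapsto-\alpha t^{3}+\beta t$ on $[0,\infty)$, and $C=\beta^{3/2}/\sqrt{3\alpha}$. The key pointwise estimate I would prove is
\begin{equation*}
-\alpha t^{3}+\beta t\;\leq\;C-\sqrt{3\beta\alpha}\,\bigl(t-t_{*}\bigr)^{2}\qquad\text{for all }t\geq 0 .
\end{equation*}
Granting this, one exponentiates, enlarges the domain of integration (the integrand is positive, so this only increases the integral), and uses $\int_{\mathbb{R}}e^{-as^{2}}\,ds=\sqrt{\pi/a}$ with $a=\sqrt{3\beta\alpha}$:
\begin{equation*}
\int_{0}^{\infty}e^{-\alpha t^{3}+\beta t}\,dt\;\leq\;e^{C}\int_{-\infty}^{+\infty}e^{-\sqrt{3\beta\alpha}\,(t-t_{*})^{2}}\,dt\;=\;\frac{\sqrt{\pi}}{(3\beta\alpha)^{1/4}}\,e^{\beta^{3/2}/\sqrt{3\alpha}} ,
\end{equation*}
which is precisely \eqref{eq:64a}.

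To establish the pointwise estimate I would substitute $t=t_{*}s$. A short computation gives $\alpha t^{3}=\tfrac13 C s^{3}$, $\beta t=C s$, and $\sqrt{3\beta\alpha}\,(t-t_{*})^{2}=C(s-1)^{2}$, using $\alpha t_{*}^{3}=C/3$, $\beta t_{*}=C$ and $\sqrt{3\beta\alpha}\,t_{*}^{2}=C$. Dividing the claimed inequality by $C>0$, it becomes $-\tfrac13 s^{3}+s\leq 1-(s-1)^{2}$, that is,
\begin{equation*}
\tfrac13 s^{3}-s^{2}+s\;\geq\;0\qquad\text{for }s\geq 0 .
\end{equation*}
Factoring, $\tfrac13 s^{3}-s^{2}+s=\tfrac{s}{3}\bigl(s^{2}-3s+3\bigr)$, and $s^{2}-3s+3=(s-\tfrac32)^{2}+\tfrac34>0$, so the left-hand side is nonnegative on $[0,\infty)$. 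This proves the majorant, and with it the lemma.

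I do not foresee a genuine obstacle here: once the parabola is pinned down — vertex at the maximiser $t_{*}$, curvature $\sqrt{3\beta\alpha}$ and height $C$ both dictated by matching the target in \eqref{eq:64a} — everything reduces to the elementary sign of $\tfrac{s}{3}(s^{2}-3s+3)$ on the half-line. The only mildly subtle point worth flagging is that this parabola is a \emph{strict} majorant (it exceeds $-\alpha t^{3}+\beta t$ even at $t_{*}$, by $C/3$), so no tangency condition is being imposed; the specific constants are simply the ones for which the residual cubic factors through $s$ with a positive quadratic cofactor, which is exactly what makes the final inequality transparent.
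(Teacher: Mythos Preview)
Your proof is correct and follows essentially the same approach as the paper: both bound the exponent $-\alpha t^{3}+\beta t$ on $[0,\infty)$ by the very same downward parabola $C-\sqrt{3\beta\alpha}\,(t-t_{*})^{2}$, extend the integral to all of $\mathbb{R}$, and evaluate the resulting Gaussian. The only cosmetic difference is in verifying the pointwise inequality---the paper expands $-\tau^{3}+\tau$ exactly around the maximiser $\tau=1/\sqrt{3}$ and bounds the cubic remainder, whereas you substitute $t=t_{*}s$ and factor $\tfrac{s}{3}(s^{2}-3s+3)$; these are equivalent algebraic routes to the same bound.
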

\begin{proof}
Let
\begin{displaymath}
    I=\int_0^\infty \exp(-\alpha t^3 + \beta t) \,dt \,.
\end{displaymath}
Let $t=(\beta/\alpha)^{1/2}\, \tau$. Then we have
\begin{displaymath}
I = \sqrt{\frac{\beta}{\alpha}}\int_0^\infty \exp \bigl(\gamma(-\tau^3 + \tau)\bigr)
  \,d\tau \,,
\end{displaymath}
where $\gamma= (\beta^3/\alpha)^{1/2}$.  We now observe that for any
$\tau\geq0$
\begin{displaymath}
  -\tau^3 + \tau =
  \frac{2}{3\sqrt{3}}-\sqrt{3}\Big(\tau-\frac{1}{\sqrt{3}}\Big)^2
    -\Big(\tau-\frac{1}{\sqrt{3}}\Big)^3 \leq \frac{1}{\sqrt{3}}-\sqrt{3}\Big(\tau-\frac{1}{\sqrt{3}}\Big)^2\,.
\end{displaymath}
Consequently,
\begin{displaymath}
  I\leq \sqrt{\frac{\beta}{\alpha}}\exp \biggl( \frac{\gamma}{\sqrt{3}}
  \biggr) \int_0^\infty \exp \biggl( -\gamma\sqrt{3}\Big(\tau-\frac{1}{\sqrt{3}}\Big)^2\biggr)
  \,d\tau \leq 3^{-1/4}\sqrt{\frac{\beta\pi}{\alpha\gamma}}\exp \biggl( \frac{\gamma}{\sqrt{3}}
  \biggr) \,,
\end{displaymath}
from which we easily conclude (\ref{eq:64a}). 
\end{proof}

\section{The dependence on current of 1D eigenvalues}
\label{appC}
\subsection{Robin boundary condition}
\label{appcompRobin}

As in Section \ref{s2} we consider here for $\jf \neq 0$ and $\kappa \geq 0$
the operator 
\begin{equation}
\label{eq:100}
  \LL^R(\mathfrak{j},\kappa)= -\frac{d^2}{dx^2} + i\, \mathfrak{j} \, x 
\end{equation}
defined on 
\begin{displaymath}
  D( \LL^R(\mathfrak j,\kappa))= \{ u\in H^2(\R_+,\C) \cap L^2(\R_+,\C\,;\,x^2dx)\,| \,
  u^\prime(0)= \kappa \, u(0)\,\} \,.
\end{displaymath}
The eigenfunctions of this operator are given for $n\geq 1$ by 
\begin{equation}
  \label{eq:83}
u_n^R(x,\jf,\kappa) =
\Ai\bigl(e^{i2\pi/3}(- i \jf^{1/3}x + \lambda_n^R (\jf,\kappa) \,  \jf^{-2/3})\bigr)\,.
\end{equation}
Hence, the Robin boundary condition reads
\begin{equation}
\label{eq:lambda_eqn}
0 = \left(\frac{du_n}{dx} - \kappa \, u_n \right)_{x=0} = - i \jf^{1/3} e^{i2\pi/3} \Ai^\prime\big(e^{i2\pi/3} \lambda_n^R(\jf,\kappa) \jf^{-2/3}\big) 
- \kappa \Ai\big(e^{i2\pi/3} \lambda_n^R (\jf,\kappa) \jf^{-2/3}\big)\,.
\end{equation}
Setting $\lambda^R_n (y) = \lambda^R_n(1,\kappa)\,$, with $y = \kappa \jf^{-\frac 13}$
(note that $y=\kappa^*$ in Section 2),
one obtains the  following equation
\begin{equation}
\label{eq:mu_eqn}
i e^{i2\pi/3} \Ai^\prime(e^{i2\pi/3} \mu) + y  \Ai(e^{i2\pi/3} \mu) = 0\,.
\end{equation}
One can numerically get  $\lambda_n^R(y)$ and recover  the eigenvalue of
$\LL^R(\jf,\kappa)$ via the relation
\begin{equation}
\label{eq:lambdan}
\lambda_n^R(\jf,\kappa) =  \jf^{2/3} \, \lambda_n^R(\kappa \, \jf^{-1/3}) \,.
\end{equation}
Taking the derivative of \eqref{eq:lambdan} with respect to $\jf$ yields
\begin{equation}
\label{eq:lambda_diff}
\frac{\partial \lambda_n^R}{\partial \jf} (\jf,\kappa) = \frac{\lambda_n^R(\kappa \, \jf^{-1/3})}{3 \jf^{1/3}} \biggl(2 - \kappa \, \jf^{-1/3} \, 
\frac{(\mu^R_n)^\prime(\kappa \, \jf^{-1/3})}{\lambda_n^R(\kappa \, \jf^{-1/3})} \biggr)\,.
\end{equation}
We focus our interest on the dependence of $\Re \lambda_n^R $ on
$\jf$.  From \eqref{eq:lambda_diff} we easily obtain
\begin{equation}
\label{eq:lambda_diffre}
\frac{\partial \Re \lambda_n^R}{\partial \jf} (\jf,\kappa) = \frac{\Re \lambda_n^R (\kappa \, \jf^{-1/3})}{3 \jf^{1/3}} \biggl(2 - \kappa \, \jf^{-1/3} \, 
\frac{\Re (\mu^R_n)^\prime (\kappa \, \jf^{-1/3})}{\Re \lambda_n^R(\kappa \, \jf^{-1/3})} \biggr)\,.
\end{equation}
We now state
\begin{proposition} 
\label{prop:robin-monotone}
For fixed $\kappa \geq 0$ and any $n\geq 1$, $\jf \mapsto
\Re\{\lambda_n^R(\jf,\kappa)\}$ is monotonically increasing on
$[0,+\infty)$.
\end{proposition}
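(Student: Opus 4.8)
The plan is to reduce the statement, via the dilation identity \eqref{eq:lambdan} and the differentiation already recorded in \eqref{eq:lambda_diffre}, to a one‑variable inequality. Write $g(y):=\lambda^R_n(1,y)$ for the $n$‑th eigenvalue of $\LL^R(1,y)=-d^2/dx^2+ix$ on $\R_+$ with boundary condition $u'(0)=y\,u(0)$, so that $\lambda^R_n(\jf,\kappa)=\jf^{2/3}g(\kappa\jf^{-1/3})$. Differentiating and invoking \eqref{eq:lambda_diffre} gives, for $\jf>0$ and $y=\kappa\jf^{-1/3}$,
\[
3\,\jf^{1/3}\,\frac{\partial}{\partial\jf}\,\Re\lambda^R_n(\jf,\kappa)=2\,\Re g(y)-y\,\Re g'(y)=:\psi(y)\,.
\]
As $y$ runs over all of $[0,+\infty)$ when $\jf$ does (for fixed $\kappa>0$), it suffices to prove $\psi(y)\ge0$ for every $y\ge0$; the case $\kappa=0$ is immediate since then $\lambda^R_n(\jf,0)=\jf^{2/3}g(0)$.

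The key computational input is a first‑order ODE for $g$. Differentiating the characteristic equation \eqref{eq:mu_eqn} implicitly in $y$ and eliminating $\Ai''$ by the Airy equation $\Ai''(z)=z\,\Ai(z)$ gives, at any (necessarily simple) eigenvalue,
\[
g'(y)=\frac{i}{g(y)+y^2}\,.
\]
Here the denominator $g+y^2$ is exactly the combination $\lambda+\kappa^2$ whose vanishing was ruled out after \eqref{eigenvalueR}, so $g$ is of class $C^1$ (in fact analytic). Writing $g=\alpha+i\beta$ this becomes the planar system $\alpha'=\beta/D$, $\beta'=(\alpha+y^2)/D$ with $D:=(\alpha+y^2)^2+\beta^2$, and $\psi=2\alpha-y\beta/D=(2\alpha D-y\beta)/D$. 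Because the numerical range of $\LL^R(1,y)$ is contained in the first quadrant and $u'\equiv0$ forces $u\equiv0$, one has $\alpha(y)>0$ for all $y\ge0$; hence $\beta'>0$, so $\beta$ is increasing and $\beta(y)\ge\beta(0)=\tfrac{\sqrt3}{2}|a'_n|$, and then $\alpha'>0$, so $\alpha(y)\ge\alpha(0)=\tfrac12|a'_n|$, where we used the Neumann value $g(0)=e^{i\pi/3}|a'_n|$.

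It then remains to show $R(y):=2\alpha(y)D(y)-y\beta(y)\ge0$ on $[0,+\infty)$. One has $R(0)=2\alpha(0)\,|g(0)|^2>0$, and, using $D'=4(\alpha+y^2)(\alpha'+y)$, a direct computation gives
\[
R'(y)=\beta+8\alpha y(\alpha+y^2)+\frac{(\alpha+y^2)(8\alpha\beta-y)}{D}\,.
\]
Since $8\alpha\beta\ge0$, the last term is $\ge-(\alpha+y^2)y/D\ge-y/(\alpha+y^2)\ge-1/(2\sqrt\alpha)$, using $D\ge(\alpha+y^2)^2$ and the arithmetic--geometric inequality $\alpha+y^2\ge2\sqrt\alpha\,y$. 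Therefore
\[
R'(y)\ \ge\ \beta(y)-\frac{1}{2\sqrt{\alpha(y)}}\ \ge\ \frac{\sqrt3}{2}|a'_n|-\frac{1}{\sqrt{2|a'_n|}}\ \ge\ 0\,,
\]
the last inequality because $t\mapsto\tfrac{\sqrt3}{2}t-(2t)^{-1/2}$ is increasing on $(0,\infty)$ and $|a'_n|\ge|a'_1|\approx1.019>(2/3)^{1/3}$. Hence $R\ge R(0)>0$, so $\psi>0$, so $\partial_\jf\Re\lambda^R_n(\jf,\kappa)>0$ on $(0,\infty)$; together with continuity at $\jf=0$ this yields the asserted monotonicity on $[0,+\infty)$.

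I expect the main obstacle to be this last step — controlling the sign of $R'$, where the cross term $(\alpha+y^2)(8\alpha\beta-y)/D$ may be negative. The point is that it is tamed using only \emph{lower} bounds on $\alpha$ and $\beta$, which come for free from the monotonicity built into the ODE together with the explicit Neumann eigenvalue $g(0)=e^{i\pi/3}|a'_n|$, so that everything reduces to the elementary numerical fact $\tfrac{\sqrt3}{2}|a'_1|>(2|a'_1|)^{-1/2}$. A minor additional point: should the labelling of eigenvalues by real part switch branches at some $y$, the conclusion still holds, since the $C^1$‑formula for $g'$ holds along each analytic branch and the $n$‑th order statistic of a family of strictly increasing functions is again strictly increasing.
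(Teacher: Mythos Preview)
Your proof is correct and follows essentially the same route as the paper: both derive the first-order ODE $g'(y)(g(y)+y^2)=i$ from the characteristic equation, split into the real system $\alpha'=\beta/D$, $\beta'=(\alpha+y^2)/D$, use the numerical-range positivity $\alpha,\beta>0$ together with the Neumann initial value $g(0)=e^{i\pi/3}|a'_n|$ to get the lower bounds $\alpha\ge|a'_n|/2$, $\beta\ge\tfrac{\sqrt3}{2}|a'_n|$, and then finish with an elementary inequality hinging on the numerical value of $|a'_1|$.

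The only difference is in that last elementary step. The paper bounds the ratio directly in one line:
\[
\frac{y\alpha'}{\alpha}=\frac{y\beta}{\alpha D}\le\frac{y\beta}{\alpha(2\alpha y^2+\beta^2)}\le\frac{1}{2\sqrt{2}\,\alpha^{3/2}}<2,
\]
the middle inequality being the maximum of $t\mapsto y\beta/(2\alpha y^2+\beta^2)$ over $y,\beta>0$. Your route via $R=2\alpha D-y\beta$ and $R'\ge0$ reaches the same conclusion but with a bit more calculus. Your closing remark on branch labelling is also handled in the paper, which shows in addition that the analytic branches never cross in real part, so the two orderings coincide.
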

Before presenting the proof, we provide, in Fig.~\ref{fig:lambda1},
the result of numerical simulations manifesting the monotonic
dependence of $\Re\{\lambda_1^R(\jf,\kappa)\}$ on $\jf$.   In
this figure, we plot a numerical solution of Eq. (\ref{eq:mu_eqn}) for
three values of $\kappa$.  These curves (shown by lines) are compared
to the graph of $\inf \Re\sigma\big(\mathcal L^{R,D}
(\jf,\kappa)\big)$ (shown by symbols), where $\LL^{R,D}$, which is
given again by \eqref{eq:100}, is defined on
\begin{displaymath}
  D( \LL^{R,D}(\mathfrak j,\kappa))= \{ u\in H^2([0,10],\C) \cap L^2([0,10],\C)\,| \,
  u^\prime(0)= \kappa \, u(0) \;,\; u(10)=0\,\} \,.
\end{displaymath}  
The solution in the latter case is obtained via a Galerkin expansion
in the Laplacian basis of $D( \LL^{R,D})$ (see
\cite{Grebenkov07,Grebenkov10,Gr1} for details). One can observe a
perfect agreement between these numerical solutions for $\jf$ values
that are not too small. This perfect agreement is a consequence of the
localization of the associated eigenfunction near $0$ at large $\jf$
so that the boundary condition at the other endpoint $x = 10$ has no
effect on the eigenvalue.


\begin{figure}
\begin{center}
\includegraphics[width=140mm]{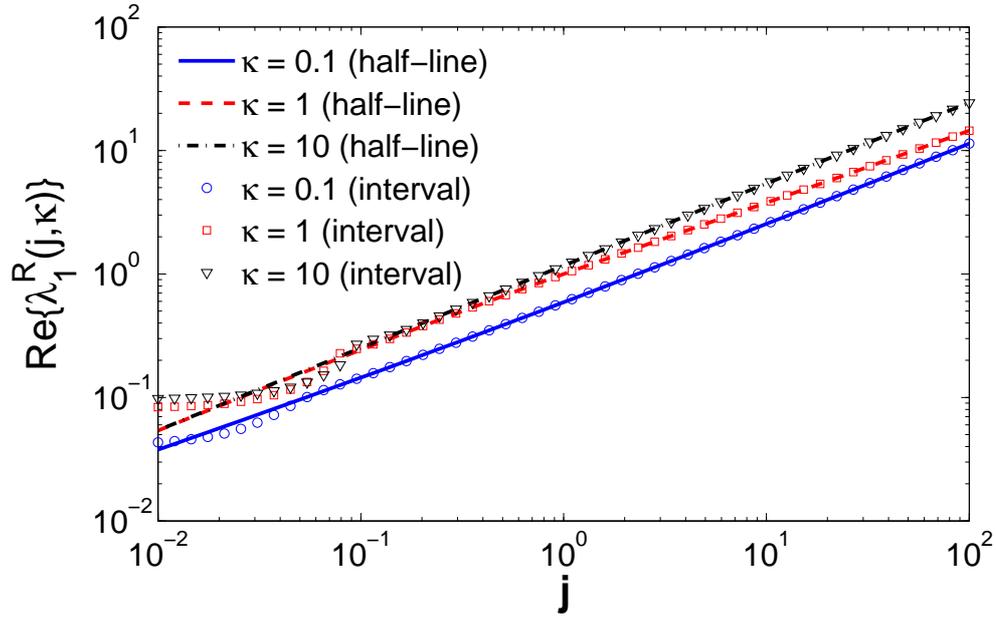}
\end{center}
\caption{
The graph of the real part of $\lambda_1^R(\jf,\kappa)$ as a function
of $ \jf $ for $\kappa = 0.1$ (blue solid line and circles), $\kappa =
1$ (red dashed line and squares), and $\kappa = 10$ (black dash-dotted
line and triangles).  Lines show the numerical solution of
(\ref{eq:lambda_eqn}), while symbols show $\inf \Re\sigma\big(\mathcal L^{R,D}
(\jf,\kappa)\big)$. }
\label{fig:lambda1}
\end{figure}

We also present a numerical solution of (\ref{eq:mu_eqn}) in
Fig.~\ref{fig:mu1} where we plot $\lambda_1^R$ as a function of
$y=\kappa\jf^{-1/3}$.  Note that, for any $y$, an eigenvalue of
$\mathcal L^R(1,y)$ is the unique continuous solution of
\begin{equation}
\label{eq:mu_eqna}
i e^{i\,2\pi/3} \Ai^\prime(e^{i\,2\pi/3} \lambda(y)) + y \,  \Ai(e^{i\,2\pi/3} \lambda(y)) = 0
\,,\, \lambda (0)= e^{-i\,2\pi/3} a^\prime_n\,,
\end{equation}
where $a_n^\prime$ denotes the $n$-th zero of $\Ai^\prime$ starting
from the right.  We note this extension by $\widetilde \lambda_n^R
(y)$ and we observe that $\widetilde \lambda_n^R (y) = \lambda_n^R(y)$
for $y\geq 0$ small enough by continuity. We will show at the end that
this is true for any $y$.  This notion is well defined since all the
solutions of \eqref{eq:mu_eqna} are simple, as established in
\cite{GHH}.

Considering in more detail the behavior of $\lambda_1^R(y)$, one 
expects that $\Re\{\lambda_1^R(y)\}$ monotonically grows from $\Re\{
e^{-i2\pi/3} a^\prime_1\} \approx 0.5094$ (at $y = 0$, corresponding
to a Neumann condition \cite{abst72}) to $\Re\{ e^{- i2\pi/3} a_1\}
\approx 1.1691$ (as $y\to + \infty$, corresponding to a Dirichlet
condition \cite{abst72}), where $a_1$ and $a^\prime_1$ are the
rightmost zeros of Airy function and its derivative, i.e., $\Re
(\lambda^R_1)^{\prime}(y) > 0\,$.  This expected behavior is clearly
manifested in Fig.~\ref{fig:mu1}.

\begin{figure}
\begin{center}
\includegraphics[width=140mm]{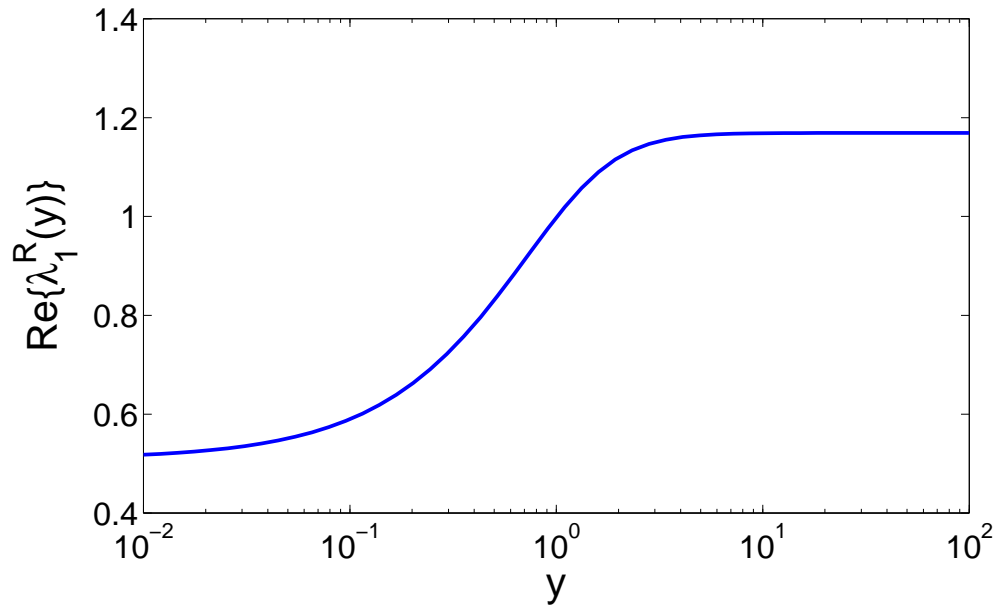}
\end{center}
\caption{
Robin case: the graph of the real part of $\lambda_1^R(y)$ obtained by
numerical solution of (\ref{eq:mu_eqn}). }
\label{fig:mu1}
\end{figure}

\begin{proof}[Proof of Proposition \ref{prop:robin-monotone}] \strut \\
Let
\begin{equation}
\label{defdelta} 
\delta^R_n(y) = 2 - y  \, \frac{\Re (\widetilde \lambda^R_n)^\prime(y)}{\Re \widetilde  \lambda_n^R (y)}\,,
\end{equation}
By \eqref{eq:lambda_diffre}, it is sufficient to prove that
\begin{displaymath}
 \delta^R_n (y) >0\,,\,  \forall y \in [0,+\infty)\,,
\end{displaymath}
in order to establish Proposition \ref{prop:robin-monotone}.  To this
end, we first establish a differential equation for $\lambda_n^R(y)$.
We set $\hat \lambda_n =e^{i\,2\pi/3} \widetilde \lambda^R_n$.  We may
represent \eqref{eq:mu_eqn} in the form
\begin{equation}
\label{eq:mu_eqnab}
i \,e^{i\,2\pi/3} \Ai^\prime(\hat \lambda_n(y)) + y\,  \Ai(\hat \lambda_n (y)) = 0 \,.
\end{equation}
Differentiating with respect to $y$ yields, with the aid of Airy's
equation,
\begin{equation*}
- i\, e^{i\,2\pi/3} \hat \lambda_n (y)  \hat \lambda^\prime_n (y) - i \, e^{-i\,2\pi/3} \, y ^2\hat \lambda^\prime_n(y)  = 1\,,
\end{equation*}
leading to the initial value problem, formulated for $\widetilde
\lambda_n^R$,
\begin{equation}
\label{mainedo}
\begin{cases}
  (\widetilde \lambda_n^R)^\prime(y) (\widetilde \lambda_n^R(y) + y^2) = -i \,, & \\
\widetilde \lambda_n^R(0) =  e^{-i\,2\pi/3} a^\prime_n\,. &
\end{cases}
\end{equation}
We may now conclude that,
\begin{equation}
  (\widetilde \lambda_n^R)^\prime(0)= -i\, /\,\widetilde \lambda_n^R(0)\,.
\end{equation}
Writing $u_n(y) = \Re\{\widetilde \lambda_n^R(y)\}$ and $v_n(y) =
\Im\{\widetilde \lambda_n^R(y)\}$, one gets
\begin{equation}
\begin{split}
u^\prime_n u_n - v^\prime_n v_n + u^\prime_n y^2 & = 0\,, \\
u^\prime_nv_n + v^\prime_n u_n + v^\prime_n y^2 & = 1\,, \\
\end{split}
\end{equation}
from which we easily obtain that
\begin{equation}\label{niceformula1}
u^\prime_n(y) =  \frac{v_n(y) }{(u_n+y^2)^2 + v_n(y)^2}
\end{equation}
and
\begin{equation}\label{niceformula2}
v^\prime_n(y) =  \frac{u_n(y)+y^2}{(u_n(y)+y^2)^2 + v_n(y)^2} > 0 \,.
\end{equation}
As both $u_n$ and $v_n$ must be positive for all $y\in\R_+$ and
$n\in\N^*$ (eigenvalues must belong to the numerical range), it easily
follows that both $y \mapsto u_n(y)$ and $y \mapsto v_n(y)$ are
monotone increasing for all $n\in\N$. In particular we have
\begin{equation*}
\frac12 |a_n^\prime| = u_n (0)\leq u_n(y)\leq \lim_{y\to + \infty}u_n(y)= \frac12 |a_{\ell (n)}|\,,
\end{equation*}
for some $\ell (n)\in \mathbb N^*$.\\
But the same argument shows that each continuous (with respect to $y$)
eigenvalue of $\mathcal L^R(1,y)$ is monotonous. This leads to a
contradiction if $\ell (n) >n$. By recursion on $n$, we get that $\ell
(n)=n$.  Hence we have
\begin{equation*}
  - a^\prime_n< -a_n < -a^\prime_{n+1} \quad  \forall n\in\N^* \,,
\end{equation*}
and we obtain that
\begin{displaymath}
  u_n(y)< - a_n < u_{n+1}(y) \,,\quad  \forall(y,n)\in\R_+\times\N^*\,,
\end{displaymath}
and that $\widetilde \lambda_n^R(y)=\lambda_n^R (y)$ for all $y$ as
expected.

We can now control the sign of $\delta^R_n (y)$, using the lower bound
for $u_n$,
\begin{equation}
\label{eq:127}
\frac{y u^\prime_n }{u_n} \leq \frac{yv_n} {u_n[2u_n y^2 + v^2_n]}\leq \frac{1}{2 ^\frac 32 u_n^{3/2}}
\leq \frac{1}{2 \sqrt{2}|a_1^\prime|^{3/2}} < 2\,.
\end{equation}
(cf. \cite{abst72} for the justification of the last inequality.)  The
proposition now follows from \eqref{defdelta} and
\eqref{eq:lambda_diffre}.
\end{proof}
\begin{remark}
We may also obtain the asymptotic behavior of $\delta^R_n$ as $y\to
+\infty$. As
\begin{equation}
\frac{y  u^\prime_n (y)}{u_n (y)} \leq \frac{1}{2 |a_n^\prime | \,y}\, \frac{2 v_1(y) \, y^2} {
  y^4 + v_n(y)^2} \leq \frac{1}{2 |a_n^\prime| \, y}\,,
\end{equation}
we  obtain that
\begin{displaymath} 
   \lim_{y\to + \infty} \delta^R_n(y) =2\,.
\end{displaymath}
\end{remark}
In Fig. \ref{fig:mu1_derivative} we plot $yu_1^\prime(y)/u_1(y)$ as a
function of $y$.

\begin{figure}
\begin{center}
\includegraphics[width=140mm]{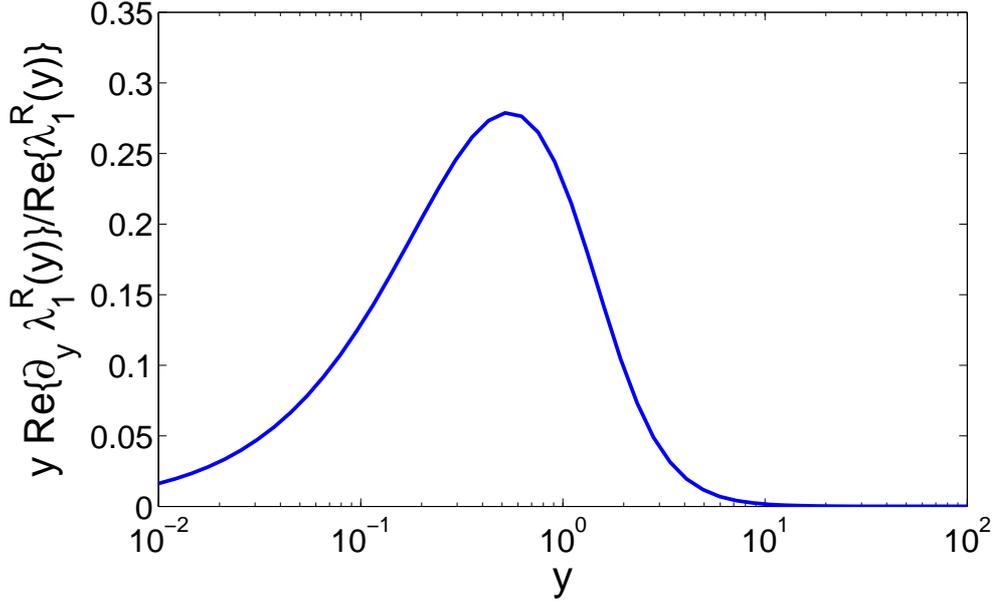}
\end{center}
\caption{
Robin case: the graph of $y\mapsto y \,
\Re\{(\lambda^{R}_1)^{\prime}(y)\}/\Re\{\lambda^R_1(y)\}$ obtained by a
numerical solution of (\ref{eq:mu_eqn}).}
\label{fig:mu1_derivative}
\end{figure}

%
%
Alternatively, one can solve the problem on the interval $[0,L]$ with
Robin and Dirichlet boundary conditions at $0$ and $L$ (with $L \gg
1$), respectively, using a Galerkin expansion, and then diagonalize
the underlying matrix.

\subsection{Transmission case}

We consider the eigenvalues of the operator $ \mathcal
L^{T}(\jf,\kappa) = -\frac{d^2}{dx^2} + i\jf x$ on the line with
transmission condition in \eqref{eq:98} for a given non-negative value
of $\kappa$.  Introducing $\lambda^T = \lambda^T(\jf,\kappa)\,
\jf^{-2/3}$, one gets the following equation
\begin{equation}
\label{eq:mu_eqnT}
2\pi \Ai^\prime(e^{i2\pi/3} \lambda) \, \Ai^\prime(e^{-i2\pi/3} \lambda) = - \kappa \, \jf^{-1/3} .
\end{equation}
Setting, as in the Robin case, $y=\kappa \, \jf^{-1/3}$ leads to
\begin{equation}\label{eqmunT}
2\pi  \Ai^\prime(e^{i2\pi/3} \lambda) \, \Ai^\prime(e^{-i2\pi/3} \lambda) = - y\,,
\end{equation}
for $y\geq0$.\\
This equation can be solved numerically to find $\lambda_n(\kappa
\jf^{-1/3})$, from which we obtain the eigenvalues via the relation
\begin{equation}
\label{eq:lambdanT}
\lambda_n^T(\jf,\kappa) =  \jf^{2/3} \, \lambda_n^T(\kappa \, \jf^{-1/3}) \,.
\end{equation}
Taking the derivative with respect to $\jf$ yields
\begin{equation}
\label{eq:lambda_diffT}
\frac{\partial}{\partial \jf } \Re\lambda_n^T(\jf,\kappa) = \frac{\Re\lambda_n^T(\kappa \, \jf^{-1/3})}{3 \jf^{1/3}} \biggl(2 - \kappa \, \jf^{-1/3} \, 
\frac{(\Re \lambda^T_n)^\prime(\kappa \, \jf^{-1/3})}{\Re\lambda_n^T(\kappa \, \jf^{-1/3})} \biggr)\,.
\end{equation}
We focus attention on the variation of $\Re \lambda_1^T (\jf,\kappa)$
with respect to $\jf$. To this end we attempt to determine the sign of
\begin{equation}\label{defdeltaT}
\delta^T (y) = 2 - y  \, \frac{\Re (\lambda^T_1)^\prime (y)}{\Re \lambda^T_1 (y)}\,,
\end{equation}
where $\lambda_1^T (y)$ is the unique continuous solution of
\eqref{eqmunT} satisfying $\lambda_1^T (0)=|a_1^\prime|e^{i\pi/3}$.

Obviously, \eqref{eqmunT} poses a significantly greater obstacle than
\eqref{eq:mu_eqn}. The following simple observation can still be made:
\begin{lemma}
There exists $y_0 >0$ such that $\delta^T (y) >0$ on $[0,y_0]$.
\end{lemma}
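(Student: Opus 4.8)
The plan is to mimic the strategy that worked for the Robin case in Proposition \ref{prop:robin-monotone}, but only to the extent that it yields the claim in a neighbourhood of $y=0$, where everything is well controlled by continuity. First I would note that at $y=0$ the eigenvalue equation \eqref{eqmunT} degenerates: $\Ai^\prime(e^{i2\pi/3}\lambda)\,\Ai^\prime(e^{-i2\pi/3}\lambda)=0$, and the relevant branch is the one with $\lambda_1^T(0)=|a_1^\prime|e^{i\pi/3}$, i.e. $e^{-i2\pi/3}\lambda_1^T(0)=a_1^\prime$ is the rightmost zero of $\Ai^\prime$ (this corresponds to the decoupled Neumann problems at $\kappa=0$). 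Since the zeros of $f_\kappa$ are simple (Lemma \ref{lem:jordan}), the implicit function theorem applies to \eqref{eqmunT} at $y=0$, giving that $y\mapsto\lambda_1^T(y)$ is $C^1$ (indeed real-analytic) near $0$, with a finite derivative $(\lambda_1^T)'(0)$ obtained by differentiating \eqref{eqmunT} and using Airy's equation $\Ai''(z)=z\,\Ai(z)$.

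Next I would evaluate $\delta^T(0)$ directly. Plugging $y=0$ into \eqref{defdeltaT} gives
\[
\delta^T(0) = 2 - 0\cdot\frac{\Re(\lambda_1^T)'(0)}{\Re\lambda_1^T(0)} = 2 > 0,
\]
provided $\Re\lambda_1^T(0)\neq 0$ and $(\lambda_1^T)'(0)$ is finite — both of which hold: $\Re\lambda_1^T(0)=\tfrac12|a_1'|>0$ since $\lambda_1^T(0)=|a_1'|e^{i\pi/3}$, and finiteness of the derivative follows from the implicit function theorem argument above. Thus $\delta^T$ is a continuous function of $y$ on some interval $[0,\epsilon)$ (continuity of $y\mapsto\lambda_1^T(y)$ and of $y\mapsto(\lambda_1^T)'(y)$ near $0$, again from the implicit function theorem), and $\delta^T(0)=2>0$.

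Finally, by continuity there exists $y_0>0$ such that $\delta^T(y)>0$ for all $y\in[0,y_0]$, which is exactly the assertion of the lemma. I expect the only genuine subtlety to be the bookkeeping required to justify that $y\mapsto\lambda_1^T(y)$ and its derivative extend continuously past $y=0$ through the correct branch — one must check that the branch singled out by the normalisation $\lambda_1^T(0)=|a_1'|e^{i\pi/3}$ is non-degenerate, i.e. that differentiating \eqref{eqmunT} and invoking Airy's equation does not produce a vanishing denominator. Since $a_1'$ is a simple zero of $\Ai'$ and $\Ai(a_1')\neq0$, the derivative of the left-hand side of \eqref{eqmunT} at $y=0$ along the chosen branch is a nonzero multiple of $\Ai(a_1')\Ai'(\overline{a_1'})=\Ai(a_1')\Ai'(a_1')$... which is $0$; so in fact one should differentiate more carefully, parametrising near the degenerate point by $\mu:=e^{-i2\pi/3}\lambda$ and solving $2\pi e^{i\pi/3}\Ai'(\bar\mu)\Ai'(\mu)=-y$ for $\mu$ near $a_1'$, where the relevant factor $\Ai'(\mu)$ has a simple zero so that $\mu-a_1'$ behaves like a nonzero multiple of $y$. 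This is the main obstacle, but it is a local, elementary computation; once it is done, $\delta^T(0)=2$ and continuity finish the proof.
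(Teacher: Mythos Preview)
Your approach is exactly the paper's: compute $\delta^T(0)=2$ from $\Re\lambda_1^T(0)=\tfrac12|a_1'|>0$ and conclude by continuity. That part is fine.

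The ``obstacle'' you flag at the end, however, is based on a miscomputation. With $\lambda_1^T(0)=|a_1'|e^{i\pi/3}$ you have $e^{i2\pi/3}\lambda_1^T(0)=a_1'$, but $e^{-i2\pi/3}\lambda_1^T(0)=|a_1'|e^{-i\pi/3}$, which is \emph{not} real and in particular not a zero of $\Ai'$. So only one of the two factors $\Ai'(e^{\pm i2\pi/3}\lambda)$ vanishes at $y=0$. Differentiating the left side of \eqref{eqmunT} in $\lambda$ and using $\Ai''(z)=z\Ai(z)$ gives, at $\lambda=\lambda_1^T(0)$,
\[
e^{i2\pi/3}\,a_1'\,\Ai(a_1')\,\Ai'\bigl(|a_1'|e^{-i\pi/3}\bigr)\neq 0,
\]
since $a_1'\neq 0$, $\Ai(a_1')\neq 0$ (Airy and its derivative share no zeros), and the zeros of $\Ai'$ lie on the negative real axis. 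Hence the implicit function theorem applies directly, $(\lambda_1^T)'(0)$ is finite, $\delta^T$ is continuous at $0$ with $\delta^T(0)=2$, and there is no branching subtlety to resolve. You can simply drop the last paragraph.
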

\begin{proof}
By continuity, it is enough to prove the statement at $y=0$. By
\eqref{defdeltaT} and the fact that $\Re\lambda_1^T
(0)=|a_1^\prime|/2>0$ we readily obtain $\delta^T (0)=2$.
\end{proof}
The following conjecture can
be made in the large $y$ limit
\begin{conjecture}\label{conj6}
There exists $y_1 >0$ such that $\delta^T (y) >0$ on $[y_1,+\infty)$.
\end{conjecture}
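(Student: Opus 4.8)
\textbf{Proof proposal for Conjecture \ref{conj6}.}

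The plan is to reduce the large-$y$ behavior of $\lambda_1^T(y)$ to the Dirichlet regime and then to carry out the same argument as in the Robin case. As $y\to+\infty$, the limiting equation $2\pi\,\Ai^\prime(e^{i2\pi/3}\lambda)\,\Ai^\prime(e^{-i2\pi/3}\lambda)=-y$ forces $|\Ai^\prime(e^{\pm i2\pi/3}\lambda)|\to+\infty$, and hence $e^{\pm i2\pi/3}\lambda$ must approach a zero of $\Ai$; equivalently the pair $(u_-,u_+)$ in \eqref{eq:84} tends to a state in which the interface values $u_\pm(0)$ are negligible compared with $u_\pm^\prime(0)$, which is precisely the Dirichlet decoupling. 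Thus one first shows, by a continuity/implicit-function argument applied to \eqref{eqmunT} near $\lambda=e^{-i2\pi/3}a_1$ (using that all solutions of \eqref{eqmunT} are simple, as recalled from \cite{GHH}), that $\lambda_1^T(y)\to e^{-i2\pi/3}a_1$ as $y\to+\infty$, with a quantitative rate: $\lambda_1^T(y)=e^{-i2\pi/3}a_1+O(y^{-1})$, obtained by inverting the local expansion of the left-hand side of \eqref{eqmunT}. In particular $\Re\lambda_1^T(y)\to \tfrac12|a_1|>0$, so the denominator in \eqref{defdeltaT} is bounded away from zero for $y$ large.

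Next I would derive, exactly as in \eqref{mainedo}, an autonomous ODE for $\lambda_1^T(y)$ by differentiating \eqref{eqmunT} in $y$ and using Airy's equation $\Ai^{\prime\prime}(z)=z\,\Ai(z)$. Writing $p(y)=e^{i2\pi/3}\lambda_1^T(y)$, $q(y)=e^{-i2\pi/3}\lambda_1^T(y)$ and differentiating $2\pi\,\Ai^\prime(p)\,\Ai^\prime(q)=-y$ gives
\begin{equation}
2\pi\big(e^{i2\pi/3}p\,\Ai(p)\,\Ai^\prime(q)+e^{-i2\pi/3}q\,\Ai^\prime(p)\,\Ai(q)\big)(\lambda_1^T)^\prime(y)=-1\,.
\end{equation}
Using \eqref{eqmunT} once more to eliminate one Airy factor, together with the relations \eqref{eq:122} that hold at the eigenvalues, one reduces the bracket to an explicit rational expression in $\lambda_1^T(y)$ and $y$; the upshot should be an identity of the shape $(\lambda_1^T)^\prime(y)\cdot\big(\lambda_1^T(y)+c(y)\big)=-i$ with $c(y)$ a controlled lower-order term (a mild $y$-dependent shift replacing the $y^2$ of the Robin case). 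Splitting $\lambda_1^T=u_1+iv_1$ yields formulas analogous to \eqref{niceformula1}--\eqref{niceformula2}, in particular $v_1^\prime(y)>0$ and an explicit bound for $y\,u_1^\prime(y)/u_1(y)$; feeding the asymptotics $u_1(y)\to\tfrac12|a_1|$, $v_1(y)=O(1)$ (or $v_1(y)\to\tfrac{\sqrt3}{2}|a_1|$) into that bound shows $y\,u_1^\prime(y)/u_1(y)\to 0$, so $\delta^T(y)\to 2$ and in particular $\delta^T(y)>0$ for $y\geq y_1$ with $y_1$ large enough. This also pins down $\varlimsup_{y\to\infty}\delta^T(y)=2$ as in the Robin remark.

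The main obstacle is the algebraic step of converting the differentiated equation into the clean first-order ODE with a sign-controlled coefficient: unlike \eqref{eq:mu_eqn}, the transmission relation \eqref{eqmunT} is a \emph{product} of two Airy factors whose arguments are complex-conjugate, so one cannot isolate $\lambda_1^T$ by a single substitution and must use the two coupled relations in \eqref{eq:122} simultaneously. I expect that this works cleanly precisely \emph{in the large-$y$ limit}, where the subdominant Airy terms are exponentially small relative to the dominant ones, which is why the conjecture is stated only for $y\geq y_1$ and not globally; the intermediate-$y$ range would require controlling the full bracket without asymptotic simplification, and that is the genuinely hard part that the conjecture leaves open.
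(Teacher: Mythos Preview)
Your proposal rests on a mistaken picture of the large-$y$ limit. In the transmission condition $u_+^\prime(0)=u_-^\prime(0)=\kappa\,[u_+(0)-u_-(0)]$, sending $y=\kappa\,\jf^{-1/3}\to+\infty$ forces $u_+(0)=u_-(0)$ while $u_+^\prime(0)=u_-^\prime(0)$ is already imposed; the limiting operator is therefore the complex Airy operator on the \emph{whole line}, which has empty spectrum. There is no Dirichlet decoupling here, and consequently $\lambda_1^T(y)$ does \emph{not} converge to $e^{-i2\pi/3}a_1$. The paper's own (formal) analysis, based on the large-argument asymptotics \eqref{5}--\eqref{5a} of $\Ai$ and $\Ai^\prime$, gives instead $\Re\lambda_1^T(y)\sim\bigl(\tfrac34\log y\bigr)^{2/3}\to+\infty$ and $\Im\lambda_1^T(y)\sim\tfrac{\pi}{2}\bigl(\tfrac34\log y\bigr)^{-1/3}\to 0$; the eigenvalue escapes to infinity along the positive real axis. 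Your inference ``$|\Ai^\prime(e^{\pm i2\pi/3}\lambda)|\to\infty$ forces $e^{\pm i2\pi/3}\lambda$ to approach a zero of $\Ai$'' is simply false: by \eqref{5a}, $|\Ai^\prime(z)|$ grows exponentially in suitable sectors as $|z|\to\infty$, and that is the mechanism at work.

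Two further points. First, the relations \eqref{eq:122} are \emph{not} identities satisfied by eigenvalues of $\LL^T$; they were derived in the proof of Lemma \ref{lem:jordan} under the extra hypothesis $f^\prime(\mu)=0$ precisely in order to reach a contradiction, so you cannot invoke them here. Second, note that the paper states this as a \emph{conjecture} and offers only a formal (non-rigorous) justification: it derives the asymptotics above and the consequence $\Re(\lambda_1^T)^\prime(y)\sim\tfrac12\,y^{-1}\bigl(\tfrac34\log y\bigr)^{-1/3}$, whence $\delta^T(y)\to 2$. A rigorous proof would require turning those Airy asymptotics into uniform estimates on the implicit solution of \eqref{eqmunT}; the Robin-type ODE reduction you hope for does not materialize because the product structure of \eqref{eqmunT} does not collapse to a single rational relation in $\lambda_1^T$ and $y$.
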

Note that as $y \to + \infty$ the transmission problem ``tends'' to $-
\frac{d^2}{dx^2} + i x$ on the line, which has no spectrum.  Following
\cite{Gr1}, we provide in the sequel  a formal justification
 for the conjecture together with an enlightening picture.

Figure \ref{fig:mu1_derivativeT} shows that the graph of $y \mapsto y
\, \frac{ \Re (\lambda^T_1)^\prime(y)}{\Re \lambda_1^T(y)}$ attends
its maximum at a value which is below $0.30$.  As a consequence, by
(\ref{eq:lambda_diff}), $\Re \lambda_1^T$ is monotone increasing in
$\jf$.

\begin{figure}
\begin{center}
\includegraphics[width=140mm]{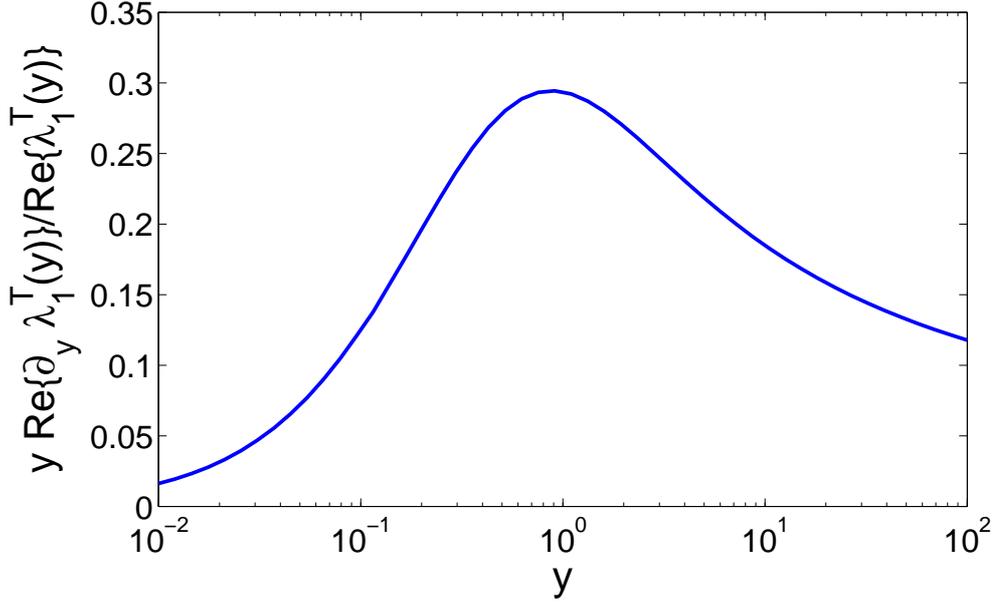}
\end{center}
\caption{
Transmission case: graph of $y\mapsto y \,
\Re\{\partial_y \lambda_1^T(y)\}/\Re\{\lambda_1^T(y)\}$ obtained by
numerical solution of (\ref{eq:mu_eqnT}).}
\label{fig:mu1_derivativeT}
\end{figure}

\subsubsection{Reminder on Airy functions}

The Wronskian for Airy functions is
\begin{equation}
\label{eq:Wronskian}
e^{-i2\pi/3} \Ai^\prime(e^{-i2\pi/3} z) \Ai(e^{i2\pi/3} z) - e^{i2\pi/3}
\Ai^\prime(e^{i2\pi/3} z) \Ai(e^{-i2\pi/3} z) = \frac{i}{2\pi} \,  \quad \forall~
z\in \mathbb C\, . 
\end{equation}
Note that these two functions are related to $\Ai(z)$ by the identity
\begin{equation}\label{ide}
\Ai (z) + e^{-i 2\pi/3} \Ai (e^{-i 2\pi/3} z) + e^{i2\pi/3} \Ai (e^{i2\pi/3}z) =0   \quad \forall~ z\in \mathbb C\, .
\end{equation}
By differentiation we also get
\begin{equation}\label{idenew}
\Ai^\prime (z) + e^{i 2\pi/3} \Ai^\prime (e^{-i 2\pi/3} z) + e^{-i2\pi/3} \Ai^\prime (e^{i2\pi/3}z) =0  \quad \forall~ z\in \mathbb C\, .
\end{equation}

The Airy function and its derivative satisfy different asymptotic
expansions depending on their argument: \\
For $|\arg z|<\pi$,
\begin{eqnarray} \label{5} 
\Ai(z) &=& \frac 12 \pi^{-\frac 12}z^{-1/4} \,  \exp\left(-\frac{2}{3}z^{3/2}\right) \bigl(1 + \mathcal O  (|z|^{-\frac 32}) \bigr), \\
\label{5a}
\Ai^\prime(z) &=& - \frac 12 \pi^{-\frac 12} \, z^{1/4}\, \exp\left(-\frac{2}{3}z^{3/2}\right) \bigl(1+ \mathcal O (|z|^{-\frac 32}) \bigr) \,, 
\end{eqnarray}
where moreover $\mathcal O$ is, for any $\epsilon >0$, uniform when $|\arg
z| \leq \pi -\epsilon$\,.

\subsubsection{The behavior as $y\ar +\infty$}

The behavior of $\lambda_1^T(y)$ as $y\ar +\infty$ (we choose the
eigenvalue with positive imaginary part) was analyzed in \cite{Gr1}
(in particular, see Fig. 1). Without full mathematical rigor, using in
particular the asymptotics of the previous sub-subsection, it is
established (see formula (40)) that
\begin{equation*}
\Re \lambda_1^T(y) \sim \biggl(\frac 34 \log y \biggr)^\frac 23 \mbox{ as } y\ar +\infty\,.
\end{equation*}
and that
\begin{equation*}
\lim_{y\ar +\infty} \Im \lambda_1^T(y) =0\,.
\end{equation*}
With more effort, one can establish at least formally that
\begin{equation*}
\Re ( \lambda_1^T)^\prime(y) \sim \frac 12 \,\frac 1 y\, \biggl(\frac 34 \log y\biggr)^{-\frac 13}\,,
\end{equation*}
which confirms the monotonicity of $\Re \lambda_1^T$ for large $y$ and
implies
\begin{equation*}
\lim_{y\ar + \infty} \delta^T(y) =2\,,
\end{equation*}
which confirms Conjecture \ref{conj6}. \\
One can also  formally obtain
\begin{equation*}
\Im \lambda_1^T (y) \sim   \frac \pi 2\,  \biggl(\frac 34 \log y \biggr)^{-\frac 13}\,.
\end{equation*}

\end{document}